\DeclareRobustCommand{\lyxsout}[1]{\ifx\\#1\else\sout{#1}\fi}
\theoremstyle{definition}
 \newtheorem{example}{\protect\examplename}
\theoremstyle{definition}
\newtheorem{defn}{\protect\definitionname}
\theoremstyle{plain}
\newtheorem{thm}{\protect\theoremname}
\theoremstyle{plain}
\newtheorem{lem}{\protect\lemmaname}
\DeclareMathOperator{\Var}{Var} 
\DeclareMathOperator{\Vol}{Vol}
\DeclareMathOperator{\supp}{supp} 
\global\long\def\dd{\mathrm{d}}
\global\long\def\E{\mathbb{E}}
\global\long\def\trre[#1,#2]{\overset{{\scriptstyle (#2)}}{#1}} 
\global\long\def\I{\mathbbm{1}}
\newcommand{\reals}{\mathbbm{R}}
\newcommand{\real}{\mathrm{Re}}
\newcommand{\imag}{\mathrm{Im}}
\DeclareMathOperator*{\argmax}{arg\,max}
\providecommand{\definitionname}{Definition}
\providecommand{\examplename}{Example}
\providecommand{\lemmaname}{Lemma}
\providecommand{\theoremname}{Theorem}
\begin{document}
\title{A Toolbox for Refined Information-Theoretic Analyses with Applications}
\author{Neri Merhav and Nir Weinberger}

\maketitle
\newpage{}
\begin{abstract}
This monograph offers a toolbox of mathematical techniques, which
have been effective and widely applicable in information-theoretic
analysis. The first tool is a generalization of the method of types
to Gaussian settings, and then to general exponential families. The
second tool is Laplace and saddle-point integration, which allow to
refine the results of the method of types, and are capable of obtaining
more precise results. The third is the type class enumeration method,
a principled method to evaluate the exact random-coding exponent of
coded systems, which results in the best known exponent in various
problem settings. The fourth subset of tools aimed at evaluating the
expectation of non-linear functions of random variables, either via
integral representations, or by a refinement of Jensen's inequality
via change-of-measure, by complementing Jensen's inequality with a
reversed inequality, or by a class of generalized Jensen's inequalities
that are applicable for functions beyond convex/concave. Various application
examples of all these tools are provided along this monograph.
\end{abstract}
\newpage{}

\tableofcontents{}

\newpage{}

\section{Introduction \label{sec:Introduction}}

This monograph is concerned with a set of analytical tools for information-theoretic
analysis. The use of analytical methods to address challenging combinatorial
problems is a classic idea in math, and includes various widely-used
techniques such as Stirling's approximation, Chernoff's bound, transform
methods (with interchanging summation or integration order), and so
on. Analytical techniques also formed the basis to the inception of
information-theory by Shannon \cite{shannon1948mathematical}: On
the face of it, and even at a deeper look, efficient coding for noisy
channels is a formidable combinatorial problem, in a high dimensional
space. However, Shannon addressed that challenge using an analytical
techniques: 
\begin{enumerate}
\item The asymptotic equipartition property, and the estimation of volumes
in high dimensional spaces, which allows to evaluate the size of high-probability
sets. Then, in the proof of the \emph{noisy channel coding theorem}
for DMCs it is shown that a $n$-dimensional codeword is transmitted,
the set of likely outputs has size roughly given by $e^{nH(Y|X)}$,
where $H(Y|X)$ is the conditional entropy of the channel output $Y$
conditioned on the input $X$, and the total set of likely outputs
has roughly size of $e^{nH(Y)}$ (where $H(Y)$ is the entropy of
$Y$). 
\item The random-coding argument, that establishes the existence of optimal
codes by evaluating the ensemble-average of randomly chosen code,
and forms the basis for achievability (direct) results. 
\item Convexity of information-measures, which is used to establish data-processing
theorems, and consequently forms the basis for impossibility (converse)
results. 
\end{enumerate}
Combining these ideas directly lead, among other results, to the analytical
formula for the capacity of DMCs, given by $C=\max_{P_{X}}I(X;Y)$
(where $I(X;Y)=H(Y)-H(Y|X)$ is the mutual information). These basic
ideas were continuously generalized and refined by numerous authors
along the development of information theory. 

The goal of this manuscript is to follow this path, and propose a
set of advanced analytical tools, which have been affirmed to be efficient
and widely-applicable for information-theoretic problems, allowing
to obtain accurate and refined performance measure characterizations.
Chapters \ref{sec: CMoT} and \ref{sec: laplacesaddlepoint} to follow
address the problem of estimating volumes in high dimensions, first,
via a generalized method of types and, second, via the more advanced
saddle-point method; Chapter \ref{sec:TCE} describes the \emph{type
class enumeration method} (TCEM), a method to tightly analyze the
performance of random-coding ensembles, and Chapter \ref{sec:Manipulating-Expectations-of}
considers various aspects of convexity and Jensen's inequality, mostly
related to the computation of the expected values of non-linear functionals.
We next describe each one of these with more detail.

In Chapter \ref{sec: CMoT}, we describe a generalization of the method
of types \cite{Csiszar98,csiszar2011information}, which was originally
developed for finite alphabets, to Gaussian distributions, which are
distributions over a continuous alphabets, and more generally, to
distributions from exponential families. We introduce the notion of
a typical set with respect to (WRT) a given parametric family of probability
distributions. Such typical sets are defined in a way that the probability
of each vector in the set is roughly the same for all possible distributions
in the defined parametric family. This generalizes both the notion
of weak typicality (a family consisting a single distribution), and
the usual notion of strong typicality for finite alphabets (the family
is the set of all possible PMFs). Moreover, it allows to consider,
e.g., typical sets for the Gaussian distribution. A key property of
typical sets is their \emph{volume}, because if an event of interest
can be represented as the union over typical sets, then its probability
can be accurately determined on the exponential scale using the volume
of these sets. We thus develop a general method to evaluate the volume
of typical sets, and demonstrate its use on memoryless Gaussian sources,
on Gaussian sources conditioned on other vectors, and on Gaussian
sources with memory. We then generalize this method to distributions
from an exponential family. 

While the method of types is a general and widely applicable approach
that leads to useful exponential bounds, there are settings which
require more delicate analysis, and thus, more advanced tools. In
Chapter  \ref{sec: laplacesaddlepoint}, we begin by describing the
Laplace method for integration, and exemplify its use in the problems
of universal coding and extreme-value statistics. We then advance
to the closely-related saddle-point method for integration in the
complex plain, and show how it allows to accurately evaluate the size
of type classes, volumes of hyper-spheres, and large-deviations probabilities,
not only in the exact exponential rate, but also with the exact pre-exponential
factor. We show that this method can be applied beyond parametric
models. We further demonstrate its use for the evaluation of the number
of lattice points in an $L_{1}$ ball, and the evaluation of the volume
of an intersection of a hyper-sphere and hyperplane, refining the
analysis of a Chapter \ref{sec: CMoT}.

In Chapter \ref{sec:TCE}, we proceed to consider random codes. We
introduce the\emph{ }TCEM, which is a principled method for deriving
the error exponent of random codes.\emph{ }We first describe the standard
techniques commonly used to derive bounds on the error exponent, such
as Jensen's inequality and its implications, and various types of
union bounds. While these methods turned out to be effective in the
error-exponent analysis of basic settings such as point-to-point channels
and standard decoding rules, there is no guarantee that they are accurate
in more advanced scenarios. Indeed, we survey various settings in
which these methods are sub-optimal, and do not provide the exact
random-coding error exponent. As an alternative, we show that ensemble-average
error probabilities (and other related performance measures) may be
expressed via \emph{type class enumerators} (TCEs), and specifically
via their (non-integer) moments and tail probabilities. We demonstrate
this both on basic settings as well as more involved ones. We explore
the probabilistic and statistical properties of TCEs, and then survey
a multitude of settings in multi-user information theory, in distributed
compression and in hypothesis testing, for generalized decoding rules
such as those allowing erasures and list outputs, and for the analysis
of the typical random code. We outline how the TCEM is used in each
of these settings, and how it allows to obtain, among other things,
exact error-exponents for optimal decoding rules. In Appendix \ref{sec:Computation-of-the}
we show that the exponents obtained by the TCEM can also be computed
effectively. 

In Chapter \ref{sec:Manipulating-Expectations-of}, we address the
problem of evaluating the expectation of a non-linear function $f(\cdot)$
of a random variable (RV) $X$. In many cases, this function is either
convex or concave, and so a natural course of action is to bound it
using Jensen's inequality. However, there is no guarantee that the
resulting bound is tight enough for the intended application. We present
two general and useful strategies that can be employed in such cases.
The first one is based on finding an \emph{integral representation}
of the function. Then, we interchange the expectation and integral
order, and obtain an alternative expression for $\E\{f(X)\}$. The
technique is useful if computing the inner expectation is simpler
than the original expectation, or if it can be evaluated more accurately.
After evaluating the inner expectation, the expectation $\E\{f(X)\}$
of interest can be computed by solving a one-dimensional integral.
For example, when $f(t)=\ln(t)$, this allows to replace the evaluation
of the expected logarithm with its moment-generating function (MGF).
This is especially appealing since if $X=\sum_{i=1}^{n}X_{i}$ is
the sum of $n$ independent and identically distributed (IID) RVs,
then its MGF is the $n$-th power of the MGF of just one of them.
In accordance, this transforms the original expectation, which is
an integral in $\reals^{n}$, to a one-dimensional integral. We focus
on the logarithmic function $f(t)=\ln(t)$ (and its integer powers),
as well as the power function $f(t)=t^{\rho}$ for some $\rho>0$
(even non-integer), and exemplify the use of this technique in an
multitude of problems such as differential entropy for generalized
multivariate Cauchy densities, ergodic capacity of the Rayleigh single-input
multiple-output (SIMO) channel, moments of guesswork, and moments
of estimation error.

The second strategy preserves the use of Jensen's inequality and thus
exploits convexity or concavity properties, however, it goes beyond
the vanilla Jensen's inequality. This strategy may come in various
flavors. First, a change of measure can be performed before deploying
Jensen's inequality, and then the alternative measure can be optimized
over a given class to improve the bound. As a notable example, when
$f(t)=\ln(t)$, this reproduces the Donsker--Varadhan variational
characterization of the Kullback--Leibler (KL) divergence. Second,
one may use Jensen's inequality, but accompany it with an inequality
in the opposite direction, \emph{i.e.}, a reverse Jensen's inequality
(RJI), in order to evaluate its tightness. We provide a few techniques,
all which rely on a general form of such a RJI. Third, the ``supporting-line''
approach used to prove Jensen's inequality may be generalized to cases
in which the the function whose expected value is sought of is not
convex/concave, but takes a more complicated form, such as the composition
or a multiplication of a different function with a convex/concave
function. A generalized version of Jensen's inequality can still be
derived, by properly optimizing the supporting line. We exemplify
the use of these technique in various problems involving evaluation
of data compression performance and capacities.

Overall, we present a diverse toolbox of analytical techniques, indispensable
to any information-theorist aiming to obtain tight and accurate results.
This monograph was invited and written following a plenary talk, by
the first author, at the 2023 IEEE International Symposium on Information
Theory (ISIT 2023), Taipei, Taiwan, June 25-30, 2023. The title of
the talk was ``My little hammers and screwdrivers for analyzing code
ensemble performance''. It should be pointed out that some of the
proposed techniques (like Chapters \ref{sec: CMoT}, \ref{sec:TCE},
and many parts of Chapter \ref{sec:Manipulating-Expectations-of})
are original, while others are not new (like Chapter \ref{sec: laplacesaddlepoint}). 

\newpage{}

\section{Extension of the Method of Types to Continuous Alphabets \label{sec: CMoT}}

\subsection{Introduction \label{subsec: CMoTintro}}

In their renowned 1981 book \cite{csiszar2011information}, Csisz{\'a}r
and K{\"o}rner introduced the groundbreaking concept of the method
of types. This method has since emerged as a cornerstone within classical
Shannon theory, offering a remarkably potent and versatile mathematical
analytical tool-set. Its primary application lies in establishing
coding theorems -- predominantly their achievability parts, while
occasionally encompassing converse parts as well. Additionally, this
method's utility extends to the evaluation of error probability exponential
decay rates (referred to as error exponents) and the exponential growth
rates of subsets of sequences as functions of the block length (or
the dimension).

The method of types serves as a fundamental combinatorial approach,
originally crafted for memoryless sources and channels with finite
alphabets. In essence, this method involves partitioning the space
of all $q^{n}$ $q$-ary sequences of length $n$ into distinct equivalence
classes termed\emph{ type classes}. Each type class encompasses sequences
that share an identical empirical distribution, characterized by a
specific array of relative frequencies pertaining to the $q$ alphabet
letters. An alternative perspective on type classes is that within
each such class, any sequence can be derived through permutations
of other sequences. The strength of the method of types emanates from
a concept of elegant simplicity: Despite the exponential growth of
each type class's size with $n$ (its exponential rate being determined
by the entropy of the corresponding empirical distribution), the diversity
of distinct type classes experiences only a polynomial growth with
$n$. This interplay of growth dynamics yields a crucial outcome:
The likelihood of any event expressed as a union of type classes is
dominated by the exponential behavior driven by the most probable
type class contained within the event. Similarly, when dealing with
the size of a set defined as a union of type classes, this size experiences
an exponential dominance dictated by the largest type class within
that set.

In \cite{Csiszar98}, Csisz{\'a}r provides an extensively comprehensive
exploration of the method of types. This scholarly work not only encompasses
foundational principles, but also delves into numerous applications.
These applications span a wide spectrum, including the derivation
of error exponents for source coding, channel coding, source-channel
coding, hypothesis testing, the type covering lemma, the packing lemma,
the capacity evaluation for arbitrarily varying channels, rate-distortion
coding, as well as multi-terminal source and channel coding theorems.
Within the same publication, Csisz{\'a}r undertakes a meticulous
survey of several notable extensions to the method of types. Foremost
among these are second-order and higher-order types, with recognition
attributed to prior work by Billingsley \cite{Billingsley61}, Boza
\cite{Boza71}, Whittle \cite{Whittle55}, Davisson, Longo, and Sgarro
\cite{DLS81}, as well as Natarajan \cite{Natarajan85}. Furthermore,
the exploration extends to finite-state types, with acknowledgment
directed to Weinberger, Merhav, and Feder \cite{WMF94}. Csisz{\'a}r's
comprehensive survey \cite{Csiszar98} ends with a section addressing
continuous alphabets. This section's outset acknowledges that extensions
of the type concept to continuous alphabets remain largely uncharted.
It proceeds to navigate this challenge by adopting a discretization
strategy through fine quantization. Nonetheless, this approach reveals
vulnerabilities when grappling with probability density functions
(PDFs) that are supported by the entirety of the real line or half
of it. In such cases, achieving arbitrarily high resolution quantization,
a requisite of the traditional method of types, becomes unattainable.
While acknowledging that coarsely quantizing the tails of distributions
generally entails minimal impact due to their low probabilities, certain
technical intricacies arise, particularly concerning the uniformity
of convergence across a class of distributions. This concern becomes
particularly salient when confronted with the need to interchange
limit operations, such as the limits as $n$ grows large and the quantization
resolution increases concurrently. Furthermore, the cost associated
with achieving high-resolution quantization manifests as an escalated
computational workload in the calculation of the desired exponential
rate. This is due to the fact that the number of free parameters to
optimize is equal to the number of quantization levels minus one.

Within this chapter, our central proposition emerges: The extension
of type classes and the critical components of the method of types
into the realm of continuous alphabets is not only viable but also
remarkably intuitive. This assertion is particularly pertinent when
considering PDFs originating from the broader exponential family \cite{me89},
\cite{MKLS94}, and especially when dealing with the Gaussian PDF,
as expounded in references such as \cite{AM98,HM15,HSMM19,me93,me13,me19,TM23}.
Notably, our approach circumvents the need for the discrete approximations
proposed in \cite{Csiszar98}.

Our methodology revolves around the partitioning of the space of $n$-sequences
into equivalence classes, referred to as type classes, in analogy
to their finite-alphabet counterparts. This construction retains two
pivotal attributes, analogous to their roles in the customary finite-alphabet
context: 
\begin{enumerate}
\item It is possible to devise a computable expression that characterizes
the exponential growth rate of each type class's size or volume as
a function of $n$. This expression, which is always a certain form
of entropy or differential entropy, remains amenable to calculation
independently of $n$ and aligns with the concept of\emph{ single-letter
expression} in the jargon of information theorists. 
\item The array of distinct type classes relevant to the problem at hand
exhibits sub-exponential growth WRT $n$. This assures that the quantity
of distinct types relevant to our problem expands in a manner manageable
for analysis. 
\end{enumerate}
By ``computable expression'' in the first point, we refer to an
expression whose computational complexity remains fixed as $n$ varies.
In relation to the second point, when we mention types ``relevant
to the problem at hand,'' we imply scenarios where the aggregate
number of distinct type classes might conceivably be boundless, yet
the vast majority beyond a sub-exponential subset hold minimal importance
and can be disregarded, given their inconsequential collective impact
on the quantity of interest. This might be due to their associated
probabilities being negligibly small.

In the upcoming sections, we embark on a concise exploration of the
fundamental concepts that underlie the extension of the method of
types to encompass continuous alphabets. Our journey begins with the
Gaussian scenario before encompassing the broader domain of exponential
families. Throughout these discussions, we will interweave illustrative
examples to provide practical context for the concepts being elucidated.

\subsection{Various Definitions of Type Classes \label{subsec: typedefs}}

As mentioned earlier, in the memoryless, finite-alphabet case, we
define a type class as the set of all sequences that share the same
empirical distribution. More precisely, given a $q$-ary sequence
$\boldsymbol{x}=(x_{1},x_{2},\ldots,x_{n})$, with $x_{i}\in{\cal X}$,
$i=1,2,\ldots,n$, ${\cal X}$ being a finite alphabet of size $q$,
the empirical distribution, $\hat{P}_{\boldsymbol{x}}$, associated
with $\boldsymbol{x}$ is the vector $\{\hat{P}_{\boldsymbol{x}}(x),~x\in{\cal X}\}$,
where $\hat{P}_{\boldsymbol{x}}(x)=n_{\boldsymbol{x}}(x)/n$, $n_{\boldsymbol{x}}(x)$
being the number of occurrences of the letter $x\in{\cal X}$ in $\boldsymbol{x}$.
Thus, the type of $\boldsymbol{x}$, ${\cal T}_{n}(\boldsymbol{x})$
is defined by 
\begin{equation}
{\cal T}_{n}(\boldsymbol{x})\triangleq\left\{ \boldsymbol{x}'\in{\cal X}^{n}\colon\hat{P}_{\boldsymbol{x}'}=\hat{P}_{\boldsymbol{x}}\right\} .
\end{equation}
An alternative, equivalent definition of ${\cal T}_{n}(\boldsymbol{x})$
is as the set of all $\boldsymbol{x}'\in{\cal X}^{n}$ that can be
obtained as permutations of $\boldsymbol{x}$. Since ${\cal T}_{n}(\boldsymbol{x})$
corresponds to a particular empirical probability distribution, say,
$\hat{P}$, it would be sometimes convenient to denote it by ${\cal T}_{n}(\hat{P})$.
Similar notation applies to type classes of pairs of $n$-vectors,
$(\boldsymbol{x},\boldsymbol{y})$ (and triples, etc.), where in the
alternative notation, $\hat{P}$ is understood to be the joint empirical
distribution.

Clearly, the two previously provided definitions hold specifically
within the realm of finite-alphabet memoryless systems. However, when
considering the more general scenario, a broader definition becomes
necessary. The essential requirement for formulating a comprehensive
method of types is that sequences falling within the same type class
exhibit matching probabilities, particularly in the exponential sense.
In cases where the data may be governed by a single probability distribution
(or PDF, in continuous scenarios) denoted as $P$, the following definition
holds: 
\begin{equation}
{\cal T}_{n}(P)\triangleq\left\{ \boldsymbol{x}\in{\cal X}^{n}\colon-\frac{\ln P(\boldsymbol{x})}{n}=H\right\} ,
\end{equation}
Here, $H$ represents a constant, which in discrete scenarios, signifies
the entropy rate of distribution $P$, while in continuous contexts,
it signifies the differential entropy rate. This definition underscores
the fundamental property that all sequences within a given type class
share a consistent probabilistic behavior. It encapsulates the notion
that their probabilities, when viewed through the lens of logarithmic
scaling, converge to a common value, thereby enabling a more inclusive
approach in diverse scenarios. In certain instances, intricate technical
nuances necessitate the incorporation of a certain tolerance factor,
denoted as $\epsilon>0$. This becomes particularly pertinent in continuous
scenarios, as we will soon delve into.\footnote{In the next chapter, when we explore the saddle-point method, we will
see how to circumvent the need for this tolerance factor.} This leads us to introduce the notion of an \emph{$\epsilon$-inflated
type class}, represented as follows: 
\begin{equation}
{\cal T}_{n,\epsilon}(P)\triangleq\left\{ \boldsymbol{x}\in{\cal X}^{n}\colon\bigg|-\frac{\ln P(\boldsymbol{x})}{n}-H\bigg|\le\epsilon\right\} .
\end{equation}
These definitions of type classes are aligned with the concept of\emph{
weak typicality}. However, there are instances where we require this
property of almost equal log-probabilities (or log-densities) not
solely for a one specific source $P$, but concurrently for all sources
within a given class. Consider a parametric family of sources, $\{P_{\theta}\colon\theta\in\Theta\}$,
where $\theta$ is the parameter and $\Theta$ is the parameter space.
We define the type class of $\boldsymbol{x}$ WRT the class $\{P_{\theta}\colon\theta\in\Theta\}$
(see also \cite{MW04}) as 
\begin{equation}
{\cal T}_{n}(\boldsymbol{x})\triangleq\left\{ \boldsymbol{x}'\in{\cal X}^{n}\colon P_{\theta}(\boldsymbol{x}')=P_{\theta}(\boldsymbol{x}),\;\forall\theta\in\Theta\right\} .
\end{equation}
Indeed, when the set $\{P_{\theta}\colon\theta\in\Theta\}$ encompasses
all memoryless sources with a given finite alphabet ${\cal X}$ of
size $q$, the parameter vector $\theta$ can be construed as the
vector comprising $q-1$ letter probabilities, with the $q$-th probability
completing their sum to unity. This alignment of definitions corresponds
precisely with the conventional characterization of a type class for
memoryless sources. The rationale underlying this correspondence stems
from the fact that the probability of a sequence $\boldsymbol{x}$
under any memoryless source depends on $\boldsymbol{x}$ solely via
the empirical distribution $\hat{P}$ associated with $\boldsymbol{x}$.
As a result, any two sequences sharing the same empirical distribution
must invariably possess identical probabilities across all memoryless
sources indexed by distinct $\theta$ values. In essence, this expansive
definition of a type class seamlessly envelops the well-established
definition applicable to memoryless sources, effectively encompassing
it as a specific case. More generally, the $\epsilon$-inflated type
class of $\boldsymbol{x}$ is defined as 
\begin{equation}
{\cal T}_{n,\epsilon}(\boldsymbol{x})\triangleq\left\{ \boldsymbol{x}'\in{\cal X}^{n}\colon\bigg|\frac{\ln P_{\theta}(\boldsymbol{x}')}{n}-\frac{\ln P_{\theta}(\boldsymbol{x})}{n}\bigg|\le\epsilon,~\forall\theta\in\Theta\right\} .
\end{equation}
These definitions align with the concept of \emph{strong typicality}.
It is evident that broadening the scope of reference sources, achieved
by expanding the parametric family, causes the type classes to contract.
Conversely, focusing on a subset of $\{P_{\theta}\colon\theta\in\Theta\}$
results in the expansion of type classes. At the far end of this spectrum,
when the subclass of sources becomes a singleton, we encounter the
concept of weak typicality.

As a pertinent example that illustrates this, consider the class of
memoryless, zero-mean Gaussian sources parameterized by the variance,
denoted as $\theta=\sigma^{2}$. The PDF for this class is expressed
as follows: 
\begin{equation}
P_{\sigma^{2}}(\boldsymbol{x})=\frac{\exp\left\{ -\sum_{i=1}^{n}x_{i}^{2}/(2\sigma^{2})\right\} }{(2\pi\sigma^{2})^{n/2}}.
\end{equation}
Since $P_{\sigma^{2}}(\boldsymbol{x})$ depends on $\boldsymbol{x}$
only via $\sum_{i=1}^{n}x_{i}^{2}$, it is clear that all sequences
$\{\boldsymbol{x}\}$ with a given norm (\emph{i.e.}, all sequences
pertaining to points on the surface of a given Euclidean hyper-sphere
centered at the origin) have the same PDF, $P_{\sigma^{2}}(\boldsymbol{x})$.
Thus, a natural definition of type classes WRT the class of zero-mean,
memoryless Gaussian sources parameterized by their variance, also
known as\emph{ Gaussian types} (or ``power types''), are surfaces
of $n$-dimensional hyper-spheres centered at the origin. Expanding
this parametric class by introducing a mean parameter $\mu$ leads
us to consider $\theta=(\sigma^{2},\mu)$ ($\sigma^{2}>0$, $\mu\in\reals$).
Consequently, the PDF becomes: 
\begin{equation}
P_{\sigma^{2},\mu}(\boldsymbol{x})=\frac{\exp\left\{ -\sum_{i=1}^{n}(x_{i}-\mu)^{2}/(2\sigma^{2})\right\} }{(2\pi\sigma^{2})^{n/2}}.
\end{equation}
In this context, $P_{\sigma^{2},\mu}(\boldsymbol{x})$ depends on
$\boldsymbol{x}$ exclusively through $\sum_{i=1}^{n}x_{i}^{2}$ and
$\sum_{i=1}^{n}x_{i}$. Accordingly, the definition of a type class
involves the intersection of a hyper-sphere surface defined by a particular
radius and a hyper-plane defined by a specific value of $\sum_{i=1}^{n}x_{i}$.
This type class is notably smaller compared to the type class relative
to $\{P_{\sigma^{2}}(\boldsymbol{x}),~\sigma^{2}>0\}$, which was
solely defined by the hyper-sphere surface without any additional
intersection with a hyper-plane.

More generally, consider a parametric class of memoryless sources
that form an \emph{exponential family} (see, e.g., Lehmann \cite[Section 1.4]{Lehmann83}).
This means that the single-letter marginal is of the form, 
\begin{equation}
P_{\theta}(x)=\frac{\exp\left\{ \sum_{j=1}^{k}\theta_{j}\phi_{j}(x)\right\} }{Z(\theta)},
\end{equation}
where $\theta=(\theta_{1},\ldots,\theta_{k})$ is the parameter vector,
$\phi_{i}:{\cal X}\to\reals$ are given functions, and $Z(\theta)$
is a normalization constant, given by 
\begin{equation}
Z(\theta)\triangleq\sum_{x\in{\cal X}}\exp\left\{ \sum_{j=1}^{k}\theta_{j}\phi_{j}(x)\right\} ,
\end{equation}
in the discrete case, or 
\begin{equation}
Z(\theta)\triangleq\int_{{\cal X}}\exp\left\{ \sum_{j=1}^{k}\theta_{j}\phi_{j}(x)\right\} \dd x,
\end{equation}
in the continuous case. Moving on to $n$-sequences, by considering
the product form, 
\begin{equation}
P_{\theta}(\boldsymbol{x})=\prod_{i=1}^{n}P_{\theta}(x_{i})=\frac{\exp\left\{ \sum_{j=1}^{k}\theta_{j}\sum_{i=1}^{n}\phi_{j}(x_{i})\right\} }{[Z(\theta)]^{n}}.
\end{equation}
Here, type classes are defined by a given combination of values of
the statistics, $\sum_{i=1}^{n}\phi_{j}(x_{i})$, for $j=1,\ldots,k$.
The class of memoryless Gaussian sources parameterized by $\sigma^{2}$
only, is an exponential family with $k=1$, a transformed parameter,
$\theta=\theta_{1}=-\frac{1}{2\sigma^{2}}$, $\phi_{1}(x)=x^{2}$
and accordingly, $Z(\theta)=\sqrt{2\pi\sigma^{2}}=\sqrt{-\pi/\theta}$.
The broader class, parameterized by $(\sigma^{2},\mu)$, is also an
exponential family with $k=2$, $\theta_{1}=-\frac{1}{2\sigma^{2}}$,
$\theta_{2}=\frac{\mu}{\sigma^{2}}$, $\phi_{1}(x)=x^{2}$, $\phi_{2}(x)=x$,
and 
\begin{equation}
Z(\theta)=\sqrt{2\pi\sigma^{2}}\exp\left\{ \frac{\mu^{2}}{2\sigma^{2}}\right\} =\sqrt{-\frac{\pi}{\theta_{1}}}\exp\left\{ -\frac{\theta_{2}^{2}}{4\theta_{1}}\right\} .
\end{equation}
The class of $q$-ary memoryless sources with ${\cal X}=\{1,2,\ldots,q\}$
is yet another example of an exponential family with $k=q-1$, a parameter
transformation, 
\begin{equation}
\theta_{j}=\ln\left(\frac{p_{j}}{1-\sum_{l=1}^{q-1}p_{l}}\right),~~~~j=1,2,\ldots,q-1,
\end{equation}
\begin{equation}
\phi_{j}(x)=\begin{cases}
1, & x=j\\
0, & x\neq j
\end{cases}
\end{equation}
and 
\begin{equation}
Z(\theta)=\frac{1}{1-\sum_{j=1}^{q-1}p_{j}}.
\end{equation}
In summary, we observe that exponential families are general enough
to include at least two important special cases of memoryless sources:
Finite-alphabet memoryless sources and Gaussian memoryless sources,
but of course they include much more \cite[Section 1.4]{Lehmann83}.

The method of types for exponential families is useful for assessing
the exponential order of certain sums or integrals (depending on whether
the alphabet is discrete or continuous) of functions that depend on
$\boldsymbol{x}$ only via the set of statistics $\{\phi_{j},j=1,2,\ldots,k\}$,
\emph{i.e.}, 
\begin{equation}
\sum_{\boldsymbol{x}}f\left(\sum_{i=1}^{n}\phi_{1}(x_{i}),\sum_{i=1}^{n}\phi_{2}(x_{i}),\ldots,\sum_{i=1}^{n}\phi_{k}(x_{i})\right)
\end{equation}
in the discrete alphabet case, or 
\begin{equation}
\int_{\reals^{n}}f\left(\sum_{i=1}^{n}\phi_{1}(x_{i}),\sum_{i=1}^{n}\phi_{2}(x_{i}),\ldots,\sum_{i=1}^{n}\phi_{k}(x_{i})\right)\dd\boldsymbol{x}
\end{equation}
in the continuous alphabet case. Most notably, the method is useful
when $f$ is an exponential function of $\{\phi_{j},~j=1,2,\ldots,k\}$,
for example, an exponential function of a linear combination of $\{\phi_{j}\}$,
possibly multiplied by an indicator function for the event that the
vector $\{\phi_{j},~j=1,\ldots,k\}$ lies in a certain region in $\reals^{k}$.

In the sequel, we will also see that the exponential structure lends
itself also to handle sources with certain structures of memory, most
notably, Markov sources, where 
\begin{equation}
P_{\theta}(\boldsymbol{x})=\frac{\exp\left\{ \sum_{j=1}^{k}\theta_{j}\sum_{i=0}^{n-1}\phi_{j}(x_{i},x_{i+1})\right\} }{Z_{n}(\theta)},\label{eq: expomarkov}
\end{equation}
and so, type classes are defined according to a given combination
of values of the statistics $\sum_{i=0}^{n-1}\phi_{j}(x_{i},x_{i+1})$,
as an extension of finite-alphabet Markov types \cite{Billingsley61,Boza71,Whittle55,DLS81,Natarajan85}.

Another related useful concept is the notion of a \emph{conditional
type class}. In the finite alphabet case, the conditional type class
of $\boldsymbol{y}\in{\cal Y}^{n}$ given $\boldsymbol{x}\in{\cal X}^{n}$
(where both ${\cal X}$ and ${\cal Y}$ are finite alphabets), is
defined as the set of all $\{\boldsymbol{y}'\}$ such that empirical
joint distribution $\hat{P}_{\boldsymbol{xy'}}$ is equal to $\hat{P}_{\boldsymbol{xy}}$.
A natural parallel extension of conditional type classes to the continuous
alphabet case, is defined WRT an exponential family of conditional
distributions (in the discrete case) or conditional PDFs (in the continuous
case). In the memoryless case, we define the single-letter conditional
probability function pertaining to an exponential family, as
\begin{equation}
P_{\theta}(x|y)=\frac{\exp\left\{ \sum_{j=1}^{k}\theta_{j}\phi_{j}(x,y)\right\} }{Z(y,\theta)},\label{eq: expochannel}
\end{equation}
with $Z(y,\theta)$ being a normalization constant such that $P_{\theta}(x|y)$
sums/integrates (over $x$) to unity. Here, the conditional type class
of $\boldsymbol{x}$ given $\boldsymbol{y}$ is the set of all $\{\boldsymbol{x}'\}$
such that for the given $\boldsymbol{y}$, $\sum_{i=1}^{n}\phi_{j}(x_{i}',y_{i})=\sum_{i=1}^{n}\phi_{j}(x_{i},y_{i})$,
for all $j=1,2,\ldots,k$. For example, a Gaussian conditional type
class is defined WRT the class 
\begin{equation}
P_{\sigma^{2},a}(x|y)=\frac{\exp\left\{ -(x-ay)^{2}/(2\sigma^{2})\right\} }{\sqrt{2\pi\sigma^{2}}},
\end{equation}
which is a conditional exponential family with $k=2$, $\theta_{1}=-\frac{1}{2\sigma^{2}}$,
$\theta_{2}=\frac{a}{\sigma^{2}}$, $\phi_{1}(x,y)=x^{2}$, $\phi_{2}(x,y)=xy$,
and 
\begin{equation}
Z(y,\theta)=\sqrt{2\pi\sigma^{2}}\exp\left\{ \frac{a^{2}y^{2}}{2\sigma^{2}}\right\} .
\end{equation}
In this case, the conditional type class is defined by prescribed
values of $\sum_{i=1}^{n}x_{i}^{2}$ and $\sum_{i=1}^{n}x_{i}y_{i}$.

In the sequel, we will demonstrate the usefulness of the concepts
of type classes and conditional type classes in several application
examples.

\subsection{Simple Gaussian Types \label{subsec: simpleGausstypes}}

As is widely recognized, the conventional method of employing types,
particularly for finite alphabets, hinges upon having a readily available,
explicit formulation for the exponential growth rate (as a function
of $n$) concerning the size of a given type class. Similarly, when
dealing with the continuous Gaussian scenario, a prerequisite is obtaining
a specific, well-defined expression for the volume of the associated
type class, as delineated in Section \ref{subsec: typedefs}. To commence,
let us consider the simplest scenario -- that of typicality WRT zero-mean,
Gaussian, IID sources, characterized by their variance. In this context,
the corresponding type class, as detailed in Section \ref{subsec: typedefs},
finds its definition in the realm of hyper-spherical surfaces, \emph{i.e.},
\begin{equation}
{\cal T}_{n}(s)\triangleq\left\{ \boldsymbol{x}\in\reals^{n}\colon\frac{1}{n}\sum_{i=1}^{n}x_{i}^{2}=s\right\} ,
\end{equation}
with a slight abuse of notation. Strictly speaking, the volume of
${\cal T}_{n}(s)$ is zero, if viewed as an object in the space $\reals^{n}$,
because its real dimension is $n-1$, as it is the surface area of
a hyper-sphere of radius $\sqrt{ns}$. The surface area of an $n$-dimensional
hyper-sphere of radius $R$ is given by $2\pi^{n/2}R^{n-1}/\Gamma(n/2)$,
where $\Gamma(\cdot)$ is the Gamma function, defined as 
\begin{equation}
\Gamma(u)\triangleq\int_{0}^{\infty}t^{u-1}e^{-t}\dd t,
\end{equation}
whose value for $u=n/2$, ($n$ being a positive integer) is given
by 
\begin{equation}
\Gamma\left(\frac{n}{2}\right)=\begin{cases}
\left(\frac{n}{2}-1\right)!, & n\mbox{ is even}\\
2^{-(n-1)/2}\cdot\sqrt{\pi}\times1\times3\times\ldots(n-2), & n\mbox{ is odd}
\end{cases}.
\end{equation}
Thus, the surface area of an $n$-dimensional hyper-sphere of radius
$\sqrt{ns}$ is the volume of ${\cal T}_{n}(s)$ in $n-1$ dimensions:
\begin{align}
\Vol\left\{ {\cal T}_{n}(s)\right\}  & =\frac{2\pi^{n/2}(\sqrt{ns})^{n-1}}{\Gamma(n/2)}\nonumber \\
 & \sim\frac{2\pi^{n/2}(ns)^{(n-1)/2}}{\sqrt{4\pi/n}(n/2e)^{n/2}}\nonumber \\
 & =\frac{(2\pi es)^{n/2}}{\sqrt{\pi s}},
\end{align}
where the notation $a_{n}\sim b_{n}$, for two positive sequences,
$\{a_{n}\}$ and $\{b_{n}\}$, means that $a_{n}/b_{n}\to1$ as $n\to\infty$.
Here, the second line follows from the Stirling approximation for
sufficiently large $n$. Note that the exponential factor, $(2\pi es)^{n/2}$
is exactly $e^{nh}$, where $h=\frac{1}{2}\ln(2\pi es)$ is the differential
entropy of a Gaussian RV with variance $s$, in other words $\Vol\{{\cal T}_{n}(s)\}$
is of the exponential order of $e^{nh}$ in parallel to the fact that
in the finite-alphabet case, the size of a type class is exponentially
$e^{nH}$, where $H$ is the empirical entropy associated with the
type. This result is not a coincidence and we will encounter it repeatedly
in the sequel.

Now, if our purpose is to integrate over $\reals^{n}$ a certain function
that depends on $\boldsymbol{x}$ only via $\sum_{i=1}^{n}x_{i}^{2}$,
we can proceed as follows: 
\begin{align}
\int_{\reals^{n}}f\left(\sum_{i=1}^{n}x_{i}^{2}\right)\dd\boldsymbol{x} & =\int_{0}^{\infty}\dd R\int_{\{\sum_{i}x_{i}^{2}=R^{2}\}}f\left(\sum_{i=1}^{n}x_{i}^{2}\right)\dd\boldsymbol{x}'\nonumber \\
 & =\int_{0}^{\infty}\dd(\sqrt{ns})\cdot\Vol\left\{ {\cal T}_{n}(s)\right\} f(ns)\nonumber \\
 & \sim\frac{1}{2}\sqrt{\frac{n}{\pi}}\cdot(2\pi e)^{n/2}\int_{0}^{\infty}\dd s\cdot s^{n/2-1}f(ns),
\end{align}
where the inner integration WRT $\boldsymbol{x}'$ in the first line
is over the hyper-sphere surface, whose dimension is $n-1$. We have
thus simplified an $n$-dimensional integral to become a one-dimensional
integral.
\begin{example}
\emph{Consider the calculation of the probability of the event $\sum_{i=1}^{n}X_{i}^{2}\ge nA$,
where $\{X_{i}\}$ are IID, zero-mean, Gaussian RVs with variance
$\sigma^{2}$ and $A>\sigma^{2}$. In this case, 
\begin{align}
\Pr\left\{ \sum_{i=1}^{n}X_{i}^{2}\ge nA\right\}  & =\int_{\reals^{n}}(2\pi\sigma^{2})^{-n/2}\exp\left\{ -\sum_{i=1}^{n}x_{i}^{2}/(2\sigma^{2})\right\} \cdot\I\left\{ \sum_{i=1}^{n}x_{i}^{2}\ge nA\right\} \dd\boldsymbol{x}\nonumber \\
 & \sim\frac{1}{2}\sqrt{\frac{n}{\pi}}(2\pi e)^{n/2}\int_{A}^{\infty}s^{n/2-1}(2\pi\sigma^{2})^{-n/2}\exp\left\{ -\frac{sn}{2\sigma^{2}}\right\} \dd s.
\end{align}
For $A>\sigma^{2}$, this integral is dominated by the value of the
integrand at $s=A$, and therefore, the above expression is of the
exponential order of 
\begin{equation}
\exp\left\{ -\frac{n}{2}\left[\frac{A}{\sigma^{2}}-\ln\left(\frac{A}{\sigma^{2}}\right)-1\right]\right\} .
\end{equation}
}
\end{example}
At this juncture, one might inquire about the necessity of employing
the method of types for the aforementioned example, as well as for
several other instances elaborated upon in the subsequent sections
of this chapter. After all, the exponential rate of the aforementioned
probability can be readily derived through a straightforward application
of the Chernoff bound, renowned for its exponential accuracy. However,
the rationale for employing the method of types, not just in this
simple instance, but also in the forthcoming sections, is threefold: 
\begin{itemize}
\item \emph{General applicability.} While the chosen example was intentionally
simple, serving as an illustrative vehicle for the underlying technique,
the method of types possesses a generality and adaptability that extends
to more intricate scenarios. Consider, for instance, an event that
encompasses a vector of diverse empirical statistics, confined within
a specific spatial region. Such intricate events are beyond the realm
of the Chernoff bound. 
\item \emph{Broad utility.} The capacity to gauge the volume of a type class
holds significance beyond the mere evaluation of probabilities linked
to rare events. Its utility extends to deriving universal hypothesis
testing strategies and universal decoders in instances where the source
and/or channel characteristics are unknown. For a comprehensive understanding,
refer to works such as \cite{me89,me93,me13,me19}, all of which underscore
its importance. This aspect will be elaborated upon in Section \ref{subsec: CMOT applications}. 
\item \emph{Enhanced precision.} Through the utilization of the Laplace
method for one-dimensional integration, we will come to realize in
the upcoming chapter that we can attain not only the accurate exponential
order found in the last integral (akin to the Chernoff bound or general
large-deviations bounds), but also an asymptotically precise pre-exponential
factor. 
\end{itemize}
It is imperative to bear these considerations in mind as we delve
into the subsequent sections of this chapter.

\subsection{More Refined Gaussian Types \label{subsec: RefinedGausstypes}}

Let us proceed to the next phase. Consider a scenario where the function
we need to integrate depends on $\boldsymbol{x}$, not solely through
$\sum_{i=1}^{n}x_{i}^{2}$, but also through $\sum_{i=1}^{n}x_{i}$.
For instance, this arises when calculating the probability of an event
like $\{\sum_{i=1}^{n}(X_{i}-\mu)^{2}\ge nA\}$. In this situation,
we must engage with more refined type classes, defined by specific
values of both $\sum_{i=1}^{n}x_{i}^{2}$ and $\sum_{i=1}^{n}x_{i}$.
In simpler terms, our type class now takes the form of an intersection
between a hyper-sphere surface and a hyper-plane. Unlike the prior
case where we dealt with a simple hyper-sphere, here, an apparent
closed-form formula for the volume of this $(n-2)$-dimensional construct,
defined by $\sum_{i=1}^{n}x_{i}^{2}=ns$ and $\sum_{i=1}^{n}x_{i}=n\mu$
for given constants $s>0$ and $\mu\in\mathbb{R}$ (with $s>\mu^{2}$),
is not readily available. At this juncture, an exact solution to this
challenge remains elusive. Nevertheless, we can furnish an approximation
that can be continually honed as $n$ becomes increasingly large.
This approximation suffices to derive the precise exponential scale
of the desired expression, thereby serving our immediate purpose.
Subsequently, we will acquaint ourselves with more advanced techniques
that, on occasion, enable a significantly more accurate assessment.

Let $\epsilon>0$ be arbitrarily small and consider the $\epsilon$-inflated
version of the type class described above, that is 
\begin{equation}
{\cal T}_{n}(s,\mu,\epsilon)\triangleq\left\{ \boldsymbol{x}\colon\bigg|\frac{1}{n}\sum_{i=1}^{n}x_{i}^{2}-s\bigg|\le\epsilon,~\bigg|\frac{1}{n}\sum_{i=1}^{n}x_{i}-\mu\bigg|\le\epsilon\right\} .
\end{equation}
Consider an auxiliary PDF of $n$ IID Gaussian RVs of mean $\mu$
and variance $s-\mu^{2}$, \emph{i.e.}, 
\begin{equation}
g(\boldsymbol{x})=\frac{\exp\left\{ -\frac{1}{2(s-\mu^{2})}\sum_{i=1}^{n}(x_{i}-\mu)^{2}\right\} }{\left[2\pi(s-\mu^{2})\right]^{n/2}}.
\end{equation}
Then, 
\begin{align}
1 & \ge\int_{{\cal T}_{n}(s,\mu,\epsilon)}g(\boldsymbol{x})\dd\boldsymbol{x}\nonumber \\
 & =\int_{{\cal T}_{n}(s,\mu,\epsilon)}\frac{\exp\left\{ -\frac{1}{2(s-\mu^{2})}\sum_{i=1}^{n}(x_{i}-\mu)^{2}\right\} }{\left[2\pi(s-\mu^{2})\right]^{n/2}}\dd\boldsymbol{x}\nonumber \\
 & =\int_{{\cal T}_{n}(s,\mu,\epsilon)}\frac{\exp\left\{ -\frac{1}{2(s-\mu^{2})}\left[\sum_{i=1}^{n}x_{i}^{2}-2\mu\sum_{i=1}^{n}x_{i}+n\mu^{2}\right]\right\} }{\left[2\pi(s-\mu^{2})\right]^{n/2}}\dd\boldsymbol{x}\nonumber \\
 & \ge\int_{{\cal T}_{n}(s,\mu,\epsilon)}\frac{\exp\left\{ -\frac{1}{2(s-\mu^{2})}\left[n(s+\epsilon)-2\mu\cdot n(\mu-\epsilon\cdot\mbox{sgn}(\mu))+n\mu^{2}\right]\right\} }{\left[2\pi(s-\mu^{2})\right]^{n/2}}\dd\boldsymbol{x}\nonumber \\
 & =\Vol\left\{ {\cal T}_{n}(s,\mu,\epsilon)\right\} \cdot\frac{\exp\left\{ -\frac{n}{2(s-\mu^{2})}\left[s+\epsilon-\mu^{2}+2\epsilon|\mu|\right]\right\} }{\left[2\pi(s-\mu^{2})\right]^{n/2}}\nonumber \\
 & =\Vol\left\{ {\cal T}_{n}(s,\mu,\epsilon)\right\} \cdot\frac{\exp\left\{ -\frac{n\epsilon(2|\mu|+1)}{2(s-\mu^{2})}\right\} }{\left[2\pi e(s-\mu^{2})\right]^{n/2}},
\end{align}
and so, 
\begin{equation}
\Vol\left\{ {\cal T}_{n}(s,\mu,\epsilon)\right\} \le\left[2\pi e(s-\mu^{2})\right]^{n/2}\cdot\exp\left\{ \frac{n\epsilon(2|\mu|+1)}{2(s-\mu^{2})}\right\} .
\end{equation}
To establish a lower bound, consider the application of the weak law
of large numbers (WLLN), which asserts that as $n$ approaches infinity,
the probability of the complement of ${\cal T}_{n}(s,\mu,\epsilon)$
under the PDF $g$ diminishes to zero, for any fixed $\epsilon>0$.
Remarkably, we can even allow $\epsilon$ to tend towards zero, albeit
at a pace that remains gentle relative to the growth of $n$. This
probability can be readily bounded from above by employing either
the Chebyshev inequality or the Chernoff bound. Denote the resultant
upper bound for this probability as $\delta_{n}$. This leads us to
the following expression: 
\begin{align}
1-\delta_{n} & \le\int_{{\cal T}_{n}(s,\mu,\epsilon)}g(\boldsymbol{x})\dd\boldsymbol{x}\nonumber \\
 & =\int_{{\cal T}_{n}(s,\mu,\epsilon)}\frac{\exp\left\{ -\frac{1}{2(s-\mu^{2})}\left[\sum_{i=1}^{n}x_{i}^{2}-2\mu\sum_{i=1}^{n}x_{i}+n\mu^{2}\right]\right\} }{\left[2\pi(s-\mu^{2})\right]^{n/2}}\dd\boldsymbol{x}\nonumber \\
 & \le\int_{{\cal T}_{n}(s,\mu,\epsilon)}\frac{\exp\left\{ -\frac{1}{2(s-\mu^{2})}\left[n(s-\epsilon)-2\mu\cdot n(\mu+\epsilon\cdot\mbox{sgn}(\mu))+n\mu^{2}\right]\right\} }{\left[2\pi(s-\mu^{2})\right]^{n/2}}\dd\boldsymbol{x}\nonumber \\
 & =\Vol\left\{ {\cal T}_{n}(s,\mu,\epsilon)\right\} \cdot\frac{\exp\left\{ -\frac{n}{2(s-\mu^{2})}\left[s-\epsilon-\mu^{2}-2\epsilon|\mu|\right]\right\} }{\left[2\pi(s-\mu^{2})\right]^{n/2}}\nonumber \\
 & =\Vol\left\{ {\cal T}_{n}(s,\mu,\epsilon)\right\} \cdot\frac{\exp\left\{ \frac{n\epsilon(2|\mu|+1)}{2(s-\mu^{2})}\right\} }{\left[2\pi e(s-\mu^{2})\right]^{n/2}},
\end{align}
and so, 
\begin{equation}
\Vol\left\{ {\cal T}_{n}(s,\mu,\epsilon)\right\} \ge(1-\delta_{n})\cdot\left[2\pi e(s-\mu^{2})\right]^{n/2}\cdot\exp\left\{ -\frac{n\epsilon(2|\mu|+1)}{2(s-\mu^{2})}\right\} .
\end{equation}
As we allow $\epsilon$ to approach infinitesimally small values,
we discern that the volume of ${\cal T}_{n}(s,\mu,\epsilon)$ essentially
aligns with the exponential order given by: 
\begin{equation}
\left[2\pi e(s-\mu^{2})\right]^{n/2}=(2\pi es)^{n/2}\cdot\left(1-\frac{\mu^{2}}{s}\right)^{n/2}.
\end{equation}
The first factor, $(2\pi es)^{n/2}$, corresponds to $e^{nh}$, as
we previously deduced in Section \ref{subsec: simpleGausstypes}.
Concurrently, the subsequent factor, $(1-\mu^{2}/s)^{n/2}$, embodies
the volume reduction attributed to the intersection with the ($\epsilon$-inflated)
hyper-plane, namely $n(\mu-\epsilon)\le\sum_{i=1}^{n}x_{i}\le n(\mu+\epsilon)$.
Consequently, it becomes evident that there is no sacrifice in terms
of the exponential order when the hyper-sphere intersects with the
hyper-plane that encompasses the origin ($\mu=0$). Stated differently,
the majority of volume is captured by elements within ${\cal T}_{n}(s,\mu,\epsilon)$
that exhibit a property where the sum of their coordinates is relatively
modest (in absolute value).

As evident, the underpinning of the volume's upper and lower bound
derivation is straightforward, yet this same concept retains its relevance
in more intricate scenarios. When faced with an $\epsilon$-enlarged
type class, characterized by linear and quadratic criteria on $\boldsymbol{x}$,
we construct an auxiliary Gaussian PDF, denoted as $g(\cdot)$, which
exhibits two key attributes: 
\begin{enumerate}
\item The likelihood of the type class under $g(\cdot)$ converges towards
unity as $n$ approaches infinity. 
\item The value of the PDF of all sequences situated within the type class
are virtually the same, differing only exponentially by a factor that
scales with $\epsilon$. This value of the PDF is denoted as $g_{0}$. 
\end{enumerate}
The volume of the type class then aligns with the exponential order
of $1/g_{0}$. In the prior calculation, $g_{0}$ equates to $[2\pi e(s-\mu^{2})]^{-n/2}$,
leading to an exponential volume of $1/g_{0}=[2\pi e(s-\mu^{2})]^{n/2}$.
Given that our objective is to pinpoint the correct exponential order
rather than striving for precise evaluation at this stage, the demand
in the first item mentioned earlier can actually be considerably relaxed.
It is even permissible for the type class probability to approach
zero, as long as the rate of decay remains sub-exponential in $n$.
\begin{example}
\emph{Consider the calculation of the probability of the event $\{\sum_{i=1}^{n}(X_{i}-A)^{2}\ge nB\}$
when $\{X_{i}\}$ are IID, zero-mean Gaussian RVs with variance $\sigma^{2}$.
To this end, we divide the set ${\cal E}\triangleq\{\boldsymbol{x}\colon\sum_{i=1}^{n}(x_{i}-A)^{2}\ge nB\}$
into disjoint $\epsilon$-inflated type classes that together cover
${\cal E}$, where $s$ and $\mu$ are odd integer multiples of $\epsilon$.
These are all $\{{\cal T}_{n}(s,\mu,\epsilon)\}$ with the property
$s-2A\mu+A^{2}>B$ where $s=i\epsilon$ and $\mu=j\epsilon$, $i$
being an odd positive integer and $j$ being an odd integer. To avoid
the necessity of dealing with contributions of infinitely many such
type classes, we proceed as follows. Let us partition the set ${\cal E}$
into two disjoint subsets, ${\cal E}_{1}\triangleq\{\boldsymbol{x}\colon nB\le\sum_{i=1}^{n}(x_{i}-A)^{2}<nC\}$
and ${\cal E}_{2}\triangleq\{\boldsymbol{x}\colon\sum_{i=1}^{n}(x_{i}-A)^{2}\ge nC\}$,
for some $C>B$ arbitrarily large. The idea is that ${\cal E}_{1}$
contains finitely many types, whereas the contribution of ${\cal E}_{2}$
can be upper bounded by simple (crude) bound, which for large enough
$C$, would yield an exponential decay faster than that of ${\cal E}_{1}$,
and so, the contribution of ${\cal E}_{2}$ can be neglected altogether.
We will thus show that $\Pr\{{\cal E}\}\doteq\Pr\{{\cal E}_{1}\}$,
where $\doteq$ denotes equality on the exponential scale, i.e., two
positive sequences $\{a_{n}\}$ and $\{b_{n}\}$ satisfy that $a_{n}\doteq b_{n}$
if $\lim_{n\to\infty}\frac{1}{n}\log\frac{a_{n}}{b_{n}}=0$. Specifically,
first observe that 
\begin{align}
\sum_{i=1}^{n}(x_{i}-A)^{2} & =\sum_{i=1}^{n}x_{i}^{2}-2A\sum_{i=1}^{n}x_{i}+nA^{2}\nonumber \\
 & \le\sum_{i=1}^{n}x_{i}^{2}+2|A|\cdot\bigg|\sum_{i=1}^{n}x_{i}\bigg|+nA^{2}\nonumber \\
 & \le\sum_{i=1}^{n}x_{i}^{2}+2|A|\cdot\sqrt{n\sum_{i=1}^{n}x_{i}^{2}}+nA^{2}\nonumber \\
 & =n\cdot\left(\sqrt{\frac{1}{n}\sum_{i=1}^{n}x_{i}^{2}}+|A|\right)^{2},
\end{align}
where the second inequality follows from the Schwarz--Cauchy inequality.
Therefore, 
\begin{eqnarray}
\Pr\{{\cal E}_{2}\} & = & \Pr\left\{ \sum_{i=1}^{n}(x_{i}-A)^{2}\ge nC\right\} \nonumber \\
 & \le & \Pr\left\{ n\cdot\left(\sqrt{\frac{1}{n}\sum_{i=1}^{n}x_{i}^{2}}+|A|\right)^{2}\ge nC\right\} \nonumber \\
 & = & \Pr\left\{ \sum_{i=1}^{n}x_{i}^{2}\ge n(\sqrt{C}-|A|)^{2}\right\} \nonumber \\
 & \le & \exp\left\{ -\frac{n}{2}\left[\frac{(\sqrt{C}-|A|)^{2}}{\sigma^{2}}-\ln\left(\frac{(\sqrt{C}-|A|)^{2}}{\sigma^{2}}\right)-1\right]\right\} ,
\end{eqnarray}
where the last inequality is obtained from the Chernoff bound. By
selecting large enough $C$, it becomes apparent that $\Pr\{{\cal E}_{2}\}$
must decay with an arbitrarily fast exponential rate. In particular,
it can be made faster (and hence negligible) compared to the contribution
of ${\cal E}_{1}$. It is therefore enough to confine attention to
${\cal E}_{1}$. Now, within ${\cal E}_{1}$, there are finitely many
type classes $\{{\cal T}_{n}(i\epsilon,j\epsilon)\}$, as $i$ cannot
exceed $C/(2\epsilon)$ and $|j|$ cannot exceed $\sqrt{C}/(2\epsilon)$
(because in the definition of ${\cal T}_{n}(s,\mu,\epsilon)$, $|\mu|$
cannot exceed $\sqrt{s}$, or else ${\cal T}_{n}(s,\mu,\epsilon)$
would be empty for small $\epsilon$). It follows then that the total
number of $\epsilon$-inflated type classes is less than $C^{3/2}/(2\epsilon^{2})$.
Therefore, $\Pr\{{\cal E}_{1}\}\doteq\Pr\{{\cal E}\}$ is determined
by the probability of the dominant type class within ${\cal E}_{1}$.
In the limit of small $\epsilon$, each such type contributes $\Vol\{{\cal T}_{n}(s,\mu,\epsilon)\}\doteq[2\pi e(s-\mu^{2})]^{n/2}$
times the PDF within that type, $g(\boldsymbol{x})\doteq(2\pi\sigma^{2})^{-n/2}e^{-ns/(2\sigma^{2})}$,
and it follows that the asymptotic exponent of $\Pr\{{\cal E}\}$
is given by 
\begin{equation}
\inf_{\{(s,\mu)\colon s-2A\mu+A^{2}\ge B\}}\frac{1}{2}\left[\frac{s}{\sigma^{2}}-\ln\left(\frac{s-\mu^{2}}{\sigma^{2}}\right)-1\right].
\end{equation}
Note that the objective function of this minimization can be interpreted
as the KL divergence between two Gaussian PDFs, ${\cal N}(\mu,s-\mu^{2})$
and ${\cal N}(0,\sigma^{2})$, in analogy to the form of exponential
rates of probabilities of rare events that are computed using the
traditional method of types, where the KL divergence between two finite-alphabet
distributions is minimized subject to a constraint (or constraints)
corresponding to the event in question (see also the calculation near
the end of Section \ref{subsec: simpleGausstypes}). This is also
agrees with basic foundations in large-deviations theory \cite{DZ93}.}
\end{example}

\subsection{Conditional Gaussian Types \label{subsec: ConditionalGausstypes}}

In analogy to the finite-alphabet case, the notion of conditional
types exists also in the Gaussian case. Given a sequence $\boldsymbol{y}=(y_{1},y_{2},\ldots,y_{n})\in\reals^{n}$,
a conditional Gaussian type class is defined as the set of $\{\boldsymbol{x}\}$
with given values of $\frac{1}{n}\sum_{i=1}^{n}x_{i}^{2}$ and $\frac{1}{n}\sum_{i=1}^{n}x_{i}y_{i}$.
In the $\epsilon$-inflated version, this amounts to 
\begin{equation}
{\cal T}_{n}(s,c,\epsilon|\boldsymbol{y})=\left\{ \boldsymbol{x}\colon\bigg|\frac{1}{n}\sum_{i=1}^{n}x_{i}^{2}-s\bigg|\le\epsilon,~\bigg|\frac{1}{n}\sum_{i=1}^{n}x_{i}y_{i}-c\bigg|\le\epsilon\right\} ,
\end{equation}
where $s\ge c^{2}/P_{y}$ and $P_{y}\triangleq\frac{1}{n}\sum_{i=1}^{n}y_{i}^{2}$,
due to the Schwarz--Cauchy inequality. In fact, this is an extension
of the refined Gaussian types considered in Section \ref{subsec: RefinedGausstypes},
where $y_{i}=1$ for all $i$. To estimate the volume of this conditional
type class, consider the Gaussian channel, 
\begin{equation}
g(\boldsymbol{x}|\boldsymbol{y})=\frac{\exp\left\{ -\frac{1}{2\sigma^{2}}\sum_{i=1}^{n}(x_{i}-\alpha y_{i})^{2}\right\} }{(2\pi\sigma^{2})^{n/2}},
\end{equation}
and let us select the parameters of this channel to be 
\begin{equation}
\alpha=\frac{c}{P_{y}},
\end{equation}
and 
\begin{equation}
\sigma^{2}=s-\frac{c^{2}}{P_{y}},
\end{equation}
for reasons that will become apparent shortly. It is easy to check
that the channel $g(\boldsymbol{x}|\boldsymbol{y})$ has the two desired
properties: It assigns a high probability and an approximately uniform
distribution within ${\cal T}_{n}(s,c,\epsilon|\boldsymbol{y})$,
which is of the exponential order of 
\begin{equation}
g_{0}=\left[2\pi e(s-c^{2}/P_{y})\right]^{-n/2}=e^{-nh(X|Y)},
\end{equation}
where $h(X|Y)$ is the conditional entropy of a Gaussian zero-mean,
RV $X$, with variance $s$, given a jointly Gaussian, zero-mean,
RV $Y$ with variance $P_{y}$ and $\E\{XY\}=c$. The expression $s-c^{2}/P_{y}$
is then the conditional variance of $X$ given $Y$, which is also
the minimum mean square error (MMSE) in estimating $X$ based on $Y$.
\begin{example}
\emph{Consider a simplified version of the problem of universal decoding
of \cite{me93} for the additive white Gaussian noise (AWGN) channel,
\begin{equation}
Y_{i}=\alpha X_{i}+Z_{i},~~~~~i=1,2,\ldots,n,
\end{equation}
where $\{Z_{i}\}$ are IID, zero-mean Gaussian RVs with variance $\sigma^{2}$,
$\alpha$ is an unknown fixed parameter, $\{X_{i}\}$ are the channel
inputs, and $\{Y_{i}\}$ are the channel outputs. Consider a random
codebook for channel coding, where $M=e^{nR}$ codewords of length
$n$ are selected independently at random where each codeword is drawn
under a PDF, $q(\boldsymbol{x})$, which is uniform across the surface
of a hyper-sphere of radius $\sqrt{nP}$. In \cite{me93} it is shown
that $\ln\Vol\{{\cal T}_{n}(P,\frac{1}{n}\sum_{i=1}^{n}x_{i}y_{i},\epsilon)\}$
can serve as a universal decoding metric (independent of the unknown
$\alpha$), which achieves the same random coding exponent as that
of the maximum-likelihood (ML) decoder, that is cognizant of $\alpha$.
This is equivalent to a decoder that maximizes $|\sum_{i=1}^{n}x_{i}y_{i}|$
among all codewords. In \cite{me93}, the problem is more general
in the sense that an interference signal may also be present, and
so, more interesting decoders are derived (see also \cite{HM15,HSMM19}
for further developments).}
\end{example}
The notion of a conditional Gaussian type can be easily extended to
account for conditioning on more than one vector $\boldsymbol{y}$.
Let $\boldsymbol{y}^{1},\boldsymbol{y}^{2},\ldots\boldsymbol{y}^{k}$
be $k$ given vectors in $\reals^{n}$, where $k$ is fixed, independently
of $n$. Consider the conditional type defined by 
\begin{equation}
{\cal T}_{n}(s,c_{1},\ldots,c_{k},\epsilon|\boldsymbol{y}^{1},\ldots,\boldsymbol{y}^{k})\triangleq\left\{ \boldsymbol{x}\colon\bigg|\frac{1}{n}\sum_{i=1}^{n}x_{i}^{2}-s\bigg|\le\epsilon,~\bigg|\frac{1}{n}\sum_{i=1}^{n}x_{i}y_{i}^{j}-c_{j}\bigg|\le\epsilon,~\forall~j=1,\ldots,k\right\} .
\end{equation}
Here, we can use a conditional PDF of the form 
\begin{equation}
g(\boldsymbol{x}|\boldsymbol{y}^{1},\ldots,\boldsymbol{y}^{k})=\frac{\exp\left\{ -\frac{1}{2\sigma^{2}}\sum_{i=1}^{n}\left(x_{i}-\sum_{j=1}^{k}\alpha_{i}y_{i}^{j}\right)^{2}\right\} }{(2\pi\sigma^{2})^{n/2}},
\end{equation}
and tune the parameters $(\sigma^{2},\alpha_{1},\ldots,\alpha_{k})$
such that ${\cal T}_{n}(s,c_{1},\ldots,c_{k},\epsilon|\boldsymbol{y}^{1},\ldots,\boldsymbol{y}^{k})$
would have high probability for large $n$. The resulting volume would
then be of the exponential order of $\exp\{nh(X|Y_{1},\ldots,Y_{k})\}$
where 
\begin{equation}
h(X|Y_{1},\ldots,Y_{k})=\frac{1}{2}\ln\left(2\pi e\mbox{MMSE}\{X|Y_{1},\ldots,Y_{k}\}\right),
\end{equation}
with $\mbox{MMSE}\{X|Y_{1},\ldots,Y_{k}\}$ being the MMSE of estimating
$X$ based on $Y_{1},\ldots,Y_{k}$ where $(X,Y_{1},\ldots,Y_{k})$
is a zero-mean Gaussian vector with $\E\{X^{2}\}=s$, $\E\{XY_{j}\}=c_{j}$
and a given covariance matrix of $(Y_{1},\ldots,Y_{k})$ with $\E\{Y_{m}Y_{l}\}=\frac{1}{n}\sum_{i=1}^{n}y_{i}^{m}y_{i}^{l}$.
It is not necessary to find the coefficients of the optimal (linear)
estimator of $X$ based on $(Y_{1},\ldots,Y_{k})$ in order to calculate
$\mbox{MMSE}\{X|Y_{1},\ldots,Y_{k}\}$. It is possible to calculate
$\mbox{MMSE}\{X|Y_{1},\ldots,Y_{k}\}$ directly from the covariance
matrix of $(X,Y_{1},\ldots,Y_{k})$, based on the following information-theoretic
consideration. Let $\Lambda(Y_{1},\ldots,Y_{k})$ and $\Lambda(X,Y_{1},\ldots,Y_{k})$
denote the covariance matrices of $(Y_{1},\ldots,Y_{k})$ and $(X,Y_{1},\ldots,Y_{k})$,
respectively. These matrices must both be positive definite, otherwise,
the problem is singular. Now, on the one hand, 
\begin{align}
h(X|Y_{1},\ldots,Y_{k}) & =h(X,Y_{1},\ldots,Y_{k})-h(Y_{1},\ldots,Y_{k})\nonumber \\
 & =\frac{1}{2}\ln\left[(2\pi e)^{k+1}|\Lambda(X,Y_{1},\ldots,Y_{k})|\right]-\frac{1}{2}\ln\left[(2\pi e)^{k}|\Lambda(Y_{1},\ldots,Y_{k})|\right]\nonumber \\
 & =\frac{1}{2}\ln\left[2\pi e\cdot\frac{|\Lambda(X,Y_{1},\ldots,Y_{k})|}{|\Lambda(Y_{1},\ldots,Y_{k})|}\right],
\end{align}
and on the other hand, denoting by $(\alpha_{1}^{*},\ldots,\alpha_{k}^{*})$
the coefficients of the optimal (linear) estimator, we have 
\begin{align}
h(X|Y_{1},\ldots,Y_{k}) & =h\left(X-\sum_{i=1}^{k}\alpha_{i}^{*}Y_{i}\bigg|Y_{1},\ldots,Y_{k}\right)\nonumber \\
 & =h\left(X-\sum_{i=1}^{k}\alpha_{i}^{*}Y_{i}\right)\nonumber \\
 & =\frac{1}{2}\ln\left(2\pi e\cdot\mbox{MMSE}\left\{ X|Y_{1},\ldots,Y_{k}\right\} \right),
\end{align}
where the second equality stems from the orthogonality principle,
which in the Gaussian case implies independence between $X-\sum_{i=1}^{k}\alpha_{i}^{*}Y_{i}$
and $(Y_{1},\ldots,Y_{k})$. By equating the two expressions of $h(X|Y_{1},\ldots,Y_{k})$,
we have 
\begin{equation}
\mbox{MMSE}\left\{ X|Y_{1},\ldots,Y_{k}\right\} =\frac{\left|\Lambda(X,Y_{1},\ldots,Y_{k})\right|}{\left|\Lambda(Y_{1},\ldots,Y_{k})\right|}.
\end{equation}
Thus, the volume can be calculated directly, without recourse of finding
first the optimal coefficients.

\subsection{Gauss--Markov Types \label{subsec: GaussMarkovTypes}}

So far we have dealt with Gaussian types defined by empirical second
order statistics that correspond to memoryless Gaussian sources, namely,
the empirical mean and the empirical second moment. As we mentioned
before, in the finite-alphabet case, the method of types has been
extended to Markov-types, namely, types defined by counts of transitions
between consecutive letters along a sequence, that it, the number
of time indices $\{i\}$ along an $n$-sequence $\boldsymbol{x}$
such that $x_{i-1}=a$ and $x_{i}=b$, where $a,b\in{\cal X}$ \cite{Csiszar98},
\cite{DLS81}, \cite{Natarajan85}. But what would be the corresponding
Markov extension of Gaussian types?

The simplest definition of a first-order Gauss--Markov type class
is defined as the set of all $\boldsymbol{x}\in\reals^{n}$ with prescribed
values of empirical variance, $\frac{1}{n}\sum_{i=1}^{n}x_{i}^{2}$
and the empirical first autocorrelation, $\frac{1}{n}\sum_{i=1}^{n}x_{i}x_{i-1}$
(for a given $x_{0}$). The $\epsilon$-inflated version would then
be naturally defined as 
\begin{equation}
{\cal T}_{n}(s_{0},s_{1},\epsilon)\triangleq\left\{ \boldsymbol{x}\colon\bigg|\frac{1}{n}\sum_{i=1}^{n}x_{i}^{2}-s_{0}\bigg|\le\epsilon,~\bigg|\frac{1}{n}\sum_{i=1}^{n}x_{i}x_{i-1}-s_{1}\bigg|\le\epsilon\right\} ,
\end{equation}
where $|s_{1}|\le s_{0}$. What is the volume of ${\cal T}_{n}(s_{0},s_{1},\epsilon)$?

The basic idea is the same as before: We seek a Gaussian PDF, which
assigns to ${\cal T}_{n}(s_{0},s_{1},\epsilon)$ a high probability,
and at the same time, it is approximately uniform (in the exponential
sense) across ${\cal T}_{n}(s_{0},s_{1},\epsilon)$. Given $s_{0}$
and $s_{1}$, let 
\begin{equation}
\sigma^{2}=s_{0}-\frac{s_{1}^{2}}{s_{0}}
\end{equation}
and 
\begin{equation}
\rho=\frac{s_{1}}{s_{0}},
\end{equation}
and consider the first-order Gauss--Markov process (also known as
first-order autoregressive process), 
\begin{equation}
X_{i}=\rho X_{i-1}+Z_{i},~~~~i=1,2,\ldots,n,~~X_{0}=x_{0},
\end{equation}
where $\{Z_{i}\}$ are IID, zero-mean, Gaussian RVs with variance
$\sigma^{2}$. The joint PDF of a given sample $\boldsymbol{x}$ from
this process is given by 
\begin{equation}
g(\boldsymbol{x})=\frac{\exp\left\{ -\frac{1}{2\sigma^{2}}\sum_{i=1}^{n}(x_{i}-\rho x_{i-1})^{2}\right\} }{(2\pi\sigma^{2})^{n/2}}.
\end{equation}
Neglecting edge effects, it is apparent that $g(\boldsymbol{x})$
depends on $\boldsymbol{x}$ only via $\sum_{i=1}^{n}x_{i}^{2}$ and
$\sum_{i=1}^{n}x_{i}x_{i-1}$, and so, within ${\cal T}_{n}(s_{0},s_{1},\epsilon)$,
the PDF is essentially (neglecting $\epsilon$), $g_{0}=[2\pi e(s_{0}-s_{1}^{2}/s_{0})]^{-n/2}$.
Also, by the ergodicity of the process, ${\cal T}_{n}(s_{0},s_{1},\epsilon)$
has high probability for large $n$ and fixed $\epsilon>0$, and so,
both conditions are satisfied. The volume is, therefore, of the exponential
of order of 
\begin{align}
\frac{1}{g_{0}} & =\left[2\pi e\left(s_{0}-\frac{s_{1}^{2}}{s_{0}}\right)\right]^{n/2}\nonumber \\
 & =\exp\left\{ \frac{n}{2}\ln\left[2\pi e\left(s_{1}-\frac{s_{1}^{2}}{s_{0}}\right)\right]\right\} \nonumber \\
 & =\exp\left\{ \frac{n}{2}\ln(2\pi e\sigma^{2})\right\} \nonumber \\
 & =e^{nh(X_{2}|X_{1})},
\end{align}
where $h(X_{2}|X_{1})$ is the conditional differential entropy of
$X_{2}$ given $X_{1}$, in analogy to the parallel result for finite-alphabet
Markov types, where the size of a type class is exponentially $e^{nH(X_{2}|X_{1})}$,
where $H(X_{2}|X_{1})$ is the conditional entropy associated with
the corresponding Markov process.

The intuitive explanation for this expression of the volume is as
follows: Consider the linear transformation that maps a realization
$\boldsymbol{z}=(z_{1},\ldots,z_{n})$ of the random vector $\boldsymbol{Z}=(Z_{1},\ldots,Z_{n})$
into $\boldsymbol{x}=(x_{1},\ldots,x_{n})$, which is a realization
of $\boldsymbol{X}=(X_{1},\ldots,X_{n})$. This transformation, which
is given by $x_{t}=\sum_{i=0}^{t}\rho^{i}z_{t-i}$, can be represented
by an $n\times n$ triangular transformation matrix, \emph{i.e.},
\begin{equation}
\left(\begin{array}{c}
x_{1}\\
x_{2}\\
\ldots\\
x_{n}
\end{array}\right)=\left(\begin{array}{ccccc}
1 & 0 & 0 & \ldots & 0\\
\rho & 1 & 0 & \ldots & 0\\
\ldots & \ldots & \ldots & \ldots & \ldots\\
\rho^{n-1} & \rho^{n-2} & \ldots & \rho & 1
\end{array}\right)\cdot\left(\begin{array}{c}
z_{1}\\
z_{2}\\
\ldots\\
z_{n}
\end{array}\right).
\end{equation}
Now, consider the $\epsilon$-inflated surface of the hyper-sphere
of radius $\sqrt{n\sigma^{2}}$ of $\boldsymbol{z}$-sequences, which
form the Gaussian type of the driving noise process, $\{Z_{i}\}$.
The volume of this type class is exponentially $[2\pi e\sigma^{2}]^{n/2}=[2\pi e(s_{0}-s_{1}^{2}/s_{0})]^{n/2}$.
But these typical $\boldsymbol{z}$-sequences are mapped into corresponding
$\boldsymbol{x}$-sequences, by the above triangular transformation
matrix whose diagonal terms are all equal to $1$, and hence its Jacobian
is also equal to $1$. In other words, the transformation from $\boldsymbol{z}$
to $\boldsymbol{x}$ preserves volumes, and so, the volume of the
$\epsilon$-inflated surface of a hyper-sphere of typical $\boldsymbol{z}$-sequences
is transformed by the above matrix into a hyper-ellipsoid of typical
$\boldsymbol{x}$-sequences of exactly the same volume.
\begin{example}
\emph{Consider the calculation of exponential decay rate of 
\begin{equation}
\Pr\left\{ \sum_{t=1}^{n}X_{t}X_{t-1}\ge\rho\sum_{t=1}^{n}X_{t}^{2}\right\} ,
\end{equation}
for some $\rho>0$, where $\{X_{t}\}$ are zero-mean, Gaussian RVs
with variance $\sigma^{2}$. Since the volume of the type is of the
exponential order of $[2\pi e(s_{0}-s_{1}^{2}/s_{0})]^{n/2}=[2\pi es_{0}(1-s_{1}^{2}/s_{0}^{2})]^{n/2}$,
and the PDF within a type class is $(2\pi\sigma^{2})^{-n/2}\exp\{-ns_{0}/(2\sigma^{2})\}$,
then the exponent is given by 
\begin{align}
 & \inf_{\{(s_{0},s_{1})\colon s_{0}\ge0,~s_{1}/s_{0}\ge\rho\}}\left[\frac{s_{0}}{2\sigma^{2}}+\frac{1}{2}\ln(2\pi\sigma^{2})-\frac{1}{2}\ln(2\pi es_{0})-\frac{1}{2}\ln\left(1-\frac{s_{1}^{2}}{s_{0}^{2}}\right)\right]\nonumber \\
 & =\inf_{s_{0}\ge0}\left[\frac{s_{0}}{2\sigma^{2}}+\frac{1}{2}\ln(2\pi\sigma^{2})-\frac{1}{2}\ln(2\pi es_{0})\right]-\frac{1}{2}\ln(1-\rho^{2})\nonumber \\
 & =-\frac{1}{2}\ln(1-\rho^{2}).
\end{align}
}
\end{example}
More generally, consider a $k$-th order Gauss--Markov type, defined
by 
\begin{equation}
{\cal T}_{n}(s_{0},s_{1},\ldots,s_{k},\epsilon)\triangleq\left\{ \boldsymbol{x}\colon\bigg|\frac{1}{n}\sum_{i=1}^{n}x_{i}x_{i-j}-s_{j}\bigg|\le\epsilon,~\forall~j=0,1,\ldots,k\right\} ,
\end{equation}
for given $(s_{0},s_{1},\ldots,s_{k})$ and some $(x_{0},x_{-1},\ldots,x_{-(k-1)})$.
It is assumed that the $(k+1)\times(k+1)$ matrix $S$ whose $(i,j)$-th
entry ($i,j\in\{0,1,\ldots,k\}$) is $s_{|i-j|}$ is a positive definite
matrix. Here, we find a matching $k$-th order autoregressive (AR)
process, 
\begin{equation}
X_{t}=\sum_{i=1}^{k}a_{i}X_{t-i}+Z_{t},~~~~~~t=1,2,\ldots,
\end{equation}
where $\{Z_{t}\}$ is again Gaussian white noise with variance $\sigma^{2}$,
such that $\E\{X_{t}X_{t-i}\}=s_{i}$ for all $i=0,1,\ldots,k$. Given
$(s_{0},s_{1},\ldots,s_{k})$, the corresponding parameter vector,
$(\sigma^{2},a_{1},\ldots,a_{k})$ of the AR process is obtained by
solving the Yule--Walker equations \cite[Eqs.\ (12-41a), (12-41b)]{Papoulis91},
\begin{equation}
\sum_{i=1}^{k}a_{i}s_{|i-j|}=s_{j},~~~~~j=1,2,\ldots,k,
\end{equation}
and 
\begin{equation}
\sigma^{2}=s_{0}-\sum_{i=1}^{k}a_{i}s_{i}.
\end{equation}
The corresponding PDF $g$, which is given by 
\begin{equation}
g(\boldsymbol{x})=\frac{1}{(2\pi\sigma^{2})^{n/2}}\exp\left\{ -\frac{1}{2\sigma^{2}}\sum_{t=1}^{n}\left(x_{t}-\sum_{i=1}^{k}a_{i}x_{t-i}\right)^{2}\right\} ,
\end{equation}
has the two desired properties of exponential uniformity within ${\cal T}_{n}(s_{0},s_{1},\ldots,s_{k},\epsilon)$
and assigning high probability to ${\cal T}_{n}(s_{0},s_{1},\ldots,s_{k},\epsilon)$.
Here too, $g_{0}=(2\pi e\sigma^{2})^{-n/2}$ which implies that the
volume of the type class is essentially 
\begin{equation}
\frac{1}{g_{0}}=(2\pi e\sigma^{2})^{n/2}.
\end{equation}
The intuition is the same as before: The mapping from $\boldsymbol{x}$
to $\boldsymbol{z}$ is by a triangular matrix whose diagonal entries
are all equal to 1, and so is its Jacobian. Therefore, it preserves
volumes, and so is the inverse transformation, which maps the hyper-sphere
surface of volume $(2\pi e\sigma^{2})^{n/2}$ in the $\boldsymbol{z}$-domain
into the typical hyper-ellipsoid of the same volume in the $\boldsymbol{x}$-domain.
Since $\sigma^{2}$ is the variance of the innovation process, the
differential entropy rate of $\{X_{t}\}$ is given by 
\begin{align}
h & =\lim_{n\to\infty}\frac{h(\boldsymbol{X})}{n}\nonumber \\
 & =\frac{1}{4\pi}\int_{-\pi}^{\pi}\ln[2\pi eS(e^{i\omega})]\dd\omega~~~~~~~~~~~~~i\triangleq\sqrt{-1}\nonumber \\
 & =\frac{1}{2}\ln(2\pi e)+\frac{1}{4\pi}\int_{-\pi}^{\pi}\ln S(e^{i\omega})\dd\omega\nonumber \\
 & =\frac{1}{2}\ln(2\pi e)+\frac{1}{4\pi}\int_{-\pi}^{\pi}\ln\left[\frac{\sigma^{2}}{\bigg|1-\sum_{j=1}^{k}a_{j}e^{-j\omega i}\bigg|^{2}}\right]\dd\omega\nonumber \\
 & =\frac{1}{2}\ln(2\pi e)+\frac{1}{2}\ln\sigma^{2}\nonumber \\
 & =\frac{1}{2}\ln(2\pi e\sigma^{2}),
\end{align}
where $S(e^{i\omega})$ is the spectrum of $\{X_{t}\}$ and where
we have used the Kolmogorov--Szeg\"{o} relation \cite[p.\ 491]{Papoulis91}
between the spectrum and the innovation variance.\footnote{Note that $\int_{-\pi}^{\pi}\ln[1-\sum_{j=1}^{k}a_{j}e^{-j\omega i}]\dd\omega=0$
since all zeroes of the function $1-\sum_{j=1}^{k}a_{j}z^{-j}$ must
be within the unit circle.} This implies that the volume continues to be of the exponential order
of $e^{nh}$. Similarly as before, $\sigma^{2}$ can be found directly
from the covariance matrix of $(s_{0},s_{1},\ldots,s_{k})$, as the
ratio between the determinants of the covariance matrix of order $(k+1)\times(k+1)$,
and the covariance matrix of order $k\times k$.

Is it possible to calculate the volume of a type class that is defined
by prescribed values of both the empirical autocorrelation and the
correlation with a given $\boldsymbol{y}$? This turns out to be considerably
more tricky (see the discussion in \cite{me93}) and it requires more
advanced tools that will be provided in the next chapter.

\subsection{Types Classes Pertaining to Exponential Families \label{subsec:Types-Classes-Pertaining}}

So far, we have considered various kinds of Gaussian types, which
are defined WRT given values of first and second order empirical statistics,
like the empirical mean, the empirical second moment, the empirical
correlation and autocorrelation, and so on. We now move on to extend
the scope to deal with types associated with empirical moments or
arbitrary functions. As described in Section \ref{subsec: CMoTintro},
consider the type class of all sequences $\{\boldsymbol{x}\}$ that
share the same combination of values of statistics $\frac{1}{n}\sum_{i=1}^{m}\phi_{j}(x_{i})$,
$j=1,2,\ldots,k$. More formally, consider the $\epsilon$-inflated
type class 
\begin{equation}
{\cal T}_{n}(q,\epsilon)\triangleq\left\{ \boldsymbol{x}\colon\bigg|\frac{1}{n}\sum_{i=1}^{m}\phi_{j}(x_{i})-q_{j}\bigg|\le\epsilon,~\forall~1\le j\le k\right\} .
\end{equation}
where $q=(q_{1},\ldots,q_{k})$. What is the volume of ${\cal T}_{n}(q,\epsilon)$?
Using the same general idea as before, we seek a PDF of $\boldsymbol{x}$
which would assign to all members of ${\cal T}_{n}(q,\epsilon)$ approximately
the same PDF (in the exponential scale), and at the same time, the
probability of ${\cal T}_{n}(q,\epsilon)$ would be large for large
$n$. As discussed in Section \ref{subsec: CMoTintro}, consider the
PDF 
\begin{equation}
P_{\theta}(\boldsymbol{x})=\frac{\exp\left\{ \sum_{j=1}^{k}\theta_{j}\sum_{i=1}^{n}\phi_{j}(x_{i})\right\} }{[Z(\theta)]^{n}},
\end{equation}
where $\theta=(\theta_{1},\ldots,\theta_{k})$ and 
\begin{equation}
Z(\theta)=\int_{{\cal X}}\exp\left\{ \sum_{j=1}^{k}\theta_{j}\phi_{j}(x)\right\} \dd x,
\end{equation}
assuming that ${\cal X}$ is a continuous alphabet, and where it is
understood that in the discrete case, the integration over ${\cal X}$
is replaced by summation. Clearly, $P_{\theta}(\boldsymbol{x})$ assigns
exponentially the same PDF to all members of ${\cal T}_{n}(q,\epsilon)$,
but ${\cal T}_{n}(q,\epsilon)$ has high probability only if $\theta$
is tuned accordingly for the given vector, $q$. If we can select
$\theta$ such that 
\begin{equation}
\E\left\{ \phi_{j}(X)\right\} \equiv\frac{\partial\ln Z(\theta)}{\partial\theta_{j}}=q_{j},\label{theta(q)}
\end{equation}
simultaneously for all $1\le j\le k$, then by the WLLN, ${\cal T}_{n}(q,\epsilon)$
would have high probability. Let us assume then, that $q$ is such
that there exists a parameter vector $\theta$ that solves the set
of $k$ simultaneous equations \eqref{theta(q)}, which can be presented
in the vector form as 
\begin{equation}
\nabla\ln Z(\theta)=q.
\end{equation}
Let $\theta=G(q)$ denote solution to this vector equation. In other
words, $G(q)$ is the inverse mapping of $F(\theta)=\nabla\ln Z(\theta)$,
provided that it exists. The PDF of every $\boldsymbol{x}\in{\cal T}_{n}(q,\epsilon)$
is exponentially 
\begin{equation}
\frac{\exp\left\{ n\sum_{j=1}^{k}\theta_{j}q_{j}\right\} }{[Z(\theta)]^{n}}=\frac{\exp\left\{ nq^{T}G(q)\right\} }{[Z(G(q))]^{n}},
\end{equation}
and so, the volume of ${\cal T}_{n}(q,\epsilon)$ is of the exponential
order of the reciprocal 
\begin{equation}
\exp\left\{ n\left[\ln Z(G(q))-q^{T}G(q)\right]\right\} .
\end{equation}
Note that the (differential) entropy associated with $P_{\theta}$
is given by 
\begin{equation}
h[q]=\E\left\{ \ln\frac{1}{P_{\theta}(X)}\right\} =\ln Z(\theta)-q^{T}\theta=\ln Z(G(q))-q^{T}G(q),
\end{equation}
and so, once again, the volume is of the exponential order of $e^{nh[q]}$.

It is interesting to relate the asymptotic evaluation of the log-volume
of a type class to the \emph{principle of maximum entropy} (see, e.g.,
\cite[Chapter 12]{CT06} and references therein). We argue that $h[q]$
is the largest possible differential entropy of any RV, $X$, that
satisfies the moment constraints, $\E\{\phi_{j}(X)\}=q_{j}$, $j=1,2,\ldots,k$.
To see why this is true, consider the following chain of equalities:
\begin{align}
\sup_{\{X\colon\E\{\phi_{j}(X)\}=q_{j},~1\le j\le k\}}h(X) & =\sup_{X}\inf_{\theta}\left[h(X)+\sum_{j=1}^{k}\theta_{j}\left(\E\{\phi_{j}(X)\}-q_{j}\right)\right]\nonumber \\
 & =\sup_{f}\inf_{\theta}\int_{-\infty}^{\infty}\dd xf(x)\left[\ln\frac{1}{f(x)}+\sum_{j=1}^{k}\theta_{j}\left(\phi_{j}(x)-q_{j}\right)\right]\nonumber \\
 & =\sup_{f}\inf_{\theta}\int_{-\infty}^{\infty}\dd xf(x)\left[\ln\frac{\exp\left\{ \sum_{j=1}^{k}\theta_{j}\phi_{j}(x)\right\} }{f(x)}-\sum_{j=1}^{k}\theta_{j}q_{j}\right]\nonumber \\
 & =\sup_{f}\inf_{\theta}\int_{-\infty}^{\infty}\dd xf(x)\left[\ln\frac{P_{\theta}(x)\cdot Z(\theta)}{f(x)}-q^{T}\theta\right]\nonumber \\
 & \trre[=,a]\inf_{\theta}\sup_{f}\left\{ -D(f\|P_{\theta})+\ln Z(\theta)-q^{T}\theta\right\} \nonumber \\
 & =\inf_{\theta}\left\{ \ln Z(\theta)-q^{T}\theta\right\} \nonumber \\
 & \trre[=,b]\ln Z(G(q))-q^{T}G(q)\nonumber \\
 & =h[q],
\end{align}
where $(a)$ follows from the minimax theorem and fact that the objective
is convex in $\theta$ and concave in $f$ and $(b)$ follows from
the fact that the minimizing $\theta$ is $\theta^{*}=G(q)$, which
is obtained by equating to zero the gradient of the convex function
$\ln Z(\theta)-q^{T}\theta$. As can be seen, the maximizing $f$
is exactly $P_{\theta}$ with $\theta=G(q)$.
\begin{example}
\emph{The volume of the ``Laplacian type class,'' where $k=1$ and
$\phi_{1}(x)=|x|$ is exponentially 
\begin{equation}
(2eq)^{n}=\exp(nh[q]),
\end{equation}
where 
\begin{equation}
h[q]=\ln(2eq)
\end{equation}
is the differential entropy of a Laplacian RV with $\E\{|X|\}=q$.
More generally, the ``generalized Gaussian type class'' is defined
for $k=1$ and $\phi_{1}(x)=|x|^{m}$ (for arbitrary $m>0$), where
the volume exponent is given by the differential entropy of the generalized
Gaussian RV with $\E\{|X|^{m}\}=q$, which is given by 
\begin{equation}
h[q]=\frac{1}{m}\ln\left(\frac{meq}{2c_{m}}\right),
\end{equation}
where 
\begin{equation}
c_{m}=\left[\frac{m}{2^{1+1/m}\Gamma(1/m)}\right]^{m}.
\end{equation}
}
\end{example}
\emph{The method of types for exponential families is flexible enough
to evaluate exponential rates of moments and probabilities of events
defined WRT statistics that are different from the sufficient statistics
of underlying PDF. Consider the following example.}
\begin{example}
\emph{Suppose that $X_{1},X_{2},\ldots,X_{n}$ are IID, zero-mean,
Gaussian RVs with variance $\sigma^{2}$ and we wish to assess the
probability that $\sum_{i=1}^{n}|X_{i}|\ge nA$, where $A\ge\sqrt{\frac{2}{\pi}}\sigma$.
In such a case, we may define type classes as above with $k=2$, $\phi_{1}(x)=|x|$
and $\phi_{2}(x)=x^{2}$, where $\phi_{1}$ is needed to support the
statistics of the event in question, and $\phi_{2}$ is for the underlying
Gaussian PDF. Then, each type class, ${\cal T}_{n}(q_{1},q_{2},\epsilon)$,
contributes a probability of the exponential order of 
\begin{equation}
e^{nh[q_{1},q_{2}]}\cdot(2\pi\sigma^{2})^{-n/2}e^{-nq_{2}/(2\sigma^{2})},
\end{equation}
and so, the dominant type class contributes an exponential order of
\begin{equation}
\inf_{(q_{1},q_{2})\colon q_{1}\ge A,~q_{2}\ge q_{1}^{2}}\left\{ \frac{q_{2}}{2\sigma^{2}}-h[q_{1},q_{2}]\right\} +\frac{1}{2}\ln(2\pi\sigma^{2}),
\end{equation}
where the constraint $q_{2}\ge q_{1}^{2}$ follows from the inequality
$\frac{1}{n}\sum_{i=1}^{n}x_{i}^{2}\ge(\frac{1}{n}\sum_{i=1}^{n}|x_{i}|)^{2}$.}
\end{example}
Finally, we point out that extension to conditional types and Markov
types can be carried out conceptually straightforwardly following
the same ideas described above in the context of Gaussian types. In
both cases, the main engine is corresponding the exponential family,
which is defined in \eqref{eq: expomarkov} for Markov types and in
\eqref{eq: expochannel} for conditional types.

\subsection{Applications \label{subsec: CMOT applications}}

The Gaussian method of types has found application in various contexts
and levels of generality across prior research. In this section, we
provide a brief overview of these contexts along with some of the
outcomes achieved.

In \cite{me93}, the challenge of universal decoding for memoryless
Gaussian channels with unknown deterministic interference was tackled,
and the method of Gaussian types played a central role in the analysis.
As highlighted in \cite[Eq.\ (5)]{me93}, the universal decoding metric
for the Gaussian channel hinges on the volume of the conditional Gaussian
type class of a channel input vector $\boldsymbol{x}$, given a channel
output vector $\boldsymbol{y}$. The effectiveness of this decoding
metric is contingent on having an explicit formula for the exponential
rate of this volume.

The extension from the memoryless case to Gaussian channels with intersymbol
interference remained an open question after \cite{me93}, as estimating
the corresponding volume was non-trivial. The gap was eventually bridged
in \cite{HM15} and \cite{HSMM19} using more advanced methodologies
to be discussed later. A similar connection between universal decoding
metrics and volumes of conditional type classes was observed in a
broader context of universal decoding for arbitrary channels concerning
a specific class of decoding metrics \cite{me13}. Additional insights
can be found in \cite[Section 4]{MS08}.

The method of Gaussian types has also played a pivotal role in deducing
random coding exponents for typical random codes in distinctive scenarios.
For instance, in the context of the colorful Gaussian channel \cite{me19}
and the dirty-paper channel \cite{TM23}, this method was crucial.
Both studies relied on the concept of conditional type classes and
their associated volumes, and the presence of explicit expressions
played a vital role in achieving exponentially tight results. In \cite[Subsection IV.A]{AM98},
the method of Gaussian types found application in addressing the problem
of optimal guessing subject to a fidelity constraint within the realm
of memoryless Gaussian sources. This corresponds to a parallel result
for finite-alphabet memoryless sources, for which the conventional
method of types is employed. By leveraging outcomes pertaining to
the exponential order of the volume of both simple Gaussian types
and conditional Gaussian types, the optimal achievable guessing exponent
was deduced. The crux of this derivation revolves around creating
a continuous version of the type-covering lemma. This lemma establishes
the capability to encompass a Euclidean hyper-sphere with a radius
of $\sqrt{n\sigma^{2}}$ using exponentially $\exp\{\frac{n}{2}\ln\frac{\sigma^{2}}{D}\}$
Euclidean hyper-spheres, each with a radius of $\sqrt{nD}$, where
$D<\sigma^{2}$. This type-covering result was reaffirmed and expanded
to support successive refinement coding theorems in \cite{ZTM17},
also employing Gaussian types. Interestingly, it seems that the authors
of \cite{ZTM17} were unaware of the initial version of this result
in \cite{AM98}. Gaussian types were also harnessed by Kelly and Wagner
in \cite{KW12} concerning the reliability of source coding with side
information (the Wyner--Ziv problem). Moreover, Scarlett \cite{Scarlett15}
and Scarlett and Tan \cite{ST15} employed Gaussian types (termed
``power types'') for second-order asymptotic analyses in their respective
works. Similar trains of thought were explored in \cite{ICW15} within
the domain of compression for similarity queries. Additional related
references include \cite{Tan14} and \cite{Vamvatsikos07}. Furthermore,
an analogous type-covering lemma for Laplacian type classes was established
in \cite{ZAC06} (also covered in \cite{ZAC09}).

The method of types extended to general exponential families found
application in \cite{me89} within the domain of model order estimation.
Just as mentioned previously, in this context as well, the existence
of an expression for the volume of a type class played a pivotal role
in deducing the model order selection criterion. Additionally, in
\cite{MKLS94}, the method of types was employed for exponential families
within the context of a continuous-alphabet extension of widely recognized
lower bounds for mismatched capacity, utilizing random coding analysis.
This showcases the versatility of the method across diverse problem
domains.

\newpage{}

\section{The Laplace Method of Integration and the Saddle-Point Method \label{sec: laplacesaddlepoint}}

\subsection{Introduction \label{subsec:Introduction Laplace}}

The Laplace method of integration (see, e.g., \cite[Chapter 4]{deBruijn81}
and \cite[Section 4.2]{me09}) is a powerful technique for approximating
definite integrals of the form: 
\begin{equation}
\int_{a}^{b}g(x)e^{nf(x)}\dd x,
\end{equation}
where the parameter $n$ is significantly large ($n\gg1$), and the
functions $f$ and $g$ exhibit sufficient regularity WRT the real
variable $x$. Importantly, these functions are assumed to remain
independent of $n$. More generally, $x$ may designate a $d$-dimensional
vector, where $d$ is independent of the large parameter $n$, whereas
the integration occurs over $\reals^{d}$ or a subset thereof.

The significance of this method is twofold. Firstly, it offers intrinsic
utility by itself, providing accurate asymptotic approximations for
integrals. However, its greater importance lies in its role as the
foundation for the saddle-point method, an extension that applies
the principles of the Laplace method to the integration of complex
functions along contours within the complex plane. The saddle-point
method finds broad applications across diverse disciplines, including
physics, probability, statistics, and engineering. Notably, this chapter
emphasizes that the method holds promise in the realm of information
theory as well. In many instances, the saddle-point method can serve
as a viable alternative to the extended method of types discussed
in Chapter \ref{sec: CMoT}. This advantage becomes particularly apparent
when it comes to circumventing the need for $\epsilon$-inflation
of type classes, a strategy employed in Chapter \ref{sec: CMoT}.
The Laplace method and saddle-point method offer a distinct advantage
by not only yielding the accurate exponential rate, as demonstrated
in Chapter \ref{sec: CMoT}, but also by providing the correct pre-exponential
term. Remarkably, this method furnishes approximations that exhibit
asymptotic precision. Specifically, as the large parameter $n$ grows
without bound, the ratio between the approximation and the actual
value converges to unity, signifying an increasingly faithful representation
of the underlying quantity.

It is important to note that the content presented in this chapter
exhibits some overlap with the material found in \cite[Sections 4.2 and 4.3]{me09}
and in \cite[Chapters 4 and 5]{deBruijn81}. As a result, several
intricate technical aspects related to the Laplace method and the
saddle-point method are either succinctly addressed or occasionally
omitted (though appropriately cross-referenced to \cite{deBruijn81,me09}).
Instead, the focus here lies on considering these methods in the context
of their capacity to stand as valid alternatives to the generalized
method of types, as described in Chapter \ref{sec: CMoT}. This pertains
to both its discrete and continuous alphabet variations. For readers
seeking a more comprehensive treatment with meticulous attention to
detail and rigor, we recommend delving into the pertinent chapters
of \cite{deBruijn81} and \cite{me09}.

\subsection{The Laplace Method of Integration \label{subsec: laplace}}

Commencing with the Laplace method, we turn our attention to an illustrative
example tied to the domain of universal source coding (as expounded
in references such as \cite{Davisson73} and \cite[Section 13.2]{CT06}).
This example serves as a compelling application that underscores the
significance of the Laplace method within the realm of information
theory. Through this example, we emphasize how the Laplace method
finds relevance and utility in tackling fundamental challenges in
information-theoretic contexts.
\begin{example}[Universal coding]
\emph{\label{exa: univcoding} Consider a family of binary memoryless
(Bernoulli) sources defined over the alphabet $\{0,1\}$, parameterized
by $\theta\in[0,1]$, which represents the probability of emitting
a $'1'$. The probability mass function of this source is given by:
\begin{equation}
P_{\theta}(\boldsymbol{x})=(1-\theta)^{n-n_{1}}\theta^{n_{1}},
\end{equation}
where $\boldsymbol{x}\in\{0,1\}^{n}$, and $n_{1}\leq n$ is the count
of occurrences of $'1'$ in $\boldsymbol{x}$. When dealing with an
unknown $\theta$, a universal code is devised using the Shannon code,
adapted to a weighted mixture of these sources: 
\begin{equation}
P(\boldsymbol{x})=\int_{0}^{1}\dd\theta w(\theta)P_{\theta}(\boldsymbol{x})=\int_{0}^{1}\dd\theta w(\theta)e^{nf(\theta)},
\end{equation}
where $w(\cdot)$ is a positive function that integrates to unity
across the interval $[0,1]$, and 
\begin{equation}
f(\theta)=\ln(1-\theta)+q\ln\left(\frac{\theta}{1-\theta}\right);\quad q=\frac{n_{1}}{n}.
\end{equation}
This necessitates the computation of an integral involving an exponential
function of $n$ (in this case, across the interval $[0,1]$) to evaluate
the performance of this universal code. An asymptotically exact evaluation
of such an integral is crucial in the quest of characterizing, not
only the main term of the achievable compression ratio, but also the
redundancy terms (see Example \ref{exa: univcodingrevisited} below).}
\end{example}
Consider first an integral of the form: 
\begin{equation}
F_{n}\triangleq\int_{-\infty}^{+\infty}e^{nf(x)}\dd x,
\end{equation}
where the function $f(\cdot)$ is independent of $n$. It will be
assumed that the function $f$ satisfies the following assumptions: 
\begin{enumerate}
\item $f$ is real and continuous. 
\item $f$ has a unique global maximum at $x=x_{0}$: $f(x)<f(x_{0})~~\forall x\ne x_{0}$,
and $\exists b>0,~c>0$ such that $|x-x_{0}|\ge c$ implies $f(x)\le f(x_{0})-b$. 
\item The integral defining $F_{n}$ converges for all large enough $n$.
Without loss of generality, let this sufficiently large $n$ be $n=1$,
\emph{i.e.}, $\int_{-\infty}^{+\infty}e^{f(x)}\dd x<\infty$. 
\item The derivative $f'(x)$ exists at a certain open neighborhood of $x=x_{0}$,
and $f''(x_{0})<0$. Thus, $f'(x_{0})=0$. 
\end{enumerate}
These assumptions pave the way to approximate $f(x)$, at the vicinity
of $x=x_{0}$, by a second-order Taylor series expansion, 
\begin{equation}
f(x)\approx f(x_{0})+\frac{f''(x_{0})}{2}(x-x_{0})^{2}=f(x_{0})-\frac{|f''(x_{0})|}{2}(x-x_{0})^{2},
\end{equation}
which renders $F_{n}$ as being dominated by the constant $e^{nf(x_{0})}$,
multiplied by a Gaussian integral, namely, the integral of $\exp\{-\frac{N}{2}|f''(x_{0})|(x-x_{0})^{2}\}$,
whereas the combined contribution of all the range away from $x_{0}$
is negligibly small for large $n$. Accordingly, as shown in \cite[Chapter 4]{deBruijn81}
and \cite[Section 4.2]{me09}, we arrive at the Laplace method approximation,
given by 
\begin{equation}
\int_{-\infty}^{+\infty}e^{nf(x)}\dd x\sim e^{nf(x_{0})}\cdot\sqrt{\frac{2\pi}{n|f''(x_{0})|}}.\label{eq: laplace1}
\end{equation}
This approximation continues to apply if $F_{n}$ is defined as an
integral over any finite or half-infinite interval that contains the
maximizer $x=x_{0}$ as an internal point. On the other hand, if the
maximizer $x_{0}$ falls at one of the endpoints of the integration
range, and $f'(x_{0})$ does not vanish, the Gaussian integral approximation
ceases to apply, and the local behavior around the maximum would be
approximated by an exponential $\exp\{-n|f'(x_{0})|(x-x_{0})\}$ instead,
which gives a different pre-exponential factor, yet the exponential
factor $e^{nf(x_{0})}$ would continue to be present. A further extension
for the case where $x_{0}$ is an internal point at which the derivative
vanishes, is the following: 
\begin{equation}
\int_{-\infty}^{+\infty}g(x)e^{nf(x)}\dd x\sim g(x_{0})e^{nf(x_{0})}\cdot\sqrt{\frac{2\pi}{n|f''(x_{0})|}},\label{eq: laplace2}
\end{equation}
where $g$ is another function that does not depend on $n$. In a
more general context, when the integration variable $x$ represents
a $d$-dimensional vector, where $d$ is a positive integer independent
of $n$, and the integration takes place over $\reals^{d}$ or a subset
thereof, with $x_{0}$ positioned as an internal point within the
integration region, we must replace $|f''(x_{0})|$ in both \eqref{eq: laplace1}
and \eqref{eq: laplace2} with the absolute value of the determinant
of the Hessian matrix of $f$ evaluated at $x=x_{0}$. Additionally,
the factor $n$ that multiplies $|f''(x_{0})|$ should be substituted
with $n^{d}$. This adjustment arises from a corresponding approximation
involving a multi-dimensional Gaussian integral. If the global maximum
of $f$ is achieved by more than one point, and the number of maximizers
is finite or countable, then the contributions from all of these maximizers
should be aggregated or summed together.

We next look into a few examples.
\begin{example}[Universal coding revisited]
\emph{\label{exa: univcodingrevisited} Applying the Laplace integral
approximation to Example \ref{exa: univcoding}, we have 
\begin{equation}
P(\boldsymbol{x})=\int_{0}^{1}w(\theta)\exp\left\{ n\left[\ln(1-\theta)+q\ln\left(\frac{\theta}{1-\theta}\right)\right]\right\} \dd\theta\sim w(q)e^{-nH(q)}\sqrt{\frac{2\pi q(1-q)}{n}},
\end{equation}
where $H(q)\triangleq-q\ln q-(1-q)\ln(1-q)$ is the empirical entropy
of $\boldsymbol{x}$, and so, the compression ratio pertaining to
the Shannon code WRT the mixture is 
\begin{equation}
\frac{-\ln P(\boldsymbol{x})}{n}=H(q)+\frac{\ln n}{2n}-\frac{\ln\left[w(q)\sqrt{2\pi q(1-q)}\right]}{n}+o\left(\frac{1}{n}\right).
\end{equation}
The principal component of the normalized redundancy can be expressed
as $\frac{\ln n}{2n}$, a well-established result (for more details,
refer to \cite{KT81}). Similarly, when considering a mixture encompassing
all sources with an alphabet size of $r$, this entails integration
over $r-1$ letter probabilities, resulting in a dominant redundancy
term of $\frac{(r-1)\ln n}{2n}$. }
\end{example}
\begin{example}[Extreme Value Statistics]
\emph{\label{exa: extremevaluestatistics} Consider a set of non-negative,
IID RVs $\{X_{i}\}_{i=1}^{n}$, each characterized by the PDF $p(x)$.
Our goal is to evaluate the expectation of the minimum value among
these variables, $\E\{\min_{i\le i\le n}X_{i}\}$. Let us explore
the following sequence of equalities to facilitate this assessment.
Denoting the cumulative distribution function of each $X_{i}$ by
$F(x)$, we have 
\begin{align}
\E\left\{ \min_{1\le i\le n}X_{i}\right\}  & =\int_{0}^{\infty}\Pr\left\{ \min_{i\le i\le n}X_{i}\ge x\right\} \dd x\nonumber \\
 & =\int_{0}^{\infty}\Pr\left[\bigcap_{i=1}^{n}\{X_{i}\ge x\}\right]\dd x\nonumber \\
 & =\int_{0}^{\infty}\left[1-F(x)\right]^{n}\dd x\nonumber \\
 & =\int_{0}^{\infty}\exp\{n\ln[1-F(x)]\}\dd x,
\end{align}
hence we may use the Laplace method with $f(x)=\ln[1-F(x)]$. Here,
the maximum of $f(x)$ is obtained at the edge-point of the integration
domain, $x_{0}=0$ and $f'(0)=-p(0)<0$. Therefore, the approximation
is not by a Gaussian integral, but a simple exponential, 
\begin{equation}
\int_{0}^{\infty}\exp\{-n|f'(0)|x\}\dd x=\frac{1}{n|f'(0)|},
\end{equation}
which yields 
\begin{equation}
\E\left\{ \min_{i\le i\le n}X_{i}\right\} \sim\frac{1}{np(0)}.
\end{equation}
However, if $p(0)=0$ while $p'(0)>0$, the Laplace approximation
is executed through a Gaussian integral over half of the real line.
In such a scenario, the outcome is as follows: 
\begin{equation}
\E\left\{ \min_{i\le i\le n}X_{i}\right\} \sim\frac{1}{2}\sqrt{\frac{2\pi}{np'(0)}}.
\end{equation}
}
\end{example}
The last example in this section supports the Stirling approximation.
\begin{example}[The Stirling formula]
\emph{\label{exa: Stirling} Beginning from the identity $\int_{0}^{\infty}\dd xe^{-sx}=1/s$,
and differentiating both sides $n$ times WRT $s$, the left-hand
side becomes $(-1)^{n}\int_{0}^{\infty}x^{n}e^{-sx}\dd x$, and the
right-hand side (RHS) gives $(-1)^{n}n!/s^{n+1}$, which together
yield the identity 
\begin{equation}
n!=s^{n+1}\int_{0}^{\infty}x^{n}e^{-sx}\dd x,
\end{equation}
holding true for every $s>0$. On substituting $s=n$, we get 
\begin{equation}
n!=n^{n+1}\int_{0}^{\infty}x^{n}e^{-nx}\dd x=n^{n+1}\int_{0}^{\infty}e^{n(\ln x-x)}\dd x.
\end{equation}
Assessing this integral using the Laplace method, we have $f(x)=\ln x-x$,
which is maximized at $x_{0}=1$, with $f(x_{0})=f''(x_{0})=-1$.
Thus, 
\begin{equation}
n!\sim n^{n+1}e^{-n\cdot1}\sqrt{\frac{2\pi}{n\cdot1}}=\left(\frac{n}{e}\right)^{n}\sqrt{2\pi n},
\end{equation}
which is the well-known Stirling formula for approximating $n!$.}
\end{example}

\subsection{The Saddle-Point Method \label{subsec: saddlepoint}}

We now broaden our focus to encompass integrals along paths within
the complex plane, a concept that arises more frequently than one
might anticipate. As previously mentioned, the extension of the Laplace
integration technique to the realm of complex functions is referred
to as the saddle-point method or the steepest descent method, with
explanations for these names becoming apparent in the forthcoming
presentation. Specifically, our current interest lies in evaluating
an integral represented as follows: 
\begin{equation}
F_{n}=\int_{{\cal P}}g(z)e^{nf(z)}\dd z.
\end{equation}
In this context, the variable $z$ takes on complex values and ${\cal P}$
designates a certain path within the complex plane, originating from
a point $A$ and concluding at a point $B$. Our initial focus will
be on the case $g(z)\equiv1$, and we make the assumption that ${\cal P}$
exclusively lies within a region where the function $f$ is analytic.

At first glance, the reader might question the relevance of complex
integrals when dealing with quantities that are inherently real ---
such as probabilities, expectations, volumes of high-dimensional objects,
etc. The answer lies in the fact that even if these quantities are
real, there are instances where expressing a certain term in a calculation
as an inverse Fourier transform or an inverse Laplace transform, or
inverse Z-transform, becomes useful and beneficial. These inverse
transforms are represented through complex integrals. To illustrate,
consider the following straightforward example: Computing the volume
of an $n$-dimensional hyper-sphere with radius $R$. This task can
be approached by interpreting the volume as the integral of $U(R^{2}-\sum_{i=1}^{n}x_{i}^{2})$
over $\reals^{n}$, where $U(t)$ signifies the Heaviside unit step
function. Next, we express $U(t)$ as the inverse Laplace transform
of $1/s$, subsequently we interchange the integration order, and
finally, we apply the saddle-point method to evaluate the complex
integration. As we proceed, we will delve into the meticulous execution
of this concept.

The first observation of significance is that the integral's value
depends solely upon the endpoints, $A$ and $B$, regardless of the
of the particular path ${\cal P}$. To illustrate, let us consider
an alternative path denoted as ${\cal P}'$, connecting points $A$
and $B$, while ensuring that the function $f$ remains free of singularities
within the enclosed region formed by ${\cal P}\cup{\cal P}'$. Under
these conditions, the integral of $e^{nf(z)}$ across the closed path
encompassing both ${\cal P}$ and ${\cal P}'$ --- traversing from
$A$ to $B$ via ${\cal P}$ and then returning from $B$ to $A$
through ${\cal P}'$ --- becomes null, indicating that the integrals
along ${\cal P}$ and ${\cal P}'$ between $A$ and $B$ hold identical
values. In essence, this imparts us with the liberty to elect our
preferred integration path, so long as we exercise caution to avoid
traversing too closely to the opposing side of any potential singularity
point. This consideration gains significance as we proceed with our
upcoming analyses.

An additional crucial observation pertains to another fundamental
property of analytic complex functions: The \emph{maximum-modulus
theorem}. This theorem essentially states that the magnitude of an
analytic function lacks any maxima. Although a comprehensive proof
of this theorem is beyond our scope, its essence can be captured as
follows: Consider an analytic function expressed as: 
\begin{equation}
f(z)=u(z)+jv(z)=u(x,y)+jv(x,y),
\end{equation}
where $u$ and $v$ are real-valued functions. When $f$ is analytic,
the Cauchy--Riemann conditions must hold for the partial derivatives
of $u$ and $v$: 
\begin{equation}
\frac{\partial u}{\partial x}=\frac{\partial v}{\partial y};\quad\frac{\partial u}{\partial y}=-\frac{\partial v}{\partial x}.
\end{equation}
Taking the second-order partial derivative of $u$, we arrive at:
\begin{equation}
\frac{\partial^{2}u}{\partial x^{2}}=\frac{\partial^{2}v}{\partial x\partial y}=\frac{\partial^{2}v}{\partial y\partial x}=-\frac{\partial^{2}u}{\partial y^{2}},
\end{equation}
where the first and third equalities stem from the Cauchy--Riemann
conditions. Alternatively, we can write: 
\begin{equation}
\frac{\partial^{2}u}{\partial x^{2}}+\frac{\partial^{2}u}{\partial y^{2}}=0,
\end{equation}
which is recognized as the Laplace equation. Consequently, any point
where $\frac{\partial u}{\partial x}=\frac{\partial u}{\partial y}=0$
cannot be a local maximum or minimum of $u$. If it were a local maximum
along the $x$-direction, then $\frac{\partial^{2}u}{\partial x^{2}}<0$,
implying that $\frac{\partial^{2}u}{\partial y^{2}}$ must be positive,
making it a local minimum along the $y$-direction, and vice versa.
Put simply, points where partial derivatives of $u$ are zero are,
in fact, saddle points. This line of reasoning applies to the modulus
of the integrand $e^{nf(z)}$ due to: 
\begin{equation}
\bigg|\exp\{nf(z)\}\bigg|=\exp\left[n\real\{f(z)\}\right]=e^{nu(z)}.
\end{equation}
Furthermore, if $f'(z)=0$ at some $z=z_{0}$, then $u'(z_{0})=0$
as well, establishing that $z_{0}$ is a saddle point of $|e^{nf(z)}|$.
Thus, points where $f$ exhibits zero derivatives are saddle points.

Armed with this foundational understanding, let us return to our integral
$F_{n}$. Given the flexibility to select the path ${\cal P}$, suppose
we can identify a trajectory that crosses a saddle point $z_{0}$
(hence the name of the method) and where the maximum value of $|e^{nf(z)}|$
is achieved at $z=z_{0}$. In this scenario, much like in the Laplace
method, we anticipate that the integral's dominant contribution would
stem from $e^{nf(z_{0})}$. Naturally, this path would be suitable
only if it intersects the saddle point $z_{0}$ along a direction
WRT which $z_{0}$ represents a local maximum of $|e^{nf(z)}|$ or
equivalently, of $u(z)$. Moreover, for the application of our prior
Laplace method findings, we aim to configure ${\cal P}$ so that any
point $z$ in proximity to $z_{0}$, where the Taylor expansion reads
(due to the fact that $f'(z_{0})=0$): 
\begin{equation}
f(z)\approx f(z_{0})+\frac{1}{2}f''(z_{0})(z-z_{0})^{2},
\end{equation}
where the second term, $\frac{1}{2}f''(z_{0})(z-z_{0})^{2}$, is exclusively
real and negative. Consequently, it assumes a local behavior akin
to a negative parabola, mirroring the behavior observed in the Laplace
method. This implication manifests as: 
\begin{equation}
\mbox{\ensuremath{\arg}}\left\{ f''(z_{0})\right\} +2\mbox{\ensuremath{\arg}}(z-z_{0})=\pi,
\end{equation}
or equivalently: 
\begin{equation}
\mbox{\ensuremath{\arg}}(z-z_{0})=\frac{\pi-\mbox{\ensuremath{\arg}}\{f''(z_{0})\}}{2}\triangleq\theta.
\end{equation}
In essence, ${\cal P}$ should traverse $z_{0}$ along the direction
$\theta$. This orientation is called the \emph{axis} of $z_{0}$
and can be demonstrated to be the direction of steepest descent from
the summit at $z_{0}$ --- hence the name \emph{steepest-descent
method}. Notably, it is worth mentioning that in the $\theta-\pi/2$
direction, which stands perpendicular to the axis, $\mbox{\ensuremath{\arg}}[f''(z_{0})(z-z_{0})^{2}]=\pi-\pi=0$.
Consequently, $f''(z_{0})(z-z_{0})^{2}$ emerges as real and positive
in this direction, akin to a positive parabolic pattern. This indicates
that along this direction, $z_{0}$ constitutes a local minimum.

Visually speaking, our strategy involves the selection of a path ${\cal P}$
connecting $A$ to $B$, constructed as three distinct segments (as
depicted in Figure \ref{fig: saddlepointpath}): $A\to A'$ and $B'\to B$
form the arbitrary initial and final sections of the integral path.
The middle part, connecting $A'$ to $B'$ and localized near $z_{0}$,
consists of a straight line aligned with the axis of $z_{0}$. 
\begin{figure}[t]
\centering{}\includegraphics[scale=0.6]{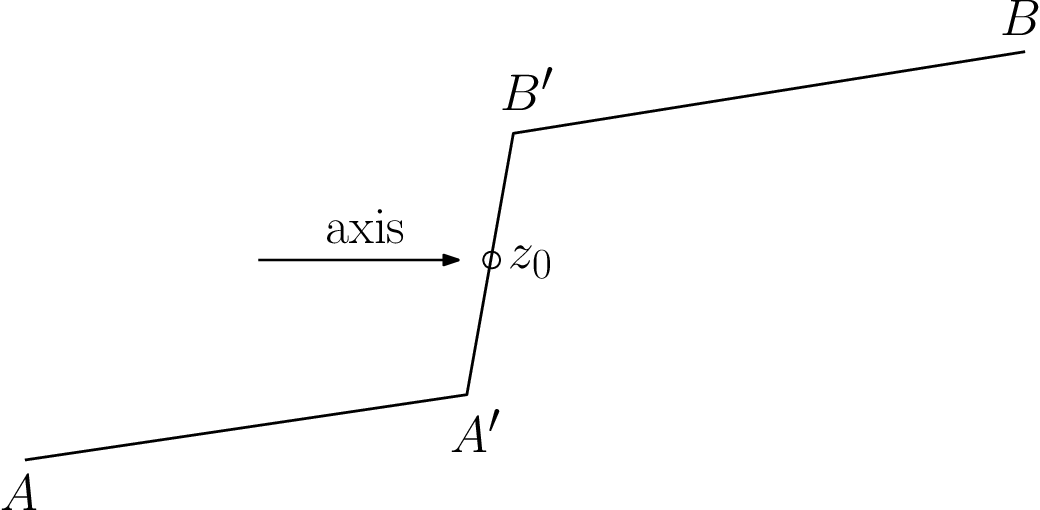}\caption{{\small{}A path ${\cal P}$ from $A$ to $B$, passing via $z_{0}$
along the axis. }\label{fig: saddlepointpath}}
\end{figure}

Accordingly, let us decompose $F_{n}$ into its three parts: 
\begin{equation}
F_{n}=\int_{A}^{A'}e^{nf(z)}\dd z+\int_{A'}^{B'}e^{nf(z)}\dd z+\int_{B'}^{B}e^{nf(z)}\dd z.
\end{equation}
As for the first and the third terms, 
\begin{align}
\bigg|\left(\int_{A}^{A'}+\int_{B'}^{B}\right)\dd ze^{nf(z)}\bigg| & \le\left(\int_{A}^{A'}+\int_{B'}^{B}\right)\dd z|e^{nf(z)}|\nonumber \\
 & =\left(\int_{A}^{A'}+\int_{B'}^{B}\right)\dd ze^{n\real\{f(z)\}}
\end{align}
whose contribution is negligible compared to $e^{nf(z_{0})}$, exactly
like the tails in the Laplace method. As for the middle integral,
\begin{equation}
\int_{A'}^{B'}e^{nf(z)}\dd z\approx e^{nf(z_{0})}\int_{A'}^{B'}\exp\left\{ \frac{nf''(z_{0})(z-z_{0})^{2}}{2}\right\} \dd z.
\end{equation}
By transitioning from the complex integration variable $z$ to the
real variable $x$, ranging from $-\delta$ to $+\delta$, with $z=z_{0}+xe^{j\theta}$
(following the axis direction), we end up with exactly the Gaussian
integral encountered in the Laplace method, resulting in: 
\begin{equation}
\int_{A'}^{B'}\exp\{nf''(z_{0})(z-z_{0})^{2}/2\}\dd z=e^{j\theta}\sqrt{\frac{2\pi}{n|f''(z_{0})|}}
\end{equation}
where the factor $e^{j\theta}$ is due to the change of variable ($\dd z=e^{j\theta}\dd x$).
Thus, 
\begin{equation}
F_{n}\sim e^{j\theta}\cdot e^{nf(z_{0})}\sqrt{\frac{2\pi}{n|f''(z_{0})|}},
\end{equation}
and somewhat more generally, 
\begin{equation}
\int_{{\cal P}}g(z)e^{nf(z)}\dd z\sim e^{j\theta}g(z_{0})e^{nf(z_{0})}\sqrt{\frac{2\pi}{n|f''(z_{0})|}}.
\end{equation}

Consider next a few simple examples. 
\begin{example}[The size of a type class of binary sequences]
\emph{ To count the number of binary sequences of length $n$ with
exactly $k$ 1's and $(n-k)$ 0's, we use the notation $m_{k}$. Let
us examine the complex function 
\begin{equation}
M(z)=(1+z^{-1})^{n}=\sum_{k=0}^{n}m_{k}z^{-k}.
\end{equation}
The second equality expresses the fact that $M(z)$ can be viewed
as the Z-transform of the sequence $\{m_{k}\}_{k=0}^{n}$, and so,
$m_{k}$ is given by the inverse Z-transform of $M(z)$: 
\begin{align}
m_{k} & =\frac{1}{2\pi j}\oint_{{\cal P}}(1+z^{-1})^{n}z^{k-1}\dd z\nonumber \\
 & =\frac{1}{2\pi j}\oint_{{\cal P}}\frac{1}{z}\exp\left\{ n\left[\ln(1+z^{-1})+q\ln z\right]\right\} \dd z,
\end{align}
where $q=k/n$ and ${\cal P}$ is any counterclockwise closed path
that surrounds the origin. Here, $g(z)=1/z$ and 
\begin{equation}
f(z)=\ln(1+z^{-1})+q\ln z=\ln(1+z)-(1-q)\ln z,
\end{equation}
whose saddle point is $z_{0}=\frac{1-q}{q}$. If we choose ${\cal P}$
to be the circle $|z|=\frac{1-q}{q}$, it intersects the point $z_{0}$,
situated on the real line, in a vertical manner. Remarkably, this
alignment corresponds to the axis of $z_{0}$. A straightforward calculation
yields 
\begin{equation}
f''(z_{0})=\frac{q^{3}}{1-q}
\end{equation}
which gives 
\begin{equation}
m_{k}\sim\frac{e^{j\pi/2}}{z_{0}}\cdot e^{nf(z_{0})}\cdot\frac{1}{2\pi j}\cdot\sqrt{\frac{2\pi}{n|f'(z_{0})|}}=\frac{e^{j\pi/2}}{(1-q)/q}\cdot e^{nH(q)}\cdot\frac{1}{2\pi j}\cdot\sqrt{\frac{2\pi(1-q)}{nq^{3}}}=\frac{e^{nH(q)}}{\sqrt{2\pi nq(1-q)}},
\end{equation}
where, as before, $H(q)\triangleq-q\ln q-(1-q)\ln(1-q)$ is the binary
entropy function. }

\emph{Another approach to assess $m_{k}$ is to present it as 
\begin{equation}
m_{k}=\sum_{\boldsymbol{x}\in\{0,1\}^{n}}\delta\left(k-\sum_{i=1}^{n}x_{i}\right),
\end{equation}
where $\delta(\cdot)$ is the Kroenecker delta function, which in
turn is represented as the inverse Fourier transform of the unit spectrum:
\begin{equation}
\delta\left(k-\sum_{i=1}^{n}x_{i}\right)=\frac{1}{2\pi}\int_{-\pi}^{\pi}\exp\left\{ j\omega\left(k-\sum_{i=1}^{n}x_{i}\right)\right\} \dd\omega.
\end{equation}
Upon substituting this identity into the above representation of $m_{k}$
and rearranging the order of the summation and the integration, the
outer integral can be assessed using the saddle-point method. The
reader is referred to \cite[Section 4.3, Example 2, pp.\ 108-109]{me09}
for further details.}
\end{example}
Our next example belongs to the realm of continuous alphabets.
\begin{example}[Surface area of a hyper-sphere]
\emph{\label{exa: hyperspheresurface}  This example is closely connected
to the concept of simple Gaussian-type classes, as discussed in Chapter
\ref{sec: CMoT}. While there exists an exact closed-form expression
for the surface area of an $n$-dimensional Euclidean hyper-sphere,
we explore this example to illustrate the asymptotic accuracy of the
saddle-point method. Our starting point is the representation of the
surface area of an $n$-dimensional Euclidean hyper-sphere with radius
$r$ as follows: 
\begin{equation}
S_{n}(r)=2r\int_{\reals^{n}}\delta\left(r^{2}-\sum_{i=1}^{n}x_{i}^{2}\right)\dd\boldsymbol{x},
\end{equation}
where $\delta(\cdot)$ designates the Dirac delta function. To see
why this true, observe that $S_{n}(r)$ integrates to 
\begin{align}
V_{n}(R) & =\int_{0}^{R}S_{n}(r)\dd r\nonumber \\
 & =\int_{0}^{R}2r\int_{\reals^{n}}\delta\left(r^{2}-\sum_{i=1}^{n}x_{i}^{2}\right)\dd\boldsymbol{x}\dd r\nonumber \\
 & =\int_{\reals^{n}}\left[\int_{0}^{R}2r\delta\left(r^{2}-\sum_{i=1}^{n}x_{i}^{2}\right)\dd r\right]\dd\boldsymbol{x}\nonumber \\
 & =\int_{\reals^{n}}\left[\int_{0}^{R^{2}}\delta\left(r^{2}-\sum_{i=1}^{n}x_{i}^{2}\right)\dd(r^{2})\right]\dd\boldsymbol{x}\nonumber \\
 & =\int_{\reals^{n}}U\left(R^{2}-\sum_{i=1}^{n}x_{i}^{2}\right)\dd\boldsymbol{x}\nonumber \\
 & =\Vol\left\{ \boldsymbol{x}\colon\sum_{i=1}^{n}x_{i}^{2}\le R^{2}\right\} ,
\end{align}
where $U(\cdot)$ is the unit step function. Thus, the integral of
$S_{n}(r)$ across the interval $[0,R]$ yields the volume of a hyper-sphere
of radius $R$, and so, $S_{n}(r)$ is the surface area of a hyper-sphere
of radius $r$. We next represent the Dirac delta function as the
inverse Fourier transform of the unit function, i.e., 
\begin{equation}
\delta(t)=\frac{1}{2\pi}\int_{-\infty}^{\infty}e^{j\omega t}\dd\omega,
\end{equation}
and so, referring to Chapter \ref{sec: CMoT}, the surface area of
sphere of radius $\sqrt{ns}$ is given as follows. Let $\vartheta>0$
be some positive real, to be chosen shortly. Then, 
\begin{align}
S_{n}(\sqrt{ns}) & =2\sqrt{ns}\int_{\reals^{n}}\dd\boldsymbol{x}\cdot\delta\left(ns-\sum_{i=1}^{n}x_{i}^{2}\right)\nonumber \\
 & \trre[=,a]2\sqrt{ns}e^{n\vartheta s}\int_{\reals^{n}}\dd\boldsymbol{x}\exp\left\{ -\vartheta\sum_{i=1}^{n}x_{i}^{2}\right\} \cdot\delta\left(ns-\sum_{i=1}^{n}x_{i}^{2}\right)\nonumber \\
 & =2\sqrt{ns}e^{n\vartheta s}\int_{\reals^{n}}\dd\boldsymbol{x}\exp\left\{ -\vartheta\sum_{i=1}^{n}x_{i}^{2}\right\} \int_{-\infty}^{+\infty}\frac{\dd\omega}{2\pi}\exp\left\{ j\omega\left(ns-\sum_{i=1}^{n}x_{i}^{2}\right)\right\} \nonumber \\
 & =\sqrt{ns}e^{n\vartheta s}\int_{-\infty}^{+\infty}\frac{\dd\omega}{\pi}e^{j\omega ns}\int_{\reals^{n}}\dd\boldsymbol{x}\exp\left\{ -(\vartheta+j\omega)\sum_{i=1}^{n}x_{i}^{2}\right\} \nonumber \\
 & =\sqrt{ns}e^{n\vartheta s}\int_{-\infty}^{+\infty}\frac{\dd\omega}{\pi}e^{j\omega ns}\left[\int_{\reals}\dd xe^{-(\vartheta+j\omega)x^{2}}\right]^{n}\nonumber \\
 & \trre[=,b]\sqrt{ns}e^{n\vartheta s}\int_{-\infty}^{+\infty}\frac{\dd\omega}{\pi}e^{j\omega ns}\left(\frac{\pi}{\vartheta+j\omega}\right)^{n/2}\nonumber \\
 & =\sqrt{ns}\pi^{n/2-1}\int_{-\infty}^{+\infty}\dd\omega\exp\left\{ n\left[(\vartheta+j\omega)s-\frac{1}{2}\ln(\vartheta+j\omega)\right]\right\} \nonumber \\
 & =\sqrt{ns}\cdot\pi^{n/2-1}\cdot\frac{1}{j}\cdot\int_{\vartheta-j\infty}^{\vartheta+j\infty}\dd z\exp\left\{ n\left[zs-\frac{1}{2}\ln z\right]\right\} ,
\end{align}
where in $(a)$ we have multiplied the expression by $e^{n\vartheta s}$
outside the integral and by $e^{-\vartheta\sum_{i}x^{2}}$ inside
the integral, but $e^{-\vartheta\sum_{i}x^{2}}=e^{-n\vartheta s}$
wherever the delta function of the integrand does not vanish, and
so, these two multiplications cancel each other. This step is crucial
for the subsequent steps. In $(b)$ we have applied complex Gaussian
integration. In this case, we have 
\begin{equation}
f(z)=zs-\frac{1}{2}\ln z,
\end{equation}
and the integration is along an arbitrary vertical straight line $\real\{z\}=\vartheta$.
We select this straight line to cross the saddle-point, that is, $\vartheta=z_{0}=\frac{1}{2s}$,
where 
\begin{equation}
f(z_{0})=\frac{1}{2}\ln(2es)
\end{equation}
and 
\begin{equation}
f''(z_{0})=2s^{2}.
\end{equation}
Once again, the axis is vertical, and so, 
\begin{equation}
S_{n}(\sqrt{ns})\sim\sqrt{ns}\cdot\pi^{n/2-1}\cdot\frac{1}{j}\cdot e^{j\pi/2}\cdot\exp\left\{ \frac{n}{2}\ln(2es)\right\} \cdot\sqrt{\frac{2\pi}{2s^{2}n}}=\frac{(2\pi es)^{n/2}}{\sqrt{\pi s}},
\end{equation}
which agrees with the result of Chapter \ref{sec: CMoT}. Note that
the representation of $\delta\left(ns-\sum_{i=1}^{n}x_{i}^{2}\right)$
as an inverse Fourier transform converted the integrand into an exponential
function of $(ns-\sum_{i=1}^{n}x_{i}^{2})$, which is a product form
and hence can be represented as a product of identical integrals,
which is actually one-dimensional integral raised to the power of
$n$. }
\end{example}
Note that in the above derivation, when we shifted the vertical integration
path from the imaginary axis, $\{z\colon\real\{z\}=0\}$, to the parallel
vertical line $\{z\colon\real\{z\}=\vartheta\}$, we have actually
replaced the inverse Fourier transform by the inverse Laplace transform.
By the same token, we can handle the volume of the $n$-dimensional
hyper-sphere as 
\begin{equation}
V_{n}(ns)=\int_{\reals^{n}}U\left(ns-\sum_{i=1}^{n}x_{i}^{2}\right)\dd\boldsymbol{x}
\end{equation}
with the representation of the unit step function as the inverse Laplace
transform of $1/z$, which amounts to substituting 
\begin{equation}
U\left(ns-\sum_{i=1}^{n}x_{i}^{2}\right)=\frac{1}{2\pi j}\int_{\real\{z\}=\vartheta}\frac{\dd z}{z}\exp\left\{ z\left(ns-\sum_{i=1}^{n}x_{i}^{2}\right)\right\} ,
\end{equation}
and interchanging the order of the integration. The saddle-point approximation
of this expression is very similar to the above. We next demonstrate
how this is done in the context of assessing a probability of a large-deviations
event.
\begin{example}[Large deviations]
\emph{\label{exa: largedeviations}This example delves into a topic
that was extensively studied by Bahadur and Rao \cite{BR60}. Here,
we offer a partial exposition to illustrate the application of the
saddle-point method. Consider a set of IID RVs $X_{1},X_{2},\ldots,X_{n}$,
all of which are independent copies of a real RV $X$ with a PDF $p(x)$.
Additionally, let $A$ be a constant greater than the expected value
of $X$. We aim to evaluate the probability of a large-deviations
event, namely, $\{\sum_{i=1}^{n}X_{i}\ge nA\}$, utilizing the saddle-point
method. Introducing $\theta$ as an arbitrary positive real number,
we have: 
\begin{align}
\Pr\left\{ \sum_{i=1}^{n}X_{i}\ge nA\right\}  & =\int_{\reals^{n}}U\left(\sum_{i=1}^{n}x_{i}-nA\right)\prod_{i=1}^{n}p(x_{i})\dd\boldsymbol{x}\nonumber \\
 & =\int_{\reals^{n}}\frac{1}{2\pi j}\int_{\real\{z\}=\theta}\frac{\dd z}{z}\exp\left\{ z\left(\sum_{i=1}^{n}x_{i}-nA\right)\right\} \cdot\prod_{i=1}^{n}p(x_{i})\dd\boldsymbol{x}\nonumber \\
 & =\frac{1}{2\pi j}\int_{\real\{z\}=\theta}\frac{e^{-znA}}{z}\dd z\int_{\reals^{n}}\prod_{i=1}^{n}\left[p(x_{i})e^{zx_{i}}\right]\dd\boldsymbol{x}\nonumber \\
 & =\frac{1}{2\pi j}\int_{\real\{z\}=\theta}\frac{e^{-znA}}{z}\dd z\left[\int_{\reals}p(x)e^{zx}\dd x\right]^{n}\nonumber \\
 & =\frac{1}{2\pi j}\int_{\real\{z\}=\theta}\frac{\dd z}{z}\exp\left\{ n\left[\ln\left(\int_{\reals}p(x)e^{zx}\dd x\right)-zA\right]\right\} ,
\end{align}
and we can apply}\footnote{\emph{There is a non-trivial issue concerning the non-analyticity
of the logarithmic function, whose argument, $\int_{\reals}p(x)e^{zx}\dd x$,
may surround the origin infinitely many times while $z$ exhausts
the vertical line $\real\{z\}=\theta$, because the origin is a singular
point of the logarithmic function. This requires to pass among different
branches of the logarithmic function along the journey from $\theta-j\infty$
to $\theta+j\infty$. This issue is discussed in detail in \cite{me23b}.}}\emph{ the saddle-point method with $g(z)=1/z$ and 
\begin{equation}
f(z)=\ln\left(\int_{\reals}p(x)e^{zx}\dd x\right)-zA.
\end{equation}
Consider the function $f$ confined to the reals, namely, $f(s)$,
where $s\in\reals$. Since $f(s)$ is a convex function, it can be
shown that its derivative vanishes uniquely at some finite real $s=s_{\star}>0$,
provided that $A<x_{\max}\triangleq\sup_{\{x\colon p(x)>0\}}x$. Then,
$z=s_{\star}$ is a saddle-point of $f$. Let us first assume that
$p$ is such that $z=s_{\star}$ is the only saddle-point of $f$
in the entire complex plane (shortly, we also address situations where
this is not the case). In this case, a simple application of the saddle-point
method suggests to select $\theta=s_{\star}$, where the axis is vertical,
and so, 
\begin{align}
\mbox{\ensuremath{\Pr}}\left\{ \sum_{i=1}^{n}X_{i}\ge nA\right\}  & \sim\frac{1}{s_{\star}}\cdot\frac{e^{j\pi/2}}{2\pi j}\cdot\exp\left\{ n\left[\ln\left(\int_{\reals}p(x)e^{s_{\star}x}\dd x\right)-s_{\star}A\right]\right\} \cdot\sqrt{\frac{2\pi}{nV(s_{\star})}}\nonumber \\
 & =\frac{\exp\left\{ n\left[\ln\left(\int_{\reals}p(x)e^{s_{\star}x}\dd x\right)-s_{\star}A\right]\right\} }{s_{\star}\sqrt{2\pi nV(s_{\star})}},
\end{align}
where $V(s)=f''(s)=\Var_{s}\{X\}$, with the latter being defined
as the variance of $X$ WRT the PDF that is proportional to $p(x)e^{sx}$,
i.e., the titled PDF. It is worth highlighting the intriguing similarity
between the exponential term 
\begin{equation}
\exp\left\{ n\left[\ln\left(\int_{\reals}p(x)e^{s_{\star}x}\dd x\right)-s_{\star}A\right]\right\} ,
\end{equation}
and the Chernoff bound, as $s_{\star}$ minimizes $f(s)$ over the
real numbers. At the same time, $z=s_{\star}$ is determined as the
saddle-point that dominates the integration along the vertical line
defined by $\real\{z\}=s_{\star}$. This observation aligns with the
modulus theorem: Given that $z=s_{\star}$ minimizes $|e^{nf(z)}|=e^{nf(s)}$
horizontally along the real line, it maximizes $|e^{nf(z)}|$ along
the vertical direction of the integration path. While the exponential
behavior of the saddle-point approximation mirrors that of the Chernoff
bound, known for its exponential tightness, it is noteworthy that
the former provides a more refined characterization, including the
correct pre-exponential factor, which is given by $1/[s_{\star}\sqrt{2\pi nV(s_{\star})}]$. }

\emph{As previously mentioned, in the earlier derivation, we made
the assumption that $z=s_{\star}$ represents the sole saddle-point
of the function $f$ across the entire complex plane. However, this
assumption does not hold universally. Let us consider a scenario in
which $X$ is a lattice RV, implying that $X$ can only assume values
that are integer multiples of a constant $\Delta>0$, that is, 
\begin{equation}
p(x)=\sum_{i=-\infty}^{\infty}\alpha_{i}\delta(x-i\Delta),
\end{equation}
where $\delta(\cdot)$ is the Dirac delta function and $\{\alpha_{i}\}$
are non-negative reals which sum up to unity. Consider the vertical
line of integration, $z=s_{\star}+j\omega$, $-\infty<\omega<\infty$.
In this scenario, it becomes evident that if $s_{\star}$ is a saddle-point
of $e^{nf(z)}$, then so are the points $s_{\star}+j\Omega k$, where
$k$ ranges over all integers ($k=0,\pm1,\pm2,...$), and $\Omega$
is defined as $\Omega=2\pi/\Delta$. This is due to the periodic nature
of $|e^{nf(z)}|$, which is equivalent to $e^{n\real\{f(z)\}}$, along
the vertical direction with a period of $\Omega$. Indeed, 
\begin{align}
\real\{f(s_{\star}+jk\Omega)\} & =\real\left\{ \ln\left[\int_{\reals}p(x)e^{(s_{\star}+jk\Omega)x}\dd x\right]-(s_{\star}+jk\Omega)A\right\} \nonumber \\
 & =\real\left\{ \ln\left[\sum_{i=-\infty}^{\infty}\alpha_{i}e^{(s_{\star}+jk\Omega)i\Delta}\right]\right\} -s_{\star}A\nonumber \\
 & =\real\left\{ \ln\left[\sum_{i=-\infty}^{\infty}\alpha_{i}e^{s_{\star}i\Delta}e^{jki\Omega\Delta}\right]\right\} -s_{\star}A\nonumber \\
 & =\real\left\{ \ln\left[\sum_{i=-\infty}^{\infty}\alpha_{i}e^{s_{\star}i\Delta}e^{j2\pi ik}\right]\right\} -s_{\star}A\nonumber \\
 & =\real\left\{ \ln\left[\sum_{i=-\infty}^{\infty}\alpha_{i}e^{s_{\star}i\Delta}\right]\right\} -s_{\star}A\nonumber \\
 & =\real\{f(s_{\star})\}.
\end{align}
In such a situation, during the integration along the line $\real\{z\}=s_{\star}$,
the contributions from all saddle-points, $s_{\star}+jk\Omega$ for
$k=0,\pm1,\pm2,\ldots$, carry equal significance, collectively dominating
the exponential rate of the integral. This has a notable impact on
the pre-exponential factor, which now needs to be adjusted to reflect
this collective contribution. Therefore, the modified pre-exponential
factor is given by: 
\begin{align}
\frac{1}{\sqrt{2\pi nV(s_{\star})}}\cdot\sum_{k=-\infty}^{\infty}\frac{e^{-jk\Omega An}}{s_{\star}+jk\Omega} & =\sqrt{\frac{2\pi}{nV(s_{\star})}}\cdot\frac{1}{2\pi}\int_{-\infty}^{\infty}e^{-j\omega nA}\cdot\frac{1}{s_{\star}+j\omega}\cdot\left[\sum_{k=-\infty}^{\infty}\delta(\omega-k\Omega)\right]\dd\omega\nonumber \\
 & \trre[=,a]\sqrt{\frac{2\pi}{nV(s_{\star})}}\cdot\left\{ \left[e^{-s_{\star}t}U(t)\right]\star\left[\frac{1}{\Omega}\sum_{k=-\infty}^{\infty}\delta\left(t-\frac{2\pi k}{\Omega}\right)\right]\right\} \Bigg|_{t=-nA}\nonumber \\
 & =\frac{1}{\Omega}\sqrt{\frac{2\pi}{nV(s_{\star})}}\sum_{k=-\infty}^{\infty}e^{-s_{\star}(-nA-2\pi k/\Omega)}U\left(-nA-\frac{2\pi k}{\Omega}\right)\nonumber \\
 & =\frac{1}{\Omega}\sqrt{\frac{2\pi}{nV(s_{\star})}}\cdot\exp\left\{ -s_{\star}\left[(-nA)\mod\left(\frac{2\pi}{\Omega}\right)\right]\right\} \cdot\sum_{k=0}^{\infty}e^{-s_{\star}\cdot2\pi k/\Omega}\nonumber \\
 & =\sqrt{\frac{2\pi}{nV(s_{\star})}}\cdot\frac{\exp\left\{ -s_{\star}\left[(-nA)\mod\left(\frac{2\pi}{\Omega}\right)\right]\right\} }{\Omega(1-e^{-2\pi s_{\star}/\Omega})}\nonumber \\
 & =\sqrt{\frac{2\pi}{nV(s_{\star})}}\cdot\frac{\Delta\exp\{-s_{\star}[(-nA)\mod\Delta]\}}{2\pi(1-e^{-s_{\star}\Delta})}\nonumber \\
 & =\sqrt{\frac{1}{2\pi nV(s_{\star})}}\cdot\frac{\Delta\exp\{-s_{\star}[(-nA)\mod\Delta]\}}{1-e^{-s_{\star}\Delta}},
\end{align}
where in $(a)$ we have used the fact that inverse Fourier transform
of the product of two frequency-domain functions is equal to the convolution
between the individual inverse Fourier transforms. The oscillatory
factor in the numerator, $\exp\{-s_{\star}[(-nA)\mod\Delta]\}$, illustrates
the granularity inherent in the probability quanta related to the
lattice-like nature of the involved RVs (also discussed in \cite{me23b}).
It is worth noting that the non-lattice scenario can be considered
as a specific case of the lattice scenario, where $\Delta\to0$. }
\end{example}
Our final example pertains to the enumeration of codewords within
a hyper-cubical lattice subject to an $L_{1}$ power constraint. The
motivation here is to evaluate the coding rate of a hyper-cubical
lattice code. In a nutshell, when the hyper-cubes are exceptionally
small, this count approximates the ratio between the volume of the
$L_{1}$ hyper-sphere defining the power constraint and the volume
of the hyper-cube. However, the saddle-point method provides a more
precise estimation.
\begin{example}[Number of codewords of a power-limited lattice code]
\emph{\label{exa: latticecodes}Let us examine a hyper-cubical lattice
code, where the codewords take the form of $(k_{1}\Delta,k_{2}\Delta,\ldots,k_{n}\Delta)$,
with $\Delta>0$ given, $\{k_{i}\}$ being integers, and adhering
to the $L_{1}$ power constraint $\Delta\sum_{i=1}^{n}|k_{i}|\le nQ$.
What is the number $M$ of lattice codewords that can be found? We
can establish the following sequence of equalities: 
\begin{align}
M & =\sum_{k_{1}=-\infty}^{\infty}\ldots\sum_{k_{n}=-\infty}^{\infty}U\left[nQ-\Delta\sum_{i=1}^{n}|k_{i}|\right]\nonumber \\
 & =\sum_{k_{1}=-\infty}^{\infty}\ldots\sum_{k_{n}=-\infty}^{\infty}\frac{1}{2\pi j}\int_{\real\{z\}=\theta}\frac{\dd z}{z}\exp\left\{ z\left[nQ-\Delta\sum_{i=1}^{n}|k_{i}|\right]\right\} \nonumber \\
 & =\frac{1}{2\pi j}\int_{\real\{z\}=\theta}\frac{\dd ze^{nQz}}{z}\sum_{k_{1}=-\infty}^{\infty}\ldots\sum_{k_{n}=-\infty}^{\infty}\exp\left\{ -\Delta z\sum_{i=1}^{n}|k_{i}|\right\} \nonumber \\
 & =\frac{1}{2\pi j}\int_{\real\{z\}=\theta}\frac{\dd ze^{nQz}}{z}\left[\sum_{k=-\infty}^{\infty}\exp\{-\Delta z|k|\}\right]^{n}\nonumber \\
 & =\frac{1}{2\pi j}\int_{\real\{z\}=\theta}\frac{\dd ze^{nQz}}{z}\left[\frac{e^{\Delta z}+1}{e^{\Delta z}-1}\right]^{n}\nonumber \\
 & =\frac{1}{2\pi j}\int_{\real\{z\}=\theta}\frac{\dd z}{z}\exp\left\{ n\left[Qz-\ln\tanh\left(\frac{\Delta z}{2}\right)\right]\right\} .
\end{align}
Thus, the saddle-point method can be applied with $g(z)=1/z$ and
\begin{equation}
f(z)=Qz-\ln\tanh\left(\frac{\Delta z}{2}\right)\equiv Qz-\ln\sinh\left(\frac{\Delta z}{2}\right)+\ln\cosh\left(\frac{\Delta z}{2}\right).
\end{equation}
The derivative of $f$ vanishes at 
\begin{equation}
z=s_{\star}=\frac{1}{\Delta}\ln\left(\frac{\Delta}{Q}+\sqrt{\frac{\Delta^{2}}{Q^{2}}+1}\right),
\end{equation}
but similarly as in Example \ref{exa: largedeviations}, here too,
$\real\{f(z)\}$ is periodic in the vertical direction with period
$\Omega=2\pi/\Delta$, and so, there are infinitely many saddle-points
$\{s_{\star}+jk\Omega,~k=0,\pm1,\pm2,\ldots\}$, and $M$ is exponentially
$e^{nf(s_{\star})}$ with the same pre-exponential factor as in the
lattice case of Example \ref{exa: largedeviations}, except that $(-nA)\mod\Delta$
is replaced by $(nQ)\mod\Delta$ and $V(s_{\star})$ is replaced by
$|f''(s_{\star})|$. Therefore, the coding rate (in nats per channel
use) is of the form, 
\begin{equation}
R=\frac{\ln M}{n}=f(s_{\star})-\frac{\ln n}{2n}+o\left(\frac{\ln n}{n}\right),
\end{equation}
with 
\begin{equation}
f(s_{\star})=\frac{Q}{\Delta}\ln\left(\frac{\Delta}{Q}+\sqrt{\frac{\Delta^{2}}{Q^{2}}+1}\right)+\ln\left(\frac{\Delta}{Q}+\sqrt{\frac{\Delta^{2}}{Q^{2}}+1}+1\right)-\ln\left(\frac{\Delta}{Q}+\sqrt{\frac{\Delta^{2}}{Q^{2}}+1}-1\right).
\end{equation}
It is easy to verify that when $\Delta/Q\ll1$, the exponential factor,
$e^{nf(s_{\star})}$ is approximately $\frac{(2eQ)^{n}}{\Delta^{n}}$,
which is exponentially the ratio between volume of the $L_{1}$-hyper-sphere
of `radius' $nQ$ and the volume of the hyper-cube, $\Delta^{n}$.
We skip the details of calculating $f''(s_{\star})$ for the pre-exponent. }

\emph{In conclusion, we note that a similar calculation for the more
traditional $L_{2}$ power constraint involves dealing with the infinite
summation $\sum_{k}e^{-z\Delta^{2}k^{2}}$ (instead of $\sum_{k}e^{-\Delta z|k|}$
as in our previous analysis). Although this expression lacks an apparent
closed-form representation, the same fundamental behavior persists:
The rate remains primarily determined by the log-volume ratio, subtracting
$\frac{\ln n}{2n}$, with some negligible terms. }
\end{example}
In Example \ref{exa: hyperspheresurface}, we witnessed the formidable
capability of the saddle-point method in assessing type class measures
without the need for the $\epsilon$-inflation technique employed
in Chapter \ref{sec: CMoT}. When confronted with the task of integrating
over $\boldsymbol{x}$ a function of the form $f(\sum_{i=1}^{n}x_{i}^{2})$,
we can conveniently rewrite this as an equivalent integral over $f(r)S_{n}(r)$
WRT $r$. This transformation effectively replaces the $n$-dimensional
integration with a one-dimensional integration, which, in certain
cases, can be well-approximated using either the Laplace method or
the saddle-point method.

In Chapter \ref{sec: CMoT}, we explored more intricate type classes
defined as intersections between hyper-sphere surfaces and hyper-planes,
such as $\sum_{i=1}^{n}x_{i}=nc$. Evaluating the Lebesgue measure
of such objects involves integrating a product of delta functions,
specifically $\delta(ns-\sum_{i}x_{i}^{2})\cdot\delta(nc-\sum_{i}x_{i})$.
To compute this measure, we represent each delta function as an inverse
Laplace transform separately, each with its own complex integration
variable, \emph{i.e.}, 
\begin{align}
 & \int_{\reals^{n}}\delta\left(ns-\sum_{i=1}^{n}x_{i}^{2}\right)\cdot\delta\left(nc-\sum_{i=1}^{n}x_{i}\right)\dd\boldsymbol{x}\nonumber \\
 & =\int_{\reals^{n}}\left[\frac{1}{(2\pi j)^{2}}\int_{\theta-j\infty}^{\theta+j\infty}\int_{\nu-j\infty}^{\nu+j\infty}\dd z_{1}\dd z_{2}\exp\left\{ z_{1}\left(ns-\sum_{i=1}^{n}x_{i}^{2}\right)+z_{2}\left(nc-\sum_{i=1}^{n}x_{i}\right)\right\} \right]\dd\boldsymbol{x}\nonumber \\
 & =\frac{1}{(2\pi j)^{2}}\int_{\theta-j\infty}^{\theta+j\infty}\int_{\nu-j\infty}^{\nu+j\infty}\dd z_{1}\dd z_{2}\int_{\reals^{n}}\exp\left\{ z_{1}\left(ns-\sum_{i=1}^{n}x_{i}^{2}\right)+z_{2}\left(nc-\sum_{i=1}^{n}x_{i}\right)\right\} \dd\boldsymbol{x}\nonumber \\
 & =\frac{1}{(2\pi j)^{2}}\int_{\theta-j\infty}^{\theta+j\infty}\int_{\nu-j\infty}^{\nu+j\infty}\dd z_{1}\dd z_{2}e^{n(zs+z'c)}\left[\int_{\reals}\exp\left\{ -(z_{1}x^{2}+z_{2}x)\right\} \dd x\right]^{n}\nonumber \\
 & =\frac{1}{(2\pi j)^{2}}\int_{\theta-j\infty}^{\theta+j\infty}\int_{\nu-j\infty}^{\nu+j\infty}\dd z_{1}\dd z_{2}e^{n(z_{1}s+z_{2}c)}\left[\exp\left\{ \frac{z_{2}^{2}}{4z_{1}^{2}}\right\} \sqrt{\frac{\pi}{z_{1}}}\right]^{n}\nonumber \\
 & =\frac{\pi^{n/2}}{(2\pi j)^{2}}\int_{\theta-j\infty}^{\theta+j\infty}\int_{\nu-j\infty}^{\nu+j\infty}\dd z_{1}\dd z_{2}\exp\left\{ n\left[z_{1}s+z_{2}c+\frac{z_{2}^{2}}{4z_{1}^{2}}-\frac{\ln z_{1}}{2}\right]\right\} ,
\end{align}
where $\theta$ and $\nu$ are arbitrary positive reals. In cases
like this, an extension of the saddle-point method to the multivariate
setting is required, as outlined in \cite{Neuschel14}. Building on
these insights, if we encounter the need to integrate a function of
the form $f(\sum_{i=1}^{n}x_{i}^{2},\sum_{i=1}^{n}x_{i})$, we can
transform it into a two-dimensional integration of $f$ multiplied
by the Lebesgue measure of the corresponding type class, following
a similar procedure to what was just described. These considerations
are applicable to types defined by any fixed number of constraints,
including those related to conditional types (e.g., constraints involving
$\sum_{i=1}^{n}x_{i}y_{i}$) and constraints associated with Gauss--Markov
types (such as constraints specifying values of $\sum_{i=1}^{n}x_{i}x_{i-\ell}$
for $\ell=1,2,\ldots,k$). Notably, the saddle-point method allows
for the combination of constraints, even those involving $\sum_{i=1}^{n}x_{i}y_{i}$
and $\sum_{i=1}^{n}x_{i}x_{i-\ell}$. This capability resolved an
outstanding challenge posed in \cite{me93} and was successfully addressed
in \cite{HSMM19}, particularly in the context of the Gaussian intersymbol
interference channel, thanks to the versatility of the saddle-point
method.

Extending this generality further, instead of linear and quadratic
constraints, situations may arise with constraints involving combinations
of empirical means of arbitrary functions, denoted as $\sum_{i=1}^{n}\phi_{j}(x_{i})$
for $j=1,2,\ldots,k$. The associated saddle-point integration in
these cases will involve exponential functions of linear combinations
of these statistics. It is important to note that the coefficients
of these linear combinations can be complex in general. In essence,
this entails working with exponential families characterized by complex
parameters.

\subsection{Applications}

The saddle-point method has found extensive applications in various
disciplines, including probability theory, mathematical statistics,
and physics, with notable usage in statistical physics. While less
common in the information theory community, there have been exceptions
in the last two decades.

In Example \ref{exa: univcoding}, we demonstrated how the Laplace
integration method can be effectively employed to approximate Bayesian
mixtures of memoryless sources, particularly relevant to universal
source coding \cite{Davisson73}, \cite{KT81}. Schwartz also utilized
this approximation to derive a model order estimator from a Bayesian
perspective within a sequence of nested parametric families \cite{Schwartz78}.

Several researchers have applied the Laplace and saddle-point methods
to obtain more refined bounds on the error probability of channel
coding and decoding, including characterizations of the pre-exponential
factor, in addition to the exponential one. Notable contributors to
this area include Atlu\u{g} and Wagner \cite{AW14}, Font-Segura,
Vasquez-Vilar, Martinez, and Guill\'{e}n i F\`{a}bregas \cite{FVMFL18},
Honda \cite{Honda18}, Martinez and Guill\'{e}n i F\`{a}bregas \cite{MF11a},
\cite{MF11b}, and Scarlett, Martinez and Guill\'{e}n i F\`{a}bregas
\cite{SMF14}. These methods have also been applied to derive sharper
bounds on the probability of error in binary hypothesis testing \cite{VFKL18}.

Furthermore, the saddle-point and Laplace methods have been applied
to finite block-length analysis and higher-order asymptotics of achievable
coding rates. Researchers like Anade, Gorce, Mary, and Perlaza \cite{AGMP20},
Erseghe \cite{Erseghe16}, Moulin \cite{Moulin17}, Polyanskiy \cite{Polyanskiy10},
Tan and Tomamichel \cite{TT14}, and Yavas, Kostina, and Wigger \cite{YKE22}
have contributed to this area.

In the work by Huleihel, Salamatian, Merhav, and M\'{e}dard \cite{HSMM19},
the saddle-point approximation was applied to assess the log-volume
of a conditional Gaussian type class related to the Gaussian intersymbol
interference channel, with implications for mismatched universal decoding.
This addressed an open problem from \cite{me93}.

In \cite{me23b}, the saddle-point approximation was used to refine
the evaluation of the probability that a randomly selected codeword
would fall within a sphere of a specified radius from a given source
vector, based on a given distortion measure. The precise pre-exponential
factor allowed for the characterization of redundancy rates. In \cite{me11a},
the method was applied to lossless data compression in the context
of the set partitioning problem.

Lastly, in \cite[Section 4.7]{MM09}, M\'{e}zard and Montanari establish
a valuable link between the saddle-point method, Sanov's theorem,
and the method of types, providing further insights into the connections
between these powerful techniques.

\newpage{}

\section{The Type Class Enumeration Method \label{sec:TCE}}

\subsection{Introduction}

In the previous chapters, we considered probabilistic properties of
a single random vector, or a finite collection of vectors. We developed
a generalized version of \emph{the method of types} \cite{Csiszar98,csiszar2011information},
which can be succinctly summarized by two main properties: For a vector
of length $n$: (1) The effective number of types is sub-exponential
with $n$; (2) The size of a type class is $|{\cal T}_{n}(Q)|\doteq e^{nH(Q)}$,
where $H(Q)$ is the entropy (but can also be replaced by a differential
entropy). The main consequence of these two properties is that the
probability of observing a type $Q$ from a memoryless source with
distribution $P$ is $\Pr[\boldsymbol{X}\in{\cal T}_{n}(Q)]\doteq e^{-nD(Q||P)}$.
In this chapter, we ascend one hierarchical level, and consider analysis
of \emph{coding} problems, and specifically the problem of evaluating
the \emph{error exponent} in coded systems. Such problems involve
an \emph{exponential} number of random vectors, and so, additional
analytical tools are required. 

Starting from Shannon \cite{shannon1948mathematical}, the common
method of proving achievability results in information theory is via
\emph{random-coding} analysis, in which the error probability is averaged
over an ensemble of randomly selected codebooks. While the random-coding
argument was originally invoked to find the \emph{capacity} $C$ of
noisy channels \cite{CT06}, it was broadly adapted to other settings.
In this chapter, we will focus on error exponent analysis \cite{elias1955coding},
\cite[Chapters 7-9]{fano1961transmission}, \cite[Chapter 5]{gallager1968information},
\cite{csiszar1981graph,csiszar1977new}, which is a refined performance
measure of coded systems. The error exponent refers to the largest
exponential decay rate of the error probability of a sequence of codes
at increasing blocklength $n$, for a given rate $R$ below the capacity
$C$. Since the error probability of the optimal codebook can be \emph{upper
bounded} by the average of the error probability over an ensemble
of random codebooks, the error exponent can be \emph{lower bounded}
by the \emph{random-coding error exponent} --- the exponential decay
rate of the ensemble-average error probability. 

Moreover, the random-coding error exponent is interesting as \emph{a
paradigm on its own right}, since it is by now well-established that
random codes, or \emph{random-like} codes (e.g., turbo codes \cite{berrou1993near}
and low-density parity-check (LDPC) codes \cite{gallager1962low};
see \cite{richardson2008modern}) are highly efficient \cite{chung2001design}.
In fact, in some applications, the codebook is routinely redrawn at
random, for example, in order to preserve the security of the transmitted
information. So, when a communication system uses such a random code,
it is the random-coding error probability (or exponent) that is a
relevant measure to the long-term performance of the system, rather
than just serving as a lower bound to the best achievable exponent. 

The analysis of the random-coding error exponent has lead to the proposal
and usage of an ample of analytical bounding methods. We next outline
several of them, in order to contrast them later on with our type
of techniques. 

First, the error probability of the optimal ML decoder can be upper
bounded by the error probability of simpler, sub-optimal decoder.
For example, the error probability of the typicality decoder \cite[Chapter 7]{CT06}
decays to zero at all rates below capacity, just as the ML decoder.
So, analyzing the typicality decoder can be used to prove lower bounds
on the capacity. However, this decoder has poor performance in terms
of the error exponent. 

Second, as popularized by Gallager \cite{gallager1965simple}, Jelinek
\cite{jelinek1968probabilistic} and Forney \cite{forney1968exponential},
the use of convexity properties and \emph{Jensen-style} inequalities.
These include, for example, the inequality $\E[Z^{\rho}]\leq(\E[Z])^{\rho}$
for a non-negative RV and $0\leq\rho\leq1$, or the power distribution
inequality 
\begin{equation}
\left(\sum_{j}a_{j}\right)^{\rho}\leq\sum_{j}a_{j}^{\rho}\label{eq: power distribution inequality}
\end{equation}
(see \cite[Appendix 3A]{viterbi2009principles} for a comprehensive
list of such inequalities). 

Third, the use of \emph{Chernoff-style }bounds, in which an indicator
of an error event, based on likelihoods, is replaced by their ratio.
For example, a pairwise error event of an ML decoder over the channel
$W$ from $\boldsymbol{x}$ to $\boldsymbol{y}$ is upper bounded
as 
\begin{equation}
\I\left\{ W(\boldsymbol{y}|\boldsymbol{x}_{j})\geq W(\boldsymbol{y}|\boldsymbol{x}_{i})\right\} \leq\left[\frac{W(\boldsymbol{y}|\boldsymbol{x}_{j})}{W(\boldsymbol{y}|\boldsymbol{x}_{i})}\right]^{\lambda}
\end{equation}
for any $\lambda\geq0$. 

Fourth, \emph{refined union bounds}, in which the simple union bound
over events $\{{\cal A}_{j}\}$ is replaced by a quantity lower than
the sum of probabilities of each event. These bounds include, \emph{a
truncated union bound} 
\begin{equation}
\Pr\left[\bigcup_{j}{\cal A}_{j}\right]\leq\min\left[1,\sum_{j}\Pr[{\cal A}_{j}]\right],\label{eq: clipped union bound}
\end{equation}
\emph{a union bound with a power parameter} $0\leq\rho\leq1$ (also
known as \emph{Gallager's union bound} \cite[p. 136]{gallager1968information})
\begin{equation}
\Pr\left[\bigcup_{j}{\cal A}_{j}\right]\leq\left(\sum_{j}\Pr[{\cal A}_{j}]\right)^{\rho},\label{eq: Gallager union bound}
\end{equation}
or \emph{a union bound with intersection} of an event ${\cal G}$
\begin{equation}
\Pr\left[\bigcup_{j}{\cal A}_{j}\right]\leq\sum_{j}\Pr\left[{\cal A}_{j}\cap{\cal G}\right]+\Pr\left[{\cal G}^{c}\right],
\end{equation}
where ${\cal G}^{c}$ is the complement of ${\cal G}$. As an illustrative
example, such a union bound can be used to bound the probability of
an error event in channel coding, since this event is a union of the
events that one of the alternative codewords is decoded. The above
bounding methods then lead to tractable, computable, bounds on the
random-coding exponent, and other quantities of interest. 

Nonetheless, in typical channel coding problems, codebooks with a
positive coding rate $R$ have an exponential number of codewords
$e^{nR}$, and so, the analysis of the error probability involves
evaluation of the probability of a union of an \emph{exponential}
number of events. In some cases, it can be shown that a bound obtained
via these methods is actually tight. For example, in the simple case
of a point-to-point discrete memoryless channel (DMC), Gallager has
shown that its random-coding error exponent, obtained using \eqref{eq: Gallager union bound},
is tight, by \emph{lower }bounding the error probability \cite{gallager1973random}.
However, there is no \emph{general} claim that these bounding methods
lead to the \emph{exact} random-coding error exponent\emph{, }that
is, that the final result is the true exponential decay rate of the
expected error probability over the random ensemble of codebooks.
In fact, in various scenarios they are \emph{strictly loose}. 

In this chapter, we introduce the \emph{type class enumeration method
}(TCEM) of random codes, which is an original viable alternative or
complement to the aforementioned techniques. It is a principled method,
whose main virtue is that it preserves exponential tightness along
all steps of the derivation of the exponent. It is therefore guaranteed
to obtain the \emph{exact} exponent. The TCEM achieves that by refraining
from using the various bounding techniques mentioned above, and thus
avoiding the need to optimize over various parameters (which cannot
always be done in a closed-form), and leads to explicit expressions.
More often than not, it does so in a ``single-pass'', \emph{i.e.},
without separately lower and upper bounding the random-coding error
exponent. Consequently, ensemble-tight random-coding exponents can
be obtained in a multitude of coding problems. Moreover, as mentioned,
and as we shall survey, in coding problem that go beyond basic ones,
the error exponents obtained by the TCEM are oftentimes strictly larger
than those achieved using the above bounding techniques. 

For this chapter, we recall the usual notation convention for an equality
or an inequality in the exponential scale: For two positive sequences
$\{a_{n}\}$ and $\{b_{n}\}$, the notation $a_{n}\doteq b_{n}$ means
that $\lim_{n\to\infty}\frac{1}{n}\log\frac{a_{n}}{b_{n}}=0$, and
$a_{n}\dot{\leq}b_{n}$ means that $\lim_{n\to\infty}\frac{1}{n}\log\frac{a_{n}}{b_{n}}\leq0$,
and similarly for $a_{n}\dot{\geq}b_{n}$. Accordingly, $a_{n}\doteq1$
means that $a_{n}$ is sub-exponential, and $a_{n}\doteq e^{-n\infty}$
means that $a_{n}$ decays at a super-exponential rate (e.g., double-exponentially). 

The main idea of the TCEM is that each codeword can be categorized
according to a \emph{joint type} (empirical distribution) with an
additional length-$n$ vector, and that and the union bound is exponentially
tight for a union of the polynomial number of events. Indeed, for
$k_{n}$ events $\{{\cal E}_{i}\}_{i=1}^{k_{n}}$
\begin{equation}
\max_{1\leq m\leq k_{n}}\Pr\left[{\cal E}_{m}\right]\leq\Pr\left[\bigcup_{m=1}^{k_{n}}{\cal E}_{m}\right]\leq k_{n}\cdot\max_{1\leq m\leq k_{n}}\Pr\left[{\cal E}_{m}\right].\label{eq: tightness of the union bound for poly events}
\end{equation}
and so if $k_{n}\doteq1$ then
\begin{equation}
\Pr\left[\bigcup_{m=1}^{k_{n}}{\cal E}_{m}\right]\doteq\max_{1\leq m\leq k_{n}}\Pr\left[{\cal E}_{m}\right].
\end{equation}
Therefore, the analysis of a coding problem can be based on a \emph{type
class enumerator} (TCE), which counts the number of randomly selected
codewords in a properly defined type class. For illustration, one
may recall that for binary symmetric channels (BSCs), the \emph{distance
spectrum of a codebook}, namely, the number of pairs of codewords
at each of the $n+1$ possible Hamming distances, plays an important
role in determining its error probability (e.g., \cite[Chapter 2]{macwilliams1977theory}).
Indeed, a specific form of TCEs for BSCs was used by \cite{barg2002random}
to analyze various random-coding exponents. The TCEM can be thought
of as a considerable generalization of this fundamental idea. 

In the TCEM, the codebook is drawn at random, and consequently, the
TCEs are RVs. The random-coding error exponent thus depends on their
probabilistic and statistical properties, such as moments or tail
bounds. Each TCE is typically a binomial RV $N\sim\text{Binomial}(e^{nA},e^{-nB})$
(or a close variant of such variables), defined by $e^{nA}$ independent
trials for belonging to a type class, each with success probability
$e^{-nB}$. It exhibits an interesting \emph{phase transition} at
$A=B$: If the the number of trials dominates the success rate, $A>B$,
then the TCE is tightly concentrated around its exponentially large
expected value $e^{n(A-B)}$ (double-exponential concentration). We
refer to these as \emph{typically populated} types. Otherwise, if
$B>A$, then the TCE is typically zero, and the probability that it
is strictly positive is exponentially less than $e^{-n(B-A)}$. We
refer to these as \emph{typically empty} types. The transition between
these regimes is sharp, and is rooted in a statistical-mechanical
perspective on random coding. This perspective is based on an analogy
to Derrida's random energy model (REM) \cite{derrida1980random_B,derrida1980random_A,derrida1981random}
\cite[Chapters 5 and 6]{MM09}, which is a spin glass model with high
degree of disorder, and which is well known in the literature of statistical
physics of magnetic materials. The phase transition in the REM model
is analogous to the one exhibited for the TCEs, and we refer the reader
to \cite{merhav2008relations} and \cite[Chapter 6]{me09} for a thorough
exposition. 

The TCEM hence involves the following steps: (1) Expressing the error
probability (or other quantity of interest) using properly defined
TCEs. (2) Evaluating the necessary probabilistic and statistical properties
of the TCEs (moments or tail probabilities). (3) Plugging in these
properties in the expression for the error probability, and evaluating
the resulting expression. (4) Developing an efficient procedure to
compute the exponent. This last step is equally important, since in
some cases, the resulting expression for the exponent may appear involved
or challenging to compute. We show in Appendix \ref{sec:Computation-of-the}
how efficient methods can be developed. 

For simplicity of exposition, we focus in this chapter on DMCs, for
which the standard method of types \cite{Csiszar98,csiszar2011information}
is applicable. However, given the generalized method of types described
in Chapter \ref{sec: CMoT}, these ideas can be extended to other
channels, including Gaussian channels (which have continuous alphabets)
and channels with memory, without requiring a substantial modification. 

The outline of this chapter is as follows. For methodological reasons,
our first step will invoke the TCEM for problems in which error exponents
are already well-established, to wit, error exponents for DMCs (random-coding
\cite{csiszar1981graph,csiszar1977new} and expurgated \cite[Section V]{gallager1965simple}
\cite{jelinek1968evaluation}) and the correct decoding exponent for
rates above capacity \cite{arimoto1973converse,dueck1979reliability}.
This will exemplify the technique of the TCEM in a familiar setting,
and serve as a basis for the rest of the chapter. We will then derive
the basic statistical and probabilistic properties of TCEs, to wit,
tail probabilities and moments. Afterwards, we will demonstrate the
TCEM in more advanced settings, namely: (1) The error exponent of
superposition coding in a broadcast asymmetric DMC for the optimal
bin-index decoder. (2) The random-binning error exponent of distributed
compression \cite{slepian1973noiseless}. (3) The random-coding error
exponents of generalized decoders, such as Forney's erasure/list decoder
\cite{forney1968exponential} and a generalized version of the likelihood
decoder \cite{yassaee2013technique}. (4) The error exponent of the
typical random code \cite{barg2002random}. 

At the last section of this chapter, we will survey the wide applicability
of the TCEM, and its ability to provide exact random-coding exponents
in a multitude of information-theoretic problems: The problem could
be channel coding or source coding problem; the problem could involve
a single user and point-to-point channels, or multiple users operating
in a distributed manner over a network \cite{el2011network}; the
code could have a fixed length, be a convolutional/trellis code \cite{viterbi2009principles,johannesson2015fundamentals,ungerboeck1982channel},
or have variable encoding length (with feedback) \cite{burnashev1976data};
the decoder could be the optimal ML decoder, the universal maximum
mutual information (MMI) decoder \cite{goppa1975nonprobabilitistic},
a mismatched decoder, an erasure/list decoder \cite{forney1968exponential}
that is allowed to output an erasure or more than a single codeword,
a list decoder that outputs a list of possible codewords \cite{elias1957list,wozencraft1958list},
a bin-index decoder, which is the optimal ML decoder in which the
codeword is only known to belong to a bin; a likelihood decoder which
randomly decodes a message based on a posterior probability distribution
\cite{yassaee2013technique}; a joint detector-decoder that is required
to make a decision in addition to decoding the message \cite{wang2011error};
and more. Moreover, beyond the random-coding error exponent, other
exponents can also be derived using the TCEM, e.g., the error exponent
of the typical random code \cite{barg2002random} and large-deviations
from this typical code \cite{tamir2020large,truong2023concentration}. 

\subsection{The Type Class Enumeration Method for Basic Coding Problems \label{subsec:Type-Class-Enumerators-in-Coding--Basic}}

To obtain a quick glance on the underlying ideas, we first consider
the basic problems of the random-coding and expurgated exponents for
a DMC, and then the correct decoding exponent (for rates above capacity).
Along the way, we will introduce several useful techniques, such as
the summation--maximization equivalence, tail integration, and, later
on, exponential tightness of the union bound for pairwise independent
events. For the sake of convenience, we begin with a short background
of classic error exponents for DMCs.

\subsubsection{A Short Background: Error Exponents of DMCs}

Consider a DMC $W$ with input alphabet ${\cal X}$ and output alphabet
${\cal Y}$, and a codebook ${\cal C}_{n}=\{\boldsymbol{x}_{m}\}$
whose codewords $\boldsymbol{x}_{m}\in{\cal X}^{n}$ have blocklength
$n$, and it has rate $R$, that is $|{\cal C}_{n}|=e^{nR}$.\footnote{Throughout, we will ignore integer constraints on large quantities
such as $e^{nR}$ (which should be $\lceil e^{nR}\rceil$), since
these do not affect any of the analyses or the results.} The holy grail of error exponent analysis \cite[Chapter 10]{csiszar2011information},
\cite[Chapter 5]{gallager1968information}, is to find the maximum
achievable error exponent achieved at any rate $R$, also known as
the \emph{reliability function, $E^{*}(R)$}. This establishes the
existence of a sequence of codes $\{{\cal C}_{n}^{*}\}$ of rate $R$,
whose error probability decays with the maximal exponent\footnote{It is unclear if the following limit exist \cite[Exercise 10.7]{csiszar2011information},
and so we take the conservative definition of limit-superior.}
\begin{equation}
E^{*}(R)=\sup_{\{{\cal C}_{n}\}}\limsup_{n\to\infty}-\frac{1}{n}\ln P_{\mathsf{e}}({\cal C}_{n}^{*}),\label{eq: reliability function}
\end{equation}
where $P_{\mathsf{e}}({\cal C}_{n})$ is the error probability of
the codebook ${\cal C}_{n}$ (for a given, implicit, decoding rule).
As expected, pointing out a particular sequence of codes achieving
the reliability function is a formidable problem. The random-coding
argument shows that $E^{*}(R)$ is lower bounded by the exponent achieved
by random codes. Specifically, we consider a random ensemble in which
each codeword $\boldsymbol{X}_{m}\in{\cal X}^{n}$ is chosen randomly,
independent of all other codewords, and in identical way: In the IID
ensemble, each symbol of the codeword is drawn independently from
some distribution $P_{X}$, and in the fixed-composition ensemble,
each codeword is chosen uniformly at random from a type class ${\cal T}_{n}(P_{X})$.
While both ensembles can be analyzed using the TCEM, we will focus
on the latter since it is more common when invoking the method of
types, and since it typically leads to larger random-coding exponents.
The average error probability for a random codebook $\mathfrak{C}_{n}$
chosen from the ensemble will be denoted by $\overline{P}_{\mathsf{e}}\triangleq\E[P_{\mathsf{e}}(\mathfrak{C}_{n})]$.
For a given ensemble, the random-coding error exponent of rate $R$
is then given by 
\begin{equation}
E_{\text{rc}}(R)\triangleq\lim_{n\to\infty}-\frac{1}{n}\ln\E\left[P_{\mathsf{e}}(\mathfrak{C}_{n})\right],\label{eq: random coding definition}
\end{equation}
whenever the limit exists, for which it holds that $E^{*}(R)\geq E_{\text{rc}}(R)$. 

The random-coding error exponent was studied by two different schools.
First, an approach lead by Gallager \cite[Chapter 5]{gallager1968information},
which is based on analytical techniques such as refined union bounds,
and later, by Csisz{\'a}r, K{\"o}rner and Marton \cite{csiszar2011information,csiszar1981graph,csiszar1977new},
who developed and used the method of types \cite{Csiszar98} to this
problem. Since the TCEM is based on the method of types, we will next
describe the latter \cite[Chapter 10]{csiszar2011information}. For
a DMC $W$, and a fixed-composition input $P_{X}$, this random-coding
error exponent takes the form $E_{\text{rc}}(R)=\max_{P_{X}}E_{\text{rc}}(R,P_{X})$,
where, with a slight abuse of notation, 
\begin{equation}
E_{\text{rc}}(R,P_{X})\triangleq\min_{Q_{Y|X}}\left\{ D(Q_{Y|X}||W|P_{X})+\left[I(P_{X}\times Q_{Y|X})-R\right]_{+}\right\} .\label{eq: random coding exponent CKM}
\end{equation}
It was also shown that this exponent can be achieved using the MMI
decoder, and does not require the optimal ML decoder. In parallel,
it was proved that the \emph{sphere packing bound} \cite{fano1961transmission,blahut1974hypothesis,Haroutunian1968estimates,shannon1967lowerI}
$E_{\text{sp}}(R)\triangleq\max_{P_{X}}E_{\text{sp}}(R,P_{X})$, where
\begin{equation}
E_{\text{sp}}(R,P_{X})\triangleq\min_{Q_{Y|X}\colon\;I(P_{X}\times Q_{Y|X})\leq R}D(Q_{Y|X}||W|P_{X}),
\end{equation}
is an upper bound on the reliability function $E^{*}(R)\leq E_{\text{sp}}(R)$.
Remarkably, there exists a critical rate $R_{\text{cr}}$ such that
for any $R\geq R_{\text{cr}}$ it holds that 
\begin{equation}
E^{*}(R)=E_{\text{rc}}(R)=E_{\text{sp}}(R),
\end{equation}
and so at high rates, the reliability function is exactly known, and
random-coding is optimal. At low rates, $R<R_{\text{cr}}$, however,
the ensemble-average error probability may be highly affected by codes
with large error probability. This has lead to the idea of \emph{expurgating}
the ensemble from these codes, and to the development of the expurgated
exponent. The expurgated exponent $E_{\text{ex}}(R)$ is a lower bound
on the reliability function $E^{*}(R)\geq E_{\text{ex}}(R)$, and
improves on the random-coding error exponent at low rates. Let the
Bhattacharyya distance between $\boldsymbol{x},\tilde{\boldsymbol{x}}\in{\cal X}^{n}$
be defined by 
\begin{equation}
d_{\text{B}}(\boldsymbol{x},\tilde{\boldsymbol{x}})\triangleq-\ln\sum_{\boldsymbol{y}\in{\cal Y}^{n}}\sqrt{W(\boldsymbol{y}|\boldsymbol{x})\cdot W(\boldsymbol{y}|\tilde{\boldsymbol{x}})}.
\end{equation}
Since it only depends on the joint type, $Q_{X\tilde{X}}=\hat{Q}_{\boldsymbol{x},\tilde{\boldsymbol{x}}}$,
we also denote, with a slight abuse of notation, $d_{\text{B}}(Q_{X\tilde{X}})$
as the Bhattacharyya between some $(\boldsymbol{x},\tilde{\boldsymbol{x}})\in{\cal T}_{n}(Q_{X\tilde{X}})$.
The expurgated exponent \cite{Csiszar98,csiszar2011information,csiszar1981graph,csiszar1977new}
is given by $E_{\text{ex}}(R)=\max_{P_{X}}E_{\text{ex}}(R,P_{X})$
where 
\begin{equation}
E_{\text{ex}}(R,P_{X})\triangleq\min_{Q_{X\tilde{X}}\colon\;Q_{X}=Q_{\tilde{X}}=P_{X},\;I(Q_{X\tilde{X}})\leq R}\left[d_{\text{B}}(Q_{X\tilde{X}})+I(Q_{X\tilde{X}})\right]-R.\label{eq: expurgated exponent}
\end{equation}
Also remarkably, the celebrated result of Shannon, Gallager and Berlekamp
\cite{shannon1967lowerII} showed that it is tight at zero rate $E^{*}(0)=E_{\text{ex}}(0)$,
and also used that to derive an improved upper bound at intermediate
rates, known as the \emph{straight line bound}.

\subsubsection{Error Exponents of a DMC via Type Class Enumeration \label{subsec:random-coding error exponent}}

We next show how to derive the random-coding error exponent $E_{\text{rc}}(R)$
via the TCEM. As usual, we fix the transmitted codeword $\boldsymbol{X}_{1}=\boldsymbol{x}$
and the output vector $\boldsymbol{y}$, and then write the probability
that one of the $e^{nR}-1$ (random) competing codewords in $\mathfrak{C}_{n}\backslash\{\boldsymbol{X}_{1}\}$
is decoded instead of $\boldsymbol{X}_{1}$. This amounts to
\begin{equation}
\overline{P}_{\mathsf{e}}=\sum_{\boldsymbol{x}\in{\cal X}^{n}}\sum_{\boldsymbol{y}\in{\cal Y}^{n}}\Pr\left[\boldsymbol{X}_{1}=\boldsymbol{x}\right]W(\boldsymbol{y}|\boldsymbol{x})\cdot\Pr\left[\bigcup_{m=2}^{e^{nR}}\text{\ensuremath{\boldsymbol{X}}}_{m}\text{ has higher score than }\text{\ensuremath{\boldsymbol{X}}}_{1}=\boldsymbol{x}\right].\label{eq: random coding exponent initial}
\end{equation}
The next conceivable step is to further bound the inner probability
by a union bound, and as said, while a naive union bound fails, the
clipped union bound \eqref{eq: clipped union bound} or Gallager's
union bound \eqref{eq: Gallager union bound} both lead to the exact
random-coding error exponent in this basic setting. However, the TCEM
proceeds differently.

Let us denote by $Q_{XY}$ a generic joint type of $(\boldsymbol{x},\boldsymbol{y})$,
where $Q_{X}=P_{X}$ matches the type of the fixed-composition ensemble,
and where for brevity, henceforth, we will often make this implicit.
We further consider the class of $\alpha$-decoders, which decide
using a score function $\alpha(Q_{XY})$ that depends only on the
joint type of the output vector $\boldsymbol{y}$ and the candidate
codeword. Specifically, if $\hat{Q}_{\boldsymbol{x},\boldsymbol{y}}$
is the joint type of $(\boldsymbol{x},\boldsymbol{y})$ then the decoded
codeword is a maximizer of $\alpha(\hat{Q}_{\boldsymbol{x}_{j},\boldsymbol{y}})$.
Let the expected log-likelihood of a joint type $Q_{XY}$ be 
\begin{equation}
f(Q_{XY})\triangleq\E_{Q}\left[\ln W(Y|X)\right].
\end{equation}
It can be easily noted that the MMI decoder, $\alpha(Q_{XY})=I(Q_{XY})$,
and the ML decoder, $\alpha(Q_{XY})=f(Q_{XY})$, are both $\alpha$-decoders.
We now introduce a proper TCE. 
\begin{defn}[TCE for random-coding exponent]
\label{def: TCE random coding} For a codebook ${\cal C}_{n}$, an
output vector $\boldsymbol{y}$, and a joint type $Q_{XY}$ such that
$\hat{Q}_{\boldsymbol{y}}=Q_{Y}$, let
\begin{equation}
N_{\boldsymbol{y}}(Q_{XY},{\cal C}_{n})\triangleq\left|\left\{ m>1\colon(\boldsymbol{X}_{m},\boldsymbol{y})\in{\cal T}_{n}(Q_{XY})\right\} \right|.\label{eq: enumerator for simple random coding}
\end{equation}
The TCE $N_{\boldsymbol{y}}(Q_{XY},{\cal C}_{n})$ counts the number
of incorrect codewords in ${\cal C}_{n}$ whose joint type with $\boldsymbol{y}$
is $Q_{XY}$. By the method of types, when $\boldsymbol{X}_{m}\sim\text{Uniform}[{\cal T}_{n}(Q_{X})]$
then 
\begin{equation}
\Pr\left[(\boldsymbol{X}_{m},\boldsymbol{y})\in{\cal T}_{n}(Q_{XY})\right]=k_{n}\cdot e^{-nI(Q_{XY})}
\end{equation}
for some $k_{n}\doteq1$. So, for a random codebook $\mathfrak{C}_{n}=\{\boldsymbol{X}_{m}\}$,
\begin{equation}
N_{\boldsymbol{y}}(Q_{XY},\mathfrak{C}_{n})=\sum_{m=2}^{e^{nR}}\I\left\{ (\boldsymbol{X}_{m},\boldsymbol{y})\in{\cal T}_{n}(Q_{XY})\right\} \sim\text{Binomial}\left(e^{nR}-1,k_{n}\cdot e^{-nI(Q_{XY})}\right).\label{eq: simple random coding enumerator}
\end{equation}
Since any $\boldsymbol{X}_{m}$ has a unique joint type with $\boldsymbol{y}$,
then viewed as a collection of TCEs, it holds that 
\begin{equation}
\{N_{\boldsymbol{y}}(Q_{XY})\}_{Q_{XY}}\sim\text{Multinomial}\left(e^{nR},\{p(Q_{XY})\}_{Q_{XY}}\right)
\end{equation}
 where $p(Q_{XY})\doteq e^{-nI(Q_{XY})}.$
\end{defn}
Since the probability measure of $N_{\boldsymbol{y}}(Q_{XY},\mathfrak{C}_{n})$
only depends on $\boldsymbol{y}$ through its type, for brevity, we
will omit both $\mathfrak{C}_{n}$ and $\boldsymbol{y}$ from the
notation of TCEs (with a slight abuse of notation). We then have 
\begin{equation}
\overline{P}_{\mathsf{e}}=\sum_{Q_{XY}}\Pr\left[(\boldsymbol{X}_{1},\boldsymbol{y})\in{\cal T}_{n}(Q_{XY})\right]\cdot\Pr\left[\bigcup_{\tilde{Q}_{XY}\colon Q_{Y}=\tilde{Q}_{Y},\;\alpha(\tilde{Q}_{XY})\geq\alpha(Q_{XY})}\I\left\{ N(\tilde{Q}_{XY})\geq1\right\} \right].\label{eq: random coding exponent first}
\end{equation}
The substantial difference between this bound and \eqref{eq: random coding exponent initial},
is that its inner probability is a union over a polynomial number
of types, rather than an exponential number of codewords. For such
a union of polynomial number of events, even the regular union bound
is exponentially tight \eqref{eq: tightness of the union bound for poly events},
and therefore
\begin{align}
\overline{P}_{\mathsf{e}} & \doteq\max_{Q_{XY}}\max_{\tilde{Q}_{XY}\colon Q_{Y}=\tilde{Q}_{Y},\;\alpha(\tilde{Q}_{XY})\geq\alpha(Q_{XY})}\Pr\left[(\boldsymbol{X}_{1},\boldsymbol{y})\in{\cal T}_{n}(Q_{XY})\right]\cdot\Pr\left[N(\tilde{Q}_{XY})\geq1\right]\nonumber \\
 & \doteq\max_{Q_{XY}}\max_{\tilde{Q}_{XY}\colon Q_{Y}=\tilde{Q}_{Y},\;\alpha(\tilde{Q}_{XY})\geq\alpha(Q_{XY})}\exp\left[-n\cdot D(Q_{XY}||P_{X}\times W)\right]\cdot\Pr\left[N(\tilde{Q}_{XY})\geq1\right].\label{eq: random coding exponent second}
\end{align}

In the next section, we will derive various properties of TCEs, and
specifically, tight tail bounds on $\Pr[N(\tilde{Q}_{XY})\geq1]$.
After inserting them back to \eqref{eq: random coding exponent second}
we will obtain 
\begin{equation}
\overline{P}_{\mathsf{e}}\doteq\exp\left[-n\cdot E_{\text{rc},\alpha}(R)\right],\label{eq: random coding exponent third}
\end{equation}
where 
\begin{equation}
E_{\text{rc},\alpha}(R,P_{X})\triangleq\min_{Q_{Y|X},\tilde{Q}_{Y|X}}D(Q_{Y|X}||W|P_{X})+\left[I(P_{X}\times\tilde{Q}_{Y|X})-R\right]_{+},\label{eq: random coding exponent expression}
\end{equation}
for which the minimization is over the set 
\begin{equation}
\left\{ Q_{Y|X},\;\tilde{Q}_{Y|X}\colon(P_{X}\times Q_{Y|X})_{Y}=(P_{X}\times\tilde{Q}_{Y|X})_{Y},\;\alpha(P_{X}\times\tilde{Q}_{Y|X})\geq\alpha(P_{X}\times Q_{Y|X})\right\} .
\end{equation}
This recovers a similar bound from \cite{csiszar1981graph} obtained
in a different way. For example, if $\alpha(Q_{XY})$ is the MMI rule,
the input the minimization is over $\{I(P_{X}\times\tilde{Q}_{Y|X})\geq I(P_{X}\times Q_{Y|X})\}$.
We recover the \emph{random-coding error exponent }\eqref{eq: random coding exponent CKM}.
If $\alpha(Q_{XY})=f(Q_{XY})\triangleq\E_{Q}[\ln W(Y|X)]$ is the
ML rule, then we achieve the same exponent. Indeed, since the ML is
the optimal decoder in terms of error probability, its error probability
can only be lower. On the other hand, $\tilde{Q}_{XY}=Q_{XY}$ belongs
to the set of inner minimization, and so the exponent cannot be larger
(see \cite[Proof of Lemma 4]{csiszar1981graph} for a direct proof,
which does not utilize the optimality of the ML rule). 

For general decoding scores, the random coding $E_{\text{rc},\alpha}(R,P_{X})$
is lower than the standard random-coding error exponent, and on the
face of it, is difficult to compute. Indeed, the clipping operation
is the result of the phase transition of the TCE at $R=I(\tilde{Q}_{XY})$.
This leads to an exponent expression which, in general, involves two
optimization problems, one for $I(\tilde{Q}_{XY})\leq R$ and the
other one for $I(\tilde{Q}_{XY})>R$. This is problematic since the
constraint $I(\tilde{Q}_{XY})>R$ is not convex, and so the resulting
minimization problem is not a convex optimization problem, even if
the constraint $\{\alpha(P_{X}\times\tilde{Q}_{Y|X})\geq\alpha(P_{X}\times Q_{Y|X})\}$
is convex (which occurs when $\alpha(Q_{XY})$ is linear in $Q_{XY}$,
as for the ML decoder). Nonetheless, in Appendix \ref{sec:Computation-of-the}
we show a method to efficiently compute the exponent that only requires
solving convex optimization problems (assuming $\alpha(Q)$ is linear).
As we have seen, the phase transition ($I>R$ or $I<R$) holds generally
for TCEs, and so this issue occurs for almost any exponent derived
by this method, which may take much more complicated form. Nonetheless,
methods similar to the one described in Appendix \ref{sec:Computation-of-the}
can usually be developed to efficiently compute the exponent, even
for such complicated scenarios. 

We next move on to shortly discuss the expurgated exponent. Assuming
that the ML decoder is used, the pairwise error probability for two
codewords $\boldsymbol{x}$ and $\tilde{\boldsymbol{x}}$ is upper
bounded by the \emph{Bhattacharyya bound} (e.g., \cite[Problem 10.20]{csiszar2011information})
as
\begin{equation}
P_{\mathsf{e}}(\boldsymbol{x},\tilde{\boldsymbol{x}})\triangleq\Pr\left[W(\boldsymbol{Y}|\tilde{\boldsymbol{x}})\geq W(\boldsymbol{Y}|\boldsymbol{x})\right]\leq\exp\left[-n\cdot d_{\text{B}}(\boldsymbol{x},\tilde{\boldsymbol{x}})\right].\label{eq: pairwise Bhattacharyya bound}
\end{equation}
Thus, for a given code ${\cal C}_{n}$, the regular union bound implies
that 
\begin{equation}
P_{\mathsf{e}}({\cal C}_{n})\leq\frac{1}{e^{nR}}\sum_{m=1}^{e^{nR}}\sum_{\tilde{m}=1}^{e^{nR}}\I\{\tilde{m}\neq m\}\cdot\exp\left[-n\cdot d_{\text{B}}(\boldsymbol{x}_{m},\boldsymbol{x}_{\tilde{m}})\right].
\end{equation}
We next introduce a proper TCE for the expurgated exponent.
\begin{defn}[TCE for expurgated exponent]
\label{def: TCE expurgated} For a joint type $Q_{X\tilde{X}}$,
a codebook ${\cal C}_{n}$, and a codeword index $m=1,\ldots,e^{nR}$,
let
\begin{equation}
\overline{N}_{m}(Q_{X\tilde{X}},\mathfrak{C}_{n})\triangleq\left|\left\{ \tilde{m}\colon m\neq\tilde{m},\;(\boldsymbol{x}_{m},\boldsymbol{x}_{\tilde{m}})\in{\cal T}_{n}(Q_{X\tilde{X}})\right\} \right|,\label{eq: exponent for expurgated}
\end{equation}
count the number of codewords in the codebook ${\cal C}_{n}$ which
have a joint type $Q_{X\tilde{X}}$ with $\boldsymbol{x}_{m}$. By
the method of types, when $\boldsymbol{X}_{\tilde{m}}\sim\text{Uniform}[{\cal T}_{n}(Q_{X})]$
then 
\begin{equation}
\Pr\left[(\boldsymbol{X}_{\tilde{m}},\boldsymbol{x}_{m})\in{\cal T}_{n}(Q_{X\tilde{X}})\right]=k_{n}\cdot e^{-nI(Q_{X\tilde{X}})}
\end{equation}
for some $k_{n}\doteq1$. So, for a random codebook $\mathfrak{C}_{n}=\{\boldsymbol{X}_{m}\}$
it holds that 
\begin{multline}
\overline{N}_{m}(Q_{X\tilde{X}},\mathfrak{C}_{n})\triangleq\sum_{\tilde{m}=1}^{e^{nR}}\I\{\tilde{m}\neq m\}\cdot\I\left\{ (\boldsymbol{X}_{\tilde{m}},\boldsymbol{X}_{m})\in{\cal T}_{n}(Q_{X\tilde{X}})\right\} \\
\sim\text{Binomial}\left(e^{nR}-1,k_{n}\cdot e^{-nI(Q_{X\tilde{X}})}\right).\label{eq: expurgated enumerator}
\end{multline}
It should be noted that $\{\overline{N}_{m}(Q_{X\tilde{X}},\mathfrak{C}_{n})\}_{m=1}^{e^{nR}}$
is a collection of an exponential number of \emph{dependent} RVs. 
\end{defn}
As for the TCE for random-coding, we will omit $\mathfrak{C}_{n}$
from the notation of TCEs (with a slight abuse of notation). Evidently,
the above upper bound can be expressed using the TCEs as 
\begin{equation}
P_{\mathsf{e}}({\cal C})\leq\frac{1}{e^{nR}}\sum_{m=1}^{e^{nR}}\sum_{Q_{X\tilde{X}}}\overline{N}_{m}(Q_{X\tilde{X}})\cdot\exp\left[-n\cdot d_{\text{B}}(Q_{X\tilde{X}})\right].\label{eq: expurgated exponent first}
\end{equation}
In Appendix \ref{sec:The-Expurgated-Exponent}, we show how the bound
\eqref{eq: expurgated exponent first} and the properties of $\{\overline{N}_{m}(Q_{X\tilde{X}})\}_{m=1}^{e^{nR}}$
derived in the next section can be used to derive the classic expurgated
exponent \eqref{eq: expurgated exponent}. 

\subsubsection{The Correct Decoding Exponent of a DMC \label{subsec:The-Correct-Decoding}}

One of the first demonstrations of the TCEM was for the \emph{correct}
decoding exponent of a DMC for rates above capacity \cite{merhav2008relations}.
Following Arimoto \cite{arimoto1973converse}, the correct decoding
error probability begins with the identity
\begin{align}
P_{\mathsf{c}}({\cal C}_{n}) & =\frac{1}{e^{nR}}\sum_{\boldsymbol{y}\in{\cal Y}^{n}}\max_{m}W(\boldsymbol{y}|\boldsymbol{X}_{m})\nonumber \\
 & =\lim_{\beta\to\infty}\frac{1}{e^{nR}}\sum_{\boldsymbol{y}\in{\cal Y}^{n}}\left[\sum_{m}W^{\beta}(\boldsymbol{y}|\boldsymbol{X}_{m})\right]^{1/\beta}.\label{eq: Arimoto's identity}
\end{align}
Let us consider the TCE
\begin{equation}
N_{\boldsymbol{y}}(Q_{XY})\triangleq\left|\left\{ m\geq1\colon(\boldsymbol{x}_{m},\boldsymbol{y})\in{\cal T}_{n}(Q_{XY})\right\} \right|,
\end{equation}
which is only slightly different from the random-coding TCE of Definition
\ref{def: TCE random coding}, and so, we abuse the notation and denote
them similarly. Assuming an ensemble of random codebooks, we next
evaluate the ensemble-average of the correct decoding probability.
To this end, we fix a finite $\beta$ and $\boldsymbol{y}$, and write
the ensemble average using TCEs as
\begin{align}
\E\left\{ \left[\sum_{m}W^{\beta}(\boldsymbol{y}|\boldsymbol{X}_{m})\right]^{1/\beta}\right\}  & =\E\left\{ \left[\sum_{Q_{XY}}N(Q_{XY})\cdot e^{n\beta f(Q_{XY})}\right]^{1/\beta}\right\} \nonumber \\
 & \doteq\E\left\{ \left[\max_{Q_{XY}}N(Q_{XY})\cdot e^{n\beta f(Q_{XY})}\right]^{1/\beta}\right\} \nonumber \\
 & =\E\left\{ \left[\max_{Q_{XY}}N^{1/\beta}(Q_{XY})\cdot e^{nf(Q_{XY})}\right]\right\} \nonumber \\
 & \doteq\E\left[\sum_{Q_{XY}}N^{1/\beta}(Q_{XY})\cdot e^{nf(Q_{XY})}\right]\nonumber \\
 & =\sum_{Q_{XY}}\E\left[N^{1/\beta}(Q_{XY})\right]\cdot e^{nf(Q_{XY})},\label{eq: max-to-sum-equivalence}
\end{align}
exploiting, \emph{twice}, the fact that the number of types is polynomial
in $n$ to interchange a summation with a maximum in both directions.
We refer to this as the \emph{summation--maximization equivalence},
which is frequently used to manipulate probabilities to a form that
allows for a direct substitution of TCE moments. As can be seen, \eqref{eq: max-to-sum-equivalence}
requires evaluating the fractional $1/\beta$ moment of the TCE $N(Q_{XY})$.
Next, since $|{\cal T}_{n}(Q_{Y})|\doteq e^{nH(Q_{Y})}$, we obtain
\begin{equation}
\overline{P}_{\mathsf{c}}\triangleq\E\left[P_{\mathsf{c}}(\mathfrak{C}_{n})\right]\doteq\lim_{\beta\to\infty}\sum_{Q_{XY}}\E\left[N^{1/\beta}(Q_{XY})\right]\cdot e^{n[f(Q_{XY})+H(Q_{Y})-R]}.\label{eq: correct decoding first}
\end{equation}
In the next section we will evaluate the moments $\E[N^{1/\beta}(Q_{XY})]$,
and using this results 
\begin{equation}
\overline{P}_{\mathsf{c}}\doteq\exp\left[-n\cdot E_{\text{c}}(R,P_{X})\right]
\end{equation}
where 
\begin{equation}
E_{\text{c}}(R,P_{X})\triangleq\min\left\{ E_{-}(R,P_{X}),E_{+}(R,P_{X})\right\} ,
\end{equation}
with 
\begin{equation}
E_{-}(R,P_{X})\triangleq\min_{Q_{XY}\colon\;I(Q_{XY})>R}\left[I(Q_{XY})-f(Q_{XY})-H(Q_{Y})\right]
\end{equation}
as well as
\begin{align}
E_{+}(R,P_{X}) & \triangleq\lim_{\beta\to\infty}\min_{Q_{XY}\colon\;I(Q_{XY})\leq R}\left[\frac{1}{\beta}I(Q_{XY})-\frac{1}{\beta}R+R-f(Q_{XY})-H(Q_{Y})\right]\nonumber \\
 & =\lim_{\beta\to\infty}\min_{Q_{XY}\colon\;I(Q_{XY})\leq R}\left[R-f(Q_{XY})-H(Q_{Y})\right].
\end{align}
Therefore, 
\begin{align}
E_{\text{c}}(R,P_{X}) & =\min_{Q_{XY}}\left[\max\{R,I(Q_{XY})\}-f(Q_{XY})-H(Q_{Y})\right]\nonumber \\
 & =\min_{Q_{XY}}\left\{ D(Q_{XY}||P_{X}\times W)+\left[R-I(Q_{XY})\right]_{+}\right\} ,
\end{align}
where the equality uses the identity $-f(Q_{XY})=D(Q||P_{X}\times W)-I(Q_{XY})+H(Q_{Y})$.
We also remark that in all the above expressions, $Q_{X}=P_{X}$ is
implicitly assumed. 

This bound recovers the K{\"o}rner--Dueck exponent \cite{dueck1979reliability},
which is known to be optimal (after minimizing over the input distribution
$P_{X}$). In \cite[Chapter 6]{me09} this example was studied in
detail for a BSC, and compared with Arimoto's approach in \cite{arimoto1973converse}.
Arimoto started as in \eqref{eq: Arimoto's identity}, but continued
by upper bounding these moments using Jensen's inequality, which interchanges
between the expectation operator and the $1/\beta$-power. As was
shown in \cite[Sec. 3]{merhav2008relations}, for a BSC with crossover
probability $p$, the correct-decoding exponent is 
\begin{align}
\overline{P}_{\mathsf{c}} & \doteq\exp\left[-n\cdot D(\delta_{\text{GV}}(R)||p)\right]\nonumber \\
 & =\exp\left[-n\cdot\left(\delta_{\text{GV}}(R)\ln\frac{1}{p}+\left(1-\delta_{\text{GV}}(R)\right)\ln\frac{1}{1-p}\right)-H(\delta_{\text{GV}}(R))\right]\label{eq: exact correct decoding exponent BSC}
\end{align}
where $\delta_{\text{GV}}(R)$ is the (smaller) solution to $\ln2-H(\delta)=R$
(recall that $H(q)\triangleq-q\ln q-(1-q)\ln(1-q)$ is the binary
entropy function). In comparison, if one takes $\beta\to\infty$,
then after using Jensen's inequality, the following bound is obtained:

\begin{equation}
\overline{P}_{\mathsf{c}}\dot{\leq\exp\left[-n\cdot\left(\min\left\{ \ln\frac{1}{p},\ln\frac{1}{1-p}\right\} \right)-H(\delta_{\text{GV}}(R))\right]}.\label{eq: Jensen correct decoding exponent BSC}
\end{equation}
Evidently, the exponent in \eqref{eq: Jensen correct decoding exponent BSC}
is \emph{strictly} smaller than the exact exponent in \eqref{eq: exact correct decoding exponent BSC}.
After maximizing over $\beta$, as well as minimizing over the input
distribution $P_{X}$, Arimoto's exponent matches that of K{\"o}rner--Dueck
\cite{dueck1979reliability}. However, this is not guaranteed in advance,
and also requires optimization over $\beta$. Indeed, in more complicated
settings, the optimization over parameters (such as $\beta$) for
derivations that are based on Jensen inequality cannot be performed
analytically, and even if so, they lead to strictly sub-optimal bounds. 

\subsection{Probabilistic and Statistical Properties of Type Class Enumerators
\label{subsec:Probabilistic-Properties-of}}

In the previous section, we have shown how to analyze basic coded
systems via the TCEM. In this section, we turn to analyze the probabilistic
and statistical properties of TCEs. Motivated by the discussion up
until now, we let $n$ be the blocklength, and let $A,B>0$ be two
constants. We will be interested in the properties of $N\sim\text{Binomial}(k_{n}'\cdot e^{nA},k_{n}''\cdot e^{-nB})$,
where $k_{n}'\cdot e^{nA}$ is the number of trials, and $k_{n}''\cdot e^{-nB}$
is the success probability, and where $k_{n}'\doteq k_{n}''\doteq1$,
and specifically, in tail probabilities and moments. In order to avoid
the polynomial pre-factors, we will analyze in what follows open intervals
of $A$ and $B$, for which it can be assumed that $k_{n}'=k_{n}''=1$.
However, as the exponents of the tail probabilities and moments are
continuous functions the results also hold for closed intervals. As
we shall next see, such binomial RVs experience a phase transition
at $B=A$, and therefore we will separate the analysis to the cases
of $B<A$ and $B>A$. As a side note, for $A=B$, the distribution
of such a binomial RV tends to that of a Poisson. 

We begin with the tail probabilities of $N$. 
\begin{thm}
\label{thm: tail probabilities of N}Assume that $N\sim\text{\emph{Binomial}}(e^{nA},e^{-nB})$
and $\lambda\in\reals$. Then, the upper tail is
\begin{equation}
\lim_{n\to\infty}-\frac{1}{n}\ln\Pr\left[N>e^{n\lambda}\right]=\begin{cases}
\left[B-A\right]_{+}, & \left[A-B\right]_{+}\geq\lambda\\
\infty, & \text{elsewhere}
\end{cases},\label{eq: exponent of upper tail}
\end{equation}
and the lower tail is
\begin{equation}
\lim_{n\to\infty}-\frac{1}{n}\ln\Pr\left[N<e^{n\lambda}\right]=\begin{cases}
0, & A-B<\lambda\\
\infty, & A-B>\lambda
\end{cases}.\label{eq: left tail populated type}
\end{equation}
\end{thm}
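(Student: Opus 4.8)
The plan is to analyze the binomial RV $N\sim\text{Binomial}(e^{nA},e^{-nB})$ by splitting into the two regimes $A>B$ (typically populated) and $A<B$ (typically empty), and in each regime establishing matching exponential upper and lower bounds on the relevant tail. First I would record the elementary inequalities that will do all the work: for the upper tail, the union bound $\Pr[N\geq k]\leq\binom{e^{nA}}{k}e^{-nBk}\leq e^{n(A-B)k}/k!$, which already gives $\Pr[N\geq 1]\leq e^{-n(B-A)}$ when $B>A$; and for a matching lower bound in that regime, either the first term of inclusion--exclusion, $\Pr[N\geq1]\geq e^{nA}e^{-nB}(1-e^{nA}e^{-nB})\doteq e^{-n(B-A)}$, or simply $\Pr[N\geq 1]\geq e^{nA}e^{-nB}-\binom{e^{nA}}{2}e^{-2nB}$. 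The expected value is $\E[N]=e^{n(A-B)}$, which is exponentially large when $A>B$ and exponentially small when $A<B$; this pins down where the phase transition sits.

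For the \textbf{upper tail} \eqref{eq: exponent of upper tail}: when $A>B$, so $[A-B]_+=A-B$, I would show $\Pr[N>e^{n\lambda}]\doteq 1$ for $\lambda<A-B$ and $\doteq e^{-n\infty}$ for $\lambda>A-B$. The first follows from concentration of the binomial around its mean $e^{n(A-B)}$: a Chernoff/Chebyshev bound gives $\Pr[N<\tfrac12 e^{n(A-B)}]\leq e^{-n\infty}$ (in fact the variance is $e^{n(A-B)}(1-e^{-nB})\sim e^{n(A-B)}$, so even Chebyshev suffices, and Chernoff gives the double-exponential decay mentioned in the text). The second (super-exponential decay above the mean) is the Chernoff bound $\Pr[N\geq e^{n\lambda}]\leq \mathbb{E}[e^{tN}]e^{-te^{n\lambda}}=(1-e^{-nB}+e^{-nB}e^{t})^{e^{nA}}e^{-te^{n\lambda}}$; choosing $t=n$ (or any $t$ growing with $n$) and using $e^{nA}\cdot e^{-nB}\cdot e^{n}=e^{n(A-B+1)}\ll e^{n\lambda}$ for $\lambda>A-B$ forces the bound down to $e^{-n\infty}$. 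When $A<B$, so $[A-B]_+=0$: for $\lambda\geq 0$ we need exponent $B-A$, which is the union-bound upper estimate $\Pr[N>e^{n\lambda}]\leq\Pr[N\geq1]\leq e^{-n(B-A)}$ paired with the lower bound $\Pr[N>e^{n\lambda}]\geq\Pr[N=1]\cdot(\text{something}\doteq1)$, or more simply $\Pr[N\geq 1]\geq \mathbb{E}[N]-\tfrac12\mathbb{E}[N]^2\doteq e^{-n(B-A)}$; and for $\lambda<0$ we must note that $e^{n\lambda}\to 0$ but $N$ is integer-valued, so $\Pr[N>e^{n\lambda}]=\Pr[N\geq1]$ still, and the same estimate applies — here I should be slightly careful that the statement's ``elsewhere'' case ($\lambda>[A-B]_+$) only bites when $A>B$, since for $A\leq B$ every $\lambda$ with $e^{n\lambda}\geq$ smallest positive integer reduces to $\Pr[N\geq1]$.

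For the \textbf{lower tail} \eqref{eq: left tail populated type}, the point is whether $e^{n\lambda}$ is above or below the typical scale $e^{n(A-B)}$ of $N$ (only interesting when $A>B$, since otherwise $N=0$ typically and $\Pr[N<e^{n\lambda}]\doteq 1$ for any $\lambda$ with $e^{n\lambda}>0$, consistent with $A-B<\lambda$). If $\lambda>A-B$: $\Pr[N<e^{n\lambda}]\geq\Pr[N\leq\lceil\E[N]\rceil]\geq$ const $\doteq1$ (median is near the mean), giving exponent $0$. If $\lambda<A-B$: this is the lower-deviation event, and the Chernoff bound $\Pr[N\leq e^{n\lambda}]\leq \mathbb{E}[e^{-tN}]e^{te^{n\lambda}}=(1-e^{-nB}+e^{-nB}e^{-t})^{e^{nA}}e^{te^{n\lambda}}$ with, again, $t$ growing with $n$ yields double-exponential (hence ``$\infty$''-exponent) decay; concretely, $\log$ of the first factor is $\leq -e^{nA}e^{-nB}(1-e^{-t})\sim -e^{n(A-B)}$ while $te^{n\lambda}=o(e^{n(A-B)})$ when $\lambda<A-B$ and $t$ grows slowly, so the bound is $\leq \exp(-\tfrac12 e^{n(A-B)})$.

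The main obstacle I anticipate is purely bookkeeping: being careful about the boundary $\lambda = [A-B]_+$ and about the discreteness of $N$ when $e^{n\lambda}<1$ (so that ``$N>e^{n\lambda}$'' collapses to ``$N\geq1$''), and organizing the four sign-combinations of $(A-B,\lambda)$ without redundancy. The analytic content is light — standard binomial Chernoff bounds with the large-deviations parameter itself allowed to grow with $n$ — so beyond this case analysis there is little genuine difficulty; the continuity remark at the end of the theorem statement then extends the open-interval analysis (where the polynomial prefactors $k_n',k_n''$ are absorbed) to the closed intervals.
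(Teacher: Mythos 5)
Your overall strategy---Chernoff bounds for the binomial combined with a case split on the sign of $A-B$ and on the position of $\lambda$ relative to $A-B$---is the same as the paper's (the paper routes the Chernoff bound through the binary KL divergence $D(r\|p)$ together with a lemma on its asymptotics for vanishing arguments, while you work with the MGF directly; that difference is cosmetic). However, there are two concrete errors. The serious one is in the upper tail for $A<B$: you assert that the exponent is $B-A$ for all $\lambda\geq0$ and that the ``elsewhere'' ($\infty$) branch ``only bites when $A>B$,'' justified by the claim that for $A\leq B$ the event $\{N>e^{n\lambda}\}$ always reduces to $\{N\geq1\}$. That reduction holds only when $e^{n\lambda}<1$, i.e.\ when $\lambda<0$. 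For $\lambda>0$ the event $\{N>e^{n\lambda}\}$ requires exponentially many successes, your lower bound $\Pr[N>e^{n\lambda}]\geq\Pr[N=1]$ is false (the event $\{N=1\}$ is not contained in $\{N>e^{n\lambda}\}$), and the correct answer is the $\infty$ branch of \eqref{eq: exponent of upper tail}: for instance $\Pr\left[N\geq e^{n\lambda}\right]\leq\binom{e^{nA}}{\lceil e^{n\lambda}\rceil}e^{-nB\lceil e^{n\lambda}\rceil}\leq e^{-n(B-A)e^{n\lambda}}$, which decays double-exponentially. So your case analysis argues for a statement that contradicts the theorem in the regime $A<B$, $\lambda>0$; the fix is simply to observe that the super-exponential upper-tail argument you use for $A>B$, $\lambda>A-B$ applies verbatim whenever $\lambda>[A-B]_{+}$.

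The second, more minor, slip is inside that super-exponential argument itself: with $t=n$ the MGF factor is $\exp\{e^{n(A-B+1)}\}$, and the claim $e^{n(A-B+1)}\ll e^{n\lambda}$ holds only for $\lambda>A-B+1$, not for all $\lambda>A-B$. Taking $t$ bounded (say $t=1$) or slowly growing already gives a bound of the form $\exp\{O(e^{n(A-B)})\}\cdot e^{-te^{n\lambda}}$, which decays double-exponentially for every $\lambda>A-B$; so ``any $t$ growing with $n$'' is the wrong prescription---$t$ must not grow too fast. The remaining pieces (concentration around the mean when $A>B$, Markov's inequality paired with the $\Pr[N=1]$ lower bound for $\Pr[N\geq1]$ when $A<B$, and the lower-tail analysis including the observation that for $A<B$ there is effectively no lower tail) are correct and essentially coincide with the paper's treatment.
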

The proof of Theorem \ref{thm: tail probabilities of N}, as well
as all other theorems in this section, is deferred to Appendix \ref{sec:Proofs-for-TCE-properties}.

We continue with the moments of the $N$. While the first moment (expected
value) of $N$ is trivially given by $\E[N]=e^{n(A-B)}$, error exponent
analysis typically requires to evaluate general moments, which are
possibly fractional.
\begin{thm}
\label{thm: The moments of N}Assume that $N\sim\text{\emph{Binomial}}(e^{nA},e^{-nB})$.
Then, for $s>0$
\begin{equation}
\E\left[N^{s}\right]\doteq\begin{cases}
e^{n(A-B)s}, & A>B\\
e^{-n(B-A)}, & A<B
\end{cases}.\label{eq: moments of enumerators}
\end{equation}
\end{thm}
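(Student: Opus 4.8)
The plan is to treat the two regimes separately, in each case sandwiching $\E[N^{s}]$ between a lower bound and an upper bound that match on the exponential scale. Throughout write $N\sim\text{Binomial}(e^{nA},e^{-nB})$, so $\E[N]=e^{n(A-B)}$, and recall that Theorem \ref{thm: tail probabilities of N} has already pinned down the tail behavior of $N$.

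For the \emph{typically populated} case $A>B$, the upper bound is the easy direction: since $s>0$, I would split according to whether $s\le 1$ or $s>1$. For $s\le 1$, concavity of $t\mapsto t^{s}$ and Jensen's inequality give $\E[N^{s}]\le(\E[N])^{s}=e^{n(A-B)s}$. For $s>1$, I would instead use the fact that $N\le e^{nA}$ deterministically, so $N^{s}=N\cdot N^{s-1}\le N\cdot e^{nA(s-1)}$, hence $\E[N^{s}]\le e^{n(A-B)}\cdot e^{nA(s-1)}=e^{n(As-B)}$; this differs from the claimed $e^{n(A-B)s}$ by the sub-exponential-after-correction factor $e^{nB(s-1)}$, which is \emph{not} sub-exponential when $s>1$, so this crude bound is not tight. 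Instead, for the upper bound when $s>1$ I would localize: by Theorem \ref{thm: tail probabilities of N}, $\Pr[N>e^{n\lambda}]$ decays super-exponentially for $\lambda>A-B$, while for $\lambda\le A-B$ the exponent is $[B-A]_{+}=0$. Writing $\E[N^{s}]=\int_{0}^{\infty}s\lambda^{s-1}\Pr[N>\lambda]\,\dd\lambda$ in the exponential variable (tail integration, $\E[N^{s}]\doteq\max_{\lambda}e^{n\lambda s}\Pr[N>e^{n\lambda}]$ since the number of relevant scales is effectively one-dimensional and the integrand is dominated by its peak), the maximum over $\lambda\le A-B$ of $e^{n\lambda s}\cdot 1$ is attained at $\lambda=A-B$, giving $e^{n(A-B)s}$, and the range $\lambda>A-B$ contributes nothing because the tail is super-exponentially small there. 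This yields the matching upper bound. The lower bound in this regime is immediate: by \eqref{eq: left tail populated type}, for any $\epsilon>0$, $\Pr[N\ge e^{n(A-B-\epsilon)}]\to 1$, so $\E[N^{s}]\ge e^{n(A-B-\epsilon)s}\Pr[N\ge e^{n(A-B-\epsilon)}]\doteq e^{n(A-B-\epsilon)s}$, and letting $\epsilon\downarrow 0$ completes the case $A>B$.

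For the \emph{typically empty} case $A<B$, the lower bound is trivial: $N$ takes the value $1$ with probability at least $\Pr[N=1]$, and a direct computation (or Theorem \ref{thm: tail probabilities of N} with $\lambda=0<A-B$ reversed, i.e.\ $\Pr[N\ge 1]$) gives $\Pr[N\ge 1]\doteq e^{-n(B-A)}$; hence $\E[N^{s}]\ge 1^{s}\cdot\Pr[N\ge 1]\doteq e^{-n(B-A)}$. For the upper bound I would again split at $s=1$. For $s\le 1$, Jensen gives $\E[N^{s}]\le(\E[N])^{s}=e^{-n(B-A)s}$, which is \emph{weaker} than the target $e^{-n(B-A)}$ when $s<1$; so concavity alone does not suffice and I need the integer structure. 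The right approach is to use $N^{s}\le N$ for $s\le 1$ on the event $\{N\ge 1\}$ (since then $N\ge 1$ and $t^{s}\le t$ for $t\ge 1$), and $N^{s}=0$ on $\{N=0\}$; thus $\E[N^{s}]\le\E[N]=e^{-n(B-A)}$. For $s>1$, I would write $N^{s}\le N\cdot e^{nA(s-1)}$ as before, getting $\E[N^{s}]\le e^{n[A(s-1)-(B-A)]}=e^{n[As-B]}$; but I want $e^{-n(B-A)}$, and $As-B$ versus $-(B-A)=A-B$ differ by $A(s-1)>0$, so this is again too weak. The fix for $s>1$ is tail integration together with the sharp tail bound: $\E[N^{s}]\doteq\max_{\lambda}e^{n\lambda s}\Pr[N>e^{n\lambda}]$; here $A-B<0$, so $[A-B]_{+}=0$, and \eqref{eq: exponent of upper tail} says $\Pr[N>e^{n\lambda}]\doteq e^{-n(B-A)}$ for $0\le\lambda\le 0$ (i.e.\ at $\lambda=0$, meaning $\Pr[N\ge 1]\doteq e^{-(B-A)n}$) and is super-exponentially small for $\lambda>0$. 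The only contributing scale is $\lambda=0$, giving $e^{0}\cdot e^{-n(B-A)}=e^{-n(B-A)}$, matching the lower bound. (One should double-check the boundary statement of Theorem \ref{thm: tail probabilities of N} at $\lambda=0$ in the regime $A<B$, where $[A-B]_{+}=0\ge\lambda=0$, so the "first case" applies and the exponent is $[B-A]_{+}=B-A$.)

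The main obstacle I anticipate is handling $s>1$ correctly in \emph{both} regimes: the naive bounds via $N\le e^{nA}$ or via Jensen's inequality ($\E[N^{s}]\ge(\E[N])^{s}$ is the wrong direction for $s>1$) are off by a genuinely exponential factor, so the argument must route through the tail-integration identity $\E[N^{s}]\doteq\max_{\lambda\in\reals}e^{n s\lambda}\Pr[N>e^{n\lambda}]$ and invoke the super-exponential decay of the upper tail from Theorem \ref{thm: tail probabilities of N} to kill the large-$\lambda$ contribution. A secondary, more bookkeeping-level concern is justifying this $\E[N^{s}]\doteq\max_{\lambda}(\cdot)$ equivalence cleanly — it follows from $\E[N^{s}]=\int_{0}^{\infty}s u^{s-1}\Pr[N>u]\,\dd u$ by the change of variables $u=e^{n\lambda}$ and the standard Laplace-type argument that a one-dimensional integral of an exponential is, on the exponential scale, its maximum — but one must note that $N$ is bounded by $e^{nA}$, so the $\lambda$-integral is effectively over a bounded interval and no divergence issues arise. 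Once that equivalence is in hand, both parts of the theorem reduce to the explicit maximizations described above.
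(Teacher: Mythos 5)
Your proposal is correct and rests on the same two pillars as the paper's proof: the tail bounds of Theorem \ref{thm: tail probabilities of N} (in particular the super-exponential decay of $\Pr[N>e^{n\lambda}]$ for $\lambda$ above the threshold $[A-B]_{+}$, which kills the large-$N$ contribution despite the crude bound $N\le e^{nA}$) together with the trivial lower bounds $\E[N^{s}]\ge e^{n\lambda s}\Pr[N\ge e^{n\lambda}]$ and $\E[N^{s}]\ge\Pr[N=1]$. The only difference is packaging: the paper splits $\E[N^{s}]$ directly on the event $\{N\le e^{n\lambda}\}$ and lets $\lambda$ tend to the threshold, which handles all $s>0$ uniformly and avoids both your tail-integration/Laplace step and the case split on $s\lessgtr1$ with Jensen's inequality (which, as you correctly observe, is dispensable).
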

Importantly, for $A<B$ the moment is asymptotically independent of
$s>0$. 

In various advanced settings, the analysis of the TCEM also requires
to evaluate probabilistic and statistical properties of a \emph{pair}
of dependent TCEs, or even a family $\{N_{j}\}_{j=1}^{k_{n}}$ of
sub-exponential number $k_{n}\doteq1$ of TCEs, which are possibly
\emph{dependent}. For example, let $(U_{1},U_{2})$ be a pair of dependent
Bernoulli RVs so that $\Pr[U_{j}=1]=e^{-nB_{j}}$ for $j=1,2$. Typically,
$U_{j}$ are indicators for disjoint events, e.g., $U_{1}$ is the
event in which a random vector belongs to some type class, and $U_{2}$
belongs to a different type class. Thus, only one at most of the $U_{j}$
is $1$. Now, assume that we draw $e^{nA}$ IID RVs from the distribution
of $(U_{1},U_{2})$, and let $N_{j}$ denote the corresponding number
of successes for $U_{j}$, for $j=1,2$. While strictly speaking the
TCEs are dependent RVs, we next show they are asymptotically independent
in the regime we consider. Indeed, let us condition on the event that
$N_{1}=e^{n\nu}$ for some $\nu\in[0,A)$. Then, $N_{2}|N_{1}=e^{n\nu}\sim\text{Binomial}(e^{nA}-e^{n\nu},\Pr[U_{1}=1|U_{2}=0])$.
Evidently, the number of trials is $e^{nA}-e^{n\nu}\sim[1-o(1)]\cdot e^{nA}$
and the success probability is
\begin{align}
\Pr\left[U_{1}=1|U_{2}=0\right] & =\frac{\Pr\left[U_{1}=1,U_{2}=0\right]}{\Pr\left[U_{2}=0\right]}\nonumber \\
 & =\frac{\Pr\left[U_{1}=1\right]}{1-e^{-nB_{2}}}\nonumber \\
 & \sim e^{-nB_{1}}.\label{eq: conditional success probability}
\end{align}
So, up to factors that tend to $1$, the parameters of the conditional
binomial $N_{1}$ are exactly as those of the unconditional binomial.
It is easy to generalize the above argument to a sub-exponential number
of TCEs, that is, $\{N_{j}\}_{j=1}^{k_{n}}$ where $k_{n}\doteq1$,
thus showing that they are asymptotically independent. 

Indeed, if we consider the full set of TCEs, \emph{i.e.}, $\{N(Q)\}$
of all possible types, for which it must hold that $\sum N(Q)=e^{nA}$,
then each trial is successful for exactly one of the types. Thus,
the joint distribution of $\{N(Q)\}\sim\text{Multinomial}(e^{nA},\{p_{Q}\}_{Q})$,
where $p_{Q}\doteq e^{-nB(Q)}$. It is well-known that for a large
number of trials, the multinomial distribution tends to an \emph{independent}
Poisson distribution \cite[Theorem 5.6]{mitzenmacher2017probability}.
The joint distribution of TCEs in the context of a superposition codebook
(see Section \ref{subsec:Type-Class-Enumerators-in-Coding-advanced})
were analyzed in the arXiv version of \cite[Appendix D]{weinberger2019reliability}.
As we will also see in the setting of superposition coding, it is
sometimes required to analyze the probability of an intersection of
upper tail events of TCEs $\{N_{j}\}_{j=1}^{k_{n}}$, when $k_{n}\doteq1$
and where $N_{j}\sim\text{Binomial}(e^{nA_{j}},e^{-nB_{j}})$. This
is addressed by the following theorem. 
\begin{thm}
\label{thm: Intersection of tail events}Assume that $N_{j}\sim\text{\emph{Binomial}}(e^{nA_{j}},e^{-nB_{j}})$
for $j=1,\ldots,k_{n}$ and $k_{n}\doteq1$. Assume that $\lambda\in\reals$,
and $\lambda\neq A_{j}-B_{j}$ for all $j=1,\ldots,k_{n}$. Then,
\begin{equation}
\Pr\left[\bigcap_{j=1}^{k_{n}}\left\{ N_{j}\leq e^{n\lambda}\right\} \right]\doteq\I\left\{ \min_{1\leq j\leq k_{n}}\left\{ B_{j}-A_{j}+[\lambda]_{+}\right\} >0\right\} .
\end{equation}
\end{thm}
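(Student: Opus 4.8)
The plan is to reduce the claim entirely to the single‑index asymptotics of Theorem~\ref{thm: tail probabilities of N}, using only the (dependence‑free) union bound in one direction and the trivial inequality $\Pr[\bigcap_j F_j]\le\Pr[F_{j^\star}]$ in the other; in particular the dependence among the $N_j$ plays no role. Put $\delta\triangleq\min_{1\le j\le k_n}\{B_j-A_j+[\lambda]_+\}$, so the right‑hand side of the claim is $\I\{\delta>0\}$. If $\delta>0$, the complement of the intersection is $\bigcup_{j}\{N_j>e^{n\lambda}\}$, and
\begin{equation}
\Pr\left[\bigcup_{j=1}^{k_n}\{N_j>e^{n\lambda}\}\right]\le k_n\cdot\max_{1\le j\le k_n}\Pr\left[N_j>e^{n\lambda}\right].
\end{equation}
Writing $\{N_j>e^{n\lambda}\}=\{N_j\ge a_n\}$ with $a_n\triangleq\lfloor e^{n\lambda}\rfloor+1\ge e^{n[\lambda]_+}$, the elementary bound $\binom{M}{a}\le(eM/a)^a$ applied with $M=e^{nA_j}$ and success probability $e^{-nB_j}$ gives $\Pr[N_j\ge a_n]\le(e\cdot e^{n(A_j-B_j-[\lambda]_+)})^{a_n}\le(e^{1-n\delta})^{a_n}\le e^{-n\delta/2}$ for $n$ large, \emph{uniformly} in $j$ (this is essentially the computation behind \eqref{eq: exponent of upper tail}, with the explicit estimate used precisely to obtain uniformity over the $k_n$ indices rather than only an index‑by‑index limiting exponent). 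Since $k_n\doteq1$, the union bound vanishes, so $\Pr[\bigcap_j\{N_j\le e^{n\lambda}\}]\to1$, hence $\doteq1$.

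For the regime $\delta<0$, fix an index $j^\star$ with $B_{j^\star}-A_{j^\star}+[\lambda]_+<0$ and discard the remaining constraints: $\Pr[\bigcap_j\{N_j\le e^{n\lambda}\}]\le\Pr[N_{j^\star}\le e^{n\lambda}]$. If $\lambda\ge0$, then $A_{j^\star}-B_{j^\star}>\lambda$ (so $\lambda\ne A_{j^\star}-B_{j^\star}$ holds automatically); choosing $\lambda'\in(\lambda,A_{j^\star}-B_{j^\star})$, which dominates $a_n$ for large $n$, the lower‑tail half of Theorem~\ref{thm: tail probabilities of N}, namely \eqref{eq: left tail populated type} in the regime $A-B>\lambda'$, yields $\Pr[N_{j^\star}\le e^{n\lambda}]\le\Pr[N_{j^\star}<e^{n\lambda'}]\doteq e^{-n\infty}$. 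If $\lambda<0$, then $A_{j^\star}>B_{j^\star}$ and $\{N_{j^\star}\le e^{n\lambda}\}=\{N_{j^\star}=0\}$ eventually, so by $1-x\le e^{-x}$,
\begin{equation}
\Pr\left[N_{j^\star}=0\right]=\left(1-e^{-nB_{j^\star}}\right)^{e^{nA_{j^\star}}}\le\exp\left\{-e^{n(A_{j^\star}-B_{j^\star})}\right\},
\end{equation}
which is super‑exponentially small. In either case the intersection probability is $\doteq e^{-n\infty}$, consistent with $\I\{\delta>0\}=0$.

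The one genuinely delicate point — and where I expect the real care to be needed — is the knife‑edge $\delta=0$. When $\lambda\ge0$ it cannot occur under the standing hypothesis, since $\delta=0$ would read $\lambda=A_j-B_j$ at the minimizing index, which is excluded. When $\lambda<0$, $\delta=0$ forces $A_j=B_j$ at the minimizing index, a Poisson‑type configuration sitting on the boundary of the open parameter regions on which Theorem~\ref{thm: tail probabilities of N} is phrased; there $\Pr[\bigcap_j\{N_j=0\}]$ is merely sub‑exponential (it may converge to a positive constant or decay only sub‑exponentially), so the stated dichotomy holds in the $\doteq$‑sense under the open‑interval convention adopted just before Theorem~\ref{thm: tail probabilities of N}. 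Apart from this boundary bookkeeping and the uniformity issue highlighted in the first paragraph, the argument is a routine assembly of the two halves of Theorem~\ref{thm: tail probabilities of N}.
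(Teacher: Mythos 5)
Your proof is correct and follows essentially the same route as the paper's: a union bound over the complements of the $k_{n}\doteq1$ events when $\min_{j}\{B_{j}-A_{j}+[\lambda]_{+}\}>0$, and a reduction to a single worst index with a super-exponentially small lower tail when the minimum is negative. Your explicit uniform-in-$j$ estimate via $\binom{M}{a}\le(eM/a)^{a}$ and your remark on the knife-edge $\delta=0$ are both sound; in particular, you are right that when $\lambda<0$ the hypothesis $\lambda\neq A_{j}-B_{j}$ does not exclude $A_{j}=B_{j}$ at the minimizing index, a boundary configuration (where $N_{j}$ is asymptotically Poisson and $\Pr[N_{j}=0]\to e^{-1}$, so the stated dichotomy actually fails) that the paper's own case split also silently omits.
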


\subsection{Type Class Enumerators in Advanced Coding Problems \label{subsec:Type-Class-Enumerators-in-Coding-advanced}}

In this section, we demonstrate how the TCEM can be used in various
advanced coding problems. We show how the error probability in these
settings can be expressed via properly defined TCEs, which share similar
properties to the TCE analyzed in the previous section. For brevity,
we will not state here the resulting exponents -- they can be found
in the cited references -- and they provide the exact exponent for
the random ensemble of interest. Moreover, oftentimes the exponents
of the TCEM are also the best possible known, and are strictly better
than exponent bounds obtained via classic bounding techniques.

\subsubsection{Superposition Coding }

We begin with the asymmetric broadcast channel (or a broadcast channel
with a degraded message set) \cite{bergmans1973random,cover1972broadcast,gallager1974capacity,korner1977general,korner1980universally},
which is a prototypical example for a multiuser channel \cite{el2011network}.
This setting introduces new aspects, for which the TCEM is especially
useful in deriving exact random-coding error exponents. We focus on
a simple version of this setting, in which a single transmitter wishes
to communicate different messages to two receivers with different
channels, and so possibly different point-to-point capacities. The
first channel is thus referred to as the \emph{strong user} channel,
and the second as the \emph{weak user} channel\emph{. }We denote the
strong user (resp. weak user) channel by $W_{y}$ (resp. $W_{z}$),
which is from the input alphabet is ${\cal X}$ to the output alphabet
${\cal Y}$ (resp. ${\cal Z}$). 

Rather than drawing a regular random code for this channel, Cover
\cite{cover1972broadcast} and then Bergman \cite{bergmans1973random},
proposed to use \emph{superposition coding}, or an \emph{hierarchical
codebook}. In this coding method, the rate is split as $R=R_{z}+R_{y}$,
and the message is thus determined by two indices. In the random coding
regime, the codebook is constructed as follows. An auxiliary alphabet
${\cal U}$ is chosen along with a joint input type $P_{UX}$. Then,
\emph{$e^{nR_{z}}$ cloud centers} $\tilde{\mathfrak{C}}_{n}=\{\boldsymbol{U}_{i}\}_{i=1}^{e^{nR_{z}}}$
are drawn from the fixed-composition ensemble of input type $P_{U}$.
For each cloud center, a sub-codebook $\mathfrak{C}_{n}(i)=\{\boldsymbol{X}_{ij}\}_{j=1}^{e^{nR_{y}}}$
of \emph{satellite codewords} is chosen uniformly at random from the
conditional type class ${\cal T}_{n}(P_{X|U}|\boldsymbol{U}_{i})$.
Alternatively, this sub-codebook is referred to as a \emph{bin} \cite{merhav2014exact}.
The random codebook is then $\mathfrak{C}_{n}=\bigcup_{i=1}^{e^{nR_{z}}}\mathfrak{C}_{n}(i)=\{\boldsymbol{X}_{ij}\}$
which has size $e^{n(R_{y}+R_{z})}$ and thus is capable of sending
messages at a total rate of $R=R_{y}+R_{z}$. The weak user is only
intended to decoded the sub-codebook, that is, to decide which sub-codebook
$\{{\cal C}_{n}(i)\}$ contains the transmitted message, and thus
achieve a rate of $R_{z}$ (the rate of the common message, indexed
by $i$). The strong user decodes the codeword and achieves the total
rate $R$ (the common and the private message, indexed by $(i,j)$). 

Let us focus on the weak user. For a given hierarchical codebook ${\cal C}_{n}$,
the ML decoder, which minimizes the error probability of the weak
user, uses the likelihood 
\begin{equation}
W_{z}\left(\boldsymbol{Z}|{\cal C}_{n}(i)\right)\triangleq\Pr\left[\boldsymbol{Z}=\boldsymbol{z}|i\right]=\frac{1}{e^{nR_{y}}}\sum_{j=1}^{e^{nR_{y}}}W_{z}(\boldsymbol{z}|\boldsymbol{x}_{ij})=\frac{1}{e^{nR_{y}}}\sum_{j=1}^{e^{nR_{y}}}e^{n\cdot\alpha_{z}(\hat{Q}_{\boldsymbol{z},\boldsymbol{x}_{ij}})}\label{eq: index bin decoding}
\end{equation}
with the choice $\alpha_{z}(Q)=f_{z}(Q)\triangleq\E_{Q}[\ln W_{z}(\boldsymbol{Z}|\boldsymbol{X})]$,
that is, using the true channel likelihood. One can also replace this
choice with a channel-independent one, e.g., that of the MMI rule
$\alpha_{z}(Q)=I(Q_{XZ})$. In any case, the score of this decoder
for a single message $i$ is comprised of a sum over an \emph{exponential}
number of $e^{nR_{y}}$ satellite codewords . The effect of such a
decoding rule on the error exponent analysis is substantial. Indeed,
for the point-to-point channel considered before, an error event from
$\boldsymbol{x}$ to $\tilde{\boldsymbol{x}}$ given output $\boldsymbol{y}$
occurs for the ML decoder whenever $W(\boldsymbol{y}|\tilde{\boldsymbol{x}})\geq W(\boldsymbol{y}|\boldsymbol{x})$.
This event can be expressed using the corresponding joint types as
$f(\hat{Q}_{\tilde{\boldsymbol{x}}\boldsymbol{y}})\geq f(\hat{Q}_{\boldsymbol{x}\boldsymbol{y}})$,
and directly leads to a simple constraint $f(\tilde{Q}_{XY})\geq f(Q_{XY})$
in \eqref{eq: random coding exponent second} (when $\alpha(\cdot)=f(\cdot)$).
However, $W_{z}(\cdot|{\cal C}_{n}(i))$ is \emph{not} a memoryless
channel, and so the event of making an error from $i$ to $\tilde{i}$,
that is, $W_{z}(\boldsymbol{Z}|{\cal C}_{n}(\tilde{i}))\geq W_{z}(\boldsymbol{Z}|{\cal C}_{n}(i))$
cannot be expressed a simple relation between types as before. A naive
use of a union bound or Jensen-type inequalities to analyze this sum
of exponential number of terms typically fails in providing the exact
exponent. The TCEM ameliorates this by partitioning the summation
over the $e^{nR_{y}}$ private codewords of the strong user according
to their joint type $\hat{Q}_{\boldsymbol{z},\boldsymbol{x}_{ij}}$,
thus transforming the sum over an exponential number of likelihoods
to a sum over a polynomial number of average likelihoods. To show
this, we next evaluate the ensemble-average error probability of the
weak user. Nonetheless, we will do this in a slightly different way
compared to standard channel coding, in order to demonstrate another
technique. 

We assume, without loss of generality, that the first codeword $(1,1)$
is transmitted, and thus fix $(\boldsymbol{u}_{1},{\cal C}_{n}(1))$
as well as the output vector $\boldsymbol{z}$. The error probability
conditioned on these RVs is given by 
\begin{equation}
\Pr\left[\bigcup_{i=2}^{e^{nR_{z}}}\left\{ W_{z}(\boldsymbol{z}|{\cal \mathfrak{C}}_{n}(i))\geq W_{z}(\boldsymbol{z}|{\cal C}_{n}(1))\right\} \right],\label{eq: conditional error probability ABC first}
\end{equation}
where ${\cal \mathfrak{C}}_{n}(i)$ is the random code for the $i$th
bin. Now, we use the fact that the truncated union bound is exponentially
tight for pairwise independent events. That is, if $\{{\cal E}_{m}\}$
are \emph{pairwise independent} events then (e.g., \cite[p. 109, Lemma A.2]{shulman2003communication},
or from de Caen's inequality \cite{de1997lower})
\begin{equation}
\frac{1}{2}\cdot\min\left\{ \sum_{m}\Pr\left[{\cal E}_{m}\right],1\right\} \leq\Pr\left[\bigcup_{m}{\cal E}_{m}\right]\leq\min\left\{ \sum_{m}\Pr\left[{\cal E}_{m}\right],1\right\} .\label{eq: tightness of clipped union bound}
\end{equation}
Importantly, the number of events is arbitrary and could be exponential,
while still preserving exponential tightness. This bound can be further
generalized, as was shown in \cite[Appendix A]{scarlett2016multiuser}.
Exploiting this result and the fact that the events in \eqref{eq: conditional error probability ABC first}
are independent, we obtain that the probability is exponentially equal
to 
\begin{equation}
\min\left\{ e^{nR_{z}}\cdot\Pr\left[W_{z}(\boldsymbol{z}|{\cal \mathfrak{C}}_{n}(2))\geq W_{z}(\boldsymbol{z}|{\cal C}_{n}(1))\right],1\right\} .
\end{equation}
We may now focus on the inner probability, and as $\boldsymbol{z}$
and ${\cal C}_{n}(1)$ are fixed at this moment, we set, for brevity,
$s({\cal C}_{n}(1),\boldsymbol{z})=\frac{1}{n}\ln W_{z}(\boldsymbol{z}|{\cal C}_{n}(1))$.
We evaluate this probability in two steps. First, we condition on
$\boldsymbol{U}_{2}=\boldsymbol{u}_{2}$ and compute the conditional
probability according to a random choice of $\mathfrak{C}_{n}(2)$.
To this end, we define a proper TCE.
\begin{defn}[TCE for random-coding exponent of superposition codes]
\label{def: TCE superposition} For a superposition codebook ${\cal C}_{n}$,
a cloud center $\boldsymbol{u}$, an output vector $\boldsymbol{z}$
and a joint type $Q_{UXZ}$ such that $\hat{Q}_{\boldsymbol{u}\boldsymbol{z}}=Q_{UZ}$,
let 
\begin{equation}
N_{\boldsymbol{u},\boldsymbol{z}}(Q_{UXZ},{\cal C}_{n}(i))\triangleq\left|\left\{ 1\leq j\leq e^{nR_{y}}\colon(\boldsymbol{u},\boldsymbol{X}_{i,j},\boldsymbol{z})\in{\cal T}_{n}(Q_{UXZ})\right\} \right|.\label{eq: enumerator superposition}
\end{equation}
This TCE counts the number of codewords in a single bin ${\cal C}_{n}(i)$
of a superposition code defined by the cloud center $\boldsymbol{u}$,
which have a joint type $Q_{UXZ}$ with $\boldsymbol{z}$. By the
method of types, when $\boldsymbol{X}_{j}\sim\text{Uniform}[{\cal T}_{n}(Q_{X|U}|\boldsymbol{u})]$
\begin{equation}
\Pr\left[(\boldsymbol{u},\boldsymbol{X}_{j},\boldsymbol{z})\in{\cal T}_{n}(Q_{UXZ})\right]=k_{n}\cdot e^{-nI_{Q}(X;Z|U)}.
\end{equation}
for some $k_{n}\doteq1$. So, for a random codebook $\mathfrak{C}_{n}(i)=\{\boldsymbol{X}_{i,j}\}$,
\begin{equation}
N_{\boldsymbol{u},\boldsymbol{z}}(Q_{UXZ},\mathfrak{C}_{n}(i))=\sum_{j=1}^{e^{nR_{y}}}\I\left\{ (\boldsymbol{u},\boldsymbol{X}_{j},\boldsymbol{z})\in{\cal T}_{n}(Q_{UXZ})\right\} \sim\text{Binomial}\left(e^{nR_{y}},k_{n}\cdot e^{-nI_{Q}(X;Z|U)}\right).\label{eq: superposition enumerator}
\end{equation}
\end{defn}
As before, we will omit for brevity ${\cal C}_{n}(i)$ and $(\boldsymbol{u},\boldsymbol{z})$
from the notation of the TCE, as it does not affect its probability
distribution. With Definition \ref{def: TCE superposition}, the probability
of interest is 
\begin{align}
\Pr\left[W_{z}(\boldsymbol{z}|{\cal \mathfrak{C}}_{n}(2))\geq e^{ns}\right] & =\Pr\left[\sum_{Q_{UXZ}}N(Q_{UXZ})\geq e^{n\cdot s({\cal C}_{n}(1),\boldsymbol{z})}\right]\label{eq: weak user enumartor lower threshold}\\
 & \doteq\Pr\left[\max_{Q_{UXZ}}N(Q_{UXZ})\geq e^{n\cdot s({\cal C}_{n}(1),\boldsymbol{z})}\right]\nonumber \\
 & \doteq\max_{Q_{UXZ}}\Pr\left[N(Q_{UXZ})\geq e^{n\cdot s({\cal C}_{n}(1),\boldsymbol{z})}\right],
\end{align}
where we have used the summation--maximization equivalence \eqref{eq: max-to-sum-equivalence}.
Evidently, continuing evaluating this bound can be done by using the
tail probability of $N(Q_{UXZ})$, similarly to Section \ref{subsec:Probabilistic-Properties-of}. 

Given an exponentially-tight expression, we may average afterwards
over $\boldsymbol{U}_{2}$ by considering joint types $Q_{UZ}$ agreeing
with $\hat{Q}_{\boldsymbol{u}_{2},\boldsymbol{z}}$. At the next step,
it will be required to average over $(\boldsymbol{Z},\mathfrak{C}_{n}(1))$
and handling the randomness of $s(\mathfrak{C}_{n}(1),\boldsymbol{Z})$.
This again can be achieved with similarly defined TCEs. However, since
now the inequality is in a reversed direction compared to \eqref{eq: weak user enumartor lower threshold},
when proceeding this way, one encounters for some $t\in\reals$ the
following expression \cite[Proof of Theorem 1]{merhav2014exact} \cite[Section 5.1]{averbuch2018exact},
\begin{align}
 & \Pr\left[\sum_{Q_{UXZ}}N(Q_{UXZ})e^{n\cdot\alpha_{z}(Q_{UXZ})}\leq e^{n\cdot t}\right]\nonumber \\
 & \doteq\Pr\left[\bigcap_{Q_{UXZ}}\left\{ N(Q_{UXZ})\leq e^{n\cdot[t-\alpha_{z}(Q_{UXZ})]}\right\} \right]\nonumber \\
 & \doteq\I\left\{ \min_{Q_{UXZ}}\left\{ I_{Q}(X;Z|U)-R_{y}+\left[t-\alpha_{z}(Q_{UXZ})\right]_{+}\right\} >0\right\} 
\end{align}
where here $N(Q_{UXZ})$ is defined as in \eqref{eq: enumerator superposition},
yet for $\mathfrak{C}_{n}(1)$, and excluding $\boldsymbol{x}_{1,1}$,
and the last exponential inequality follows from Theorem \ref{thm: Intersection of tail events}.

\subsubsection{Distributed Compression and Random Binning }

Our next setting pertains to a source coding problem, and specifically,
the Slepian--Wolf (SW) problem of distributed lossless compression
\cite{slepian1973noiseless}. In this problem a source $X$ with a
finite alphabet ${\cal X}$ is given at the encoder side, and side-information
$Y$ of a finite alphabet ${\cal Y}$ that is at the decoder side.
The pair $(X,Y)$ is correlated, and follows a joint distribution
$P_{XY}$, and vectors $(\boldsymbol{X},\boldsymbol{Y})\in{\cal X}^{n}\times{\cal Y}^{n}$
are emitted from $P_{XY}$, with IID pairs of symbols. The source
vector $\boldsymbol{X}$ is compressed by assigning it to an index
$Z=f(\boldsymbol{X})$ of one of $e^{nR}$ possible bins, where $f\colon{\cal X}^{n}\to\{1,2,\ldots,e^{nR}\}$
is a called a \emph{binning} rule. Given the side-information $\boldsymbol{Y}=\boldsymbol{y}$
and the bin $Z=z$, the decoder decides that the source vector is
\begin{equation}
\hat{\boldsymbol{x}}(\boldsymbol{y},z)=\argmax_{\boldsymbol{x}\in f^{-1}(z)}\Pr\left[\boldsymbol{X}=\boldsymbol{x}|\boldsymbol{Y}=\boldsymbol{y}\right]\triangleq\argmax_{\boldsymbol{x}\in f^{-1}(z)}\Pr\left[\boldsymbol{x}|\boldsymbol{y}\right].
\end{equation}
For a giving binning rule $f$, the error probability is then given
by 
\begin{equation}
P_{\mathsf{e}}(f)\triangleq\sum_{\boldsymbol{x}\in{\cal X}^{n}}\sum_{\boldsymbol{y}\in{\cal Y}^{n}}\Pr\left[\boldsymbol{X}=\boldsymbol{x},\boldsymbol{Y}=\boldsymbol{y}\right]\cdot\I\left[\bigcup_{\boldsymbol{x}'\neq\boldsymbol{x}\colon\Pr[\boldsymbol{x}'|\boldsymbol{y}]\geq\Pr[\boldsymbol{x}|\boldsymbol{y}]}\left\{ f(\boldsymbol{x}')=f(\boldsymbol{x})\right\} \right].
\end{equation}
For a joint type $Q_{XY}$, let us further denote , the expected log-posterior
as
\begin{equation}
g(Q_{XY})\triangleq\E_{Q}\left[\ln P_{X|Y}(X|Y)\right],
\end{equation}
so that for any $(\boldsymbol{x},\boldsymbol{y})\in{\cal T}_{n}(Q_{XY})$
it holds that $\Pr[\boldsymbol{x}|\boldsymbol{y}]=e^{ng(Q_{XY})}$.

As expected, the optimal binning rule $f$ is infeasible to find.
However, whenever the compression rate $R$ is above the minimal required
rate $H(X|Y)$, the ensemble average of the error probability over
random choice of binning functions decays exponentially \cite{csiszar1981graph,ahlswede1982good,chen2008universal,gallager1976source,merhav2020universal,merhav2021more,oohama1994universal,weinberger2015optimum}.
In fact, the optimal error exponent for the SW problem is directly
related to the random-coding error exponent in channel coding (see,
e.g., \cite{ahlswede1982good,chen2008universal,weinberger2015optimum}).
Thus, we may evaluate the error probability averaged over random choice
of binning rules, referred to as \emph{random binning}. Accordingly,
the exponential decay of the average error probability is called the
random-binning error exponent. In simple random binning, the random
rule $F$ is such that the bin of any $\boldsymbol{x}\in{\cal X}^{n}$
is chosen uniformly at random from the $e^{nR}$ possible bins. Analogously
to TCEs for channel coding, we define the following TCE:
\begin{defn}[TCE for random-binning exponent]
\label{def: TCE random binning}For a binning rule $f$, a side-information
vector $\boldsymbol{y}$, an encoded index $z$, and a joint type
$Q_{XY}'$ such that $\hat{Q}_{\boldsymbol{y}}=Q_{Y}'$, let
\begin{equation}
\tilde{N}_{\boldsymbol{y},z}(Q_{XY}',f)\triangleq\left|\left\{ (\boldsymbol{x}',\boldsymbol{y})\in{\cal T}_{n}(Q_{XY}')\cap f^{-1}(z)\right\} \right|.\label{eq: enumerator for random binning}
\end{equation}
The TCE $\tilde{N}_{\boldsymbol{y},z}(Q_{XY},f)$ counts the number
of vectors in ${\cal T}_{n}(Q_{X})$ except for $\boldsymbol{x}$,
which have joint type $Q_{XY}'$ with $\boldsymbol{y}$, and that
are mapped to the index $z$. By the method of types, when 
\begin{equation}
|{\cal T}_{n}(Q_{XY}')|=k_{n}\cdot e^{nH(Q_{XY}')}
\end{equation}
for some $k_{n}\doteq1$. So, for a random binning rule $F(\boldsymbol{x}')\sim\text{Uniform}\{1,2,\ldots,e^{nR}\}$,
\begin{align}
\tilde{N}_{\boldsymbol{y},z}(Q_{XY}',F) & =\sum_{\boldsymbol{x}'\in{\cal T}_{n}(Q_{X})\colon\boldsymbol{x}'\neq\boldsymbol{x}}\I\left\{ (\boldsymbol{x}',\boldsymbol{y})\in{\cal T}_{n}(Q_{XY}')\cap F^{-1}(z)\right\} \nonumber \\
 & \sim\text{Binomial}\left(k_{n}\cdot e^{nH_{Q''}(X|Y)},e^{-nR}\right).\label{eq: random binning enumerator}
\end{align}
This TCE displays a plausible \emph{duality} with the TCE of channel
coding: In random coding the number of trials is fixed and the success
probability is type-dependent, whereas in random binning, it is the
other way round. 
\end{defn}
As in channel-coding analysis, we will simplify the notation to $\tilde{N}_{\boldsymbol{y},z}(Q_{XY}')$
or even just $\tilde{N}(Q_{XY}')$ when the TCE is an RV. Given Definition
\ref{def: TCE random binning}, the random-binning error probability
is then given by
\begin{align}
\overline{P}_{\mathsf{e}} & \triangleq\E\left[P_{\mathsf{e}}(F)\right]\nonumber \\
 & =\sum_{\boldsymbol{x}\in{\cal X}^{n}}\sum_{\boldsymbol{y}\in{\cal Y}^{n}}\Pr\left[\boldsymbol{X}=\boldsymbol{x},\boldsymbol{Y}=\boldsymbol{y}\right]\cdot\Pr\left[\bigcup_{Q_{XY}'\colon Q_{Y}'=\hat{Q}_{\boldsymbol{y}},\;g(Q_{XY}')\geq g(\hat{Q}_{\boldsymbol{x}\boldsymbol{y}})}\left\{ \tilde{N}_{\boldsymbol{y},f(\boldsymbol{x})}(Q_{XY}')\geq1\right\} \right]\nonumber \\
 & \doteq\sum_{Q_{XY}}\Pr\left[(\boldsymbol{X},\boldsymbol{Y})\in{\cal T}_{n}(Q_{XY})\right]\cdot\sum_{Q'\colon Q'_{Y}=Q_{Y},\;g(Q')\geq g(Q)}\Pr\left[\tilde{N}(Q_{XY}')\geq1\right]\nonumber \\
 & \doteq\max_{Q_{XY}}\max_{Q'_{XY}\colon Q'_{Y}=Q_{Y},\;g(Q'_{XY})\geq g(Q_{XY})}e^{-nD(Q_{XY}||P_{XY})}\cdot\Pr\left[\tilde{N}(Q'_{XY})\geq1\right].
\end{align}
The tail probability $\Pr[\tilde{N}(Q_{XY}')\geq1]$ can be analyzed
as in Section \ref{subsec:Probabilistic-Properties-of}, and this
results the exact random-binning exponent.

\subsubsection{Generalized Decoders \label{subsec:Generalized-Decoders}}

\paragraph{Erasure/List Decoders.}

An erasure decoder may either decode a message or declare an \emph{erasure},
that is, not to output any message. A list decoder, may output multiple
codewords, whose number is either fixed in advance, or varies according
to the channel output. As was shown by Forney \cite{forney1968exponential},
the optimal erasure decoder uses the posterior probability, rather
than the likelihood, to decide on its output. Concretely, consider
a codebook ${\cal C}_{n}=\{\boldsymbol{x}_{m}\}$, from which a codeword
$\boldsymbol{X}$ will be chosen under the uniform distribution. The
codeword is then transmitted over a channel $W$, and given an output
vector $\boldsymbol{Y}=\boldsymbol{y}$, the posterior probability
of the $m$th codeword is given by Bayes rule as 
\begin{equation}
\Pr\left[\boldsymbol{X}=\boldsymbol{x}_{m}|\boldsymbol{Y}=\boldsymbol{y}\right]=\frac{W\left[\boldsymbol{y}|\boldsymbol{x}_{m}\right]}{\sum_{m'}W\left[\boldsymbol{y}|\boldsymbol{x}_{m'}\right]}.
\end{equation}
If the maximal posterior over codewords is large enough, then the
maximizing codeword is decoded. Otherwise, an erasure is declared.
Equivalently, we may set a threshold $T>0$, so that the optimal erasure
decoder outputs message $m$ if $\boldsymbol{x}_{m}$ is the (unique)
codeword such that
\begin{equation}
\frac{W\left[\boldsymbol{y}|\boldsymbol{x}_{m}\right]}{\sum_{m'\neq m}W\left[\boldsymbol{y}|\boldsymbol{x}_{m'}\right]}>e^{nT}.\label{eq: erasure list decoder}
\end{equation}
The threshold $T$ determines the trade-off between two types of failure
events: An \emph{erasure event} ${\cal E}_{1}'({\cal C}_{n})$, in
which no codeword is decoded, or an \emph{undetected error event}
${\cal E}_{2}({\cal C}_{n})$, in which a wrong codeword is decoded.
The event ${\cal E}_{1}({\cal C}_{n})={\cal E}_{1}'({\cal C}_{n})\cup{\cal E}_{2}({\cal C}_{n})$
is then called the \emph{total-error event}. As can be seen, the score
of the decoder is a complicated function, since the denominator in
\eqref{eq: erasure list decoder} includes a summation over an exponential
number of likelihoods. 

To bound the probability of the total-error event, Forney has introduced
a parameter $s\in[0,1]$ and derived a Chernoff-style bound. As said,
this bounding method is not guaranteed to be tight, and indeed leads
to strictly sub-optimal exponents. The TCEM addresses the problem
of evaluating the probability of the total-error event, by using the
TCE $N(\tilde{Q}_{XY})$ of Definition \ref{def: TCE random coding}.
For a random codebook $\mathfrak{C}_{n}$, 
\begin{align}
 & \E\left\{ \Pr\left[{\cal E}_{1}(\mathfrak{C}_{n})\right]\right\} \nonumber \\
 & =\E\left\{ \Pr\left[{\cal E}_{1}(\mathfrak{C}_{n})|\boldsymbol{X}_{1}\text{ transmitted}\right]\right\} \nonumber \\
 & =\Pr\left[\sum_{m'>1}W\left[\boldsymbol{Y}|\boldsymbol{X}_{m'}\right]\geq e^{-nT}\cdot W\left[\boldsymbol{Y}|\boldsymbol{X}_{1}\right]\right]\nonumber \\
 & =\sum_{Q_{XY}}\Pr\left[(\boldsymbol{X}_{1},\boldsymbol{Y})\in{\cal T}_{n}(Q_{XY})\right]\cdot\Pr\left[\sum_{Q_{XY}'\colon\;\tilde{Q}_{Y}=Q_{Y}}N(Q_{XY}')e^{nf(Q_{XY}')}\geq e^{-nT}\cdot e^{-nf(Q_{XY})}\right].
\end{align}
Now, the inner probability may be evaluated by the summation--maximization
equivalence
\begin{align}
 & \Pr\left[\sum_{Q_{XY}'\colon\;Q_{Y}'=Q_{Y}}N(Q_{XY}')e^{nf(Q_{XY}')}\geq e^{-nT}\cdot e^{-nf(Q_{XY})}\right]\nonumber \\
 & \doteq\Pr\left[\max_{Q_{XY}'\colon\;Q_{Y}'=Q_{Y}}N(Q_{XY}')e^{nf(Q_{XY}')}\geq e^{-nT}\cdot e^{-nf(Q_{XY})}\right]\nonumber \\
 & =\Pr\left[\bigcup_{Q_{XY}'\colon\;Q_{Y}'=Q_{Y}}\left\{ N(Q_{XY}')e^{nf(Q_{XY}')}\geq e^{-nT}\cdot e^{-nf(Q_{XY})}\right\} \right]\nonumber \\
 & \doteq\max_{Q_{XY}'\colon\;Q_{Y}'=Q_{Y}}\Pr\left[N(Q_{XY}')\geq e^{-n[f(Q_{XY})-f(Q_{XY}')-T]}\right].
\end{align}
The derivation is completed by the exact exponential analysis of the
tail probability of $N(Q_{XY}')$ from Section \ref{subsec:Probabilistic-Properties-of}.
It can be shown that the resulting random-coding error exponent of
$\E[{\cal E}_{2}(\mathfrak{C}_{n})]$ is larger by exactly $T$ than
the exponent of the total-error event \cite{huleihel2016erasure}.
Thus, the exponent of the total-error event is equal to that of the
erasure event. When $T<0$, the decoder in \eqref{eq: erasure list decoder}
is a list decoder, with a \emph{variable} list size. In this regime,
the trade-off is between the exponent of the error event (where the
correct codeword is not in the output list), and the normalized logarithm
of the expected list size. It can be shown that the expressions for
these values are exactly as the ones of the total-error exponent and
undetected error exponent in the erasure regime $T>0$, and so the
analysis is identical, while just allowing $T<0$. 

\paragraph{Likelihood Decoders.}

Similarly to an erasure/list decoder, a likelihood decoder \cite{yassaee2013technique}
also uses the posterior probability. However, it outputs a \emph{random}
codeword based on this posterior, so the decoded message is $m$ with
probability 
\begin{equation}
\Pr\left[\boldsymbol{x}_{m}|\boldsymbol{y}\right]=\frac{W(\boldsymbol{Y}|\boldsymbol{x}_{m})}{\sum_{\tilde{m}=1}^{e^{nR}}W(\boldsymbol{Y}|\boldsymbol{x}_{\tilde{m}})}.\label{eq: ordinary likelihood decoder}
\end{equation}
More generally, one may choose a continuous function $g(Q_{XY})$
of joint types and an inverse-temperature $\beta>0$, and consider
a likelihood decoder that decodes message $m$ with probability 
\begin{equation}
\Pr\left[\boldsymbol{x}_{m}|\boldsymbol{y}\right]=\frac{\exp\left[n\beta g(\hat{Q}_{\boldsymbol{x}_{m},\boldsymbol{y}})\right]}{\sum_{\tilde{m}=1}^{e^{nR}}\exp\left[n\beta g(\hat{Q}_{\boldsymbol{x}_{\tilde{m}},\boldsymbol{y}})\right]}.\label{eq: generalized likelihood decoder}
\end{equation}
When $\beta=1$ and $g(Q_{XY})=f(Q_{XY})=\E_{Q}[\ln W(Y|X)]$ then
the \emph{ordinary likelihood decoder} \eqref{eq: ordinary likelihood decoder}
is reproduced. However, $g(\cdot)$ can be replaced by a choice that
is mismatched to the channel, or even by a universal function such
as $g(Q_{XY})=I(Q_{XY})$. Similar to finite-temperature decoding
\cite{rujan1993finite}, the parameter $\beta$ controls the ``amount
of stochasticity'' of the decoder: If $\beta\to\infty$ then the
decoder becomes deterministic, reproducing the score-based decoder
with $\alpha\equiv g$. As the temperature increases and $\beta$
decreases, the decoder becomes more random (at the extreme $\beta=0$,
the output codeword is chosen uniformly at random). On the upside,
The ensemble average error probability follows a remarkably simple
formula, given by
\begin{align}
\overline{P_{\mathsf{e}}}=\E\left[P_{\mathsf{e}}(\mathfrak{C}_{n})\right] & =\E\left[P_{\mathsf{e}}\left(\mathfrak{C}_{n}|\boldsymbol{X}_{1}\text{ transmitted}\right)\right]\nonumber \\
 & =\E\left[\frac{\sum_{m\geq2}\exp\left[n\beta g(\hat{Q}_{\boldsymbol{X}_{m},\boldsymbol{Y}})\right]}{\sum_{m}\exp\left[n\beta g(\hat{Q}_{\boldsymbol{X}_{m},\boldsymbol{Y}})\right]}\right].
\end{align}
On the downside, in this expression, \emph{both} the numerator and
denominator contain an exponential number of codewords, and this makes
its analysis more challenging. Following the TCEM, let us condition
on the joint type $(\boldsymbol{X}_{1},\boldsymbol{Y})\in{\cal T}_{n}(Q_{XY})$.
Then, using the TCE of Definition \ref{def: TCE random coding}, 
\begin{equation}
\overline{P_{\mathsf{e}}}=\sum_{Q_{XY}}\Pr\left[(\boldsymbol{X}_{1},\boldsymbol{Y})\in{\cal T}_{n}(Q_{XY})\right]\cdot e(Q_{XY})
\end{equation}
where, for any given $Q_{XY}$, 
\begin{equation}
e(Q_{XY})=\E\left[\frac{\sum_{\tilde{Q}_{XY}\colon\tilde{Q}_{Y}=Q_{Y}}N(\tilde{Q}_{XY})e^{n\beta g(\tilde{Q}_{XY})}}{e^{n\beta g(Q_{XY})}+\sum_{\tilde{Q}_{XY}\colon\tilde{Q}_{Y}=Q_{Y}}N(\tilde{Q}_{XY})e^{n\beta g(\tilde{Q}_{XY})}}\right].
\end{equation}
Using the tail-integration identity $\E[X]=\int_{0}^{\infty}\Pr[X\geq t]\dd t$,
which holds for any non-negative RV. $X$, as well as the summation--maximization
equivalence, we obtain 
\begin{align}
e(Q_{XY}) & \doteq\E\left[\min\left\{ \sum_{\tilde{Q}_{XY}\colon\tilde{Q}_{Y}=Q_{Y}}N(\tilde{Q}_{XY})e^{n[\beta g(\tilde{Q}_{XY})-\beta g(Q_{XY})]},1\right\} \right]\nonumber \\
 & =\int_{0}^{\infty}\Pr\left[\min\left\{ \sum_{\tilde{Q}_{XY}\colon\tilde{Q}_{Y}=Q_{Y}}N(\tilde{Q}_{XY})e^{n[\beta g(\tilde{Q}_{XY})-\beta g(Q_{XY})]},1\right\} \geq t\right]\dd t\nonumber \\
 & =\int_{0}^{1}\Pr\left[\sum_{\tilde{Q}_{XY}\colon\tilde{Q}_{Y}=Q_{Y}}N(\tilde{Q}_{XY})e^{n[\beta g(\tilde{Q}_{XY})-\beta g(Q_{XY})]}\geq t\right]\dd t\nonumber \\
 & =n\int_{0}^{\infty}e^{-n\theta}\cdot\Pr\left[\sum_{\tilde{Q}_{XY}\colon\tilde{Q}_{Y}=Q_{Y}}N(\tilde{Q}_{XY})e^{n[\beta g(\tilde{Q}_{XY})-\beta g(Q_{XY})]}\geq e^{-n\theta}\right]\dd\theta\nonumber \\
 & \doteq\max_{\tilde{Q}_{XY}\colon\tilde{Q}_{Y}=Q_{Y}}n\int_{0}^{\infty}e^{-n\theta}\cdot\Pr\left[N(\tilde{Q}_{XY})\geq e^{-n[\theta-\beta g(\tilde{Q}_{XY}+\beta g(Q_{XY}))]}\right]\dd\theta.
\end{align}
The derivation continuous by plugging into the integral the tight
exponent of the tail probability of $N(\tilde{Q}_{XY})$ from Section
\ref{subsec:Probabilistic-Properties-of}. Then, the exponential decay
rate of the integral can be determined using Laplace method from Chapter
\ref{sec: laplacesaddlepoint}. After averaging WRT to $(\boldsymbol{X}_{1},\boldsymbol{Y})$,
the resulting expression may be minimized over $Q_{XY}$ to obtain
the exact exponent of the ensemble-average error probability. 

\subsubsection{Error Exponent of the Typical Random Code}

Inspecting the definition of the random-coding error exponent \eqref{eq: random coding definition},
it is appears to be somewhat at odds with the goal of being a performance
measure for the exponent of a typical random codebook from the ensemble.
Indeed, a direct way is to evaluate the error exponent of the \emph{typical
random code} is by the exponent 
\begin{equation}
E_{\text{trc}}(R)\triangleq\E\left[-\frac{1}{n}\ln P_{\mathsf{e}}(\mathfrak{C}_{n})\right].
\end{equation}
This exponent leads to tighter bounds since Jensen's inequality assures
that $E_{\text{trc}}(R)\geq E_{\text{rc}}(R)$. The exponent $E_{\text{trc}}(R)$
is determined by \emph{typical} codebooks, whereas the random-coding
error exponent is actually dominated by \emph{unlikely} poor codebooks.
This can be seen from the following informal argument. Let ${\cal G}_{E}$
be the collection of codes $\{{\cal C}_{n}\}$ for which $P_{\mathsf{e}}({\cal C}_{n})\approx e^{-nE}$.
Then, approximating the values of $E$ by a discrete fine grid, 
\begin{equation}
\E[P_{\mathsf{e}}({\cal \mathfrak{C}}_{n})]\approx\sum_{E}\Pr\left[{\cal \mathfrak{C}}_{n}\in{\cal G}_{E}\right]\cdot e^{-nE}.
\end{equation}
This term is dominated by the largest term in the sum, yet the maximizer
may occur for $\tilde{E}$ in which $\Pr[{\cal \mathfrak{C}}_{n}\in{\cal G}_{\tilde{E}}]$
is exponentially small. By contrast, it holds that
\begin{equation}
E_{\text{trc}}(R)=\E\left[-\frac{1}{n}\ln P_{\mathsf{e}}({\cal C}_{n})\right]=\sum_{E}\Pr\left[{\cal \mathfrak{C}}_{n}\in{\cal G}_{E}\right]\cdot E.
\end{equation}
Thus, if there exists a value $E_{0}$ for which $\Pr[{\cal \mathfrak{C}}_{n}\in{\cal G}_{E}]\to1$
then $E_{\text{trc}}(R)=E_{0}$ (and such $E_{0}$ does exist). Evidently,
the averaging of the normalized logarithm over the error probability
mitigates the effect of high error-probability codebooks on the ensemble
average. 

Despite this obvious advantage, the error exponent of the typical
random code was considered to be more difficult to evaluate than the
random-coding error exponent, and thus was somewhat ignored in the
traditional developments of bounds on the reliability function. In
\cite{barg2002random}, Barg and Forney have evaluated the error exponent
of the typical random code for the BSC (and credit \cite{bassalygo1991simple}
for inspiration). The derivation is sufficiently simple to be done
directly, and involves the typical \emph{distance spectrum} of the
code, given by $\{\overline{N}(d)\}_{d=0}^{n}$ where
\begin{equation}
\overline{N}(d)\triangleq\left|\left\{ m_{1},m_{2}\colon m_{1}\neq m_{2}\;d_{\text{H}}(\boldsymbol{x}_{m_{1}},\boldsymbol{x}_{m_{2}})=d\right\} \right|\label{eq: Hamming distance enumerator}
\end{equation}
where $d_{\text{H}}(\cdot,\cdot)$ is the Hamming distance. The typical
behavior of $\overline{N}(d)$ over the ensemble was then determined,
and the error probability was then tightly bounded by the union bound
as
\begin{equation}
P_{\mathsf{e}}({\cal C}_{n})\leq\sum_{d=0}^{n}\overline{N}(d)\cdot e^{-d\cdot D(\frac{1}{2}||p)}.\label{eq: bound on error probability using distance spectrum}
\end{equation}
(for the typical random exponent, the tightness of this bound is credited
by Barg and Forney to \cite{d1980bounds}). 

If we generalize this expression to general DMCs, then the derivation
of the error exponent of the typical random code will involve the
following TCE:
\begin{defn}[TCE for the error exponent of the typical random code]
\label{def: TCE typical}For a codebook ${\cal C}_{n}$, and a joint
type $Q_{X\tilde{X}}$, let

\begin{align}
\overline{N}(Q_{X\tilde{X}},{\cal C}_{n}) & =\left|\left\{ (m,\tilde{m})\colon m\neq\tilde{m},\;(\boldsymbol{x}_{m},\boldsymbol{x}_{\tilde{m}})\in{\cal T}_{n}(Q_{X\tilde{X}})\right\} \right|\nonumber \\
 & =\sum_{m=1}^{e^{nR}}\overline{N}_{m}(Q_{X\tilde{X}},{\cal C}_{n})\nonumber \\
 & =\sum_{m=1}^{e^{nR}}\sum_{\tilde{m}=1}^{e^{nR}}\I\{\tilde{m}\neq m\}\cdot\I\left\{ (\boldsymbol{X}_{\tilde{m}},\boldsymbol{X}_{m})\in{\cal T}_{n}(Q_{X\tilde{X}})\right\} ,\label{eq: typical enumerator}
\end{align}
where $\overline{N}_{m}(Q_{X\tilde{X}})$ is as defined in \eqref{eq: exponent for expurgated}.
The TCE $\overline{N}(Q_{X\tilde{X}},{\cal C}_{n})$ counts the number
of pairs of codewords in the codebook which have a joint type $Q_{X\tilde{X}}$.
\end{defn}
For a random codebook, the TCE $\overline{N}(Q_{X\tilde{X}},\mathfrak{C}_{n})$
has no simple probabilistic description. As for previous TCEs we abbreviate
the notation to $\overline{N}(Q_{X\tilde{X}})$ when it is an RV.
The derivation of the properties $\overline{N}(Q_{X\tilde{X}})$ introduces
a technical challenge due to dependencies between pairs of codewords.
Indeed, $\overline{N}(Q_{X\tilde{X}})$ counts the number of successes
in $e^{nR}(e^{nR}-1)\doteq e^{2nR}$ trials, and the success probability
of each trial is $\doteq e^{-nI(Q_{X\tilde{X}})}$. However, the trails
are \emph{not} mutually independent, and so $\overline{N}(Q_{X\tilde{X}})$
is \emph{not} a binomial RV. This dependence can be demonstrated by
the following extreme example: Let $Q_{X}$ be uniform over ${\cal X}$
and let $Q_{X\tilde{X}}$ be the joint type such equals to $1/|{\cal X}|$
whenever $x=\tilde{x}$ and $0$ otherwise. Then, without any prior
knowledge, for every $\tilde{m}\neq m$, $\Pr[\boldsymbol{X}_{m}=\boldsymbol{X}_{\tilde{m}}]=\Pr[(\boldsymbol{X}_{m},\boldsymbol{X}_{\tilde{m}})\in{\cal T}_{n}(Q_{X\tilde{X}})]\doteq\exp[-nI(Q_{X\tilde{X}})]$
where $I(Q_{X\tilde{X}})=\ln|{\cal X}|$. Now, conditioned on $\boldsymbol{X}_{1}=\boldsymbol{X}_{2}$
and $\boldsymbol{X}_{2}=\boldsymbol{X}_{3}$ it holds with probability
$1$ that $\boldsymbol{X}_{1}=\boldsymbol{X}_{3}$. Nonetheless, this
dependence is ``weak'', and as we will show, some of its asymptotic
properties can be shown to be indifferent to this dependence, and
match those of a regular $\text{Binomial}(e^{2nR},e^{-nI(Q_{X\tilde{X}})})$
distribution. 

The required moments and tail properties of $\overline{N}(Q_{X\tilde{X}})$
were evaluated as follows. In \cite[Theorem 3]{tamir2020large}, it
was determined that for any $s\in\reals$
\begin{equation}
\lim_{n\to\infty}-\frac{1}{n}\ln\Pr\left[\overline{N}(Q_{X\tilde{X}})\geq e^{ns}\right]=\begin{cases}
[I(Q_{X\tilde{X}})-2R]_{+} & [2R-I(Q_{X\tilde{X}})]_{+}>s\\
\infty, & [2R-I(Q_{X\tilde{X}})]_{+}<s
\end{cases},
\end{equation}
which is the same upper tail behavior as for a $\text{Binomial}(e^{2nR},e^{-nI(Q_{X\tilde{X}})})$.
This is intuitively justified because the events $\I\{(\boldsymbol{x}_{m_{1}},\boldsymbol{x}_{m_{2}})\in{\cal T}_{n}(Q_{X\tilde{X}})\}$
and $\I\{(\boldsymbol{x}_{m_{3}},\boldsymbol{x}_{m_{4}})\in{\cal T}_{n}(Q_{X\tilde{X}})\}$
are pairwise-independent, even if $m_{1}=m_{3}$, and as the overall
dependence between all events is fairly low. The proof of this upper
tail bound is based on bounding its integer moments, and showing that
for any $k\in\mathbb{N}$
\begin{equation}
\E\left[\overline{N}^{k}(Q_{X\tilde{X}})\right]\dot{\leq}\begin{cases}
e^{nk[2R-I(Q_{X\tilde{X}})]} & I(Q_{X\tilde{X}})<2R\\
e^{n[2R-I(Q_{X\tilde{X}})]}, & I(Q_{X\tilde{X}})>2R
\end{cases}.\label{eq: moment bounds for full codebook enumerator}
\end{equation}
The proof is then completed by applying Markov's inequality with an
arbitrarily large $k$. For $k=1$, the bound of \eqref{eq: moment bounds for full codebook enumerator}
readily follows by the linearity of the expectation. In turn, the
proof of \eqref{eq: moment bounds for full codebook enumerator} for
an arbitrary $k$ follows by a careful induction. For typically empty
types, the analysis is based on \emph{Janson's inequality} \cite[Theorem 9]{janson1998new}
for the probability of the event that $\overline{N}(Q_{X\tilde{X}})=0$. 

As for the lower tail, it was determined in \cite[Lemma 2]{tamir2020large}
that given $\epsilon\in(0,2R)$, if $I(Q_{X\tilde{X}})\leq2R-\epsilon$
then 
\begin{equation}
\lim_{n\to\infty}-\frac{1}{n}\ln\Pr\left[\overline{N}(Q_{X\tilde{X}})\leq e^{-n\epsilon}\cdot\E\left[\overline{N}(Q_{X\tilde{X}})\right]\right]=\infty
\end{equation}
(see a more accurate statement therein). The proof is inspired by
an investigation of \emph{typicality graphs} by Nazari \emph{et al.}\ \cite{nazari2010typicality}.
For typically populated TCEs ($2R\geq I(Q_{X\tilde{X}})$), the analysis
is based on a lower tail-bound form of Janson's inequality \cite[Theorem 3]{janson1998new}:
For our analysis, this Janson's bound is a tail bound for the sum
of possibly \emph{dependent} Bernoulli RVs, which is suitable for
settings in which each Bernoulli RV only depends on a small number
of other Bernoulli RVs. Another useful moment result that was derived,
is a bound on the correlation of TCE powers, given by \cite[Proposition 4]{tamir2020large}
as 
\begin{equation}
\E\left[\overline{N}^{k}(Q_{X\tilde{X}})N^{\ell}(Q_{XY})\right]\dot{\leq}F(R,Q_{XY},\ell)\cdot F(2R,Q_{X\tilde{X}},k),\label{eq: moment correlation bounds for full codebook enumerator and enumerator}
\end{equation}
where for a joint type $Q_{UV}$, $S\geq0$ and $j\in\mathbb{N}$,
\begin{equation}
F(S,Q_{UV},j)\triangleq\begin{cases}
e^{nj[R-I(Q)]}, & I(Q)<S\\
e^{n[R-I(Q)]}, & I(Q)>S
\end{cases}.
\end{equation}
The bound \eqref{eq: moment correlation bounds for full codebook enumerator and enumerator}
again shows that asymptotic independence of TCEs. The proof of \eqref{eq: moment correlation bounds for full codebook enumerator and enumerator}
generalizes the proof of \eqref{eq: moment bounds for full codebook enumerator}
and utilizes a double-induction on both $k$ and $\ell$.

\subsection{Applications}

In the last 15 years, the TCEM has found extensive applications in
diverse coding problems. We next briefly review these applications. 

Expurgated exponents were considered in \cite{merhav2013another,scarlett2014expurgated}
for both standard channel coding, as well as mismatched decoding and
under input constraints, utilizing Definition \ref{def: TCE expurgated}
and an expurgation argument based on TCEs (see Appendix \ref{sec:The-Expurgated-Exponent}). 

The TCEM was widely used in multiuser and network problems \cite{el2011network}.
For the broadcast channel \cite{bergmans1973random,cover1972broadcast,gallager1974capacity,korner1977general,korner1980universally},
random-coding and expurgated error exponents were derived using the
TCEM in \cite{merhav2014exact,averbuch2018exact,kaspi2010error,averbuch2019expurgated},
for various decoders, including the optimal bin-index decoder. For
the multiple-access channel (MAC), concurrently to the early development
stages of the TCEM, Nazari \emph{et al.\ }\cite{nazari2014error}
also used TCEs (referred to as ``packing functions''), but only
derived bounds as their derivation was only based on the expectation
and variance of the TCEs. Scarlett, Martinez and Guill\'{e}n i F\`{a}bregas
fully utilized the used the TCEM for the MAC in \cite{scarlett2016multiuser,scarlett2017mismatched}.
For the interference channel, random-coding error exponents for the
Han--Kobayashi scheme \cite{han1981new} and under the optimal ML
decoder, were derived in \cite{etkin2009error,huleihel2016random}.
For the wiretap channel \cite{wyner1975wire}, assuming a multi-coding
scheme \cite{wyner1975wire,csiszar1978broadcast,liang2009information},
the correct-decoding exponent of the eavesdropper (as in \cite{merhav2003large})
was derived in \cite{merhav2014exact_correct}, and the exponential
decay rate of the mutual information between the message and the eavesdropper
output vector (\emph{unnormalized} by the blocklength $n$) was derived
in \cite{parizi2016exact}; this later result refined previous bounds
in \cite{hayashi2006general,hayashi2011exponential,parizi2015secrecy}.
The dirty-paper \cite{costa1983writing} and the Gel'fand--Pinsker
\cite{gel1980coding} channels were analyzed using the TCEM in \cite{TM23},\footnote{By means of source-channel duality \cite{keshet2008channel,pradhan2003duality}
these results are also applicable to the Wyner--Ziv distributed lossy
compression problem \cite{wyner1976rate}. } improving the bounds of Moulin and Wang \cite{moulin2007capacity}.
The works above heavily rely on the idea of superposition coding and
index-bin decoding, which evidently has wide applicability in multiuser
problems \cite{el2011network}. Some settings in which it has not
been applied yet include the relay channel \cite{cover1979capacity},
and channels with feedback \cite{cover1981achievable,ozarow1984capacity}.
Similar derivations and results are therefore anticipated to these
settings too.

For source coding problems, the TCEM was mainly used in distributed
compression, and specifically for deriving exact random-binning exponents
\cite{merhav2015statistical,merhav2016universal}. For secure lossy
compression, the optimal trade-off between the excess-distortion exponent
of the legitimate receiver and the exiguous-distortion exponent of
the eavesdropper was derived in \cite{weinberger2017large}. For distributed
hypothesis testing \cite{ahlswede1986hypothesis,han1998statistical,han2006exponential,rahman2012optimality},
type-I and type II error exponents were derived for the \emph{quantization-and-binning}
scheme \cite{han1998statistical,shimokawa1994error} under optimal
decision rule in \cite{weinberger2019reliability}. 

Returning to channel coding, the TCEM was extensively used to derive
error exponents for generalized decoders. For Forney's erasure/list
decoder \cite{forney1968exponential}, random-coding and expurgated
error exponents were derived in \cite{huleihel2016erasure,merhav2008error,somekh2011exact,merhav2014erasure,weinberger2017simplified,ginzach2017random}
and by Cao and Tan \cite{cao2019exact} in a broadcast channel setting.
Among other results, this has shown that the exact exponents can be
arbitrarily large compared to Forney's bounds, and that, unlike for
ordinary decoding \cite{feder2002universal,merhav2007minimax}, they
are not universally achievable. Hayashi and Tan \cite{hayashi2015asymmetric}
used the TCEM for erasure decoding in the \emph{moderate deviation
regime }\cite{altuug2014moderate,polyanskiy2010moderate}. The exponents
of a decoder with a fixed list size were derived in \cite{merhav2014list}.
The result matches the celebrated converse bound of Shannon, Gallager
and Berlekamp \cite[Theorem 2]{shannon1967lowerI} and improved the
best lower bound previously known \cite[p. 196, Problem 27, part (a)]{csiszar2011information}.
For asynchronous sparse communication, \cite{chandar2012asynchronous,chandar2008optimal,tchamkerten2009communication,tchamkerten2012asynchronous},
false-alarm and mis-detection exponents were derived for joint codeword
detection and decoding in \cite{weinberger2014codeword}, improving
Wang \emph{et al.\ }\cite{wang2011error}. This result was generalized
in \cite{weinberger2017channel} to joint channel detection and decoding.
Exponents for generalized likelihood decoding \cite{yassaee2013technique}
were derived in \cite{merhav2017correction,merhav2017generalized}.
Additional settings include decoding for biometric identification
\cite[Chapter 5]{ignatenko2012biometric}, \cite{tuncel2009capacity,willems2003capacity}
and content identification \cite{dasarathy2011reliability,dasarathy2014upper},
for which the exponents for vector-quantized codewords were derived
in \cite{merhav2017reliability}, \cite{merhav2018ensemble}, and
error exponents for an alternative model for a biometric identification
system, which is based on secret key generation and a helper messages
during the enrollment phase \cite[Chapter 2]{ignatenko2012biometric},
in \cite{merhav2018ensemble}. Error exponents for the bee identification
problem \cite{tandon2019bee} were derived in \cite{tamir2021error}. 

The error exponent of the typical random code was derived using the
TCEM in \cite{merhav2018error}, for a broad class of generalized
likelihood decoders. One of the consequences of this analysis is that
a general relation of the form $E_{\text{trc}}(R,P_{X})\leq E_{\text{ex}}(2R,P_{X})+R$
holds for any $R$ and generalized likelihood decoder (for ML decoding
and the BSC, a similar relation with equality sign was shown in \cite{barg2002random}).
A Gallager--style exponent was developed in \cite{merhav2019lagrange}.
The results were then extended to the colored Gaussian channel in
\cite{me19}, to random time-varying trellis codes in \cite{merhav2019error_trellis},
and to typical SW codes in \cite{tamir2021trade}. In \cite{tamir2021universal},
the TCEM was used to establish that a stochastic MMI decoder, which
is a universal decoder, achieves the exponent of the typical random
code and the expurgated exponent. Finally, the concentration of the
random error exponent to its mean value, the error exponent of the
typical random code, was derived using the TCEM in \cite{tamir2020large},
with refinements by Truong \emph{et al.\ }in \cite{truong2023concentration},
and then by Truong and Guill\'{e}n i F\`{a}bregas. In this last
result, the TCEM was used for codewords that are drawn in a dependent
manner, for an ensemble based on the Gilbert--Varshamov construction,
previously suggested by Somekh-Baruch, Scarlett and Guill\'{e}n i
F\`{a}bregas \cite{somekh2019generalized}.

\newpage{}

\section{Manipulating Expectations of Nonlinear Functions of Random Variables
\label{sec:Manipulating-Expectations-of}}

\subsection{Introduction}

An often encountered challenge in information-theoretic analytical
derivations involves the necessity to assess the expected value of
a non-linear function applied to either a RV or a random vector. The
conventional approach typically involves resorting to upper and lower
bounds for the sought-after expectation, with the hope that these
bounds are sufficiently accurate, at least for guiding us toward correct
conclusions. When dealing with a non-linear function that exhibits
convexity or concavity, it seems natural to employ Jensen's inequality,
which yields an upper or lower bound, respectively. However, it is
worth noting that this bound may not always prove precise enough to
serve our intended purposes.

The primary aim of this chapter is to introduce a range of alternative
tools that have proven their utility in prior research. These alternative
tools can be broadly categorized into two main categories.

In the first category (Sections \ref{subsec: irlog} and \ref{subsec: irpower}
below), the focus is on achieving\emph{ exact results}. Here, the
fundamental approach involves leveraging \emph{integral representations}
of the non-linear function under consideration. In the second category
(Sections \ref{subsec: jensenCoM}, \ref{subsec: reversejensen} and
\ref{subsec: jensenlike} below), we turn to bounding techniques,
but these bounds are designed to be more refined and precise than
what traditional applications of Jensen's inequality typically yield.
In some cases, these bounds even extend in the opposite direction,
offering a comprehensive exploration of the problem at hand.

To provide the reader with a swift comprehension of the concept of
an integral representation, as discussed in the first category mentioned
earlier, let us delve into a straightforward example. Imagine we have
a set of IID zero-mean Gaussian RVs, $X_{1},X_{2},\ldots,X_{n}$,
each with a variance of $\sigma^{2}$. Our task is to compute the
expected value of $\E\{1/\sum_{i=1}^{n}X_{i}^{2}\}$. At first glance,
this expectation might appear insurmountable to compute precisely.
However, let us consider the integral representation of the function
$f(s)=1/s$ as 
\begin{equation}
\frac{1}{s}=\int_{0}^{\infty}e^{-st}\dd t.
\end{equation}
The concept is to employ this representation to tackle the current
problem by rearranging the order between the expectation and the integration,
much akin to our approach in Chapter \ref{sec: laplacesaddlepoint}:
\begin{align}
\E\left\{ \frac{1}{\sum_{i=1}^{n}X_{i}^{2}}\right\}  & =\E\left\{ \int_{0}^{\infty}\exp\left(-s\sum_{i=1}^{n}X_{i}^{2}\right)\dd t\right\} \nonumber \\
 & =\int_{0}^{\infty}\E\left\{ \exp\left(-t\sum_{i=1}^{n}X_{i}^{2}\right)\right\} \dd t\nonumber \\
 & =\int_{0}^{\infty}\left[\E\left\{ \exp(-tX_{1}^{2})\right\} \right]^{n}\dd t\nonumber \\
 & =\int_{0}^{\infty}\frac{\dd t}{(1+2\sigma^{2}t)^{n/2}}\nonumber \\
 & =\begin{cases}
\infty, & n\leq2\\
\frac{1}{(n-2)\sigma^{2}}, & n>2
\end{cases}.
\end{align}
Certainly, the example provided is quite elementary, but it is important
to note that this concept can be applied in a wide range of scenarios,
involving the presentation of the given function as the Laplace transform
(or any other linear transform) of another function, and the expectation
of the given non-linear function is represented as an integral of
an expression that involves an expectation for which there is a closed-form
expression, like the MGF.

Another family of integral representations relates to the following
identity, which is applicable to any positive RV $X$ (and can be
readily extended to encompass any RV with a well-defined expectation):
\begin{equation}
\E\{X\}=\int_{0}^{\infty}\Pr\left\{ X\ge t\right\} \dd t.
\end{equation}
In fact, this idea has already been used in Example \ref{exa: extremevaluestatistics}
as well as in Chapter \ref{sec:TCE}. Accordingly, if $f$ is non-negative
and monotonic, and hence invertible, we have 
\begin{equation}
\E\{f(X)\}=\int_{0}^{\infty}\mbox{\ensuremath{\Pr}}\{f(X)\ge t\}\dd t=\int_{0}^{\infty}\mbox{\ensuremath{\Pr}}\{X\ge f^{-1}(t)\}\dd t,
\end{equation}
which is often lends itself to closed-form analysis.

In the first two upcoming sections, we will explore certain integral
representations of two specific highly important functions in the
context of information-theoretic analyses: The logarithmic function
(in Section \ref{subsec: irlog}, which is based on reference \cite{MS20a})
and the power function (in Section \ref{subsec: irpower}, which builds
on reference \cite{MS20b}). Those integral representations are not
very familiar to many researchers in the information theory community,
but the essence of the approach remains as described above: Substitute
the expectation of the given non-linear function with an integral
of a function for which a closed-form expectation exists.

In the second category of tools explored in this chapter, which focuses
on modified versions of the Jensen inequality, we delve into three
distinct types of bounding techniques: 
\begin{enumerate}
\item Jensen's inequality combined with a change of measures (Section \ref{subsec: jensenCoM}),
where our exposition relies strongly on \cite{AM98}, \cite{me11b}
and \cite{me18}. 
\item Reverse Jensen inequalities (Section \ref{subsec: reversejensen}),
which summarizes the main findings on \cite{me22}. 
\item Jensen-like inequalities, where the convex/concave function is only
part of the expression and the supporting line is re-optimized (Section
\ref{subsec: jensenlike}), which is based on \cite{me23a}. 
\end{enumerate}
While these techniques provide bounds rather than exact results, as
seen in the first category, their applicability extends across a broader
range of scenarios. Furthermore, they often yield substantial improvements
compared to the bounds derived from the conventional Jensen inequality.

\subsection{An Integral Representation of the Logarithmic Function \label{subsec: irlog}}

In the realm of analytic derivations within various information theory
problem domains, it is a recurring necessity to compute expectations
and higher-order moments of expressions involving the logarithm of
a positive-valued RV, or more broadly, the logarithm of the sum of
multiple such RVs. Traditionally, when faced with such scenarios,
two prevalent methods come into play: One is to employ upper and lower
bounds on the desired expression, often utilizing established inequalities
like Jensen's inequality or other widely recognized mathematical techniques;
the other approach involves applying the Taylor series expansion of
the logarithmic function. In more contemporary practices, a modernized
strategy has emerged, known as the replica method (as detailed in
\cite[Chapter 8]{MM09}). This method, while not strictly rigorous,
has gained prominence and proven effective, having been adopted from
the field of statistical physics with notable success.

The objective of this section is to introduce an alternative approach
and illustrate its practicality in commonly encountered scenarios.
Specifically, we will explore the following integral representation
of the logarithmic function, 
\begin{equation}
\ln x=\int_{0}^{\infty}\frac{e^{-u}-e^{-ux}}{u}\dd u,\quad x>0,\label{eq: ir}
\end{equation}
which is easily proved by substituting $\int_{0}^{\infty}e^{-ut}\dd t$
for $1/u$ on the RHS and interchanging the order of the integration.
This representation finds its immediate utility in scenarios where
the argument of the logarithmic function is a positive-valued RV denoted
as $X$, and our goal is to compute the expectation, denoted as $\E\{\ln X\}$.
By assuming the validity of interchanging the expectation operator
with the integration over the variable $u$, we can simplify the calculation
of $\E\{\ln X\}$ into evaluating the MGF of $X$, which is often
a more straightforward task. This transformation allows us to express
it as: 
\begin{equation}
\E\left\{ \ln X\right\} =\int_{0}^{\infty}\left[e^{-u}-\E\{e^{-uX}\}\right]\frac{\dd u}{u}.\label{eq: ElnX}
\end{equation}
In particular, if $X_{1},\ldots,X_{n}$ are positive IID RVs, then
\begin{equation}
\E\left\{ \ln(X_{1}+\ldots+X_{n})\right\} =\int_{0}^{\infty}\left(e^{-u}-\bigl[\E\{e^{-uX_{1}}\}\bigr]^{n}\right)\frac{\dd u}{u}.\label{eq: Elnsum}
\end{equation}
This concept is not entirely novel, as it has been previously applied
in the realm of physics, as evidenced in sources such as \cite[Eq.\ (2.4) and beyond]{EN93},
\cite[Exercise 7.6, p.~140]{MM09}, and \cite[Eq.\ (12) and beyond]{SSRCM19}.
However, in the field of information theory, this approach is seldom
utilized, despite its potential significance. This significance arises
from the frequent requirement to compute logarithmic expectations
-- a common occurrence in numerous problem areas within information
theory. Furthermore, the integral representation \eqref{eq: ir} extends
its utility beyond mere expectation calculations; it also proves invaluable
in evaluating higher moments of $\ln X$, most notably, the second
moment or variance. This added functionality allows us to assess statistical
fluctuations around the mean, enhancing our analytical capabilities
in the field.

In \cite{MS20a}, the practicality of this approach was effectively
showcased across various application domains. These applications encompassed
areas such as entropy and differential entropy assessments, performance
analysis of universal lossless source codes, and the determination
of ergodic capacity for the Rayleigh SIMO channel. It is worth noting
that within some of these examples, we successfully computed variances
related to the pertinent RVs. In particular, in \cite[Proposition 2]{MS20a},
the following result is stated and proved: Let $X$ be a RV, and let
\begin{equation}
M_{X}(s)\triangleq\E\left\{ e^{sX}\right\} ,~~~~~~~s\in\reals,\label{eq: g}
\end{equation}
be the MGF of $X$. If $X\ge0$ with probability one, then 
\begin{equation}
\E\left\{ \ln(1+X)\right\} =\int_{0}^{\infty}\frac{e^{-u}\,[1-M_{X}(-u)]}{u}\;\mathrm{\dd}u,\label{eq: 2511a}
\end{equation}
and 
\begin{equation}
\mathrm{\Var}\left\{ \ln(1+X)\right\} =\int_{0}^{\infty}\int_{0}^{\infty}\frac{e^{-(u+v)}}{uv}\;\Bigl[M_{X}(-u-v)-M_{X}(-u)\,M_{X}(-v)\Bigr]\mathrm{\dd}u\mathrm{\dd}v.\label{eq: 2511b}
\end{equation}

It is worth highlighting an intriguing consequence of the integral
representation \eqref{eq: ElnX}. It transforms the calculation of
the expectation of the logarithm of $X$ into the expectation of an
exponential function of $X$. This transformation has an added benefit:
It simplifies expressions involving quantities like $\ln(n!)$ into
the integral of a summation of a geometric series, a form that is
readily expressible in closed form. Specifically,
\begin{align}
\ln(n!) & =\sum_{k=1}^{n}\ln k\nonumber \\
 & =\sum_{k=1}^{n}\int_{0}^{\infty}(e^{-u}-e^{-uk})\frac{\mathrm{\dd}u}{u}\nonumber \\
 & =\int_{0}^{\infty}\left(ne^{-u}-\sum_{k=1}^{n}e^{-uk}\right)\frac{\mathrm{\dd}u}{u}\nonumber \\
 & =\int_{0}^{\infty}e^{-u}\left(n-\frac{1-e^{-un}}{1-e^{-u}}\right)\frac{\mathrm{\dd}u}{u}.
\end{align}
For a positive integer-valued RV, denoted as $N$, the computation
of $\E\{\ln N!\}$ becomes a straightforward task, requiring only
the calculation of $\E\{N\}$ and the MGF, $\E\{e^{-uN}\}$. This
is useful, for example, when $N$ follows a Poisson distribution,
as shown in \cite{MS20a} in detail.

In \cite{MS20a}, the usefulness of the integral representation of
the logarithmic function is illustrated in several problem areas in
information theory, including graphs of numerical results. Here, we
briefly summarize two of the examples provided therein.

\subsubsection{Differential Entropy for Generalized Multivariate Cauchy Densities
\label{subsec:Cauchy}}

Let $(X_{1},\ldots,X_{n})$ be a random vector whose PDFs is of the
form 
\begin{equation}
f(x_{1},\ldots,x_{n})=\frac{C_{n}}{\left[1+\sum_{i=1}^{n}g(x_{i})\right]^{q}},~~~~(x_{1},\ldots,x_{n})\in\reals^{n},\label{eq: pdf gen Cauchy}
\end{equation}
for a given non-negative function $g$ and positive real $q$ such
that 
\begin{equation}
\int_{\reals^{n}}\frac{\mathrm{\dd}\boldsymbol{x}}{\left[1+\sum_{i=1}^{n}g(x_{i})\right]^{q}}<\infty.
\end{equation}
We term this category of density as ``generalized multivariate Cauchy,''
primarily because the multivariate Cauchy density arises as a specific
instance when $g(x)=x^{2}$ and $q=\tfrac{1}{2}(n+1)$. Employing
the Laplace transform relation, 
\begin{equation}
\frac{1}{s^{q}}=\frac{1}{\Gamma(q)}\int_{0}^{\infty}t^{q-1}e^{-st}\mathrm{\dd}t,~~~~~~q\ge1,\;\real(s)>0,
\end{equation}
$f$ can be displayed as a mixture of product-form PDFs: 
\begin{align}
f(x_{1},\ldots,x_{n}) & =\frac{C_{n}}{\left[1+\sum_{i=1}^{n}g(x_{i})\right]^{q}}\nonumber \\
 & =\frac{C_{n}}{\Gamma(q)}\int_{0}^{\infty}t^{q-1}e^{-t}\,\exp\left\{ -t\sum_{i=1}^{n}g(x_{i})\right\} \dd t.\label{eq: 2911a1}
\end{align}
Defining 
\begin{equation}
Z(t)\triangleq\int_{-\infty}^{\infty}e^{-tg(x)}\mathrm{\dd}x,\quad\forall t>0,\label{eq: Z}
\end{equation}
we obtain from \eqref{eq: 2911a1}, 
\begin{align}
1 & =\frac{C_{n}}{\Gamma(q)}\int_{0}^{\infty}t^{q-1}e^{-t}\int_{\reals^{n}}\exp\left\{ -t\sum_{i=1}^{n}g(x_{i})\right\} \dd x_{1}\ldots\dd x_{n}\dd t\nonumber \\
 & =\frac{C_{n}}{\Gamma(q)}\int_{0}^{\infty}t^{q-1}e^{-t}\left(\int_{-\infty}^{\infty}e^{-tg(x)}\dd x\right)^{n}\dd t\nonumber \\
 & =\frac{C_{n}}{\Gamma(q)}\int_{0}^{\infty}t^{q-1}e^{-t}Z^{n}(t)\dd t,
\end{align}
and so, 
\begin{equation}
C_{n}=\frac{\Gamma(q)}{{\displaystyle \int_{0}^{\infty}t^{q-1}e^{-t}Z^{n}(t)\dd t}}.\label{eq: C_n}
\end{equation}
Evaluating the differential entropy of $f$ involves deriving $\E\{\ln\bigl[1+\sum_{i=1}^{n}g(X_{i})\bigr]\}$.
Using \eqref{eq: 2511a} 
\begin{equation}
\E\Biggl\{\ln\left[1+\sum_{i=1}^{n}g(X_{i})\right]\Biggr\}=\int_{0}^{\infty}\frac{e^{-u}}{u}\left(1-\E\left\{ \exp\left[-u\sum_{i=1}^{n}g(X_{i})\right]\right\} \right)\dd u,
\end{equation}
and 
\begin{align}
 & \E\left\{ \exp\left[-u\sum_{i=1}^{n}g(X_{i})\right]\right\} \nonumber \\
 & =\frac{C_{n}}{\Gamma(q)}\int_{0}^{\infty}t^{q-1}e^{-t}\int_{\reals^{n}}\exp\left\{ -(t+u)\sum_{i=1}^{n}g(x_{i})\right\} \dd x_{1}\ldots\dd x_{n}\dd t\nonumber \\
 & =\frac{C_{n}}{\Gamma(q)}\int_{0}^{\infty}t^{q-1}e^{-t}Z^{n}(t+u)\dd t.
\end{align}
Thus, the joint differential entropy is given by 
\begin{align}
h(X_{1},\ldots,X_{n}) & =q\cdot\E\left\{ \ln\left[1+\sum_{i=1}^{n}g(X_{i})\right]\right\} -\ln C_{n}\nonumber \\[0.1cm]
 & =q\int_{0}^{\infty}\frac{e^{-u}}{u}\left(1-\frac{C_{n}}{\Gamma(q)}\int_{0}^{\infty}t^{q-1}e^{-t}Z^{n}(t+u)\dd t\right)\dd u-\ln C_{n}\nonumber \\[0.1cm]
 & =\frac{qC_{n}}{\Gamma(q)}\int_{0}^{\infty}\int_{0}^{\infty}\frac{t^{q-1}e^{-(t+u)}}{u}\;\Bigl[Z^{n}(t)-Z^{n}(t+u)\Bigr]\dd t\dd u-\ln C_{n}.\label{eq: diff ent. gen. cauchy}
\end{align}
For $g(x)=|x|^{\theta}$, with an arbitrary $\theta>0$, we obtain
from \eqref{eq: Z} that 
\begin{equation}
Z(t)=\frac{2\cdot\Gamma(1/\theta)}{\theta\cdot t^{1/\theta}}.\label{eq:spec. Z}
\end{equation}
In particular, for $\theta=2$ and $q=\tfrac{1}{2}(n+1)$, we get
the multivariate Cauchy density from \eqref{eq: pdf gen Cauchy}.
In this case, since $\Gamma\bigl(\tfrac{1}{2}\bigr)=\sqrt{\pi}$,
it follows from \eqref{eq:spec. Z} that $Z(t)=\sqrt{\frac{\pi}{t}}$
for $t>0$, and from \eqref{eq: C_n} 
\begin{equation}
C_{n}=\frac{\Gamma\left(\frac{n+1}{2}\right)}{\pi^{n/2}\,{\displaystyle \int_{0}^{\infty}t^{(n+1)/2-1}e^{-t}\,t^{-n/2}\dd t}}=\frac{\Gamma\left(\frac{n+1}{2}\right)}{\pi^{n/2}\,\Gamma\bigl(\tfrac{1}{2}\bigr)}=\frac{\Gamma\left(\frac{n+1}{2}\right)}{\pi^{(n+1)/2}}.\label{eq: cauchynormalization}
\end{equation}
Combining \eqref{eq: diff ent. gen. cauchy}, \eqref{eq:spec. Z}
and \eqref{eq: cauchynormalization} gives 
\begin{align}
h(X_{1},\ldots,X_{n}) & =\frac{n+1}{2\pi^{(n+1)/2}}\int_{0}^{\infty}\int_{0}^{\infty}\frac{e^{-(t+u)}}{u\sqrt{t}}\left[1-\left(\frac{t}{t+u}\right)^{n/2}\right]\dd t\dd u\nonumber \\[0.1cm]
 & \hphantom{===}+\frac{(n+1)\ln\pi}{2}-\ln\Gamma\left(\frac{n+1}{2}\right).\label{diff ent. gen. cauchy 2}
\end{align}
In this application example, we find it intriguing that \eqref{eq: diff ent. gen. cauchy}
offers what can be considered a ``single-letter expression.'' Remarkably,
the $n$-dimensional integral tied to the original expression of the
differential entropy $h(X_{1},\ldots,X_{n})$ is effectively replaced
by the two-dimensional integral in \eqref{eq: diff ent. gen. cauchy},
and notably, this replacement remains independent of the value of
$n$.

\subsubsection{Ergodic Capacity of the Rayleigh SIMO Channel \label{subsec: SIMO}}

Let us consider the SIMO channel with $L$ receive antennas. We make
the assumption that the channel transfer coefficients, denoted as
$h_{1},h_{2},\ldots,h_{L}$, are independent and follow a zero-mean,
circularly symmetric complex Gaussian distribution with variances
$\sigma_{1}^{2},\sigma_{2}^{2},\ldots,\sigma_{L}^{2}$. In this context,
the ergodic capacity of the SIMO channel, measured in nats per channel
use, is expressed as follows: 
\begin{equation}
C=\E\left\{ \ln\left(1+\rho\sum_{\ell=1}^{L}|h_{\ell}|^{2}\right)\right\} =\E\left\{ \ln\left(1+\rho\sum_{\ell=1}^{L}\bigl(f_{\ell}^{2}+g_{\ell}^{2}\bigr)\right)\right\} ,\label{eq: capacity SIMO}
\end{equation}
where $f_{\ell}:=\real\{h_{\ell}\}$, $g_{\ell}:=\imag\{h_{\ell}\}$,
and $\rho:=\frac{P}{N_{0}}$ is the signal-to-noise ratio (SNR). In
view of \eqref{eq: 2511a}, let 
\begin{equation}
X\triangleq\rho\,\sum_{\ell=1}^{L}(f_{\ell}^{2}+g_{\ell}^{2}).
\end{equation}
For all $u>0$, 
\begin{align}
M_{X}(-u) & =\E\biggl\{\exp\biggl(-\rho u\,\sum_{\ell=1}^{L}(f_{\ell}^{2}+g_{\ell}^{2})\biggr)\biggr\}\nonumber \\[0.1cm]
 & =\prod_{\ell=1}^{L}\biggl\{\E\Bigl\{ e^{-u\rho f_{\ell}^{2}}\Bigr\}\;\E\Bigl\{ e^{-u\rho g_{\ell}^{2}}\Bigr\}\biggr\}\nonumber \\[0.1cm]
 & =\prod_{\ell=1}^{L}\frac{1}{1+u\rho\sigma_{\ell}^{2}},\label{eq: MGF1}
\end{align}
where \eqref{eq: MGF1} holds since 
\begin{align}
\E\left\{ e^{-u\rho f_{\ell}^{2}}\right\}  & =\E\left\{ e^{-u\rho g_{\ell}^{2}}\right\} \nonumber \\[0.1cm]
 & =\int_{-\infty}^{\infty}\frac{\dd w}{\sqrt{\pi\sigma_{\ell}^{2}}}\,e^{-w^{2}/\sigma_{\ell}^{2}}\,e^{-u\rho w^{2}}\nonumber \\[0.1cm]
 & =\frac{1}{\sqrt{1+u\rho\sigma_{\ell}^{2}}}.
\end{align}
From \eqref{eq: 2511a}, \eqref{eq: capacity SIMO} and \eqref{eq: MGF1},
the ergodic capacity (in nats per channel use) is given by 
\begin{align}
C & =\E\left\{ \ln\left(1+\rho\sum_{\ell=1}^{L}\bigl(f_{\ell}^{2}+g_{\ell}^{2}\bigr)\right)\right\} \nonumber \\[0.1cm]
 & =\int_{0}^{\infty}\frac{e^{-u}}{u}\left(1-\prod_{\ell=1}^{L}\frac{1}{1+u\rho\sigma_{\ell}^{2}}\right)\dd u\nonumber \\[0.1cm]
 & =\int_{0}^{\infty}\frac{e^{-x/\rho}}{x}\left(1-\prod_{\ell=1}^{L}\frac{1}{1+\sigma_{\ell}^{2}x}\right)\dd x.\label{eq: erg. C}
\end{align}
Concerning the variance, owing to \eqref{eq: 2511b} and \eqref{eq: MGF1},
we have: 
\begin{align}
 & \mathrm{\Var}\left\{ \ln\left(1+\rho\sum_{\ell=1}^{L}[f_{\ell}^{2}+g_{\ell}^{2}]\right)\right\} \nonumber \\[0.1cm]
 & =\int_{0}^{\infty}\int_{0}^{\infty}\frac{e^{-(x+y)/\rho}}{xy}\,\left\{ \prod_{\ell=1}^{L}\frac{1}{1+\sigma_{\ell}^{2}(x+y)}-\prod_{\ell=1}^{L}\biggl[\frac{1}{(1+\sigma_{\ell}^{2}x)(1+\sigma_{\ell}^{2}y)}\biggr]\right\} \dd x\dd y.\label{var SIMO}
\end{align}
The capacity $C$ can be expressed as a linear combination of integrals
of the form 
\begin{align}
\int_{0}^{\infty}\frac{e^{-x/\rho}\dd x}{1+\sigma_{\ell}^{2}x} & =\frac{1}{\sigma_{\ell}^{2}}\int_{0}^{\infty}\frac{e^{-t}\dd t}{t+1/(\sigma_{\ell}^{2}\rho)}\nonumber \\[0.1cm]
 & =\frac{e^{1/(\sigma_{\ell}^{2}\rho)}}{\sigma_{\ell}^{2}}\int_{1/(\sigma_{\ell}^{2}\rho)}^{\infty}\frac{e^{-s}}{s}\dd s\nonumber \\[0.1cm]
 & =\frac{1}{\sigma_{\ell}^{2}}\cdot e^{1/(\sigma_{\ell}^{2}\rho)}\cdot E_{1}\biggl(\frac{1}{\sigma_{\ell}^{2}\rho}\biggr),
\end{align}
where $E_{1}(\cdot)$ is the (modified) exponential integral function,
defined as 
\begin{equation}
E_{1}(x):=\int_{x}^{\infty}\frac{e^{-s}}{s}\dd s,\quad\forall\,x>0.\label{eq: E1 func.}
\end{equation}

\subsection{An Integral Representation of the Power Function \label{subsec: irpower}}

In this section, we build upon the same approach as in Section \ref{subsec: irlog},
expanding the scope to introduce an integral representation for a
general moment of a non-negative RV, $X$. Specifically, we aim to
find an expression for $\E\{X^{\rho}\}$ where $\rho>0$. When $\rho$
is an integer, it is well-known that this moment can be computed as
the $\rho$-th order derivative of the MGF of $X$, evaluated at the
origin. However, our proposed integral representation, presented in
this work, applies to any non-integer positive value of $\rho$. Here
as well, it replaces the direct calculation of $\E\{X^{\rho}\}$ with
the integration of an expression involving the MGF of $X$. We refer
to this representation as an ``extension'' of the integral representation
of the logarithmic function discussed in Section \ref{subsec: irlog}.
This is because the latter can be derived as a special case of the
formula for $\E\{X^{\rho}\}$ by employing the identity: 
\begin{equation}
\E\left\{ \ln X\right\} =\lim_{\rho\to0}\frac{\E\{X^{\rho}\}-1}{\rho},
\end{equation}
or alternatively, the identity, 
\begin{equation}
\E\left\{ \ln X\right\} =\lim_{\rho\to0}\frac{\ln\left[\E\{X^{\rho}\}\right]}{\rho}.
\end{equation}
As in the previous section, here too, the proposed integral representation
is applied to a range of examples motivated by information theory
\cite{MS20b}. This application showcases how the representation streamlines
numerical evaluations. In particular, much like the case of the logarithmic
function, when employed to compute a moment of the sum of a large
number, denoted as $n$, of non-negative RVs, it becomes evident that
integration over one or two dimensions, as suggested by our integral
representation, is notably simpler than the alternative of integrating
over $n$ dimensions, as required in the direct calculation of the
desired moment. Additionally, single or double-dimensional integrals
can be promptly and accurately computed using built-in numerical integration
techniques.

In order to present the integral representation, we commence by defining
the Gamma and Beta functions as follows: 
\begin{equation}
\Gamma(u)\triangleq\int_{0}^{\infty}t^{u-1}e^{-t}\dd t,\quad u>0,\label{eq: Gamma}
\end{equation}
\begin{equation}
B(u,v):=\int_{0}^{1}t^{u-1}(1-t)^{v-1}\dd t=\frac{\Gamma(u)\,\Gamma(v)}{\Gamma(u+v)},\quad u,v>0.\label{eq: Beta}
\end{equation}
Let $X$ be a non-negative RV with an MGF $M_{X}(\cdot)$, and let
$\rho>0$ be a non-integer real. Then, as shown in \cite{MS20b},
\begin{align}
\E\{X^{\rho}\} & =\frac{1}{1+\rho}\sum_{\ell=0}^{\lfloor\rho\rfloor}\frac{\alpha_{\ell}}{B(\ell+1,\rho+1-\ell)}\nonumber \\
 & \hphantom{==}+\frac{\rho\cdot\sin(\pi\rho)\cdot\Gamma(\rho)}{\pi}\int_{0}^{\infty}\frac{1}{u^{\rho+1}}\Biggl(\sum_{j=0}^{\lfloor\rho\rfloor}\biggl\{\frac{(-1)^{j}\cdot\alpha_{j}}{j!}\cdot u^{j}\biggr\} e^{-u}-M_{X}(-u)\Biggr)\dd u,\label{eq: rho>0 non-int.}
\end{align}
where for all $j\in\{0,1,\ldots,\}$ 
\begin{align}
\alpha_{j} & \triangleq\E\left\{ (X-1)^{j}\right\} \label{eq: alpha_k}\\
 & =\frac{1}{j+1}\sum_{\ell=0}^{j}\frac{(-1)^{j-\ell}\cdot M_{X}^{(\ell)}(0)}{B(\ell+1,j-\ell+1)}.\label{eq: alpha_k2}
\end{align}
The proof of \eqref{eq: rho>0 non-int.} in \cite{MS20b} does not
apply to natural values $\rho$ (see \cite[Appendix A]{MS20b}, where
the denominators vanish). However, taking a limit in \eqref{eq: rho>0 non-int.}
where we let $\rho$ tend to an integer, and applying L'H\^{o}pital's
rule, one can reproduce the well-known result for integer $\rho$,
which is given in terms of the $\rho$-th order derivative of the
MGF at the origin. For $\rho\in(0,1)$, the above simplifies to: 
\begin{align}
\E\{X^{\rho}\} & =1+\frac{\rho}{\Gamma(1-\rho)}\int_{0}^{\infty}\frac{e^{-u}-M_{X}(-u)}{u^{1+\rho}}\dd u.\label{eq: rho-in-(0,1)}
\end{align}

In \cite{MS20b}, the profound utility of the integral representation
shines through in a comprehensive exploration across various domains
within information theory and statistics. These applications include
detailed investigations accompanied by graphical illustrations. The
showcased instances span a range of analytical inquiries, encompassing
randomized guessing, estimation errors, the R\'{e}nyi entropy of
$n$-dimensional generalized Cauchy distributions, and mutual information
calculations for channels featuring a specific jammer model. Here,
we will provide a succinct overview of two of these application examples,
focusing primarily on the easier scenario where $\rho\in(0,1)$ for
clarity and simplicity of exposition.

\subsubsection{Moments of Guesswork \label{subsec: randomaized guessing}}

Suppose we have a RV $X$ that assumes values from a finite alphabet
${\cal X}$. Let us explore a random guessing strategy wherein the
guesser submits a sequence of independent random guesses, drawn from
a specific probability distribution denoted as $\widetilde{P}(\cdot)$,
defined over ${\cal X}$. Consider any instance where $x\in{\cal X}$
represents a realization of $X$, and we have the guessing distribution
$\widetilde{P}$ at our disposal. In such a scenario, the RV $G$,
representing the number of independent guesses required to achieve
success, follows a geometric distribution:
\begin{equation}
\mbox{\ensuremath{\Pr}}\{G=k|x\}=\bigl[1-\widetilde{P}(x)\bigr]^{k-1}\,\widetilde{P}(x),\label{eq: 20200115a1}
\end{equation}
hence, the corresponding MGF is equal to 
\begin{align}
M_{G}(u|x) & =\sum_{k=1}^{\infty}e^{ku}\,\mbox{\ensuremath{\Pr}}\{G=k|x\}\nonumber \\
 & =\frac{\widetilde{P}(x)}{e^{-u}-\bigl(1-\widetilde{P}(x)\bigr)},\quad u<\ln\frac{1}{1-\widetilde{P}(x)}.\label{eq: 20200115a2}
\end{align}
For $\rho\in(0,1)$, it is shown in \cite{MS20b} that 
\begin{align}
\E\left\{ G^{\rho}|x\right\}  & =1+\frac{\rho}{\Gamma(1-\rho)}\int_{0}^{\infty}\frac{e^{-u}-e^{-2u}}{u^{\rho+1}\bigl[\bigl(1-\widetilde{P}(x)\bigr)^{-1}-e^{-u}\bigr]}\dd u.\label{eq: 20200116a1}
\end{align}
Consider the distribution of the RV $X$, denoted as $P$. To compute
the unconditional $\rho$-th moment using \eqref{eq: 20200116a1},
we average over all possible values of $X$. This yields the following
result for all $\rho$ in the open interval $(0,1)$: 
\begin{equation}
\E\left\{ G^{\rho}\right\} =1+\frac{\rho}{\Gamma(1-\rho)}\int_{0}^{1}\frac{1-z}{(-\ln z)^{\rho+1}}\sum_{x\in{\cal X}}\frac{P(x)\bigl(1-\widetilde{P}(x)\bigr)}{1-z\bigl(1-\widetilde{P}(x)\bigr)}\dd z,\label{eq: 20200509a1}
\end{equation}
where \eqref{eq: 20200509a1} is by changing the integration variable
according to $z=e^{-u}$.

In conclusion, equation \eqref{eq: 20200116a1} provides a computable
one-dimensional integral expression for the $\rho$-th guessing moment
for any $\rho>0$. This eliminates the necessity for numerical computations
involving infinite sums.

\subsubsection{Moments of Estimation Errors \label{subsec: estimation}}

Let $X_{1},\ldots,X_{n}$ be IID RVs with an unknown expectation $\theta$
to be estimated, and consider the simple sample-mean estimator, 
\begin{equation}
\widehat{\theta}_{n}=\frac{1}{n}\sum_{i=1}^{n}X_{i}.\label{eq: estimator}
\end{equation}
For $\rho\in(0,2)$, it is shown in \cite{MS20b} that 
\begin{align}
 & \E\left\{ \bigl|\widehat{\theta}_{n}-\theta\bigr|^{\rho}\right\} \nonumber \\
 & =1+\frac{\rho}{2\,\Gamma(1-\frac{1}{2}\,\rho)}\int_{0}^{\infty}\int_{-\infty}^{\infty}u^{-(\rho/2+1)}\Bigl[\tfrac{1}{2}\,e^{-u-|\omega|}-\frac{1}{2\sqrt{\pi u}}\;\phi_{X}^{\,n}\Bigl(\frac{\omega}{n}\Bigr)\,e^{-j\omega\theta-\omega^{2}/(4u)}\Bigr]\dd\omega\dd u,\label{eq: 20200111a1}
\end{align}
which is an exact, double-integral calculable expression for the $\rho$-th
moment of the estimation error of the expectation of $n$ IID RVs.

\subsection{Jensen's Inequality with a Change of Measure \label{subsec: jensenCoM}}

In this section, we advocate for the practical utility of combining
Jensen's inequality with a change of measure, effectively introducing
an additional degree of freedom for optimization. To illustrate this
concept concretely, let us consider a concave function $f$ and a
RV $X$ characterized by a PDF $p$, with its support set in ${\cal X}$.
Additionally, let $q$ represent another PDF, also with support in
${\cal X}$. We will use $\E_{p}\{\cdot\}$ and $\E_{q}\{\cdot\}$
to denote expectation operators WRT $p$ and $q$, respectively. Now,
let us delve into the following chain of simple inequalities: 
\begin{align}
f\left(\E_{p}\{X\}\right) & =f\left(\int_{{\cal X}}p(x)x\dd x\right)\nonumber \\
 & =f\left(\int_{{\cal X}}q(x)\cdot\frac{xp(x)}{q(x)}\dd x\right)\nonumber \\
 & \ge\int_{{\cal X}}q(x)f\left(\frac{xp(x)}{q(x)}\right)\nonumber \\
 & =\E_{q}\left\{ f\left(\frac{Xp(X)}{q(X)}\right)\right\} .
\end{align}
Given that the inequalities mentioned above are valid for any PDF
$q$ supported by ${\cal X}$, we possess the flexibility to maximize
the rightmost side of this chain, which can be expressed as: 
\begin{equation}
f\left(\E_{p}\{X\}\right)\ge\sup_{q\in{\cal Q}}\E_{q}\left\{ f\left(\frac{Xp(X)}{q(X)}\right)\right\} ,\label{eq: CoM}
\end{equation}
where ${\cal Q}$ is any class of PDFs with this support. Clearly,
when $p$ belongs to the set ${\cal Q}$, the choice of $q=p$ reduces
the inequality in (\ref{eq: CoM}) to the standard Jensen's inequality.
On the opposite end of the spectrum, if ${\cal Q}$ encompasses the
entire collection of PDFs over ${\cal X}$, and if $X$ is a positive
RV with $\E_{p}\{X\}<\infty$, then selecting $q(x)=\frac{xp(x)}{\E_{p}\{X\}}$
results in a trivial and uninformative identity. However, this highlights
that in such a scenario, the inequality in (\ref{eq: CoM}) essentially
becomes an equality, depicted as: 
\begin{equation}
f\left(\E_{p}\{X\}\right)=\sup_{\{q\colon\supp\{q\}={\cal X}\}}\E_{q}\left\{ f\left(\frac{Xp(X)}{q(X)}\right)\right\} .\label{eq: CoM1}
\end{equation}
In the sequel, we abbreviate suprema and infima over $\{q\colon\mbox{supp}\{q\}={\cal X}\}$
simply by writing $\sup_{q}$ and $\inf_{q}$, respectively. Likewise,
if $f$ is convex, we have 
\begin{equation}
f\left(\E_{p}\{X\}\right)\le\inf_{q\in{\cal Q}}\E_{q}\left\{ f\left(\frac{Xp(X)}{q(X)}\right)\right\} .
\end{equation}
The effectiveness of these inequalities hinges on our judicious selection
of ${\cal Q}$. As we have observed, ${\cal Q}$ should encompass
$p$ to ensure that the resultant bound, after optimizing over $q\in{\cal Q}$,
does not fall short of the standard Jensen's inequality. Conversely,
${\cal Q}$ should exclude the choice $q(x)=\frac{xp(x)}{\E_{p}\{X\}}$,
which renders the inequality uninformative. Ideally, the class ${\cal Q}$
should be well-suited for practical use, allowing for closed-form
optimization. This convenience would enable us to derive bounds that
significantly improve upon the standard Jensen's inequality, making
the approach both mathematically tractable and practically valuable.

Perhaps the most important special case of \eqref{eq: CoM1} pertains
to the case of $f(x)=\ln x$, where it becomes 
\begin{align}
\ln(\E_{p}\{X\}) & =\sup_{q}\E_{q}\left\{ \ln\left(\frac{Xp(X)}{q(X)}\right)\right\} \nonumber \\
 & =\sup_{q}\left\{ \E_{q}\{f(X)\}-D(q\|p)\right\} ,\label{eq: lncase}
\end{align}
which is intimately related to the Laplace principle \cite{DE97}
in large-deviations theory, or more generally, to Varadhan's integral
lemma \cite[Section 4.3]{DZ93} or the Donsker-Varadhan variational
principle.
\begin{example}
\emph{Let $U_{1},\ldots,U_{n}$ be drawn from a finite-alphabet memoryless
source $P$ and let $X=\exp\{\alpha\ell(U_{1},\ldots,U_{n})\}$, where
$\alpha>0$ is a given real parameter and $\ell(U_{1},\ldots,U_{n})$
is the length (in nats) of the compressed version of $(U_{1},\ldots,U_{n})$
under some given fixed-to-variable length lossless source code. Now,
a naive application of Jensen's inequality yields 
\begin{equation}
\E\{X\}=\E\left\{ \exp[\alpha\ell(U_{1},\ldots,U_{n})]\right\} \ge\exp\left\{ \alpha\E\{\ell(U_{1},\ldots,U_{n})\right\} \ge e^{\alpha nH(P)},
\end{equation}
where $H(P)$ is the per-symbol entropy of the source $P$. On the
other hand, considering $P^{n}$ and $Q^{n}$ as probability distributions
of $n$-vectors from the source, we have 
\begin{align}
\ln\left(\E_{P^{n}}\{X\}\right) & \ge\sup_{Q^{n}\in{\cal Q}}\left[\E_{Q^{n}}\{\ln X\}-D(Q^{n}\|P^{n})\right]\nonumber \\
 & =\sup_{Q^{n}\in{\cal Q}}\left[\alpha\E_{Q^{n}}\{\ell(U_{1},\ldots,U_{n})\}-D(Q^{n}\|P^{n})\right]\nonumber \\
 & \ge\sup_{Q^{n}\in{\cal Q}}\left[\alpha H(Q^{n})-D(Q^{n}\|P^{n})\right].
\end{align}
Now, rather than taking ${\cal Q}$ to be the class of all probability
distributions of $n$-vectors, let us take it to be the class of all
product form distributions, i.e., $Q^{n}(u_{1},\ldots,u_{n})=\prod_{i=1}^{n}Q(u_{i})$.
Since $P^{n}$ has a product form too, i.e., $P^{n}(u_{1},\ldots,u_{n})=\prod_{i=1}^{n}P(u_{i})$,
we readily obtain that the last expression reads 
\begin{equation}
\sup_{Q^{n}\in{\cal Q}}\left[\alpha H(Q^{n})-D(Q^{n}\|P^{n})\right]=n\cdot\sup_{Q}[\alpha H(Q)-D(Q\|P)],
\end{equation}
that yields the R\'{e}nyi entropy of order $\alpha$ pertaining to
$P$, which is an attainable lower bound to the exponential moment
of $\ell(U_{1},\ldots,U_{n})$, unlike the lower bound obtained from
the naive use of Jensen's inequality above. In other words, rather
than maximizing over the entire class of all probability distributions
of $n$-vectors, $\{Q^{n}\}$, we observe that the much smaller class
of memoryless probability distributions is large enough to obtain
a tight result. The same idea was used also in the converse part of
\cite{AM98} in the context of guessing, which is strongly related
to source coding.}
\end{example}
The identity \eqref{eq: lncase} has found extensive utility, not
only in this context but also in previous works such as \cite{me11b},
where it was applied to exponential moments of various loss functions,
and \cite{me18}, where it played a crucial role in establishing lower
bounds on exponential moments of estimation errors. Numerous references
within these two articles further emphasize the importance of \eqref{eq: lncase}.
However, it is vital to highlight a key takeaway message from this
section: The relation \eqref{eq: CoM} is not limited to the logarithmic
function alone; it holds true for any concave function (or convex
function with appropriate adjustments) and extends its applicability
beyond just the logarithmic case.
\begin{example}
\emph{To demonstrate another special case of combining Jensen's inequality
with a change of measure, consider the example of deriving an upper
bound to the expectation of the harmonic mean of $n$ positive RVs,
$X_{1},\ldots,X_{n}$, i.e., 
\begin{equation}
\E\left\{ \frac{n}{\sum_{i=1}^{n}1/X_{i}}\right\} .
\end{equation}
This expectation cannot be upper bounded by a direct application of
Jensen's inequality, because it provides a lower bound, 
\begin{equation}
\E\left\{ \frac{n}{\sum_{i=1}^{n}1/X_{i}}\right\} \ge\frac{n}{\sum_{i=1}^{n}\E\left\{ 1/X_{i}\right\} },
\end{equation}
rather than an upper bound, and moreover, it requires the expectations
of $1/X_{i}$ rather than those of $X_{i}$. However, consider the
following approach: Let $q=(q_{1},\ldots q_{n})$ be an arbitrary
probability vector, i.e., a set of $n$ positive numbers summing to
unity. Then, 
\begin{align}
\sum_{i=1}^{n}\frac{1}{X_{i}} & =\sum_{i=1}^{n}q_{i}\cdot\frac{1}{q_{i}X_{i}}\nonumber \\
 & \ge\frac{1}{\sum_{i=1}^{n}q_{i}\cdot(q_{i}X_{i})}\nonumber \\
 & =\frac{1}{\sum_{i=1}^{n}q_{i}^{2}X_{i}},
\end{align}
where the inequality stems from the (ordinary) Jensen inequality applied
to the convex function $f(u)=1/u$. Equivalently, 
\begin{equation}
\frac{n}{\sum_{i=1}^{n}1/X_{i}}\le n\cdot\sum_{i=1}^{n}q_{i}^{2}X_{i}.
\end{equation}
Since this inequality holds for every probability vector $q$, we
may minimize the RHS over $q$, to obtain 
\begin{equation}
\frac{n}{\sum_{i=1}^{n}1/X_{i}}\le n\cdot\min_{q}\sum_{i=1}^{n}q_{i}^{2}X_{i}.
\end{equation}
Taking the expectations of both sides, we get: 
\begin{align}
\E\left\{ \frac{n}{\sum_{i=1}^{n}1/X_{i}}\right\}  & \le n\cdot\E\left\{ \min_{q}\sum_{i=1}^{n}q_{i}^{2}X_{i}\right\} \nonumber \\
 & \le n\cdot\min_{q}\E\left\{ \sum_{i=1}^{n}q_{i}^{2}X_{i}\right\} \nonumber \\
 & =n\cdot\min_{q}\sum_{i=1}^{n}q_{i}^{2}\E\{X_{i}\}\nonumber \\
 & =\frac{n}{\sum_{i=1}^{n}1/\E\{X_{i}\}},
\end{align}
where the last inequality follows from the optimal choice of $q$,
which is according to 
\begin{equation}
q_{i}=\frac{1/\E\{X_{i}\}}{\sum_{j=1}^{n}1/\E\{X_{j}\}}.
\end{equation}
More generally, whenever the function $f(u)=u^{\rho}$ is convex (namely,
for $\rho\notin(0,1)$), we can similarly obtain the inequality 
\begin{equation}
\E\left\{ \left(\sum_{i=1}^{n}X_{i}\right)^{\rho}\right\} \le\left[\sum_{i=1}^{n}\left(\E\{X_{i}^{\rho}\}\right)^{1/\rho}\right]^{\rho}.
\end{equation}
Note that no assumptions were imposed on the dependence/independence
among the RVs $\{X_{i}\}$.}
\end{example}

\subsection{Reverse Jensen Inequalities \label{subsec: reversejensen}}

Frequently, applied mathematicians, and especially information-theorists,
encounter a rather vexing situation where Jensen's inequality seems
to operate in the opposite direction of their desired results. This
observation has spurred significant research efforts aimed at developing
various versions of the so-called \emph{reverse Jensen inequality}
(RJI). A myriad of articles, including, but not limited to, \cite{ABZ21,BAT20,BDP01,Dragomir10,Dragomir13,JP00,KKC20a,KKC20b,Simic09,WGFS21},
have delved into this topic, showcasing its rich and evolving landscape.
In the majority of these works, the derived inequalities find practical
applications in diverse fields. Examples include establishing valuable
relationships between arithmetic and geometric means, deriving reverse
bounds on entropy, KL divergence, and more generally, Csisz{\'a}r's
$f$-divergence. Additionally, these inequalities have been extended
to reverse versions of the H\"{o}lder inequality, among other applications.
In many of the aforementioned papers, the primary results manifest
in the form of an upper bound on the difference $\E\{f(X)\}-f(\E\{X\})$,
where $f$ denotes a convex function and $X$ is a RV. It is worth
noting that these upper bounds predominantly rely on global properties
of the function $f$, such as its range and domain, rather than on
the underlying PDF  of $X$ or its probability mass function in the
discrete case. Ideally, a desirable characteristic of an RJI would
be its ability to provide tight bounds when the PDF of $X$ is highly
concentrated around its mean, akin to the well-known property of the
standard Jensen inequality: 
\begin{equation}
\E\left\{ f(X)\right\} \ge f(\E\{X\}).
\end{equation}
Such tightness in the presence of concentration around the mean is
a hallmark of the ordinary Jensen inequality, and it would be advantageous
for RJIs to exhibit a similar behavior under such conditions.

In \cite{me22}, we extend the concepts introduced in \cite{WGFS21},
providing a fresh perspective on the RJI landscape. Our contributions
encompass several novel variants of RJI, and what sets these apart
is their ability to exhibit the desired property of tightness in cases
of measure concentration, a characteristic we consistently emphasize.

Our journey in this section commences from the same foundational point
as found in the proof of \cite[Lemma 1]{WGFS21}. However, the subsequent
course of our derivation takes a significantly different path. This
novel approach leads to notably tighter bounds, which prove to be
eminently tractable and analyzable in a multitude of scenarios, as
we amply demonstrate. Expanding the horizon of our research, we venture
into the realm of functions involving more than one variable. These
functions exhibit convexity (or concavity) in each variable individually,
although they may not possess this property jointly across all variables.
This extension broadens the applicability of our findings and enhances
their relevance to multifaceted real-world problems. Furthermore,
building upon similar underlying principles, we extend our investigations
to derive upper and lower bounds on the expectations of functions
that do not necessarily exhibit convexity or concavity across their
entire domain. These diverse contributions collectively enrich the
toolbox of RJIs and broaden their potential utility in a wide array
of practical and theoretical contexts.

We commence our exploration from a foundational point that closely
resembles \cite[Lemma 1]{WGFS21}. Let $f:\reals^{+}\to\reals$ be
a concave function with $f(x)\ge f(0)$ for every $x\ge0$. Let $X$
be a non-negative RV with a finite mean, $\E\{X\}=\mu$. Then, 
\begin{equation}
\E\{f(X)\}\ge\sup_{a>0}\left[\frac{\mu}{a}\cdot f(a)+\left(1-\frac{\mu}{a}\right)\cdot f(0)-\frac{f(a)-f(0)}{a}\cdot\E\left\{ X\cdot\I[X>a]\right\} \right],\label{basicinequality}
\end{equation}
where $\I[X>a]$ denotes the indicator function of event $\{X>a\}$.
This foundational inequality sets the stage for our subsequent derivations.
The primary challenge at this juncture is to evaluate the term: 
\begin{equation}
q(a)\equiv\E\left\{ X\cdot\I[X>a]\right\} .
\end{equation}
In straightforward cases, the exact calculation of $q(a)$ is achievable
through closed-form expressions. Examples include scenarios where
the PDF of $X$ follows uniform, triangular, or exponential distributions,
among others. However, for the majority of cases that pique our interest,
obtaining an exact, closed-form expression for $q(a)$ becomes a formidable
task, if not an impossibility. Consequently, we must rely on upper
bounds to further constrain the RHS of \eqref{basicinequality}.

In situations where the computation of $q(a)$ eludes an exact closed-form
expression, we introduce two fundamental alternative approaches for
bounding $q(a)$. Both approaches share a common feature: When the
RV $X$ tightly concentrates around its mean $\mu$, even slight deviations
of $a$ from $\mu$ result in small values for $q(a)$. This characteristic
ensures that our bounds closely approach the value of $f(\mu)$. The
selection between these two approaches depends on the specific problem
under consideration and the feasibility of obtaining closed-form expressions
for the moments involved, if such expressions exist at all.
\begin{enumerate}
\item \emph{The Chernoff approach.} The first approach is to upper bound
the indicator function, $\I\{x>a\}$ by the exponential function $e^{s(x-a)}$
($s\ge0$), akin to the Chernoff bound. This results in 
\begin{equation}
q(a)\le\inf_{s\ge0}\E\{Xe^{s(X-a)}\}=\inf_{s\ge0}\left[e^{-as}\E\{Xe^{sX}\}\right]=\inf_{s\ge0}\left[e^{-as}\Phi'(s)\right]\triangleq q_{\text{Chernoff}}(a),
\end{equation}
where $\Phi'(s)$ is the derivative of the MGF, $\Phi(s)\triangleq\E\{e^{sX}\}$.
Thus, (\ref{basicinequality}) is further lower bounded as 
\begin{equation}
\E\left\{ f(X)\right\} \ge\sup_{a>0}\left[\frac{\mu}{a}\cdot f(a)+\left(1-\frac{\mu}{a}\right)\cdot f(0)-\frac{f(a)-f(0)}{a}\cdot q_{\text{Chernoff}}(a)\right].\label{eq: Chernoff-approach}
\end{equation}
This bound proves to be particularly valuable when the RV $X$ possesses
a finite MGF, denoted as $\Phi(s)$, within a certain range of positive
$s$ values. Furthermore, it is essential that $\Phi(s)$ is differentiable
within this range. To ensure the practicality of this bound, it is
crucial that $q_{\text{Chernoff}}(a)$ can be expressed in a reasonably
straightforward closed-form manner. A slight variation of the Chernoff
approach involves bounding not just the indicator function factor
but the entire function $x\cdot\I[x>a]$ by an exponential function
of the form $a\cdot e^{s(x-a)}$. To ensure the effectiveness of this
approach, we choose $s$ such that the derivative WRT $x$ at $x=a$
is not less than 1. This ensures that the exponential function is
at least tangential to the function $x\cdot\I[x>a]$ as $x$ approaches
$a$ from above. Mathematically, this condition can be expressed as
$as\ge1$, which implies that $s$ should be greater than or equal
to $1/a$. Thus, 
\begin{equation}
q(a)\le a\cdot\inf_{s\ge1/a}\{e^{-as}\Phi(s)\}\triangleq\tilde{q}_{\text{Chernoff}}(a)
\end{equation}
which, of course, may replace $q_{\text{Chernoff}}(a)$ in \eqref{eq: Chernoff-approach}.
The usefulness of this version of the bound is essentially under the
same circumstances as those of $q_{\text{Chernoff}}(a)$. It has the
small advantage that there is no need to differentiate $\Phi(s)$,
but the range of the optimization over $s$ is somewhat smaller. 
\item \emph{The Chebychev}--\emph{Cantelli approach}. According to this
approach, the function $x\cdot\I[x>a]$ is upper bounded by a quadratic
function, in the spirit of the Chebychev--Cantelli inequality, \emph{i.e.},
\begin{equation}
x\cdot\I[x>a]\le\frac{a(x+s)^{2}}{(a+s)^{2}},
\end{equation}
where the parameter $s\ge0$ is optimized under the constraint that
the derivative at $x=a$, which is $2a/(a+s)$, is at least 1 (again,
to be at least tangential to the function itself at $x\downarrow a$),
which is equivalent to the requirement, $s\le a$. In this case, denoting
$\sigma^{2}=\Var\{X\}$, we get 
\begin{equation}
q(a)\le\frac{a\E\left\{ (X+s)^{2}\right\} }{(a+s)^{2}}=\frac{a\left[\sigma^{2}+(\mu+s)^{2}\right]}{(a+s)^{2}},
\end{equation}
which, when minimized over $s\in[0,a]$, yields 
\begin{equation}
s^{*}=\min\left\{ a,\frac{\sigma^{2}}{a-\mu}-\mu\right\} ,
\end{equation}
and then the best bound is given by 
\begin{equation}
q(a)\le q_{\text{Cheb-Cant}}(a)\triangleq\begin{cases}
\frac{\sigma^{2}+(a+\mu)^{2}}{4a}, & a<a_{\text{c}}\\
\frac{a\sigma^{2}}{\sigma^{2}+(a-\mu)^{2}}, & a\ge a_{\text{c}}
\end{cases},
\end{equation}
where $a_{\text{c}}\triangleq\sqrt{\sigma^{2}+\mu^{2}}$.
\end{enumerate}
The Chernoff approach often outperforms the Chebychev--Cantelli approach
in many scenarios. Let us consider an example to illustrate this point.
Suppose we have a RV $X$ expressed as the sum of $n$ IID RVs, $Y_{1},Y_{2},\ldots,Y_{n}$,
all with the same mean $\mu_{Y}$, variance $\sigma_{Y}^{2}$, and
MGF $\Phi_{Y}(s)$. In this case, we can readily calculate that $\mu=n\mu_{Y}$,
$\sigma^{2}=n\sigma_{Y}^{2}$, and $\Phi(s)=[\Phi_{Y}(s)]^{n}$. Additionally,
for the sake of simplicity, let us assume that $f(0)=0$. Now, if
we aim to apply the Chebychev--Cantelli approach, we typically end
up with a bound that relies on the variance of $X$ and its mean,
which are both multiplied by $n$. This often results in a relatively
loose bound due to the dependence on the sample size $n$. On the
other hand, when we employ the Chernoff approach, we leverage the
MGF of $X$ and, consequently, the MGF of $Y_{i}$, which remains
unchanged as $n$ grows. This approach frequently yields tighter bounds,
even when $n$ is substantial. Thus, in cases like this, the Chernoff
approach tends to be more effective in providing more accurate and
meaningful bounds.

Suppose, for example, that $X=\sum_{i=1}^{n}Y_{i}$, where $Y_{1},\ldots,Y_{n}$
are IID RVs, all having mean $\mu_{Y}$, variance $\sigma_{Y}^{2}$
and MGF $\Phi_{Y}(s)$. Then, of course, $\mu=n\mu_{Y}$, $\sigma^{2}=n\sigma_{Y}^{2}$,
and $\Phi(s)=[\Phi_{Y}(s)]^{n}$. For simplicity suppose also that
$f(0)=0$. In this case, the Chernoff approach yields 
\begin{align}
\E\left\{ f\left(\sum_{i=1}^{n}Y_{i}\right)\right\}  & \ge\frac{n\mu_{Y}}{a}\cdot f(a)-\frac{f(a)}{a}\inf_{s\ge0}\left\{ e^{-sa}\frac{\dd}{\dd s}[\Phi_{Y}(s)]^{n}\right\} \nonumber \\
 & =\frac{n\mu_{Y}}{a}\cdot f(a)-\frac{nf(a)}{a}\inf_{s\ge0}\left\{ e^{-sa}[\Phi_{Y}(s)]^{n-1}\Phi_{Y}'(s)\right\} \nonumber \\
 & =\frac{nf(a)}{a}\left[\mu_{Y}-\inf_{s\ge0}\left\{ e^{-sa}[\Phi_{Y}(s)]^{n}\cdot\frac{\dd\ln\Phi_{Y}(s)}{\dd s}\right\} \right].
\end{align}
Now, if $Y_{1},Y_{2},\ldots$ obey a large-deviations principle, the
second term in the square brackets tends to zero exponentially for
the choice $a=n(\mu_{Y}+\epsilon)$ with arbitrarily small $\epsilon>0$.
In this case, let $s^{*}>0$ be the maximizer of $[s(\mu+\epsilon)-\ln\Phi_{Y}(s)]$,
and denote $I(\epsilon)=s^{*}(\mu+\epsilon)-\ln\Phi_{Y}(s^{*})$.
Then, 
\begin{equation}
\E\left\{ f\left(\sum_{i=1}^{n}Y_{i}\right)\right\} \ge\frac{f[(\mu_{Y}+\epsilon)n]}{\mu_{Y}+\epsilon}\left[\mu_{Y}-e^{-nI(\epsilon)}\frac{\dd\ln\Phi_{Y}(s)}{\dd s}\bigg|_{s=s^{*}}\right].\label{eq: iid}
\end{equation}
For large enough $n$, the second term in the square brackets becomes
negligible, and the lower bound becomes arbitrarily close to $f[(\mu_{Y}+\epsilon)n]\cdot\mu_{Y}/(\mu_{Y}+\epsilon)$.
On the other hand, Jensen's upper bound is $f(\mu_{Y}n)$. In some
cases, the difference is not very large, at least for asymptotic evaluations.
For example, if $f(x)=\ln(1+x)$, which is a frequently encountered
concave function in information theory, $\ln[1+n(\mu_{Y}+\epsilon)]\ge\ln n+\ln(\mu_{Y}+\epsilon)$,
whereas $\ln(1+n\mu_{Y})\le\ln n+\ln(\mu_{Y}+1/n)$, which are very
close for large $n$ and small $\epsilon>0$.

In the Chebychev--Cantelli approach, on the other hand, we have $a_{\text{c}}=\sqrt{n^{2}\mu_{Y}^{2}+n\sigma_{Y}^{2}}\sim n\mu_{Y}$
for large $n$. Thus, if we take $a=n(\mu_{Y}+\epsilon)>a_{\text{c}}$,
we have 
\begin{equation}
q_{\text{Cheb-Cant}}\left[n(\mu_{Y}+\epsilon)\right]=\frac{n\sigma_{Y}^{2}}{n\sigma_{Y}^{2}+n^{2}\epsilon^{2}}=\frac{\sigma_{Y}^{2}}{\sigma_{Y}^{2}+n\epsilon^{2}},
\end{equation}
which tends to zero, but only at the rate of $1/n$, as opposed to
the exponential decay in the Chernoff approach. Still, for large $n$,
the main term of the bound becomes asymptotically tight, as before.

In spite of the superiority of the Chernoff approach relative to the
Chebychev--Cantelli approach, as we now demonstrated, one should
keep in mind that there are also situations where the RV $X$ does
not have an MGF (\emph{i.e.}, when the PDF of $X$ has a heavy tail),
yet it does have a mean and a variance. In such cases, the Chebychev--Cantelli
approach is applicable while the Chernoff approach is not. But even
when the MGF exists, in certain cases, the calculation of the first
and the second moment are easier than the calculation of the exponential
moment.

We summarize our main finding this section so far in the following
inequality: 
\begin{equation}
\E\{f(X)\}\ge\sup_{a>0}\left[\frac{\mu}{a}\cdot f(a)+\left(1-\frac{\mu}{a}\right)\cdot f(0)-\frac{f(a)-f(0)}{a}\cdot q_{\min}(a)\right],
\end{equation}
where 
\begin{equation}
q_{\min}(a)\triangleq\min\left\{ q_{\text{Chernoff}}(a),\tilde{q}_{\text{Chernoff}}(a),q_{\text{Cheb-Cant}}(a)\right\} .
\end{equation}

We now demonstrate the lower bound in two information-theoretic application
examples. More examples can be found in \cite{me22}. The first example
concerns channel capacity.
\begin{example}[Capacity of the Gaussian channel with random SNR]
\emph{ Consider a zero-mean, circularly symmetric complex Gaussian
channel whose SNR, $Z$, is a RV (e.g., due to fading), known to both
the transmitter and the receiver. The capacity is given by $C=\E\{\ln(1+gZ)\}$,
where $g$ is a certain deterministic gain factor and the expectation
is WRT the randomness of $Z$. For simplicity, let us assume that
$Z$ is distributed exponentially, i.e., 
\begin{equation}
p_{Z}(z)=\theta e^{-\theta z},~~~z\ge0,
\end{equation}
where the parameter $\theta>0$ is given. In this case, $f(x)=\ln(1+gx)$,
$\mu=1/\theta$ and $q(a)$ can be easily derived in closed form,
to obtain 
\begin{equation}
q(a)=\theta\cdot\int_{a}^{\infty}ze^{-\theta z}\dd z=\left(a+\frac{1}{\theta}\right)\cdot e^{-\theta a}.
\end{equation}
Consequently, 
\begin{align}
C & \ge\sup_{a\ge1/\theta}\frac{\ln(1+ga)}{a}\bigg[\frac{1}{\theta}-\left(a+\frac{1}{\theta}\right)\cdot e^{-a\theta}\bigg]\nonumber \\
 & =\sup_{s\ge1}\bigg[\frac{1-(s+1)e^{-s}}{s}\bigg]\cdot\ln\left(1+\frac{gs}{\theta}\right),
\end{align}
whereas the Jensen upper bound is $C\le\ln(1+g/\theta)$.}
\end{example}
The next example belongs to the realm of universal source coding. 
\begin{example}[Universal source coding]
\emph{\label{exa: univsrccoding}  Let us delve into the evaluation
of the expected code length linked with the universal lossless source
code developed by Krichevsky and Trofimov \cite{KT81}. In essence,
this code serves as a universal solution for encoding memoryless sources.
In the binary context, at each time step $t$, it systematically assigns
probabilities to the next binary symbol based on a biased version
of the empirical distribution derived from the source data observed
up to that point, denoted as $s_{1},s_{2},\ldots,s_{t}$. To be more
specific, let us examine the ideal code-length function (measured
in nats): 
\begin{equation}
L(s^{n})=-\sum_{t=0}^{n-1}\ln Q(s_{t+1}|s_{1},\ldots,s_{t}),\label{eq: KT81a}
\end{equation}
where 
\begin{equation}
Q(s_{t+1}=s|s_{1},\ldots,s_{t})=\frac{N_{t}(s)+1}{t+2},\label{eq: KT81b}
\end{equation}
and $N_{t}(s)$, $s\in\{0,1\}$, is the number of occurrences of the
symbol $s$ in $(s_{1},\ldots,s_{t})$. Therefore, 
\begin{align}
\E\left\{ L(S^{n})\right\}  & =\sum_{t=0}^{n-1}\ln(t+2)-\sum_{t=0}^{n-1}\E\{\ln[N_{t}(S_{t+1})+1]\}\nonumber \\
 & =\ln[(n+1)!]-\sum_{t=0}^{n-1}\E\left\{ \ln\left(1+\sum_{i=0}^{t}\I[S_{i}=S_{t+1}]\right)\right\} \nonumber \\
 & =\ln[(n+1)!]-p\cdot\sum_{t=0}^{n-1}\E\left\{ \ln\left(1+\sum_{i=0}^{t}\I[S_{i}=1]\right)\right\} -\nonumber \\
 & \hphantom{==}(1-p)\cdot\sum_{t=0}^{n-1}\E\left\{ \ln\left(1+\sum_{i=0}^{t}\I[S_{i}=0]\right)\right\} ,
\end{align}
where $\I[\cdot]$ are indicator functions of the corresponding events
and where $p$ and $1-p$ are the probabilities of `1' and `0', respectively.
To establish an upper bound for $\E\{L(S^{n})\}$, one can now use
\eqref{eq: iid} for lower bounds for each of the terms: $\E\{\ln(1+\sum_{i=0}^{t}\I[S_{i}=1])\}$
and $\E\{\ln(1+\sum_{i=0}^{t}\I[S_{i}=0])\}$. }
\end{example}

\subsection{Jensen-Like Inequalities \label{subsec: jensenlike}}

In this section, which summarizes the main findings of \cite{me23a},
we consider inequalities that are founded upon a fundamental insight
closely tied to the derivation of the ordinary Jensen inequality.
This insight revolves around the relationship between a given convex
function, denoted as $f(x)$, and the tangential affine function,
$\ell(x)=f(a)+f'(a)(x-a)$. Here, $a$ is an arbitrary value within
the domain of $x$, and $f'(a)$ represents the derivative of $f$
at the point $x=a$ (assuming the differentiability of $f$ at that
point). By strategically choosing $a$ to be $\E\{X\}$ (the expected
value of the RV $X$) and subsequently taking expectations of both
sides of the inequality $f(X)\ge\ell(X)$, we can effortlessly establish
the traditional Jensen inequality. This crucially hinges on the fact
that $a_{*}=\E\{X\}$ constitutes the optimal selection of $a$ in
the context of maximizing $\E\{\ell(X)\}$ across all potential values
of $a$. This, in turn, furnishes us with the most stringent lower
bound within the scope of lower bounds for $\E\{f(X)\}$. However,
it is worth noting that the optimal choice of $a$ may differ when
we are dealing with more intricate expressions where the expectation
needs to be lower bounded. For instance, one might seek to establish
a lower bound for $\E\{g[f(X)]\}$, where $g$ is a monotonically
non-decreasing function, or $\E\{f(X)g(X)\}$, where $g$ is a non-negative
and/or convex function, or perhaps a combination of these conditions
and more. In such cases, the optimal choice of $a$ could deviate
from $\E\{X\}$.

To illustrate this point, let us examine the lower bound of $\E\{f(X)g(X)\}$,
where $g$ is a non-negative function. In this scenario, we can establish
the following inequality: 
\begin{equation}
\E\left\{ f(X)g(X)\right\} \ge\E\left\{ [f(a)+f'(a)(X-a)]g(X)\right\} .
\end{equation}
By optimizing the RHS over the parameter $a$, we can easily determine
the optimal choice for $a$, denoted as $a_{*}$: 
\begin{equation}
a_{*}=\frac{\E\{Xg(X)\}}{\E\{g(X)\}}.
\end{equation}
This result leads to the inequality: 
\begin{equation}
\E\left\{ f(X)g(X)\right\} \ge f\left(\frac{\E\{Xg(X)\}}{\E\{g(X)\}}\right)\cdot\E\left\{ g(X)\right\} .\label{eq: prod}
\end{equation}
This inequality proves valuable, provided that we can readily compute
both $\E\{g(X)\}$ and $\E\{Xg(X)\}$ for the given function $g$.
Our first example concerns a function that is intimately related to
the Shannon entropy.
\begin{example}[An entropy-related function]
\emph{\label{exa: entropyrelated} Letting $f(x)=-\ln x$ and $g(x)=x$,
$x>0$, we obtain 
\begin{align}
\E\left\{ -X\ln X\right\}  & \ge-\E\{X\}\cdot\ln\frac{\E\{X^{2}\}}{\E\{X\}}\nonumber \\
 & =-\E\{X\}\cdot\ln(\E\{X\})-\E\{X\}\cdot\ln\left(1+\frac{\Var\{X\}}{[\E\{X\}]^{2}}\right).
\end{align}
Notice that the function $-x\ln x$ exhibits concavity, rather than
convexity. Nevertheless, we establish a lower bound, not an upper
one, on its expectation, thereby unveiling a RJI. The right-most side
of the expression comprises two components: The initial term represents
the standard Jensen upper bound for $\E\{-X\ln X\}$, while the second
term accounts for the gap. This gap is contingent not only upon the
expectation of $X$ but also on its variance, reflecting the fluctuations
around $\E\{X\}$. Clearly, in scenarios where $\Var\{X\}=0$, the
second term disappears --- a logical outcome, as a degenerate RV
 causes Jensen's inequality to hold with equality, eliminating any
gap. This inequality promptly finds application in deriving a lower
bound for the expected empirical entropy of a sequence generated by
a memoryless source. Such an application holds significance within
the realm of universal source coding, as detailed in \cite{KT81}
(see more details in \cite{me23a}). }
\end{example}
Another important example is associated with moments.
\begin{example}[Bounds on moments]
\emph{\label{exa: moments}  Let $s$ and $t$ be two real numbers
whose difference, $s-t$, is either negative or larger than unity.
Now, let $g(x)=x^{t}$, and $f(x)=x^{s-t}$. Then, 
\begin{align}
\E\left\{ X^{s}\right\}  & =\E\left\{ X^{t}X^{s-t}\right\} \nonumber \\
 & \ge\left(\frac{\E\{X^{t+1}\}}{\E\{X^{t}\}}\right)^{s-t}\cdot\E\{X^{t}\}\nonumber \\
 & =\frac{(\E\{X^{t+1}\})^{s-t}}{(\E\{X^{t}\})^{s-t-1}}.
\end{align}
In particular, for $t=1$ and $s\notin(1,2)$, this becomes 
\begin{equation}
\E\left\{ X^{s}\right\} \ge\frac{\left(\E\{X^{2}\}\right)^{s-1}}{\left(\E\{X\}\right)^{s-2}}=\left[\E\{X\}\right]^{s}\cdot\left(1+\frac{\Var\{X\}}{\left[\E\{X\}\right]^{2}}\right)^{s-1}
\end{equation}
which is, once again, a bound that depends only on the first two moments
of $X$. For $s\in(0,1)$, the function $x^{s}$ exhibits concavity,
resulting in a RJI. Conversely, when $s\le0$ or $s\ge2$, the function
$x^{s}$ is convex, giving rise to an enhanced version of Jensen's
inequality. In this enhanced version, the first term, $[\E\{X\}]^{s}$,
corresponds to the standard Jensen inequality, while the second factor
quantifies the degree of enhancement. This enhancement is contingent
on the relative fluctuation term, $\Var\{X\}/[\E\{X\}]^{2}$. Naturally,
the extent of improvement hinges on the variance of $X$. When the
variance dwindles to zero, there is no room for improvement since
the standard Jensen inequality attains equality. In contrast, a larger
variance results in a wider gap between the conventional Jensen bound,
$[\E\{X\}]^{s}$, and the enhanced counterpart. This underscores the
importance of optimizing the parameter $a$, as opposed to the default
choice of $a=\E\{X\}$ in the standard Jensen inequality.}
\end{example}
Another family of Jensen-like bounds is associated with the product
of two non-negative convex functions. Let both $f$ and $g$ be non-negative
convex functions of $x\ge0$. Then, 
\begin{align}
\E\left\{ f(X)g(X)\right\}  & \ge\E\left\{ [f(a)+f'(a)(X-a)]\cdot g(X)\right\} \nonumber \\
 & =\left[f(a)-af'(a)\right]\cdot\E\left\{ g(X)\right\} +f'(a)\E\left\{ Xg(X))\right\} \nonumber \\
 & \ge\left[f(a)-af'(a)\right]\E\left\{ [g(b)+g'(b)(X-b)]\right\} +\nonumber \\
 & \hphantom{===}f'(a)\E\left\{ X\left[g(c)+g'(c)(X-c)\right]\right\} ~~~~~~~f(a)\ge af'(a)\ge0\nonumber \\
 & =\left[f(a)-af'(a)\right]\cdot\left[g(b)-bg'(b)+g'(b)\E\{X\}\right]+\nonumber \\
 & \hphantom{===}f'(a)\left[(g(c)-cg'(c))\E\{X\}+g'(c)\E\left\{ X^{2}\right\} \right].
\end{align}
Maximizing the right-most side over $a$, $b$ and $c$, one obtains
the inequality: 
\begin{equation}
\E\left\{ f(X)g(X)\right\} \ge f\left(\frac{\E\{X\}\cdot g(\E\{X^{2}\}/\E\{X\})}{g(\E\{X\})}\right)\cdot g\left(\E\{X\}\right).
\end{equation}

\begin{example}[Second moment of Gaussian capacity]
\emph{\label{exa: capacityvariance} Consider the example of the
AWGN channel with a random SNR, denoted as $Z$. In this context,
we aim to bound the variance of the capacity, denoted as $c(Z)$,
as a means to assess the fluctuations, particularly for applications
like bounding the outage probability. The variance of $c(Z)$ can
be expressed as follows: 
\begin{equation}
\Var\left\{ c(Z)\right\} =\E\left\{ c^{2}(Z)\right\} -\left[\E\{c(Z)\}\right]^{2}=\E\left\{ \ln^{2}(1+gZ)\right\} -\left[\E\left\{ \ln(1+gZ)\right\} \right]^{2}.
\end{equation}
To establish an upper bound for $\Var\{c(Z)\}$, we can derive upper
bounds for both $\E\{\ln^{2}(1+gZ)\}$ and a lower bound for $\E\{\ln(1+gZ)\}$.
For the former, we can utilize the inequality presented here, employing
$f(z)=g(z)=\ln(1+gz)$. This yields the following upper bound, relying
solely on the first two moments of $Z$: 
\begin{equation}
\E\left\{ \ln^{2}(1+gZ)\right\} \le\ln\left(1+g\E\{Z\}\right)\cdot\ln\left(1+\frac{g\E\{Z\}\ln(1+g\E\{Z^{2}\}/\E\{Z\})}{\ln(1+g\E\{Z\})}\right).
\end{equation}
Notably, the function $\ln^{2}(1+gx)$ is neither convex nor concave.
Nevertheless, our approach provides an upper bound that can be easily
computed, given the ability to calculate the first two moments of
$Z$.}
\end{example}
These are just a few out of many more examples provided in \cite{me23a}.
The main features of the results on Jensen-like inequalities in general,
are the following. Firstly, in many instances, such as the one mentioned
above, we can analytically determine the optimal value of a parameter
(e.g., $a$ in the preceding discussion). However, in cases where
closed-form optimization is not feasible, we have two viable options:
(i) Perform numerical optimization or (ii) select an arbitrary value
for $a$ and derive a valid lower bound. It is important to note that
a well-informed choice for $a$ can potentially yield a robust lower
bound. Secondly, these inequalities offer two distinct types of bounds:
(i) Bounds that necessitate computing the first two moments (or equivalently,
the first two cumulants) of the RV $X$, and (ii) bounds that require
calculating the MGF of $X$ and its derivative, or equivalently, the
cumulant generating function  of $X$ and its derivative. These moment
calculations are often straightforward, especially in scenarios where
$X$ is represented as the sum of IID RVs --- a common occurrence
in information-theoretic applications. It should also be noted that
the classes of Jensen-like inequalities we explore provide ample flexibility
for deriving lower bounds on functions that may not be inherently
convex, some may even be concave. This opens the door to an alternative
approach for RJIs, different than those discussed in the Section \ref{subsec: reversejensen}.
This can be achieved by representing the given function within one
of the discussed categories, such as a product of a convex function
and a non-negative function, a product of two non-negative convex
functions, or a composition of a monotone function and a convex function.
Finally, the Jensen-like inequalities possess the desirable property
of tightening as the RV $X$ becomes increasingly concentrated around
its mean, akin to the conventional Jensen inequality.

\newpage{}

\section{Summary, Outlook and Open Issues}

In this monograph, we have provided an analytical toolbox for information-theoretic
analysis. We have described a generalization of the method of types,
which allows to address settings that go beyond the finite alphabet
case, including the prominent example of Gaussian sources and channels,
possibly with memory. This allows to evaluate the volumes of various
high-dimensional sets, and thus also their probability. We have also
described a generalization of this method to distributions from exponential
families. Further generalizing and refining such extensions to broader
classes of distributions is an interesting path for future research.
We have then described the saddle-point method for integration, which
not only allows to evaluate the pre-exponent of volumes or probabilities,
it is also necessary in the evaluation of redundancy rates, and may
provide solutions in settings for which the method of types fails. 

We then continued to present the TCEM, for evaluating the exponential
behavior of random codes. The method is principled, allows to analyze
optimal decoders, and is guaranteed to provide exponentially tight
results. It also provides the best known exponents in diverse problem
settings. Future research may further explore additional settings,
e.g., the error exponent of the typical random code in multi-user
configurations \cite{el2011network}. An additional important future
research direction is to consider structured random-ensembles. The
TCEM method rely on the assumption that the codewords in the ensemble
are drawn at random, IID (or some variant of such a random ensemble).
For practical decoding algorithms, codes must have some structure,
e.g., linear codes over finite fields, lattice codes for real/complex-input
channels \cite{zamir2014lattice}, convolutional codes or trellis-codes,
or even well-defined structure such as turbo-codes \cite{berrou1993near},
LDPC codes \cite{richardson2008modern}, polar codes \cite{arikan2009channel},
and so on. It is of interest to develop methods, akin to the TCEM,
to accurately analyze the error exponents of such codes. In addition,
it is also of interest to explore methods inspired by the TCEM in
derivation of converse results, in the finite-blocklength regime \cite{Polyanskiy10},
in the moderate-deviations regime \cite{altuug2014moderate,polyanskiy2010moderate}
and so on. We have briefly mentioned a few such initial results, which
hints the possibility of enriching this direction. Finally, it is
also of interest to further delve into the optimization problems involved
in the computation of exponents obtained by the TCEM, and develop
efficient, and perhaps ``general-purpose'', solvers, to solve them. 

We then considered the tight evaluation of expectations of non-linear
functions of RVs, including integral representations and a few variants
of Jensen's inequality. These techniques are highly useful in information
theory, as information measures typically involve such expectations.
For RJI, we have emphasized that it approaches the standard Jensen
inequality, when the RV of interest is tightly concentrated around
its mean value. It is thus of interest to relate the RJI we considered
to concentration-of-measure ideas \cite{boucheron2013concentration}.

\newpage{}

\appendix

\section{Computation of the Exponent \label{sec:Computation-of-the}}

In this appendix, we describe two possible approaches to efficiently
compute or bound the exponents obtained using the TCEM. This aspect
is an indispensable part of the TCEM, since it is possible for exact
exponents to take a rather intricate formula. Indeed, recall that
the exponents are given by Csisz{\'a}r--K{\"o}rner-style formulas,
and thus involve constrained optimization over joint types. Thus,
a direct optimization, using an exhaustive search or general purpose
global optimization over the probability simplex may be prohibitively
complex. 

The first approach we consider is based on Lagrange duality \cite{boyd2004convex},
in which the original exponent optimization problem is considered
to be the \emph{primal} optimization problem. When deriving instead
the \emph{dual} optimization problem of the exponent, the result is
a Gallager-style bound, which is rather easy to compute and plot for
an entire range of rates (rather than for a specific rate). This is
especially useful in multiuser problems, for which even problem instances
with binary alphabets lead to optimization problems in non-trivial
dimensions. In some of the problems, the number of optimization variables
for the Gallager-style bound does not increase with the alphabet size
of the source or channel. The downside is that some lower bounds may
be necessary for the derivation, and even if not, the dual exponent
may not be tight if the primal optimization problem of the exponent
is not convex. The second approach is based on utilization of convex
optimization solvers. While the optimization problem involved in the
computation of the exponent may not be convex as is, in many cases
it is possible to develop a procedure that allows to compute it by
only solving convex optimization problems.

Moreover, typically, the primal problem involves mostly \emph{minimization}
operators (over joint types), while the dual problem involves \emph{maximization}
operators (over scalar parameters). From this aspect, the dual exponent
is preferable, because even a sub-optimal choice of the dual variables
leads to a valid bound on the exponent. Thus, e.g., a coarse exhaustive
search on the dual variables may be performed and still lead to a
tight bound. By contrast, the minimization in the primal problem must
be performed accurately in order to obtain a valid numerical value
of the exponent. Nonetheless, it also possible for the primal problem
to include a maximization operator (possibly intertwined between minimization
operators), and the same holds for such maximization problems ---
any sub-optimal choice leads to a valid bound. In fact, in some cases,
an educated guess for the maximizing primal variable may be proposed,
and in some settings it is possible to show that this choice is actually
optimal. 

\subsection{Exponent Computation by Lagrange Duality}

Lagrange duality is based on the \emph{minimax theorem} \cite{Sion1958minimax},
stating the minimax value of a functional convex in the minimization
variable and concave in the maximization variable equals to the maximin
value. We will next exemplify this technique on the random-coding
error exponent $E_{\text{rc,\ensuremath{\alpha}}}(R,P_{X})$ from
\eqref{eq: random coding exponent expression}, and derive a Lagrange
dual lower bound on its value. As we have seen, if we consider the
MMI rule, then the random-coding error exponent is greatly simplified
to the standard random-coding error exponent in \eqref{eq: random coding exponent CKM},
which only contains a minimization over $Q_{Y|X}$ (with the minimization
over $\tilde{Q}_{Y|X}$ removed). In accordance, it is not very difficult
to obtain a dual Lagrange form of this exponent. In order to demonstrate
a few other techniques that are generally useful for the TCE-based
exponents, we will next let $\alpha(\cdot)$ be general, yet restricted
to be a linear function of $Q_{XY}$, given by $\alpha(Q_{XY})\triangleq\sum_{x\in{\cal X},y\in{\cal Y}}\alpha(x,y)\cdot Q(x,y)$
(this includes, e.g., the ML decoder). 

Let us start by writing the objective function of $E_{\text{rc,\ensuremath{\alpha}}}(R,P_{X})$
using a dual variable $\rho\in\reals$ as
\begin{align}
E_{\text{rc,\ensuremath{\alpha}}}(R,P_{X}) & =\min_{Q_{Y|X},\tilde{Q}_{Y|X}}D(Q_{Y|X}||W|P_{X})+\left[I(P_{X}\times\tilde{Q}_{Y|X})-R\right]_{+}\label{eq: random coding primal}\\
 & =\min_{Q_{Y|X},\tilde{Q}_{Y|X}}D(Q_{Y|X}||W|P_{X})+\max\left\{ I(P_{X}\times\tilde{Q}_{Y|X})-R,0\right\} \nonumber \\
 & =\min_{Q_{Y|X},\tilde{Q}_{Y|X}}D(Q_{Y|X}||W|P_{X})+\max_{\rho\in[0,1]}\rho\cdot\left[I(P_{X}\times\tilde{Q}_{Y|X})-R\right]\nonumber \\
 & =\min_{Q_{Y|X},\tilde{Q}_{Y|X}}\max_{\rho\in[0,1]}D(Q_{Y|X}||W|P_{X})+\rho\cdot\left[I(P_{X}\times\tilde{Q}_{Y|X})-R\right].
\end{align}
Now, the objective function is linear, and hence concave, in the maximizing
variable $\rho$, and the interval $[0,1]$ is convex. Moreover, $D(Q_{Y|X}||W|P_{X})$
is convex in $Q_{Y|X}$ and $\rho\cdot I(P_{X}\times\tilde{Q}_{Y|X})$
is convex in $\tilde{Q}_{Y|X}$ (for $\rho\geq0$), hence the objective
functional is jointly convex in $(Q_{Y|X},\tilde{Q}_{Y|X})$. The
constraint set for $(Q_{Y|X},\tilde{Q}_{Y|X})$, given by 
\begin{equation}
\left\{ Q_{Y|X},\;\tilde{Q}_{Y|X}\colon(P_{X}\times Q_{Y|X})_{Y}=(P_{X}\times\tilde{Q}_{Y|X})_{Y},\;\alpha(P_{X}\times\tilde{Q}_{Y|X})\geq\alpha(P_{X}\times Q_{Y|X})\right\} ,
\end{equation}
is the intersection of an hyperplane and a half space. We also note
the implicit constraint that $Q_{Y|X}$ and $\tilde{Q}_{Y|X}$ are
conditional probabilities, \emph{i.e.}, $\sum_{y\in{\cal Y}}Q_{Y|X}(y|x)=\sum_{y\in{\cal Y}}\tilde{Q}_{Y|X}(y|x)=1$
for all $x\in{\cal X}$ and $Q_{Y|X}(y|x),\tilde{Q}_{Y|X}(y|x)\geq0$
for all $x\in{\cal X},y\in{\cal Y}$. These are also convex constraints,
and since the intersection of convex sets is convex, the constraint
set for $(Q_{Y|X},\tilde{Q}_{Y|X})$ is convex. So, the minimax theorem
\cite{Sion1958minimax} implies that 
\begin{equation}
E_{\text{rc,\ensuremath{\alpha}}}(R,P_{X})=\max_{\rho\in[0,1]}\min_{Q_{Y|X},\tilde{Q}_{Y|X}}D(Q_{Y|X}||W|P_{X})+\rho\cdot\left[I(P_{X}\times\tilde{Q}_{Y|X})-R\right]
\end{equation}
over the constraint set. We next focus on the inner minimization for
a given $\rho\in[0,1]$. Following Lagrange duality \cite[Chapter 5]{boyd2004convex},
we introduce dual variables $\lambda\geq0$ and $\{\nu(y)\}_{y\in{\cal Y}}\subset\reals$.
The variable $\lambda$ is for the inequality constraint $\alpha(P_{X}\times\tilde{Q}_{Y|X})\geq\alpha(P_{X}\times Q_{Y|X})$,
whereas the variables $\{\nu(y)\}_{y\in{\cal Y}}$ are for the constraint
of equal output marginals, that is, the $|{\cal Y}|$ constraints
$(P_{X}\times Q_{Y|X})_{Y}=(P_{X}\times\tilde{Q}_{Y|X})_{Y}$. Note
that the constraint that $Q_{Y|X}$ and $\tilde{Q}_{Y|X}$ are conditional
probability distributions is kept implicit. Hence, the minimization
of interest is 
\begin{align}
 & \min_{Q_{Y|X},\tilde{Q}_{Y|X}}\max_{\lambda\geq0}\max_{\{\nu(y)\}_{y\in{\cal Y}}}D(Q_{Y|X}||W|P_{X})+\rho\cdot\left[I(P_{X}\times\tilde{Q}_{Y|X})-R\right]\nonumber \\
 & \hphantom{==}+\sum_{y\in{\cal Y}}\nu(y)\cdot\left[\sum_{x\in{\cal X}}P_{X}(x)\left(\tilde{Q}_{Y|X}(y|x)-Q_{Y|X}(y|x)\right)\right]\nonumber \\
 & \hphantom{==}+\lambda\cdot\left[\sum_{x\in{\cal X}}\sum_{y\in{\cal Y}}\alpha(x,y)\cdot P_{X}(x)\left(Q_{Y|X}(y|x)-\tilde{Q}_{Y|X}(y|x)\right)\right].
\end{align}
The minimax theorem now implies that we may interchange the minimization
and maximization order. We next focus on the minimization, and begin
by expressing the mutual information term via the \emph{golden formula}
using an arbitrary probability distribution $S_{Y}$ on ${\cal Y}$,
as 
\begin{align}
I(P_{X}\times\tilde{Q}_{Y|X}) & =D(\tilde{Q}_{Y|X}||\tilde{Q}_{Y}|P_{X})-D(\tilde{Q}_{Y}||S_{Y})\nonumber \\
 & =\min_{S_{Y}}D(\tilde{Q}_{Y|X}||S_{Y}|P_{X}).
\end{align}
Using this relation and slightly re-organizing the objective function,
we are thus remain to minimize over $(Q_{Y|X},\tilde{Q}_{Y|X})$ the
functional 

\begin{align}
 & \min_{S_{Y}}D(Q_{Y|X}||W|P_{X})+\sum_{x\in{\cal X}}\sum_{y\in{\cal Y}}P_{X}(x)Q_{Y|X}(y|x)\cdot\left[-\nu(y)+\lambda\cdot\alpha(x,y)\right]\nonumber \\
 & +\rho D(\tilde{Q}_{Y|X}||S_{Y}|P_{X})+\sum_{x\in{\cal X}}\sum_{y\in{\cal Y}}P_{X}(x)\tilde{Q}_{Y|X}(y|x)\cdot\left[\nu(y)-\lambda\cdot\alpha(x,y)\right].
\end{align}
It can be noticed that the minimization over $Q_{Y|X}$ is decoupled
from the minimization over $\tilde{Q}_{Y|X}$, and each of them can
be solved directly. Alternatively, we may use \emph{Donsker--Varadhan's}
variational formula \cite[Corollary 4.15]{boucheron2013concentration},
\cite{donsker1983asymptotic}, stating that for any two probability
measures $P_{1}$and $P_{2}$ on ${\cal Z}$ and a function $f\colon{\cal Z}\to\reals$
that does not depend on $P_{1}$ 
\begin{equation}
\min_{P_{2}}\left\{ D(P_{2}||P_{1})+\E_{P_{2}}\left[f(Z)\right]\right\} =-\ln\E_{P_{1}}\left[e^{-f(Z)}\right].\label{eq: Donsker Varadhan}
\end{equation}
Letting $W(\cdot|x)$ denote the conditional output of the channel
given $x\in{\cal X}$. By employing \eqref{eq: Donsker Varadhan}
separately for each $x\in{\cal X}$ we get
\begin{align}
 & \min_{Q_{Y|X}}D(Q_{Y|X}||W|P_{X})+\sum_{x\in{\cal X}}\sum_{y\in{\cal Y}}P_{X}(x)Q_{Y|X}(y|x)\cdot\left[-\nu(y)+\lambda\cdot\alpha(x,y)\right]\nonumber \\
 & =\sum_{x\in{\cal X}}P_{X}(x)\cdot\left\{ \min_{Q_{Y|X=x}}D(Q_{Y|X=x}||W(\cdot|x))+\sum_{y\in{\cal Y}}Q_{Y|X}(y|x)\cdot\left[-\nu(y)+\lambda\cdot\alpha(x,y)\right]\right\} \nonumber \\
 & =-\sum_{x\in{\cal X}}P_{X}(x)\cdot\ln\left(\sum_{y\in{\cal Y}}W(y|x)\cdot e^{\nu(y)-\lambda\cdot\alpha(x,y)}\right).
\end{align}
Similarly, the minimization over $\tilde{Q}_{Y|X}$ leads to the value
\begin{align}
 & \rho\cdot\sum_{x\in{\cal X}}P_{X}(x)\cdot\left\{ \min_{\tilde{Q}_{Y|X=x}}\rho D(\tilde{Q}_{Y|X=x}||S_{Y})+\sum_{y\in{\cal Y}}\tilde{Q}_{Y|X}(y|x)\cdot\left[\nu(y)-\lambda\cdot\alpha(x,y)\right]\right\} \nonumber \\
 & =\min_{S_{Y}}-\rho\sum_{x\in{\cal X}}P_{X}(x)\cdot\ln\left(\sum_{y\in{\cal Y}}S_{Y}(y)\cdot e^{-\left[\nu(y)+\lambda\cdot\alpha(x,y)\right]/\rho}\right)\nonumber \\
 & \geq\min_{S_{Y}}-\rho\ln\left(\sum_{x\in{\cal X}}\sum_{y\in{\cal Y}}P_{X}(x)S_{Y}(y)\cdot e^{-\left[\nu(y)+\lambda\cdot\alpha(x,y)\right]/\rho}\right),\label{eq: a lower bound used in the dual expression}
\end{align}
where the inequality follows from convexity and Jensen inequality,
yet is \emph{not} guaranteed to be tight. Since $\rho\in[0,1]$, minimizing
this last term over $S_{Y}$ corresponds to maximizing 
\begin{equation}
\sum_{y\in{\cal Y}}S_{Y}(y)\sum_{x\in{\cal X}}P_{X}(x)\cdot e^{-\left[\nu(y)+\lambda\cdot\alpha(x,y)\right]/\rho}
\end{equation}
which, due to Schwarz--Cauchy inequality, occurs when 
\begin{equation}
S_{Y}(y)=\frac{\sum_{x\in{\cal X}}P_{X}(x)\cdot e^{-\left[\nu(y)+\lambda\cdot\alpha(x,y)\right]/\rho}}{\sum_{y\in{\cal Y}}\sum_{x\in{\cal X}}P_{X}(x)\cdot e^{-\left[\nu(y)+\lambda\cdot\alpha(x,y)\right]/\rho}}.
\end{equation}
The minimal value over $S_{Y}$ is then 
\begin{align}
 & \min_{S_{Y}}-\rho\ln\left(\sum_{x\in{\cal X}}\sum_{y\in{\cal Y}}P_{X}(x)S_{Y}(y)\cdot e^{-\left[\nu(y)+\lambda\cdot\alpha(x,y)\right]/\rho}\right)\nonumber \\
 & =-\rho\ln\left(\frac{\sum_{y\in{\cal Y}}\left(\sum_{x\in{\cal X}}P_{X}(x)e^{-\left[\nu(y)+\lambda\cdot\alpha(x,y)\right]/\rho}\right)^{2}}{\sum_{y\in{\cal Y}}\sum_{x\in{\cal X}}P_{X}(x)\cdot e^{-\left[\nu(y)+\lambda\cdot\alpha(x,y)\right]/\rho}}\right).
\end{align}
We thus conclude the dual lower bound 
\begin{align}
 & E_{\text{rc},\alpha}(R,P_{X})\nonumber \\
 & \geq-\sum_{x\in{\cal X}}P_{X}(x)\cdot\ln\left(\sum_{y\in{\cal Y}}W(y|x)\cdot e^{\nu(y)-\lambda\cdot\alpha(x,y)}\right)\nonumber \\
 & \hphantom{===}-\rho\ln\left(\frac{\sum_{y\in{\cal Y}}\left(\sum_{x\in{\cal X}}P_{X}(x)e^{-\left[\nu(y)+\lambda\cdot\alpha(x,y)\right]/\rho}\right)^{2}}{\sum_{y\in{\cal Y}}\sum_{x\in{\cal X}}P_{X}(x)\cdot e^{-\left[\nu(y)+\lambda\cdot\alpha(x,y)\right]/\rho}}\right),\label{eq: random coding dual primal bound}
\end{align}
for any choice of $\rho\in[0,1]$, $\lambda\geq0$ and $\{\nu(y)\}_{y\in{\cal Y}}\subset\reals$. 

Let us compare the primal optimization in \eqref{eq: random coding primal},
with the dual lower bound \eqref{eq: random coding dual primal bound}.
The primal problem is a minimization problem of dimension $2|{\cal X}|(|{\cal Y}|-1)$
over a constrained set $(Q_{Y|X},\tilde{Q}_{Y|X})$ (the constraints
further reduce the dimension by $|{\cal Y}|+1$). For the exact exponent,
this minimization must be accurately solved. By comparison, the dual
exponent is a lower bound on the exact exponent (recall \eqref{eq: a lower bound used in the dual expression}),
and can be maximized over dimension $|{\cal Y}|+2$. Nonetheless,
this maximization can be performed in a crude manner, since any choice
of the dual parameters leads to a valid lower bound on the exponent. 

For additional derivations of dual Lagrange exponents formulations
and Gallager-style bounds, see \cite[Exercise 10.24]{csiszar2011information}
and, in the context of the TCEM, see \cite{averbuch2018exact,scarlett2014expurgated,merhav2019lagrange}. 

\subsection{Exponent Computation Procedures with Convex Optimization Solvers}

As we have seen, we may write 
\begin{equation}
E_{\text{rc,\ensuremath{\alpha}}}(R,P_{X})=\max_{\rho\in[0,1]}\min_{Q_{Y|X},\tilde{Q}_{Y|X}}D(Q_{Y|X}||W|P_{X})+\rho\cdot\left[I(P_{X}\times\tilde{Q}_{Y|X})-R\right]
\end{equation}
and when $\alpha(Q_{XY})$ is a linear function of $Q_{XY}$, then
the constraints set of $(Q_{Y|X},\tilde{Q}_{Y|X})$ is convex. Hence,
the inner minimization problem is a convex optimization problem that
can be efficiently solved. However, in principle, it should be solved
for the continuous set of values $\rho\in[0,1]$. We next describe
an alternative method to evaluate $E_{\text{rc,\ensuremath{\alpha}}}(R,P_{X})$. 

Let us write $E_{\text{rc,\ensuremath{\alpha}}}(R,P_{X})=\min\{E_{-}(R),E_{+}(R\}$
where\footnote{For brevity, we omit the explicit dependence on the score $\alpha$
and the input distribution $P_{X}$.}

\begin{equation}
E_{-}(R)=\min_{Q_{Y|X},\tilde{Q}_{Y|X}}D(Q_{Y|X}||W|P_{X}),
\end{equation}
where the minimization is over the set 
\begin{equation}
\left\{ Q_{Y|X},\tilde{Q}_{Y|X}\colon Q_{Y}=\tilde{Q}_{Y},\;\alpha(P_{X}\times\tilde{Q}_{Y|X})\geq\alpha(P_{X}\times Q_{Y|X}),\;I(P_{X}\times\tilde{Q}_{Y|X})\leq R\right\} 
\end{equation}
and where
\begin{equation}
E_{+}(R)=\min_{Q_{Y|X},\tilde{Q}_{Y|X}}D(Q_{Y|X}||W|P_{X})+I(P_{X}\times\tilde{Q}_{Y|X})-R,
\end{equation}
where the minimization over the set 
\begin{equation}
\left\{ Q_{Y|X},\tilde{Q}_{Y|X}\colon Q_{Y}=\tilde{Q}_{Y},\;\alpha(P_{X}\times\tilde{Q}_{Y|X})\geq\alpha(P_{X}\times Q_{Y|X}),\;I(P_{X}\times\tilde{Q}_{Y|X})\geq R\right\} .
\end{equation}
Note that the only difference between $E_{-}(R)$ and $E_{+}(R)$
is the constraint $I(P_{X}\times\tilde{Q}_{Y|X})\gtreqless R$, and
due to the continuity of the objective function, we have included
the points $\{I(P_{X}\times\tilde{Q}_{Y|X})=R\}$ in both problems.
Now, since the KL divergence is also a convex function of $Q_{Y|X}$
it can be seen that the objective function is jointly convex in $\{Q_{Y|X},\tilde{Q}_{Y|X}\}$
for both optimization problems. Since $\alpha(Q_{XY})$ is a linear
function of $Q_{XY}$, the set $\{Q_{Y}=\tilde{Q}_{Y},\;\alpha(P_{X}\times\tilde{Q}_{Y|X})\geq\alpha(P_{X}\times Q_{Y|X})\}$
is a convex set. Furthermore, the set $\{I(P_{X}\times\tilde{Q}_{Y|X})\leq R\}$
is also a convex set, and thus so is its intersection with the previous
set. Consequently, the minimization problem of $E_{-}(R)$ is a convex
optimization problem \cite{boyd2004convex} (of dimension $2|{\cal X}|\times(|{\cal Y}|-1)$),
which can be efficiently solved, e.g., using solvers such as \texttt{CVX}
\cite{grant2009cvx}. By contrast, the minimization problem of $E_{+}(R)$
involves the set $\{I(P_{X}\times\tilde{Q}_{Y|X})\geq R\}$, which
is \emph{not} a convex set. 

We thus proceed as follows. First, let us solve $E_{+}(R)$ for $R=0$.
In this case, the constraint $I(P_{X}\times Q_{Y|X})\geq R$ is idle,
and so 
\begin{equation}
E_{+}(0)=\min_{Q_{Y|X},\tilde{Q}_{Y|X}\colon\;\alpha(P_{X}\times\tilde{Q}_{Y|X})\geq\alpha(P_{X}\times Q_{Y|X})}D(Q_{Y|X}||W|P_{X})+I(P_{X}\times\tilde{Q}_{Y|X}).
\end{equation}
This is a convex optimization problem, which can be efficiently solved.
Let us denote the solution of this problem as $(Q_{Y|X}^{(0)},\tilde{Q}_{Y|X}^{(0)})$.
Now, as long as $R\leq R_{\text{cr}}\triangleq I(\tilde{Q}_{Y|X}^{(0)})$,
then the objective function in $E_{+}(R)$ is minimized by the unconstrained
solution $(Q_{Y|X}^{(0)},\tilde{Q}_{Y|X}^{(0)})$, even if the constraint
$I(P_{X}\times Q_{Y|X})\geq R$ is imposed. For these rates it thus
holds that $E_{+}(R)=E_{+}(0)-R$. Now, if $R\geq R_{\text{cr}}$
then the unconstrained solution $(Q_{Y|X}^{(0)},\tilde{Q}_{Y|X}^{(0)})$
does not solve $E_{+}(R)$, and so the solution must be obtained on
the boundary $\{I(P_{X}\times\tilde{Q}_{Y|X})=R\}$. However, for
such rates
\begin{align}
 & E_{+}(R)\nonumber \\
 & =\min_{Q_{Y|X},\tilde{Q}_{Y|X}\colon\;\alpha(P_{X}\times\tilde{Q}_{Y|X})\geq\alpha(P_{X}\times Q_{Y|X}),\;I(P_{X}\times\tilde{Q}_{Y|X})=R}D(Q_{Y|X}||W|P_{X})+I(P_{X}\times\tilde{Q}_{Y|X})-R\nonumber \\
 & =\min_{Q_{Y|X},\tilde{Q}_{Y|X}\colon\;\alpha(P_{X}\times\tilde{Q}_{Y|X})\geq\alpha(P_{X}\times Q_{Y|X}),\;I(P_{X}\times\tilde{Q}_{Y|X})=R}D(Q_{Y|X}||W|P_{X})\nonumber \\
 & \geq\min_{Q_{Y|X},\tilde{Q}_{Y|X}\colon\;\alpha(P_{X}\times\tilde{Q}_{Y|X})\geq\alpha(P_{X}\times Q_{Y|X}),\;I(P_{X}\times\tilde{Q}_{Y|X})\leq R}D(Q_{Y|X}||W|P_{X})\nonumber \\
 & =E_{-}(R),
\end{align}
where the inequality holds since the feasible set is larger for $E_{-}(R)$.
Consequently, for rates $R\geq R_{\text{cr}}$, the exponent is given
by $\min\{E_{-}(R),E_{+}(R)\}=E_{-}(R)$. 

To conclude, despite the fact that the minimization problem of $E_{+}(R)$
is not a convex optimization problem, the exponent can be computed
for all rates by only solving convex optimization problems. To summarize,
this is done by the following procedure: (1) Solve the optimization
problem for $E_{+}(0)$, and compute the critical rate $R_{\text{cr}}$.
(2) Solve the optimization problem $E_{-}(R)$ for any $R>R_{\text{cr}}$.
The exponent is 
\begin{equation}
\begin{cases}
E_{+}(0)-R, & 0\leq R\leq R_{\text{cr}}\\
E_{-}(R), & R>R_{\text{cr}}
\end{cases}.
\end{equation}
Note that this method requires solving two convex optimization problems
at most for each rate, and the first one for finding $E_{+}(0)$ one
is common to all rates. 

For additional computational algorithms, see, for example, \cite[Section V]{etkin2009error}
for the computation of the exponent of the interference channel, \cite[Appendix A]{weinberger2014codeword}
for the exponents of joint detection and decoding, and \cite[Section VI]{weinberger2019reliability}
for exponents of distributed hypothesis testing.

\section{The Derivation of the Expurgated Exponent \label{sec:The-Expurgated-Exponent}}

In this appendix we outline the expurgation argument that follows
the TCEM method. The proof follows \cite[Appendix]{merhav2014list}.
Let us focus on a specific codeword index $m$. We showed in Section
\ref{subsec:Probabilistic-Properties-of} that, effectively, $\overline{N}_{m}(Q_{X\tilde{X}})\sim\text{Binomial}(e^{nR},e^{-nI(Q_{X\tilde{X}})})$.
Thus, we separate between \emph{typically populated} joint types ($I(Q_{X\tilde{X}})\leq R$)
and \emph{typically empty} joint types ($I(Q_{X\tilde{X}})>R$). First,
for the populated types, for any $\epsilon>0$, it holds by \eqref{eq: exponent of upper tail}
that
\begin{equation}
\Pr\left[\overline{N}_{m}(Q_{X\tilde{X}})\geq e^{n(R-I(Q_{X\tilde{X}})+\epsilon)}\right]\doteq e^{-n\cdot\infty}.
\end{equation}
By the union bound over exponentially number of codewords $e^{nR}$
and polynomial number of joint types, the event
\begin{equation}
{\cal F}\triangleq\left\{ \bigcup_{m=1}^{e^{nR}}\bigcup_{Q_{X\tilde{X}}\colon Q_{X}=Q_{\tilde{X}}=P_{X},\;I(Q_{X\tilde{X}})\geq R}\left\{ \overline{N}_{m}(Q_{X\tilde{X}})\geq e^{n(R-I(Q_{X\tilde{X}})+\epsilon)}\right\} \right\} 
\end{equation}
satisfies $\Pr[{\cal F}]=e^{-n\cdot\infty}$. Since by \eqref{eq: left tail populated type}
the lower tail also similarly decays double-exponentially, for the
sake of exponent analysis, the TCE are \emph{effectively} deterministic,
for all codewords in the codebook and all joint types with $I(Q_{X\tilde{X}})\leq R$,
and is given by 
\begin{equation}
\overline{N}_{m}(Q_{X\tilde{X}})\doteq e^{n[R-I(Q_{X\tilde{X}})]}.
\end{equation}
Second, for the empty types for which $I(Q_{X\tilde{X}})>R$, it holds
by \eqref{eq: exponent of upper tail} that 

\begin{equation}
\Pr\left[\overline{N}_{m}(Q_{X\tilde{X}})\geq1\right]\doteq e^{-n[I(Q_{X\tilde{X}})-R]},
\end{equation}
which is exponentially small. Thus, we do not expect to observe other
codewords $\tilde{m}\neq m$ which have joint type $Q_{X\tilde{X}}$
with $\boldsymbol{X}_{m}$. Indeed, the event
\begin{equation}
{\cal E}_{m}\triangleq\left\{ \bigcup_{Q_{X\tilde{X}}\colon Q_{X}=Q_{\tilde{X}}=P_{X},\;I(Q_{X\tilde{X}})>R}\left\{ \overline{N}_{m}(Q_{X\tilde{X}})\geq1\right\} \right\} 
\end{equation}
is the event that the $m$th codeword is a-typical neighboring codewords,
in the sense that there exists a $Q_{X\tilde{X}}$ with $I(Q_{X\tilde{X}})>R$
and at least one neighboring codeword $\boldsymbol{X}_{\tilde{m}}$
so that $\hat{Q}_{\boldsymbol{X}_{m}\boldsymbol{X}_{\tilde{m}}}=Q_{X\tilde{X}}$.
By the union bound, since the number of joint types increases polynomially
with $n$, $p_{n}\triangleq\Pr[{\cal E}_{m}]\doteq e^{-n(I(Q_{X\tilde{X}})-R)}$.
Thus, on the average, we expect that $p_{n}e^{nR}$ codewords will
have such a-typical neighboring codewords. So, the event 
\begin{equation}
{\cal E}^{*}\triangleq\left\{ \frac{1}{e^{nR}}\sum_{m=1}^{e^{nR}}\I\{{\cal E}_{m}\}\geq2p_{n}\right\} ,
\end{equation}
in which more than $2p_{n}e^{nR}$ have such a-typical neighboring
codeword has low probability. Indeed, by Markov's inequality, which
does not require independence of the events $\{{\cal E}_{m}\}$, implies
that $\Pr[{\cal E}^{*}]\leq\frac{1}{2}$. Hence, with probability
larger than $1/2-\Pr[{\cal F}]\geq1/2-e^{-n\infty}$, both ${\cal F}^{c}$
and $[{\cal E}^{*}]^{c}$ hold. We thus may choose a codebook ${\cal C}_{n}$
that belongs to the event ${\cal F}^{c}\cap[{\cal E}^{*}]^{c}$. The
number of codewords in this codebook for which $\I\{{\cal E}_{m}\}=1$
is less than $2p_{n}e^{nR}$. Thus, we can \emph{expurgate} those
codewords from the codebook, and obtain a new codebook ${\cal C}_{n}^{*}$
which satisfies: (1) Its size is larger than $|{\cal C}_{n}^{*}|\geq e^{nR}(1-2p_{n})\doteq e^{nR}$.
(2) Its TCEs $\overline{N}_{m}^{*}(Q_{X\tilde{X}})$ are only smaller
than those of the original codebook, and specifically, $\overline{N}_{m}^{*}(Q_{X\tilde{X}})=0$
for all $Q_{X\tilde{X}}$ with $I(Q_{X\tilde{X}})>R$. (3) $\overline{N}_{m}^{*}(Q_{X\tilde{X}})\leq e^{n(R-I(Q_{X\tilde{X}})+\epsilon)}$
for all $Q_{X\tilde{X}}$ with $I(Q_{X\tilde{X}})\leq R$. 

For such a codebook, and after taking $\epsilon\downarrow0$, the
error probability bound in \eqref{eq: expurgated exponent first}
is given by
\begin{equation}
P_{\mathsf{e}}\leq\exp\left[-n\cdot E_{\text{ex}}(R,P_{X})\right],
\end{equation}
where $E_{\text{ex}}(R,P_{X})$ is as defined in \eqref{eq: expurgated exponent}.

Compared to the TCEM, the properties of codebook ${\cal C}_{n}^{*}$
traditionally follow from the \emph{packing lemma} \cite[Exercise 10.2]{csiszar2011information},
\cite{csiszar1977new} (which is somewhat similar) or from a \emph{graph
decomposition lemma} \cite[Corollary to Lemma 2]{csiszar1981graph}.
In the latter case, equipped with the existence of such a codebook,
\cite{csiszar1981graph} derived a bound for decoders with general
score $\alpha(\cdot)$, and when $\alpha(\cdot)$ is set to be the
ML decoder, then this exponent is shown to improve both the random-coding
error exponent and the expurgated exponent.

\section{Proofs for Section \ref{subsec:Probabilistic-Properties-of} \label{sec:Proofs-for-TCE-properties}}

Before proving Theorems \ref{thm: tail probabilities of N}, \ref{thm: The moments of N}
and \ref{thm: Intersection of tail events}, we recall the following
Chernoff tail bounds of a binomial RV $X\sim\text{Binomial}(m,p)$.
If $r>p$ then $rm>\E[X]=pm$ and so the probability of the upper
tail is 
\begin{equation}
e^{-m\cdot D(r||p)-o(m)}\leq\Pr\left[X>rm\right]\leq e^{-m\cdot D(r||p)},
\end{equation}
where $D(r||p)\triangleq r\ln\frac{r}{p}+(1-r)\ln\frac{(1-r)}{(1-p)}$
is the binary KL divergence. If $r<p$ then this probability $\Pr[X>rm]\geq\Pr[X>\lfloor\E[X]\rfloor]\geq1/2$,
and the so the exponent is zero. Similarly, if $r<p$ then the probability
of the lower tail is 
\begin{equation}
e^{-m\cdot D(r\|p)-o(m)}\leq\Pr\left[X<rm\right]\leq e^{-m\cdot D(r\|p)},
\end{equation}
 and if $r>p$ then the exponent is zero. 

We will also need the following simple lemma regarding the KL divergence. 
\begin{lem}
\label{lem: KL divergence expansion}Let $\{a_{n},b_{n}\}$ be sequences
in $(0,1)$ such that $a_{n}=o(1)$ and $b_{n}=o(1)$. Then, 
\begin{equation}
D(a_{n}||b_{n})\sim\begin{cases}
b_{n} & \frac{a_{n}}{b_{n}}=o(1)\\
a_{n}\ln\frac{a_{n}}{b_{n}}, & \frac{a_{n}}{b_{n}}=\omega(1)
\end{cases},\label{eq: asymptotic expansion binary KL}
\end{equation}
where for a sequence $\{c_{n}\}$, the notation $c_{n}=o(1)$ means
that $\lim_{n\to\infty}c_{n}=0$ and the notation $c_{n}=\omega(1)$
means that $\lim_{n\to\infty}c_{n}=\infty$. 
\end{lem}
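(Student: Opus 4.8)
The plan is to prove Lemma~\ref{lem: KL divergence expansion} by a direct asymptotic analysis of the two terms in the binary KL divergence
\begin{equation}
D(a_{n}\|b_{n})=a_{n}\ln\frac{a_{n}}{b_{n}}+(1-a_{n})\ln\frac{1-a_{n}}{1-b_{n}},
\end{equation}
treating the two regimes $a_{n}/b_{n}=o(1)$ and $a_{n}/b_{n}=\omega(1)$ separately. In both cases the guiding principle is that since $a_{n},b_{n}=o(1)$, the second term is a lower-order correction governed by the expansion $\ln(1-x)=-x+O(x^{2})$, so that $(1-a_{n})\ln\frac{1-a_{n}}{1-b_{n}}=(1-a_{n})\bigl[-(a_{n}-b_{n})+O(a_{n}^{2}+b_{n}^{2})\bigr]=b_{n}-a_{n}+O(a_{n}^{2}+b_{n}^{2})$, uniformly in $n$. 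The whole argument then reduces to comparing this quantity with the first term $a_{n}\ln(a_{n}/b_{n})$.

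First I would handle the regime $a_{n}/b_{n}=\omega(1)$. Here the first term $a_{n}\ln(a_{n}/b_{n})$ is positive and, crucially, dominates: writing $t_{n}=a_{n}/b_{n}\to\infty$, we have $a_{n}\ln(a_{n}/b_{n})=b_{n}\,t_{n}\ln t_{n}$, while the correction term is $b_{n}-a_{n}+O(a_{n}^{2}+b_{n}^{2})=b_{n}(1-t_{n})+O(a_{n}^{2}+b_{n}^{2})$, which is $O(b_{n}t_{n})=o(b_{n}t_{n}\ln t_{n})$ since $\ln t_{n}\to\infty$; the $O(a_{n}^{2})=O(b_{n}^{2}t_{n}^{2})$ piece is likewise $o(b_{n}t_{n}\ln t_{n})$ because $b_{n}t_{n}=a_{n}=o(1)$ forces $b_{n}t_{n}^{2}=o(t_{n})=o(t_{n}\ln t_{n})$. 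Hence $D(a_{n}\|b_{n})=a_{n}\ln(a_{n}/b_{n})\bigl[1+o(1)\bigr]$, which is the claimed $a_{n}\ln(a_{n}/b_{n})$ asymptotics. Second, in the regime $a_{n}/b_{n}=o(1)$, the first term satisfies $|a_{n}\ln(a_{n}/b_{n})|=a_{n}|\ln(a_{n}/b_{n})|$; with $t_{n}=a_{n}/b_{n}\to 0$ this is $b_{n}t_{n}|\ln t_{n}|=o(b_{n})$ because $t_{n}\ln t_{n}\to 0$. The correction term is $b_{n}-a_{n}+O(a_{n}^{2}+b_{n}^{2})=b_{n}\bigl[1-t_{n}+O(b_{n}+b_{n}t_{n}^{2})\bigr]=b_{n}\bigl[1+o(1)\bigr]$. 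Adding the two gives $D(a_{n}\|b_{n})=b_{n}\bigl[1+o(1)\bigr]$, as claimed.

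The main obstacle, such as it is, lies in keeping the error-term bookkeeping honest: one must verify that every discarded remainder is genuinely $o(\cdot)$ of the retained leading term \emph{using only} the hypotheses $a_{n},b_{n}=o(1)$ together with the stated ratio condition, and in particular that $b_{n}t_{n}=a_{n}=o(1)$ is what tames the squared terms in the $\omega(1)$ case. A clean way to organize this is to set $t_{n}=a_{n}/b_{n}$ at the outset, factor $b_{n}$ out of $D(a_{n}\|b_{n})$ to write $D(a_{n}\|b_{n})=b_{n}\,\psi(t_{n},b_{n})$ for an explicit $\psi$, and then take the limit of $\psi(t_{n},b_{n})/1$ (resp. $\psi(t_{n},b_{n})/(t_{n}\ln t_{n})$) along the two regimes; this isolates the elementary real-analysis fact $t\ln t\to 0$ as $t\to 0$ and $t\ln t\to\infty$ as $t\to\infty$ as the only nontrivial input. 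No step beyond Taylor expansion of $\ln(1-x)$ and these limits is needed, so the proof should be short.
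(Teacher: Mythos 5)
Your proof is correct and follows essentially the same route as the paper's: the same decomposition of $D(a_n\|b_n)$ into the term $a_n\ln(a_n/b_n)$ and the correction $(1-a_n)\ln\frac{1-a_n}{1-b_n}$, the same Taylor expansion of the logarithm for the correction, and the same elementary limits ($t\ln t\to 0$ as $t\downarrow 0$, and $\ln t_n\to\infty$ dominating in the other regime). Your reorganization via $t_n=a_n/b_n$ is a slightly cleaner bookkeeping of the same argument.
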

\begin{proof}
We use the expansion $\ln(1+x)=x+\Theta(x^{2})$ throughout. If $\frac{a_{n}}{b_{n}}=o(1)$
then it holds that
\begin{align}
(1-a_{n})\ln\left[\frac{1-a_{n}}{1-b_{n}}\right] & =(1-a_{n})\ln(1-a_{n})-(1-a_{n})\ln(1-b_{n})\nonumber \\
 & =-a_{n}(1-a_{n})+\Theta(a_{n}^{2})+b_{n}(1-a_{n})+\Theta(b_{n}^{2})\nonumber \\
 & =(b_{n}-a_{n})(1-a_{n})+\Theta(b_{n}^{2})\nonumber \\
 & =b_{n}\cdot\left[\left(1-\frac{a_{n}}{b_{n}}\right)-a_{n}(1-a_{n})+\Theta(b_{n}^{2})\right]\nonumber \\
 & \sim b_{n},
\end{align}
and so for all $n$ large enough
\begin{equation}
\left|a_{n}\ln\frac{a_{n}}{b_{n}}\right|=a_{n}\ln\frac{b_{n}}{a_{n}}=-b_{n}\cdot\frac{a_{n}}{b_{n}}\ln\frac{a_{n}}{b_{n}}=-o(b_{n})
\end{equation}
since $\lim_{t\downarrow0}t\ln t=0$. This is negligible compared
to the first term. 

If $\frac{a_{n}}{b_{n}}=\omega(1)$ then
\begin{align}
\left|(1-a_{n})\ln\left(\frac{1-a_{n}}{1-b_{n}}\right)\right| & =\left|(1-a_{n})\ln(1-a_{n})-(1-a_{n})\ln(1-b_{n})\right|\nonumber \\
 & =\left|(1-a_{n})\left[-a_{n}+\Theta(a_{n}^{2})+b_{n}+\Theta(b_{n}^{2})\right]\right|\nonumber \\
 & =\Theta(a_{n}),
\end{align}
which is negligible compared to $a_{n}\ln\frac{a_{n}}{b_{n}}=\omega(a_{n})$.
\end{proof}
We are now ready to prove Theorem \ref{thm: tail probabilities of N}. 
\begin{proof}[Proof of Theorem \ref{thm: tail probabilities of N}]
 In the case of a TCE, we are dealing with both an exponential number
of trials and an exponentially decaying success probability, and thus
consider the events $\{N>e^{n\lambda}\}$ and $\{N<e^{n\lambda}\}$
for some $\lambda\in\mathbb{R}$. Throughout, we will use the asymptotic
expansion of the binary KL divergence in Lemma \ref{lem: KL divergence expansion}. 

We distinguish between two cases: 
\begin{enumerate}
\item If $A>B$ then the mean value $\E[N]=e^{n(A-B)}$ is exponentially
large. For the upper tail, we assume $\lambda>A-B$, for which 
\begin{equation}
\Pr\left[N>e^{n\lambda}\right]\leq\exp\left[-e^{nA}\cdot D(e^{-n(A-\lambda)}||e^{-nB})\right].
\end{equation}
Since $A-B<\lambda$ then $\nicefrac{e^{-n(A-\lambda)}}{e^{-nB}}=\omega(1)$
and the exponent is 
\begin{align}
e^{nA}\cdot D(e^{-n(A-\lambda)}||e^{-nB}) & \sim e^{nA}e^{-n(A-\lambda)}\ln\frac{e^{-n(A-\lambda)}}{e^{-nB}}\nonumber \\
 & =n(\lambda-(A-B))e^{n\lambda}.
\end{align}
Thus, the right-tail probability decays double-exponentially. Similarly,
for the lower tail, we assume $\lambda<A-B$, for which 
\begin{equation}
\Pr\left[N<e^{n\lambda}\right]\leq\exp\left[-e^{nA}\cdot D(e^{-n(A-\lambda)}||e^{-nB})\right].
\end{equation}
Since $A-B>\lambda$ then $\nicefrac{e^{-n(A-\lambda)}}{e^{-nB}}=o(1)$
and the exponent is 
\begin{align}
e^{nA}\cdot D(e^{-n(A-\lambda)}||e^{-nB}) & \sim e^{n(A-B)}.
\end{align}
Thus, the lower-tail probability also decays double-exponentially. 
\item If $B>A$ then the mean value $\E[N]=e^{-n(B-A)}\leq1$ is exponentially
small. For the upper tail, we set $\lambda>0>A-B$ and obtain a double-exponentially
decay, exactly as in the previous case. Next, as $N$ is integer,
for $\lambda\leq0$, Markov's inequality implies that 
\begin{equation}
\Pr\left[N>e^{n\lambda}\right]=\Pr\left[N\ge1\right]\leq\E[N]=\exp\left[-n(B-A)\right].
\end{equation}
On the other hand,
\begin{align}
\Pr\left[N>e^{n\lambda}\right] & \geq\Pr\left[N=1\right]={e^{nA} \choose 1}\cdot e^{-nB}\cdot(1-e^{-nB})^{e^{nA}-1}\nonumber \\
 & =e^{-n(B-A)}\cdot(1-e^{-nB})^{e^{nA}-1}\nonumber \\
 & \sim\exp\left[-n(B-A)\right],
\end{align}
which shows that Markov's inequality is exponentially tight in this
case, and hence $\Pr[N>e^{n\lambda}]\doteq e^{-n(B-A)}$. The variable
$N$ has no lower tail since the above implies that $\Pr[N=0]\geq1-e^{-n(B-A)}$
. 
\end{enumerate}
Combining the two cases leads to the claimed result.
\end{proof}
We next prove Theorem \ref{thm: The moments of N}.
\begin{proof}[Proof of Theorem \ref{thm: The moments of N}]
 We separate again between two cases, depending on the sign of $A-B$. 
\begin{enumerate}
\item If $A>B$ then we know that any exponential deviation from the mean
leads to a double-exponentially decay. Hence, for any $\lambda>A-B$
\begin{align}
\E\left[N^{s}\right] & =\Pr[N\leq e^{n\lambda}]\cdot\E\left[N^{s}|N\leq e^{n\lambda}\right]+\Pr[N>e^{n\lambda}]\cdot\E\left[N^{s}|N\geq e^{n\lambda}\right]\nonumber \\
 & \dot{\leq}e^{n\lambda s}+e^{-n\cdot\infty}\cdot e^{nsA}\nonumber \\
 & \doteq e^{n\lambda s},
\end{align}
where we have used the fact that $N\leq e^{nA}$ with probability
$1$, and write $e^{-n\cdot\infty}$ for a probability that decays
super-exponentially. Taking the limit $\lambda\downarrow A-B$ shows
that
\begin{equation}
\E\left[N^{s}\right]\dot{\leq}e^{n(A-B)s}.
\end{equation}
A matching lower bound can be derived in an analogous way: For any
$\lambda<A-B$
\begin{align}
\E\left[N^{s}\right] & =\Pr[N\geq e^{n\lambda}]\cdot\E\left[N^{s}|N\geq e^{n\lambda}\right]+\Pr[N<e^{n\lambda}]\cdot\E\left[N^{s}|N<e^{n\lambda}\right]\nonumber \\
 & \geq\left[1-\Pr[N<e^{n\lambda}]\right]\cdot e^{n\lambda s}\nonumber \\
 & =\left[1-e^{-n\cdot\infty}\right]\cdot e^{n\lambda s},
\end{align}
after taking the limit $\lambda\uparrow A-B$. Hence, 
\begin{equation}
\E\left[N^{s}\right]\doteq e^{n(A-B)s}.
\end{equation}
\item If $A<B$ then we take $\lambda>0$ to obtain 
\begin{align}
\E\left[N^{s}\right] & =\Pr[1\leq N\leq e^{n\lambda}]\cdot\E\left[N^{s}|1\leq N\leq e^{n\lambda}\right]+\Pr[N>e^{n\lambda}]\cdot\E\left[N^{s}|N\geq e^{n\lambda}\right]\nonumber \\
 & \leq\Pr[N\geq1]\cdot e^{n\lambda}+e^{-n\cdot\infty}\cdot e^{nsA}\nonumber \\
 & \dot{\leq}e^{-n(B-A)}\cdot e^{n\lambda}.
\end{align}
Taking the limit $\lambda\downarrow0$ shows that
\begin{equation}
\E\left[N^{s}\right]\dot{\leq}e^{-n(B-A)}.
\end{equation}
A lower bound is obtained by 
\begin{equation}
\E\left[N^{s}\right]\geq\Pr[N=1]\cdot1^{s}\geq[1+o(1)]\cdot e^{-n(B-A)},
\end{equation}
which shows that the upper bound is tight. 
\end{enumerate}
Combining the two cases leads to the claimed result.
\end{proof}
We finally prove Theorem \ref{thm: Intersection of tail events}. 
\begin{proof}[Proof of Theorem \ref{thm: Intersection of tail events}]
If there is a $j^{*}\in[k_{n}]$ so that $B_{j^{*}}<A_{j^{*}}$ and
$\lambda<A_{j^{*}}-B_{j^{*}}$ then $\Pr[N_{j^{*}}<e^{n\lambda}]\doteq e^{-n\cdot\infty}$.
So, 
\begin{equation}
\Pr\left[\bigcap_{j=1}^{k_{n}}\left\{ N_{j}<e^{n\lambda}\right\} \right]\leq\max_{1\leq j\leq k_{n}}\Pr\left[N_{j}<e^{n\lambda}\right]\doteq e^{-n\cdot\infty}.\label{eq: probability of intersection upper tails first case}
\end{equation}
Otherwise, if all $j=1,\ldots,k_{n}$ it holds that either $B_{j}>A_{j}$
or $\lambda>A_{j}-B_{j}$ then \eqref{eq: exponent of upper tail}
implies that $\Pr[N_{j}>e^{n\lambda}]<e^{-n\infty}$ for all $j=1,\ldots,k_{n}$.
Thus, from the union bound, as $n\to\infty$ 
\begin{align}
\Pr\left[\bigcap_{j=1}^{k_{n}}\left\{ N_{j}\leq e^{n\lambda}\right\} \right] & =1-\Pr\left[\bigcup_{j=1}^{k_{n}}\left\{ N_{j}>e^{n\lambda}\right\} \right]\nonumber \\
 & \geq1-\sum_{j=1}^{k_{n}}\Pr\left[N_{j}>e^{n\lambda}\right]\nonumber \\
 & \geq1-k_{n}\cdot\max_{1\leq j\leq k_{n}}\Pr\left[N_{j}>e^{n\lambda}\right]\nonumber \\
 & \geq1-k_{n}\cdot e^{-\min_{1\leq j\leq k_{n}}E_{j}}\nonumber \\
 & \to1.\label{eq: probability of intersection upper tails second case}
\end{align}
Combining \eqref{eq: probability of intersection upper tails first case}
and \eqref{eq: probability of intersection upper tails second case}
leads to the stated claim.
\end{proof}
\newpage{}

\bibliographystyle{IEEEtran}
\bibliography{IT_tools}

\end{document}